\tikzset{
    master/.style={
        execute at end picture={
            \coordinate (lower right) at (current bounding box.south east);
            \coordinate (upper left) at (current bounding box.north west);
        }
    },
    slave/.style={
        execute at end picture={
            \pgfresetboundingbox
            \path (upper left) rectangle (lower right);
        }
    }
}
\newcommand{\truthful}{truthful}
\newcommand{\Truthful}{Truthful}
   \newcommand{\bfa}{\mathbf{a}}
\newcommand{\bfB}{\mathbf{B}}   \newcommand{\bfb}{\ensuremath{{\mathbf{b}}}\xspace}
\newcommand{\bfsig}{\boldsymbol{\sigma}}   
\newcommand{\bfC}{\mathbf{C}}
   \newcommand{\bff}{\mathbf{f}}
\newcommand{\bfI}{\mathbf{I}}
\newcommand{\bfP}{\mathbf{P}}   \newcommand{\bfp}{\ensuremath{{\mathbf{p}}}\xspace}
   \newcommand{\bfr}{\mathbf{r}}
   \newcommand{\bfs}{\mathbf{s}}
   \newcommand{\bfv}{\mathbf{v}}
   \newcommand{\bfx}{\mathbf{x}}
\definecolor{darkred}{rgb}{0.5, 0, 0}
\definecolor{darkgreen}{rgb}{0, 0.5, 0}
\definecolor{darkblue}{rgb}{0,0,0.5}
\newcommand\markx[2]{}
\renewcommand{\path}{\ensuremath{\mathsf{path}}\xspace}
\newcommand{\ignore}[1]{}
\newcommand{\mcal}[1]{\ensuremath{\mathcal {#1}}}
\newcommand{\R}{\ensuremath{\mathbb{R}}}
\newcommand{\E}{\mathbb{E}}
\definecolor{darkgreen}{rgb}{0,0.5,0}
\definecolor{lightblue}{RGB}{0,176,240}
\definecolor{darkblue}{RGB}{0,112,192}
\definecolor{lightpurple}{RGB}{124, 66, 168}
\definecolor{grey}{RGB}{139, 137, 137}
\definecolor{maroon}{RGB}{178, 34, 34}
\definecolor{green}{RGB}{34, 139, 34}
\definecolor{types}{RGB}{72, 61, 139}
\definecolor{gold}{rgb}{0.8, 0.33, 0.0}
\definecolor{darkgray}{gray}{0.3}
\newcounter{task}
\theoremstyle{plain}
\newtheorem{theorem}{Theorem}[section]
\newtheorem{claim}[theorem]{Claim}
\newtheorem{lemma}[theorem]{Lemma}
\newtheorem{corollary}[theorem]{Corollary}
\theoremstyle{definition}
\newtheorem{remark}[theorem]{Remark}
\newtheorem{definition}[theorem]{Definition}
\crefname{lemma}{Lemma}{Lemmas}
\Crefname{lemma}{Lemma}{Lemmas}
\newcommand{\elaine}[1]{{\footnotesize\color{magenta}[Elaine: #1]}}
\newcommand{\Hao}[1]{{\footnotesize\color{blue}[Hao: #1]}}
\renewcommand{\elaine}[1]{}
\renewcommand{\Hao}[1]{}
\newcounter{cnt:challenge}
\begin{document}
%
% \begin{titlepage}
\title{Collusion-Resilience in Transaction Fee Mechanism Design}
\author{
Hao Chung\thanks{Supported by NSF awards 2212746, 2044679, 1704788, a Packard Fellowship,
a generous gift from the late Nikolai Mushegian, a gift from Google, and an ACE center grant
from Algorand Foundation.} \\ Carnegie Mellon University \\ {\tt haochung@andrew.cmu.edu}
\and Tim Roughgarden\thanks{Author's research at Columbia University supported in part by NSF awards
CCF-2006737 and CNS-2212745, and research awards from the Briger Family Digital Finance Lab and the Center
for Digital Finance and Technologies.} \\ Columbia University and a16z crypto \\ {\tt tim.roughgarden@gmail.com}
\and Elaine Shi\footnotemark[1] \\ Carnegie Mellon University \\{\tt runting@cs.cmu.edu}
    }
% \author{Hao Chung\thanks{Supported by Packard Fellowship, NSF award 2044679,  and a gift from Nikolai Mushegian.} \\ CMU \\ {\tt haochung@andrew.cmu.edu} \and Elaine Shi\footnotemark[1] \\ CMU \\{\tt runting@cs.cmu.edu}}
\date{\vspace{-20pt}}

\maketitle
\thispagestyle{empty}

\begin{abstract}
Users bid in a transaction fee mechanism (TFM) to get their transactions included and confirmed by a blockchain protocol. 
Roughgarden (EC'21) initiated the formal treatment of TFMs and proposed three requirements: user incentive compatibility (UIC), miner incentive compatibility (MIC), and a form of collusion-resilience called OCA-proofness. 
Ethereum's EIP-1559 mechanism satisfies all three properties simultaneously when there is no contention between transactions, 
but loses the UIC property when there are too many eligible transactions to fit in a single block. 
Chung and Shi (SODA'23) considered an alternative notion of collusion-resilience, called $c$-side-contract-proofness ($c$-SCP), and showed that, when there is contention between transactions, no TFM can satisfy UIC, MIC, and $c$-SCP for any $c\geq 1$. 
OCA-proofness asserts that the users and a miner should not be able to ``steal from the protocol.''
On the other hand, the $c$-SCP condition requires that a coalition of a miner and a subset of users should not be able to profit through strategic deviations (whether at the expense of the protocol or of the users outside the coalition).
%  and is intuitively weaker than the $c$-SCP condition, which stipulates that a coalition of a miner and a subset of users should not be able to profit through strategic deviations (whether at the expense of the protocol or of the users outside the coalition). 

Our main result is the first proof that, when there is contention between transactions, no (possibly randomized) TFM in which users are expected to bid truthfully satisfies UIC, MIC, and OCA-proofness.
This result resolves the main open question in Roughgarden (EC'21). 
We also suggest several relaxations of the basic model that allow our impossibility result to be circumvented.

%%% Local Variables:
%%% mode: plain-tex
%%% TeX-master: "oca"
%%% End:

\end{abstract}

% \end{titlepage}
\newpage
\tableofcontents
\thispagestyle{empty}
\newpage
\setcounter{page}{1}

\section{Introduction}

Real estate on the blockchain is scarce,
and blockchain users bid in an auction called
the transaction fee mechanism (TFM)
to have their transactions included and confirmed on the blockchain.
The original Bitcoin protocol adopted a simple first-price
auction, where the top $k$ bids win and they each pay their bid. 
However, such first-price auctions are known to incentivize
untruthful bidding. Therefore, a line of subsequent
works~\cite{zoharfeemech,yaofeemech,functional-fee-market,eip1559,roughgardeneip1559,roughgardeneip1559-ec,dynamicpostedprice,foundation-tfm,crypto-tfm,LP-tfm,greedy-tfm,bayesian-tfm,active-miner-tfm,pos-tfm,tiered-tfm,sequencer,optimal-base-fee,optimal-base-fee2,optimal-base-fee3} explored what is the ``dream TFM'' for blockchains. 
Most works \cite{roughgardeneip1559,roughgardeneip1559-ec, foundation-tfm,crypto-tfm,LP-tfm,greedy-tfm,bayesian-tfm,active-miner-tfm,pos-tfm} agree on roughly the same 
set of desiderata, that is, a dream TFM should
provide incentive compatibility not just
for an individual user, but also for the miner of the block. 
Further, a dream TFM should provide 
resilience against miner-user collusion.

Roughgarden~\cite{roughgardeneip1559-ec}
was the first to formally define the aforementioned requirements
for a TFM, which he referred to as {\it user incentive
compatibility}\footnote{User incentive compatibility (UIC) is usually
called dominant-strategy incentive compatible (DSIC) in the mechanism
design literature. In general, we allow UIC TFMs to make use of
non-truthful (but dominant) bidding strategies (see \cref{def:UIC}).}, 
(myopic) {\it miner incentive compatibility}, and 
{\it OCA-proofness}, where OCA stands for
``off-chain agreement'' and refers to colluding strategies
between the miner and a set of users that allow offchain transfers.
Roughgarden~\cite{roughgardeneip1559-ec} also showed that 
the simple ``posted price auction with all fees burnt'' mechanism,
which corresponds to the behavior of Ethereum's EIP-1559
TFM~\cite{eip1559} when there is no congestion, satisfies
all three properties.
However, the posted price auction with all fees burnt
does not satisfy all three properties when there is congestion.
%works only when 
%%the block size is infinite (i.e., when
%there is no congestion.
%%).
In practice, congestion does occur
when there are major events such as an NFT mint or price
fluctuations --- for example,  
in Ethereum, roughly 2.3\% of the blocks experience
congestion.\footnote{From Jan 1, 2024 to Feb 5, 2024, 256595 blocks
  have been produced on Ethereum, and 5840 blocks among them were full
  (meaning more than 99.9\% of the gas limit (30M) was used).}
% \elaine{Hao, can you double check the stats using etherscan}
% \Hao{Will do later}
When congestion arises, 
approximately speaking, 
Ethereum's EIP-1559 mechanism falls back
to the first-price auction,
violating user incentive compatibility.
Therefore, an interesting question
is whether we can design a dream TFM satisfying all three properties
for finite block sizes.

% Unfortunately, 
% the work of Chung and Shi~\cite{foundation-tfm}
% showed a negative answer to this question.
Chung and Shi~\cite{foundation-tfm} considered an alternative notion of collusion-resilience, called side-contract-proofness.
% which stipulates that a coalition consisting of a miner and at most $c$ users should not be able to profit by deviating from the mechanism.
Unfortunately, they proved that no (even randomized)
% TFM can simultaneously provide incentive compatibility 
% for individual users and miner-user coalitions.
TFM can simultaneously satisfy user incentive compatibility and side-contract-proofness.
Because side-contract-proofness is a more demanding property than OCA-proofness,
the question raised by Roughgarden \cite{roughgardeneip1559-ec}, of
whether there is a dream TFM satisfying all three properties under his
collusion-resilience notion, had remained open.
% Chung and Shi's impossibility does not directly rule out a dream mechanism satisfying all three properties required by \cite{roughgardeneip1559-ec},
% and the main question in \cite{roughgardeneip1559-ec} is still open.
% Subsequent works
% considered how to circumvent Chung and Shi's impossibility.
% Notably, some works~\cite{crypto-tfm,LP-tfm} showed that cryptography  
% can help us overcome Chung and Shi's strong impossibility.

\paragraph{Two notions of miner-user collusion-resilience}
% \subsection{Two Notions of Miner-User Collusion Resilience}
% At first sight, it seems that Roughgarden's result~\cite{roughgardeneip1559-ec}
% and that of Chung and Shi~\cite{foundation-tfm}
% jointly paint a complete landscape of the 
% feasibility/infeasibility of having 
% a dream TFM in 
% the plain model without cryptography. 
% However, upon more careful 
% examination, it turns out that Roughgarden~\cite{roughgardeneip1559-ec}
% and Chung and Shi~\cite{foundation-tfm}
% actually formulated different notions
% of miner-user collusion-resilience independently:
Multiple natural notions of collusion-resilience 
can and have been
studied in the context of TFM design. Here we clarify informally the
key differences between the notions proposed by Roughgarden
\cite{roughgardeneip1559-ec} and Chung and Shi~\cite{foundation-tfm}.
These notions are defined formally in
Definitions~\ref{def:cSCP}--\ref{def:OCAproof} (see
Section~\ref{s:defs}) and compared further via examples in Appendix~\ref{sec:comparison}.

\begin{itemize}
\item 
{\bf OCA-proofness}:
Roughgarden's notion, henceforth referred to as {OCA-proofness},
asserts that there should exist a ``reference strategy'' for a miner
and all users that is guaranteed to maximize their joint surplus.
In this reference strategy, the miner is expected to follow the
inclusion rule intended by the TFM. For users, the definition requires
only that users follow some fixed bidding strategy~$\sigma$ (i.e., a
mapping from a private user valuation to a user bid) that is
individually rational (i.e., $\sigma(v) \le v$ for all $v \ge 0$). In
particular, in the reference strategy, users are expected to bid
independently (with a user's bid independent of other users'
valuations and bids), and expected to submit a single bid (with no additional
fake bids injected). One example of such a bidding strategy is the
truth-telling strategy (with $\sigma(v) = v$).
Because Roughgarden~\cite{roughgardeneip1559-ec} wished to
discuss the OCA-proofness properties of non-UIC TFMs like first-price
auctions, the definition also allows the reference strategy to be
defined by a non-truthful bidding strategy (e.g., $\sigma(v) = v/2$).
As a consequence, to prove that a TFM is both UIC and OCA-proof, it is
sufficient to prove that it is UIC under one bidding strategy and
OCA-proof under a possibly different bidding strategy (as in the
example in Section~\ref{sec:non-direct-oca-counterexample}).

% requires that there exists a bidding strategy, denoted $\sigma$,
% that maps a private valuation to a bid
% % requires that 
% % there exists a social-welfare-maximizing strategy 
% % denoted $\sigma$
% % of the global coalition (consisting
% % of the miner of the present block and all users), 
% such that $\sigma$ 
% maximizes the joint utility of the global coalition (consisting of the miner of the present block and all users), 
% and satisfies a set of natural properties:
% 1) it does not involve modifying the inclusion rule;
% 2) it is individually rational;
% 3) it requires each bidder to bid independently
% without looking at others' true values;
% and 4) each user submits only a single bid and does not inject
% fake bids.
% In particular, the truth-telling strategy 
% where each user submits a single truthful bid  
% simultaneously satisfies all the above 
% constraints --- however, Roughgarden~\cite{roughgardeneip1559-ec} found it helpful to allow the  
% globally optimal strategy $\sigma$ 
% to be a non-truthful strategy, which explains
% his definitional choice.
\item 
{\bf $c$-SCP}:
Chung and Shi's notion~\cite{foundation-tfm}, henceforth called
{$c$-SCP} (where SCP stands for  
side-contract-proofness), 
requires that the honest strategy
(i.e., all users follow the honest bidding rule and the miner honestly implements
the inclusion rule)
is the profit-maximizing strategy 
for any coalition consisting of the miner
of the present block and at most $c$ users.
For \truthful{} mechanisms, the honest bidding rule is the truthful one,
while for non-\truthful{} mechanisms, the bidding rule can be more general (see \cref{sec:tfm-def} for the formal definition).
Chung and Shi's notion
aligns with standard notions
used in a line of work at the intersection
of game theory and cryptography ~\cite{gtcrypto00,gtcrypto01,gtcrypto02,gtcrypto03,giladutilityindjournal,giladgtcrypto,rdp00,rdp01,rdp02,katzgametheory,gtcrypto06,seqrationalcrypto,gt-fair-cointoss,gt-fair-coin-complete,gt-leader-shi,fruitchain,logstar-gt-leader,credibleauction-comm00,credibleauction-comm01}.
\end{itemize}

%\paragraph{Philosophical considerations}
\paragraph{Discussion}
The two notions of collusion-resilience
address different issues.
OCA-proofness 
captures the intuitive requirement 
that the users and miners {\it should   
not be able to steal from the protocol}
through strategic deviations --- for this reason,
it considers {\it only the global coalition}
consisting of the miner and all users.
By contrast, the $c$-SCP notion %\elaine{refer}
captures the intuitive idea that 
a miner-user coalition's best response
is to act honestly,  
and that 
no strategic deviations can 
allow the coalition to {\it steal from other users
or steal from the protocol}.
For further discussion, see the end of this section and \cref{sec:comparison}.

% \paragraph{Relations between the two notions}
% Finally, it is worth noting
% the relationship between the two notions.
% A mechanism that is $c$-SCP 
% for any $c$
% incentivizes any miner-user coalition (including the global coalition) 
% to behave honestly.  
% In this sense, a mechanism 
% that is $c$-SCP for any $c$
% must be OCA-proof as well, where the
% global coalition's optimal strategy $\sigma$ is simply
% the honest strategy. 
% %It is worth noting that 
% Moreover, the aforementioned ``posted price
% mechanism with all fees burnt'' 
% satisfies $c$-SCP for any $c$, and therefore,
% it satisfies OCA-proofness, too.
% %satisfies both OCA-proofness and $c$-SCP for any $c$ assuming
% %infinite block size.

\subsection{Our Contributions}

As explained above, both Roughgarden's and Chung and Shi's collusion-resilience
notions capture meaningful 
incentive compatibility considerations. 
Recognizing their differences, 
one natural question is: 
does Chung and Shi's finite-block impossibility result
%for UIC and $1$-SCP 
still hold if we adopt 
the original OCA-proofness notion of Roughgarden in lieu 
of $c$-SCP? 
Notably, no existing TFM construction~\cite{zoharfeemech,yaofeemech,functional-fee-market,eip1559,roughgardeneip1559,roughgardeneip1559-ec,dynamicpostedprice,foundation-tfm,crypto-tfm,LP-tfm,greedy-tfm,bayesian-tfm,active-miner-tfm,pos-tfm,tiered-tfm,sequencer,optimal-base-fee,optimal-base-fee2,optimal-base-fee3} 
simultaneously satisfies
user incentive compatibility, 
miner incentive compatibility, 
and OCA-proofness under finite block size. 
%Does this suggest that there might be a mathematical barrier

\paragraph{Main impossibility result}
In our work, we give an affirmative answer  
to the above question. We show that, indeed,
an analog of Chung and Shi's 
finite-block impossibility result
still holds when we replace the $c$-SCP requirement with OCA-proofness.
Specifically, we prove the following theorem. 

\begin{theorem} % [Main impossibility result]
% Suppose that the block size is finite and the globally
% optimal strategy $\sigma$ must be a continuous function.
Suppose the block size is finite.
Then, no possibly randomized, \truthful{} TFM can simultaneously satisfy
user incentive compatibility (UIC), miner incentive compatibility (MIC),
and OCA-proofness.
Further, this impossibility holds even when
the globally optimal strategy $\sigma$ need
not be individually rational.
\label{thm:intromain}
\end{theorem}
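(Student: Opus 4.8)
The plan is to argue by contradiction: assume some (possibly randomized) \truthful{} TFM with a finite block size $k$ satisfies UIC, MIC, and OCA-proofness, and let $\sigma$ be the reference bidding strategy witnessing OCA-proofness. The idea is that OCA-proofness pins the reference outcome to be welfare-optimal, which at once forces the mechanism to confirm the highest-value transactions and to burn as little as possible; but a welfare-optimal allocation together with the Myerson payment identity that UIC imposes forces a confirmed user to pay a ``clearing price'' determined by the \emph{competing} bids, and then a miner can profitably inject a sham, losing transaction that raises that clearing price --- contradicting MIC. Morally this is why EIP-1559 (or its all-fees-burnt idealization) cannot keep all three properties under congestion.

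First I would isolate the cheapest confirmation. Let $\beta^\star$ be the minimum burn per confirmed transaction: take $\beta^\star$ so that on every bid profile the expected burn is at least $\beta^\star$ times the expected number of confirmations, and so that a single transaction with a high enough bid is confirmed with probability $1$ at burn exactly $\beta^\star$; a finite such $\beta^\star$ exists whenever the mechanism confirms something on some instance (the never-confirming mechanism is degenerate and is excluded). It then follows that the grand coalition's optimal surplus on any valuation profile $\mathbf{v}$ equals $\sum_{j \le k}(v_{(j)} - \beta^\star)^+$, where $v_{(j)}$ is the $j$-th largest valuation. Next, OCA-proofness on a two-user profile with $v_1 > v_2 > \beta^\star$ forces the reference play to confirm the $v_1$-user with probability $1$; since all rules depend only on the submitted bids, this is impossible when $\sigma(v_1) = \sigma(v_2)$, and when $\sigma(v_1) < \sigma(v_2)$ it contradicts the monotonicity of the allocation rule in a user's own bid that UIC guarantees. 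Hence $\sigma$ is strictly increasing on $(\beta^\star,\infty)$, and on any reference profile the confirmed set is exactly the (up to $k$) highest-value transactions whose value exceeds $\beta^\star$.

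Then I would fix valuations $v_1 > v_2 > \dots > v_{k+1} > \beta^\star$ and consider the reference play on this profile. By the above it confirms users $1,\dots,k$, each with probability $1$, and burns exactly $k\beta^\star$ in total. To compute payments, hold the other $k$ users' bids fixed at their reference bids and let user $k$'s valuation range freely; OCA-proofness at each resulting profile says user $k$ is confirmed exactly when its valuation exceeds $v_{k+1}$, so (using monotonicity to fill in bids outside the range of $\sigma$ by squeezing) the allocation curve $t \mapsto x_k(t,\mathbf{b}_{-k})$ is the step function $\mathbf{1}[t > \sigma(v_{k+1})]$. The Myerson payment identity for expected allocations then gives that user $k$ --- and, by the identical argument, every winner --- pays exactly the clearing price $\sigma(v_{k+1})$, so the honest miner's revenue on this profile is $k(\sigma(v_{k+1}) - \beta^\star)$. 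To finish, I would have the miner inject one sham transaction bidding $\sigma(v')$ for any $v'$ with $v_{k+1} < v' < v_k$: the resulting bid multiset is precisely the reference play of the $(k{+}2)$-user valuation profile $(v_1,\dots,v_{k+1},v')$, on which the top $k$ real transactions are still confirmed, the sham transaction loses (so the miner pays nothing for it), each winner now pays the clearing price $\sigma(v') > \sigma(v_{k+1})$, and the burn is still $k\beta^\star$; so the miner's revenue rises to $k(\sigma(v') - \beta^\star) > k(\sigma(v_{k+1}) - \beta^\star)$, a strictly profitable deviation, contradicting MIC.

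The conceptual skeleton is short, but the real work --- and the main obstacle --- is making the first two steps fully rigorous for \emph{randomized} rules: pinning down $\beta^\star$ so that the ``expected burn $\ge \beta^\star \cdot (\text{expected }\#\text{confirmed})$'' bound is simultaneously unconditional and tight; arguing that OCA-proofness, which only asserts ``weakly better than every coalition strategy,'' still forces the reference \emph{expected} surplus to equal the coalition optimum even when that optimum is merely an infimum; and verifying that the OCA constraints over a one-parameter family of profiles really do determine the entire allocation curve. One must also exclude degenerate mechanisms that never confirm anything (via whatever non-triviality the TFM model builds in) and observe that a general finite $k$ needs nothing new, since one may always add more users. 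Finally, the ``further'' clause --- that the impossibility survives even when the globally optimal $\sigma$ need not satisfy $\sigma(v) \le v$ --- drops out because the argument never uses individual rationality of $\sigma$: its only use of $\sigma$ is strict monotonicity on $(\beta^\star,\infty)$, which was derived from UIC and OCA-proofness directly.
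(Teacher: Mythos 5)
Your proposal is essentially the paper's \emph{deterministic} warm-up argument (confirmed set $=$ top values, per-confirmation burn, Myerson threshold payments, then a fake-bid MIC violation), but the steps that make it work there are exactly the ones that fail for randomized mechanisms, and you defer them rather than solve them. Concretely: (1) there is no justification for the existence of a per-confirmation burn floor $\beta^\star$ that is simultaneously a valid lower bound on every profile \emph{and} exactly attained by a single high bid confirmed with probability $1$ --- a randomized TFM need never confirm anything with probability $1$, and the burn need not decompose per confirmed transaction at all (even in the deterministic case the paper only gets ``burn depends on the number of confirmations,'' with marginal burns $q_i-q_{i-1}$, via global SCP); consequently the identity ``coalition optimum $=\sum_{j\le k}(v_{(j)}-\beta^\star)^+$'' and the claim that OCA-proofness forces probability-$1$ confirmation of the top value on a two-user profile are unsupported --- OCA-proofness only yields weak comparative statements (the paper's \cref{lem:random-oca-ordered,lem:random-oca-topk} give monotonicity of confirmation probabilities and that the $(k{+}1)$-st bid is unconfirmed, nothing stronger). (2) Your payment computation needs the whole allocation curve $t\mapsto x_k(t,\mathbf{b}_{-k})$, but the OCA constraints only constrain it at points in the image of $\sigma$; ``filling in by squeezing'' does not determine the curve when $\sigma$ has jumps or a sparse range, so Myerson's identity gives you bounds, not the exact clearing price $\sigma(v_{k+1})$, and your final strict MIC improvement can be swallowed by that slack. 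This is precisely why the paper must first prove that $\sigma$ is strictly increasing above a threshold (\cref{lemma:sigma-monotone}), work at continuity points of $\sigma$ (\cref{cor:continuity}), and prove only limiting ``total user utility $=0$'' statements (\cref{lem:random-oca-0util}) rather than exact price formulas.

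The paper's route is structurally different where it matters: it uses MIC not for the final blow but inside a ``free-knob'' lemma (\cref{lem:random-oca-freeknob}) saying unconfirmed bids can be swapped without changing miner utility or welfare; it then drives total user utility to zero by an induction over the number of tied top bids, pushing an unconfirmed bid arbitrarily close to the top bid at a continuity point of $\sigma$; and the terminal contradiction is a UIC violation (a lone user with a higher value could profitably \emph{underbid} to $\sigma(a)$, given zero utility and nonzero confirmation probability established via non-triviality and a coalition fake-bid argument), not a strict miner-revenue improvement. So the gap in your write-up is not a routine tightening: the structure you assume (probability-$1$ confirmations, exact per-confirmation burn, exact threshold payments) is what the randomized setting takes away, and recovering enough of it is the actual content of the theorem. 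The one thing you do get right for free is the ``further'' clause: like the paper, your argument nowhere uses individual rationality of $\sigma$.
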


In a \truthful{} TFM, a user is expected to bid truthfully,
so if the mechanism satisfies UIC, a user's utility is maximized when it just reports its true value.
However, OCA-proofness allows the global coalition to adopt a non-truthful bidding strategy $\sigma$ even for \truthful{} mechanisms.
% However, a (possibly direct-revelation) TFM satisfies OCA-proofness 
% \Hao{TODO}
% The assumption that $\sigma$ is continuous is needed for a technical reason, and
% actually we only require $\sigma$ to be continuous at a sufficiently large real number (see \cref{lem:random-oca-0util} and \cref{remark:continuity}).
% \Hao{NOTE: add the explanation of direct-revelation and the comment on continuity.}
% \Hao{TODO: check whether to remove continuity.}

\ignore{
As mentioned, since $c$-SCP for any $c$ implies
OCA-proofness,  \Cref{thm:intromain}
implies the impossibility of simultaneously
achieving UIC, MIC, and 1-SCP. 
While it may seem like 
our result is a strict generalization
of Chung and Shi's impossibility, there are actually  
interesting subtleties.
}
Our \Cref{thm:intromain} 
is intuitively stronger but technically incomparable in comparison with Chung and Shi's impossibility,  
which shows that no TFM can simultaneously satisfy UIC  
and 1-SCP for finite block sizes.
The reason is that Chung and Shi's impossibility
does not rely on MIC; however, %in our \Cref{thm:intromain},
MIC is necessary for our \Cref{thm:intromain} to hold. 
%In comparison with Chung and Shi who rely
%on only UIC and $1$-SCP to 
%reach a finite-block impossibility, 
%in our  
%\Cref{thm:intromain} 
%relies 
%on MIC to reach the impossibility, and interestingly,
%MIC is necessary for this impossibility to hold. 
Specifically,  a simple second-price auction 
with no burning (see \cref{remark:second-price-no-burning})
satisfies both UIC and OCA-proofness, but does not satisfy
MIC since the miner may benefit by injecting
a fake $(t+1)$-th bid 
where $t$ is the number of confirmed bids, since the $(t+1)$-th bid
sets the price for confirmed bids.

\paragraph{Global SCP}
We suggest a simpler version of OCA-proofness
that we call {\bf global SCP}, which also intuitively captures 
the requirement that strategic users and miners 
cannot steal from the protocol, and is perhaps more appropriate when
focusing on UIC TFMs (as we do in this paper).
In our work, global SCP is not only a 
technical {\it stepping stone} towards proving 
\Cref{thm:intromain}, 
but also of {\it independent interest} as we explain below.
Specifically, 
{global SCP} is almost the same as OCA-proofness, except for 
requiring $\sigma$
to be the honest bidding strategy indicated by the mechanism (i.e.,
the same bidding strategy used to establish UIC).
% \Hao{TODO: fix bidding strategy} 
In other words, a mechanism satisfies global SCP
if and only if the honest strategy is surplus-maximizing for the global
coalition.
It is easy to see that 
for a \truthful{} mechanism, 
$c$-SCP for any $c$ 
implies global SCP, which in turn implies OCA-proofness. 
% \Hao{TODO: why we need a direct revelation mechanism here?}
To prove \Cref{thm:intromain},  
we first prove the following theorem:  

\begin{theorem} % [Main impossibility result]
Suppose that the block size is finite.
Then no possibly randomized TFM can simultaneously satisfy
user incentive compatibility (UIC), miner incentive compatibility (MIC),
and global SCP.
Further, the impossibility holds even for non-\truthful{} mechanisms. 
\label{thm:introglobalscp}
\end{theorem}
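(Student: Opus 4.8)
The plan is to combine three facts: UIC (through Myerson's payment identity) controls the payment rule, global SCP forces the honest allocation to be \emph{exactly} efficient on certain over-subscribed inputs, and MIC eliminates the one surviving family — essentially ``second price with no burning'', whose revenue all flows to the miner. MIC is genuinely needed here: the mechanism of \cref{remark:second-price-no-burning} is both UIC and global SCP yet not MIC, so no proof that ignores MIC can succeed. Two reductions clean up the setup. First, it suffices to consider \truthful{} mechanisms: given a non-\truthful{} mechanism with honest bidding strategy $\sigma$, the mechanism that runs the original on $(\sigma(r_1),\dots,\sigma(r_m))$ when fed reports $(r_1,\dots,r_m)$ is UIC and inherits MIC and global SCP, because its miner- and coalition-deviations are a subset of the original's while its honest outcome is unchanged. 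Second, by averaging over all permutations of the present users we may assume the mechanism is anonymous: this preserves UIC and MIC, and it preserves global SCP because the honest surplus is averaged while the optimal coalition surplus is permutation-invariant and cannot increase under averaging.

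The core is an \emph{efficiency lemma}: if the true values $V_1>V_2>\cdots$ over-subscribe a block of size $k$ with the top $k$ strictly separated from the rest ($V_k>V_{k+1}$), then global SCP forces the honest rule to confirm precisely the top $k$ users with probability one and to burn nothing. Indeed, the honest surplus is at most $\sum_{i\le k}V_i$, since at most $k$ transactions are confirmed and the burn is nonnegative; on the other hand the global coalition can approach $\sum_{i\le k}V_i$ from below by setting every on-chain bid to $\varepsilon$ and having the miner include exactly the $k$ highest-value transactions, using only the validity constraints that a confirmed user pays at most her bid and that miner revenue is nonnegative (so the burn is at most $k\varepsilon$). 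Global SCP sandwiches the honest surplus between these bounds, forcing equality; since the burn is nonnegative, all values are positive, and the top $k$ are unique, this equality of expectations forces the honest confirmed set to be the top $k$ almost surely and the burn to be $0$ almost surely.

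Feeding the efficiency lemma into Myerson's identity finishes the job. Consider $k$ users of value $V$ and one of value $v<V$. Identifying a bid with the true value of a truthfully-bidding user, the efficiency lemma — applied to profiles in which the confirmed user's value is varied between $v$ and $V$ — shows that her confirmation probability, as a function of her bid, is $0$ below $v$ and $1$ above $v$; so by the payment identity she pays exactly $v$, and since nothing is burned the miner's revenue is $kv$. Now the miner injects one fake transaction bidding $\hat b\in(v,V)$. Since $\hat b<V$ this fake is outside the top $k$, so by the efficiency lemma on the augmented profile the honest rule still confirms the $k$ value-$V$ users and still burns nothing — but each now pays $\hat b$, the new $(k+1)$-st highest bid, so the miner's revenue rises to $k\hat b>kv$ while the unconfirmed fake costs the miner nothing. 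This strictly profitable miner deviation contradicts MIC, and since the general case was reduced to the \truthful{} one, the impossibility holds for non-\truthful{} mechanisms as well.

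The delicate part — and precisely what Chung and Shi's stronger $1$-SCP hypothesis let them skip — is the efficiency lemma together with the ensuing Myerson bookkeeping once randomization is allowed: one must be careful about which joint deviations the coalition is actually permitted (in particular that the miner picks the included set and that the burn can be driven to zero), and then deduce the deterministic ``confirm the top $k$, burn nothing, pay the $(k+1)$-st bid'' behavior from equalities that only hold in expectation. The remaining ingredients — Myerson's lemma, the two WLOG reductions, and the one-line fake-bid computation against MIC — are routine.
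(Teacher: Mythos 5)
There is a genuine gap, and it sits exactly where you flagged the proof as ``delicate'': the efficiency lemma is not a consequence of global SCP, and your proof of it is invalid. Your lower bound on the coalition's achievable surplus has all users bid $\varepsilon$ and the miner include the $k$ highest-value transactions, and you then credit the coalition with surplus at least $\sum_{i\le k}V_i-k\varepsilon$. But confirmation, payment, and burning are executed by the blockchain, not by the coalition: nothing forces the confirmation rule to confirm bids of size $\varepsilon$, so the deviation may yield surplus $0$ and the sandwich never closes. In fact the ``burn nothing'' conclusion is false under the hypotheses you use (global SCP alone, and even adding MIC, individual rationality, and weak symmetry): take the posted-price mechanism with reserve $r$ that includes the top $k$ bids among those at least $r$, confirms all included bids, charges each confirmed bid $r$, and burns everything. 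This mechanism is weakly symmetric, individually rational, MIC (miner revenue is identically $0$), and global SCP --- any confirmed transaction costs exactly $r$ of burn, so the honest outcome of confirming the top $\min(k,\#\{V_i\ge r\})$ values is already coalition-optimal --- yet it burns $kr>0$ and confirms no $\varepsilon$-bid. (It is not UIC, but your lemma does not assume UIC, and even if it did, you could not invoke UIC in the lemma's proof the way you have written it.) Since your final MIC computation (``payment $v$, revenue $kv$, fake bid raises revenue to $k\hat b$'') leans entirely on zero burn, the argument collapses at this point.

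The paper's route is instructive by contrast. In the deterministic warm-up (\cref{thm:impossible}) the burn control is the weaker statement of \cref{lem:burn} --- the burnt amount depends only on the \emph{number} of confirmed bids --- and its proof has the coalition replicate an entire block configuration from another scenario via fake bids, never relying on tiny bids being confirmed; your fake-bid-against-MIC endgame is essentially the paper's endgame there. For randomized mechanisms the paper does not prove anything like ``confirm top $k$ and burn nothing''; instead it uses MIC through \cref{lem:freeknob} (unconfirmed bids can be swapped for free without changing miner utility, welfare, or total user utility) and a $\delta$-perturbation induction with Myerson's lemma (\cref{lem:zero-user-utility}) to show that in the relevant scenarios the \emph{total user utility is zero}, i.e., confirmed users pay their full bids; the contradiction in \cref{thm:random-scp-impossible} then comes from a single large bidder underbidding, not from a fake-bid revenue jump. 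If you want to salvage your outline, you would need to replace the efficiency lemma by a statement of this ``users pay their bids / burn depends only on the confirmed count'' type and prove it using MIC-based comparisons between scenarios with provably-unconfirmed bids --- which is precisely the technical core of the paper. Your two preliminary reductions are also stated too quickly (the truthful-mechanism reduction needs the blockchain-side rules to re-apply and verify $\beta$, as in the paper's revelation principle, since the miner cannot be trusted to enforce it; and weak symmetry is already a model assumption, so the averaging step is unnecessary and would itself need an argument for why symmetrization preserves global SCP), but those are secondary to the main flaw.
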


%\Cref{thm:introglobalscp} is not just a stepping stone towards
%proving \Cref{thm:intromain}, but also 
%of independent interest. 
We now explain why the global SCP notion is of independent interest.
One advantage of global SCP 
is that the {\it revelation principle} holds
for any TFM that satisfies UIC, MIC, and global SCP, which we formally
prove in~\cref{sec:revelation}. 
In other words, given any TFM that is UIC, MIC, and global SCP,
there is an equivalent \truthful{} mechanism
that simulates it.   
\elaine{comment on the subtlety later}
%Our proof of the revelation principle has some subtleties: specifically,
%when we bake the users' non-truth-telling bidding strategy 
%$\sigma$ into the mechanism itself, it needs to be baked
%into both 
\ignore{
Note that although the direct revelation 
principle is straightforward for classical auctions  
where we care only about achieving UIC in the presence
of an honest auctioneer, 
whether the principle
still holds for TFMs is subtle, partly due
to the additional MIC and collusion-resilience requirements,
and partly due to the separation of the inclusion rule 
executed by the miner 
and the confirmation/payment rules executed by the blockchain.
}
For this reason, 
\Cref{thm:introglobalscp}
rules out %not just \truthful{} TFMs, 
even non-\truthful{} TFMs
that simultaneously satisfy 
UIC, MIC, and global SCP.\footnote{Simultaneously with and independently of this paper, Gafni and Yaish~\cite{barrier-tfm} proved, among other results, a version of \cref{thm:introglobalscp} for the special case of deterministic mechanisms and a block size of 1.}

By contrast, 
\Cref{thm:intromain}
holds only for \truthful{} mechanisms.
In particular, in \cref{sec:non-direct-oca-counterexample}, we show a 
non-\truthful{} 
mechanism that simultaneously satisfies
UIC, MIC, and OCA-proof.
The mechanism is contrived, %and may not be practical,
but it demonstrates the subtlety and the technical challenges when modeling the notion of collusion-resilience.
\Hao{TODO: check whether this philosophical statement is proper.}
This also suggests that the revelation principle does not hold 
for mechanisms that satisfy UIC, MIC, and OCA-proofness, partly because
in such a mechanism, 
the bidding strategies used to establish UIC and OCA-proofness may be different.

% the equilibrium 
% strategy of UIC and the equilibrium strategy of OCA-proofness
% may be different.

\ignore{
However, we stress that the global SCP notion 
also {\it admits non-direct-revelation mechanisms}
where the users' honest bidding strategy
and the global coalition's social-welfare-maximizing strategy 
are the same $\sigma$ which need not be truthful.
In comparison, Roughgarden \elaine{cite} defined OCA-proofness 
{\it only for direct revelation mechanisms}
where the user's honest strategy is truth-telling; however,
Roughgarden felt that it would be too restrictive
to require the globally optimal strategy  
$\sigma$ to be truth-telling too, and this explains
why he allowed $\sigma$ to be some other ``reasonable'' bidding strategy
that is not necessarily truth-telling.
}
\ignore{
Therefore, 
%arguably the conceptually cleaner
%alternative is to  
%let the users' honest bidding
%strategy be non-truth-telling  
%and require that the honest bidding strategy also be optimal
%for the global coalition, i.e., 
a conceptually cleaner alternative is to just allow 
non-direct-revelation mechanisms 
and adopt the 
global SCP notion.
}

\ignore{
\paragraph{Direct revelation principle for TFMs}
Since TFM is a relatively new space, 
part of the research challenge is to explore
what are the right definitions.
For example, the aforementioned definitional subtleties 
are perhaps related to the confusion 
whether the direct-revelation principle holds
for TFMs. 
The classical mechanism design literature %focuses on designing auctions
typically focuses
on achieving user incentive compatibility 
assuming an honest auctioneer. 
In this context, the direct revelation principle says
that given any non-direct-revelation mechanism, we can simulate
it using an  
equivalent direct-revelation mechanism, by baking the bidding rule
$\sigma$ into the mechanism itself.
This principle 
allows us to without loss of generality assume that the honest
bidding strategy is simply truth-telling.

In the TFM context, it is no longer clear whether 
the direct revelation principle still holds, partly
due to the additional MIC and collusion-resilience properties,
and partly due to the allocation rule 
being separated into an inclusion rule
executed by the miner, and confirmation/payment/miner-revenue 
rules which are executed by the blockchain.
In \elaine{refer}, we prove a direct revelation principle
for any TFM that satisfies 
UIC, MIC, and global SCP.
The main subtlety in the proof is that 
for the mechanism to simulate the non-truth-telling
honest bidding strategy $\sigma$, 
we must bake $\sigma$ into 
both the miner's inclusion rule and the blockchain's 
confirmation rule --- otherwise if it is only baked
into the inclusion rule, a strategic miner may not enforce it.  

On the other hand, 
the direct revelation principle 
may not hold for a TFM that satisfies UIC, MIC, and OCA-proof
(where users' honest bidding 
strategy and the globally optimal strategy may be different). 
Interestingly, this suggests that for the OCA-proofness notion, 
%considering only direct-revelation mechanisms 
it is no longer without loss of generality to restrict
to direct-revelation mechanisms like 
Roughgarden's definitional approach.
\elaine{cite}
}

\ignore{
the global coalition's social-welfare-maximizing strategy
may be some different strategy $\sigma$. 
Since the direct revelation principle does not hold
\elaine{include an explanation in appenix?}
for TFMs which must simultaneously satisfy UIC, MIC, and 
miner-user collusion-resilience,    
admitting non-direct revelation mechanisms and adopting
the global SCP notion might be a more natural definitional
approach than 
the original choices made by 
Roughgarden
\elaine{cite}.
}

%\paragraph{Additional subtleties and results}

\paragraph{Ways to circumvent the impossibilities}
We show in \cref{sec:bypass-impossibility}
that the impossibility of \Cref{thm:intromain} 
can be circumvented 
by allowing non-\truthful{} mechanisms or by
allowing users to coordinate in bidding 
in the globally optimal strategy $\sigma$. 
Additionally, in \cref{sec:BayesianIC}, we show that the impossibility
can also be circumvented by adopting a Bayesian notion of incentive compatibility.
Bayesian incentive compatibility is particularly justified when the strategic players must submit their bids 
without having seen other honest users' bids---for example,
when honest users' bids are encrypted 
and block production is performed by trusted hardware or through secure multi-party computation~\cite{crypto-tfm}.

% In the same section, we raise an open question  
% regarding whether it is possible to use cryptography 
% (e.g., the MPC-assisted model of Shi et al.~\cite{crypto-tfm})
% and Bayesian notions of incentive compatibility to circumvent
% the impossibilities. 
\ignore{
or 3) using cryptography.  
In particular, the second and the third 
relaxations give rise to natural and 
meaningful mechanisms that are not contrived. 
The impossibility of \Cref{thm:introglobalscp} 
can be circumvented through the use of cryptography. 
}

% \Hao{TODO: Double check whether we can achieve global SCP in MPC-assisted model.}

\ignore{
Our work also suggests 
some interesting subtleties regarding
the OCA-proofness definition.
Observe that 
in the OCA-proofness notion, 
besides requiring
that the globally optimal strategy not alter
the honest inclusion rule which is necessary
for capturing ``not stealing from the protocol'', 
Roughgarden also imposed
some additional requirements
%\footnote{Likely, Roughgarden imposed these requirements 
%on $\sigma$ because he wanted to allow the globally optimal
%strategy to be a non-truth-telling strategy; and 
%he also wanted $\sigma$ to be a reasonable strategy.
%As explained above, perhaps allowing
%non-direct-revelation mechanisms 
%and adopting the global SCP notion is a cleaner definitional approach.}
on $\sigma$.
It is interesting to ask which of these requirements
are needed for the impossibility of \Cref{thm:intromain}
to hold.
Our proof suggests that the 
impossibility %of \Cref{thm:intromain}
still holds even when the 
globally optimal strategy $\sigma$ need not be individually rational.
On the other hand, 
insisting that each user act independently in 
$\sigma$ is necessary for this impossibility.  
Without it, 
the ``posted price auction 
with random selection and all payment burnt'' 
would satisfy all three notions simultaneously (see TODO). 
Furthermore, insisting that $\sigma$ not  
inject fake bids is also necessary for the impossibility to hold --- 
without it, we can construct a mechanism 
that satisfies all three notions 
(see \elaine{refer}). 
}
%gives cleaner definitions.

\subsection{TFM Incentive-Compatibility Notions: A Cheat Sheet}

We gather here informal definitions and comparisons of the key
incentive-compatibility notions used in this paper.
First, a transaction fee mechanism specifies how a user is supposed to
bid (as a function of its private valuation), which transactions a miner
is supposed to include (as a function of the transactions it knows
about and their bids), and the resulting outcome (the subset of
included transactions that get confirmed, and the payments made by the
users and received by the miner). 
If the bidding strategy suggested by the TFM is the identity, 
then we additionally call the TFM {\em truthful}.
In this paper, as in the rest of the
TFM literature, we consider only static mechanisms.
\begin{itemize}

\item \textbf{UIC.} (\cref{def:UIC}) 
Provided that the miner follows the suggested inclusion rule, the bidding strategy suggested by the TFM is a
  dominant strategy for users. 
  % If this bidding strategy is the
  % identity, then we additionally call the TFM {\em truthful}.

\item \textbf{MIC.} (\cref{def:MIC}) The inclusion rule suggested by the TFM
  is always revenue-maximizing for the miner regardless of users' bids; moreover, the miner
  cannot increase its revenue through the injection of fake transactions.

\item \textbf{Global SCP.} (\cref{def:globalSCP}) If the miner follows the inclusion rule
  suggested by the TFM and all users follow the bidding rule suggested
  by the TFM, then their joint surplus is at least as large as it would be from
  any coordinated deviation.

\item \textbf{$c$-SCP.} (\cref{def:cSCP}) 
  For every coalition of the miner and at most~$c$ users, 
  if the miner follows the inclusion rule suggested by the TFM 
  and the users in the coalition follow the bidding rule suggested by the TFM,
  then the joint surplus of the coalition is at least as large as it would be
  from any coordinated deviation (holding fixed the bids of users
  outside the coalition).

\item \textbf{OCA-proof.} (\cref{def:OCAproof}) If the miner follows
  the inclusion rule
  suggested by the TFM and all users follow a suitably chosen individually
  rational bidding rule~$\sigma$ (possibly different from the one
  suggested in the TFM description), then their joint surplus is as
  large as it would be from any coordinated deviation.

\end{itemize}
For example, in \cite{roughgardeneip1559-ec} it was shown that
Ethereum's EIP-1559 TFM and a variant called the ``tipless mechanism''
satisfy UIC, MIC, and OCA-proofness when there is no contention
between transactions; in fact, in this case, these TFMs satisfy the
$c$-SCP condition for every $c \ge 1$. When there is contention
between transactions for inclusion in a block, the EIP-1559 TFM loses
its UIC property and the tipless mechanism loses (all three notions
of) collusion-resilience.

As mentioned above:
\begin{itemize}

\item (\cref{thm:revelation}) A relevation principle holds for the
  global SCP and $c$-SCP
  notions: any UIC and MIC TFM that satisfies one of these properties
  can be simulated by a truthful UIC and MIC TFM that satisfies the same property.

\item A relevation principle does not in general hold for the
  OCA-proof notion: while there are non-truthful TFMs that satisfy
  UIC, MIC, and OCA-proofness
  (\cref{sec:non-direct-oca-counterexample}), there are no truthful
  TFMs that satisfy UIC, MIC, and OCA-proofness
  (\cref{thm:oca-random-impossibility}).

\end{itemize}

The main result in Chung and Shi~\cite{foundation-tfm} states that,
even among randomized TFMs, no TFM satisfies UIC and $c$-SCP for
any $c \ge 1$. 
Our \cref{thm:random-scp-impossible} proves that,
even among randomized TFMs, no TFM satisfies UIC, MIC, and global SCP.
(Due to the revelation principle mentioned above, these impossibility
results apply to both truthful and non-truthful TFMs.)
Our \cref{thm:oca-random-impossibility} proves the stronger statement that,
even among randomized TFMs, no truthful TFM satisfies UIC, MIC, and
OCA-proofness.

Reflecting on the competing notions of collusion-resilience, we can
observe the following. The $c$-SCP condition may be particularly
appropriate in scenarios where the primary concern is deviations by
small coalitions, or in scenarios where users may wish to deviate in
ways that exploit other users. The $c$-SCP condition is also notable
in that, together with the UIC condition, it already triggers the
impossibility result in~\cite{foundation-tfm} (without any appeal to
MIC). The OCA-proofness condition is distinguished by being the
weakest of the three notions (thus leading to the strongest
impossibility results) and by allowing the discussion of non-UIC
mechanisms.\footnote{For example, in a first-price auction,
  the ``reference outcome'' might be
  defined by a (non-truthful) bidding strategy that would constitute a
  Bayes-Nash equilibrium with respect to some prior over user
  valuations
(cf., Corollary~5.12 in \cite{roughgardeneip1559-ec}).}
For TFMs that are UIC and MIC, like those studied in this paper,
global SCP is arguably the ``right'' definition---capturing the spirit
of OCA-proofness, without any additional technical complications
arising from users using different bidding strategies to satisfy UIC
and collusion-resilience. 
Put differently, 
the UIC and MIC conditions imply that the miner and the users following their intended strategies constitutes a Nash equilibrium; 
the global SCP condition asserts that this Nash equilibrium is also robust to deviations by the grand coalition, 
while OCA-proofness only asserts such robustness for a possibly different strategy profile 
(defined by the intended inclusion rule and some individually rational bidding strategy).
% Moreover,
% for a TFM that satisfies UIC, MIC, and OCA-proofness, the honest strategy
% is an equilibrium for an individual user or the miner, 
% but for a global coalition there may be a different equilibrium --- see also
% the counterexample in \cref{sec:non-direct-oca-counterexample}.
% On the other hand, for a TFM that satisfies UIC, MIC and global SCP, the honest 
% strategy is an equilibrium 
% for an individual user, the miner, as well as the global coalition. 
From this vantage point, one might view 
\cref{thm:random-scp-impossible} as the main impossibility result in
this paper, with
\cref{thm:oca-random-impossibility} serving as a technically challenging
extension of the result under still weaker incentive-compatibility conditions.

\ignore{
\paragraph{Reflections on definitions: $c$-SCP, global SCP, and OCA-proofness}
Since blockchain mechanism design is a relatively new field,
how to define the right notion is part of the research and requires careful considerations.  
% Besides our new impossibility results, 
% our findings also suggest that global SCP may be a conceptually simpler 
% and cleaner alternative of OCA-proofness; 
% thus the introduction of global SCP can be of independent interest. 
Specifically, 
for a TFM that satisfies UIC, MIC and global SCP, the honest bidding
strategy is an equilibrium 
for an individual user, the miner, as well as the global coalition. 
Moreover,
we show in \cref{sec:revelation} that the relevation 
principle which is often useful for reasoning about mechanisms holds for UIC + MIC + global SCP. 
Similarly, fixing any integer $c$, for a TFM that satisfies UIC, MIC and $c$-SCP, 
the honest bidding strategy is an equilibrium for an individual user, the miner, as well as the coalition formed by the miner and at most $c$ users. 
This combination is useful especially when forming a large coalition is costly.

On the other hand, the motivation for non-truthful bidding strategies in the definition of OCA-proofness is to discuss the collusion-resilience of \emph{non-UIC mechanisms}.
Take the first-price auction as an example where the miner confirms the top $k$ bids, and each confirmed user pays its own bid.
Here, the individually rational bidding strategy $\sigma(v) = \alpha v$ could be a Bayesian Nash equilibrium, where $\alpha < 1$ depending on the prior and the number of i.i.d.~users. (c.f.~Corollary 5.12 in \cite{roughgardeneip1559-ec}). 

However,
for a TFM that satisfies UIC, MIC, and OCA-proofness, the honest strategy
is an equilibrium for an individual user or the miner, 
but for a global coalition there may be a different equilibrium --- see also
the counterexample in \cref{sec:non-direct-oca-counterexample}.
The revelation principle also does not hold for UIC + MIC + OCA-proofness.
Further, the counter-examples in \cref{sec:bypass-impossibility} show
that the OCA-proof notion is very sensitive to the 
restrictions imposed on the globally optimal strategy $\sigma$.  
Thus, when restricting to UIC mechanisms, the global SCP notion is simpler and may be more natural than OCA-proofness.
% The fact that some of these examples are contrived also suggest that 
% maybe the restrictions imposed by OCA-proofness may lead to  
% unintended side effects. 
}

%\paragraph{Additional results an}
%interesting subtleties.}

%\paragraph{Technical highlight}
%\elaine{what should be the highlight?}

%\elaine{TODO: explain the relations, it can be viewed
%as a generalization chung and shi's impossibility}

%\elaine{TODO: discuss the subtleties}

%%% Local Variables:
%%% mode: latex
%%% TeX-master: "oca"
%%% End:

\section{Definitions}\label{s:defs}

%sample citation 
% \nocite{crypto-tfm}

\subsection{Transaction Fee Mechanism in the Plain Model}
\label{sec:tfm-def}
In this work, we assume each user has a true value $v \in \mathbb{R}_{\geq 0}$ for its transaction being confirmed on the blockchain, where $\R_{\geq 0}$ denotes non-negative reals.
We assume all transactions have an equal size, and we define the \emph{block size} 
to be the maximal number of transactions that can be included in a block.
If there exists a finite number $k$ such that a block can include at most $k$ transactions, we say the block size is \emph{finite};\footnote{The finite block size regime in this work and \cite{foundation-tfm} corresponds to the case in \cite{roughgardeneip1559-ec} where the base fee in the EIP-1559 or tipless mechanisms is \emph{excessively low}, i.e.~the number of transactions willing to pay the base fee exceeds the maximum block size
(cf., Definition 5.6 in \cite{roughgardeneip1559-ec}).
}
otherwise, we say the block size is \emph{infinite}.
When the block size is finite, we always assume $k \geq 1$.
Additionally, we assume that the users' utilities are independent of the order of the transactions in a block.
In practice, transactions may have different sizes and the order of the transactions matters.
For example, the size is measured in ``gas'' in Ethereum, and the prices for trading may depend on the order of the transactions.
However, for our impossibility results, these assumptions only make them stronger.

A transaction fee mechanism (TFM) consists of 
the following possibly randomized algorithms:
\begin{itemize}
\item 
{\bf Bidding rule} (executed by the user):
takes in the user's true valuation $v \in \R_{\geq 0}$ for getting  
its transaction confirmed, 
and outputs a vector of a non-negative number of real-valued bids. 
Without loss of generality, we assume that 
at most one bid in the output vector 
may correspond to the user's actual transaction\footnote{The 
blockchain protocol can always suppress
conflicting or double-spending transactions.}
which has a true value of $v$ 
(henceforth called the {\it primary} bid), and 
the remaining bids are {\it fake bids} with a true value of $0$.
We say a mechanism is {\it \truthful{}} if the honest bidding rule
asks the user to submit a single primary bid that reports its true value.
Otherwise, if the honest bidding rule may ask the user to submit bids other than its true value, the mechanism is said to be non-\truthful{}.
Our formulation is more general and 
admits non-\truthful{} mechanisms, too.\footnote{Throughout the paper
  except \cref{sec:revelation}, we only focus on bidding rules that
  output a single bid. In \cref{sec:revelation}, we consider general
  bidding rules that may output multiple bids. 
% In practice, a mechanism can distinguish whether a set of bids are submitted by the same user by checking whether they are signed by the same public key.
}

\item 
{\bf Inclusion rule} (executed by the miner):
takes a 
%An \emph{inclusion rule} $\bfI(\cdot)$:
bid vector $\bfb = (b_1,\dots,b_t) \in \R^t$ as input, 
and outputs 
a subset of $S \subseteq \bfb$ indicating which bids to be included in the block.
When the block size $k$ is \emph{finite}, 
it must be that $|S| \leq k$.
\item 
{\bf Confirmation rule} (executed by the blockchain):
takes as input a block $\bfB$, i.e., a vector of included bids, 
and outputs a subset of $\bfB$ to be confirmed.
In general, 
not all transactions included in the block must be confirmed, 
and only the confirmed transactions are executed.\footnote{Roughgarden \cite{roughgardeneip1559-ec} assumes that all included transactions are confirmed. However, Chung and Shi \cite{foundation-tfm} show that allowing unconfirmed transactions in a block enlarges the design space. For example, some mechanisms require a block  to contain some unconfirmed transactions (see Section 7 in \cite{foundation-tfm}).}
\item 
{\bf Payment rule} (executed by the blockchain):
given a block $\bfB$ as input, outputs the payment of each
bid.  
We require {\it individual rationality}, i.e., 
for any vector of true values $\bfv = (v_1, \ldots, v_t)$,  
if all users execute the honest 
bidding rule, and we then execute the honest inclusion, 
confirmation, and payment rules, then 
the following holds with probability $1$:\footnote{We can also relax
  the requirement such that individual rationality holds in
  expectation. Both the impossibility results
  (\cref{sec:deterministic,sec:randomglobalscp,sec:oca-impossibility})
  and the revelation principle result (\cref{sec:revelation}) continue
  to hold.}
for every user $i \in [t]$, if its primary 
bid is not confirmed, then its total payment is zero;
else if its primary bid is confirmed,
then its total payment is at most its true value $v_i$. 
Here, a user's total payment is the sum of the payment 
of all bids it has submitted.
%the unconfirmed (but possibly included) bids must pay nothing,
%and each confirmed bid pays at most the bid amount.
\item 
{\bf Miner revenue rule} (executed by the blockchain):
given a block $\bfB$ as input, 
outputs how much revenue the miner gets. 
We require {\it budget feasibility}, i.e.,  
%We require that 
the miner's revenue 
is at most the total payment collected from all confirmed bids.
When the miner revenue is strictly smaller than the total payment, 
the difference between the total payment and the miner revenue is 
said to be \emph{burnt}, i.e.~eliminated from circulation.
\end{itemize}

% We call a TFM is \emph{trivial} if the confirmation probability of all transactions is zero for any block;
We say a TFM is \emph{trivial} if the confirmation probability of all transactions is zero for any bid vector assuming the miner honestly follows the inclusion rule;
otherwise, it is called \emph{non-trivial}.
%  if there exist a block such that the confirmation probability of a transaction is non-zero.
% The block size bounds the number of transactions that can be selected by the inclusion rule.
% If there is no upper bound, and the inclusion rule may output arbitrary number of transactions, we say the block size is \emph{infinite}.

A strategic miner or miner-user coalition
may deviate from the honest inclusion rule. 
On the other hand, 
since the confirmation, payment, and miner revenue rules 
are executed by the blockchain, 
they are always 
implemented honestly.

We focus on mechanisms that are {\it weakly symmetric}, i.e.,
mechanisms that do not make use 
of the bidders' identities or other auxiliary 
information (e.g., timestamp, transaction metadata),
except for tie-breaking among equal bids.
\ignore{
A mechanism is said to be weakly symmetric, 
iff auxiliary information (e.g.,  
user identity, timestamp) 
is only used to break ties between multiple bids  
of the same amount.
}
More formally, we define weak symmetry as below.

\begin{definition}[Weak symmetry]
\label[definition]{def:symmetry}
A mechanism is called \emph{weakly symmetric} 
if the mechanism can always be equivalently described in the following manner: 
given a bid vector $\bfb$ where each bid may
carry some extra information such as identity or timestamp, 
the honest mechanism always sorts the vector $\bfb$ by the bid amount first. 
During the sorting step, if multiple bids have the same amount,
then arbitrary tie-breaking rules may be applied, and the tie-breaking 
can depend on extra information such as timestamp, identity, or random coins.
After this sorting step, the inclusion rule and the confirmation rules should depend only on the amount of the bids and their relative position in the sorted bid
vector.
\end{definition}

The above weak symmetry definition  
is equivalent to the following: 
for two bid vectors $\bfb$ and $\bfb'$ of length $n$ that are equivalent 
up to a permutation, 
then the distribution of the  
mechanism's outcomes on $\bfb$ and $\bfb'$ are
identically distributed --- assuming that the outcome   
is encoded in the form of 
$\left(\mu, {\sf sorted}(\{(v_i, x_i, p_i)\}_{i \in [n]})\right)$
sorted in descending order of $v_i$, 
where $\mu$ encodes the miner's revenue, each tuple $(v_i, x_i, p_i)$
indicates that some user with bid $v_i$ 
had the confirmation outcome $x_i \in \{0, 1\}$, and paid $p_i$.

If we require that the tie-breaking cannot depend on the extra information, 
we can define \emph{strong symmetry} as follows:
for any bid vector $\bfb = (b_1, \dots, b_t)$, and for any $i \neq j$ such that $b_i = b_j$, 
the random variables $(x_i, p_i)$ and $(x_j, p_j)$ must be identically distributed.
Throughout the paper, we focus on weakly symmetric mechanisms.
Since a strongly symmetric mechanism must also be weakly symmetric, 
our impossibility results also rule out the existence of strongly symmetric mechanisms.

%\subsection{Strategy Space and Utility}
%\Hao{TODO: define strategy space and utility}
\paragraph{Strategy space}
A strategic user can deviate
from the honest bidding rule and post an arbitrary bid vector 
with zero to multiple bids.
Without loss of generality, we may assume that in the strategic bid
vector, at most one bid can correspond to the user's actual transaction
which has a non-zero true value; all other bids must
be fake bids with zero true value.
A strategic miner 
can deviate from the  honest inclusion rule, and instead
create an arbitrary block (subject to the block size limit)
that includes any subset of the bid vector as well
as any number of fake bids
that it chooses to inject. 
A strategic miner-user coalition can adopt 
a combination of the above strategies. 

\paragraph{Utility and social welfare}

For a user with true value $v$, let $x \in \{0, 1\}$ be the indicator of
whether its primary bid is confirmed or not,  
let $p$ denote its total payment, then the user's utility
is $x \cdot v - p$.
The miner's utility is simply its revenue. 
The social welfare is defined to be the sum of the utilities of all
users and the miner (i.e., the total value of the confirmed
transactions, less any burned payments).

Notice that we allow the miner revenue to be smaller than the sum of users' payment, since the coins can be burnt.
When calculating the social welfare, the payments among the users and the miner are canceled out, so the social welfare is independent of the payment;
however, the amount of burnt coins decreases the social welfare.
For example, suppose there is only one user, and let $p$ be the user's payment and $q$ be the amount of burnt coins.
In this case, the user's utility is $x \cdot v - p$, the miner revenue is $p - q$, and the social welfare is $(x \cdot v - p) + (p - q) = x\cdot v - q$.
\Hao{Add a comment on only burning decreases the social welfare, but payment does not.}

\subsection{Incentive Compatibility Notions}
\label{sec:IC}

\begin{definition}[User incentive compatible (UIC)]
\label[definition]{def:UIC}
% Let $\sigma: \mathbb{R} \rightarrow \mathbb{R}$ be a bidding strategy.
% \Hao{Do we want to define multiple bids $\mathbb{R}^*$ as the output?}
A TFM is said to be \emph{user incentive compatible (UIC)},
iff the following holds: 
for any user $i$, 
for any bid vector $\bfb_{-i}$ that corresponds to the 
bids posted by all other users, 
user $i$'s 
expected utility is always maximized 
when it follows the honest bidding rule (assuming that the miner executes
the inclusion rule honestly).
Notice that for a \truthful{} mechanism, the bidding rule is just truth-telling, so following the honest bidding rule simply means submitting the value truthfully;
for a non-\truthful{} mechanism, a bidding rule may ask a user to submit a bid other than its true value. 
%if it does not inject any fake bid and bids truthfully, no matter what the other users' bids are.
% If the bidding strategy $\sigma$ is the identity function, i.e.~$\sigma(v) = v$ for all $v$, we simply say the TFM is \emph{user incentive compatible (UIC)}. 
\end{definition}

\begin{definition}[Miner incentive compatible (MIC)]
\label[definition]{def:MIC}
A TFM is said to be \emph{miner incentive compatible (MIC)}, 
%iff no matter what the users' bids are, 
iff given any bid vector $\bfb$, 
the miner's expected utility is maximized when 
the miner does not inject any fake bid and 
creates a block indicated by the honest inclusion rule.
\end{definition}

\begin{definition}[$c$-side-contract-proof ($c$-SCP)]
\label[definition]{def:cSCP}
A TFM is said to be %\emph{global-side-contract-proof (global SCP)}, 
$c$-side-contract-proof ($c$-SCP), 
% iff given any vector of true values $\bfv$, 
iff for any coalition $C$ consisting of the miner and between $1$ and at most $c$ users,
and for any bid vector $\bfb_{-C}$ that corresponds to the bids posted by all users not in $C$, 
the coalition's joint utility is always maximized when 
it adopts the honest strategy, i.e., all users
in the coalition bid 
according to the honest 
bidding rule, 
and the miner follows the honest inclusion rule.
\end{definition}

\begin{definition}[Global side-contract-proof (global SCP)]
\label[definition]{def:globalSCP}
A TFM is said to be \emph{global side-contract-proof (global SCP)}, 
iff given any vector of true values $\bfv$, 
the expected social welfare is maximized when all the users bid 
according to the honest 
bidding rule, 
%strategy\footnote{We allow mechanisms where the honest
%bidding strategy is not direct revelation of one's true value ---
%see also \Cref{rmk:directrev}.}
and the miner follows the honest inclusion rule,
where the maximization is taken over all the coordinated strategies that the coalition consisting of the miner and all users can adopt.
\end{definition}

\begin{definition}[OCA-proof]
\label[definition]{def:OCAproof}
A %\emph{direct-revelation} 
TFM is said to be \emph{OCA-proof} 
iff there exists an individually rational bidding 
strategy
$\sigma: \mathbb{R}_{\geq 0} 
\rightarrow \mathbb{R}$ such that the expected social welfare 
is maximized when all the users bid according to $\sigma$,
and the miner creates a block indicated by the inclusion rule,
where the maximization is taken over all the coordinated strategies that the coalition consisting of the miner and all users can adopt.
\end{definition}

In the definitions above, the expectation is taken over the randomness of the TFM.
More explicitly, in \cref{def:UIC}, the expectation is taken over the randomness of the inclusion/confirmation/payment rules;
in \cref{def:MIC,def:cSCP,def:globalSCP,def:OCAproof}, the expectation is taken over the randomness of the inclusion/confirmation/\\payment/miner revenue rules.

Note that in the OCA-proofness definition, 
$\sigma$ is required to output a single real-valued bid.
A canonical example of $\sigma$ is scaling; that is, $\sigma(v) = \gamma v$ for some $\gamma \in [0,1]$ (cf., Corollary 5.12 and 5.14 in \cite{roughgardeneip1559-ec}).

A detailed comparison between $c$-SCP, global SCP, and OCA-proofness
is given in \cref{sec:comparison}.

%%% Local Variables:
%%% mode: latex
%%% TeX-master: "oca"
%%% End:

% \input{defn-mpc}
\section{Preliminary: Myerson's Lemma}

%In a single-parameter environment, i

If a TFM satisfies UIC, then for any user, its confirmation probability and the payment must satisfy the famous Myerson's lemma~\cite{myerson}.
It is an important ingredient for proving our impossibilities, and we review it in this section. 
Formally, given a bid vector $\bfb = (b_1,\dots ,b_t)$, let $x_i(\bfb) \in [0,1]$ denote the probability that user $i$'s bid is confirmed,
and let $p_i(\bfb)$ denote user $i$'s expected payment assuming the miner implements the mechanism honestly.
For any vector $\bfb$, let $\bfb_{-i}$ denote a vector when we remove $i$-th coordinate in $\bfb$.
With these notations, Myerson's lemma can be stated as follows.

\begin{lemma}[Myerson's lemma~\cite{myerson}]
\label{lemma:myerson}
%In a single-parameter environment,
%f
For any TFM that satisfies UIC if and only if
\begin{itemize}
\item 
For any user $i$, bid vector $\bfb_{-i}$, and $b_i,b'_i$ such that $b_i' > b_i$, it must be $x_i(\bfb_{-i}, b'_i) \geq x_i(\bfb_{-i}, b_i)$.
\item 
For any user $i$, bid vector $\bfb_{-i}$ from other users, and bid $b_i$ from user $i$, user $i$'s expected payment must be \[
p_i(\bfb_{-i}, b_i) = b_i \cdot x_i(\bfb_{-i}, b_i) - \int_0^{b_i} x_i(\bfb_{-i}, t) d t,
\]
with respect to the normalization condition: $p_i(\bfb_{-i}, 0) = 0$, i.e.~user $i$'s payment must be zero when $b_i = 0$.
\end{itemize}
\end{lemma}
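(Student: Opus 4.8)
The plan is to fix a user~$i$ and an arbitrary bid vector $\bfb_{-i}$ from the other users, so that the claim reduces to a one‑dimensional statement about the two maps $x(\cdot):=x_i(\bfb_{-i},\cdot)$ and $p(\cdot):=p_i(\bfb_{-i},\cdot)$ on $\mathbb{R}_{\ge 0}$. In the truthful form to which the lemma as stated refers (a user's bid is its reported value), UIC says exactly: for every true value $v\ge 0$ and every alternative bid $b\ge 0$, $v\,x(v)-p(v)\ge v\,x(b)-p(b)$. I would prove the two directions of the equivalence separately, and then observe that for a non‑truthful mechanism with honest bidding rule~$\sigma$ the same argument applies in ``value space'' to the composed maps $x_i(\bfb_{-i},\sigma(\cdot))$ and $p_i(\bfb_{-i},\sigma(\cdot))$.

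\textbf{UIC $\Rightarrow$ monotonicity and the payment identity.} Fix $b<b'$. Apply the UIC inequality twice: once to a user of true value $b'$ that contemplates bidding $b$, and once to a user of true value $b$ that contemplates bidding~$b'$. Adding the two inequalities yields $(b'-b)\big(x(b')-x(b)\big)\ge 0$, hence $x(b')\ge x(b)$ (monotonicity). Rearranging each of the two inequalities on its own gives the sandwich $b\,\big(x(b')-x(b)\big)\le p(b')-p(b)\le b'\,\big(x(b')-x(b)\big)$. Now partition $[0,b]$ into $n$ equal subintervals $0=t_0<t_1<\cdots<t_n=b$, telescope $p(b)-p(0)=\sum_{j=1}^{n}\big(p(t_j)-p(t_{j-1})\big)$, and bound each increment by the sandwich applied to the pair $(t_{j-1},t_j)$. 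The resulting lower and upper bounds are precisely the lower and upper Riemann--Stieltjes sums of $\int_0^{b}t\,\mathrm{d}x(t)$, and they differ by at most $\tfrac{b}{n}\big(x(b)-x(0)\big)\to 0$ since $x$ is monotone and takes values in $[0,1]$. Hence $p(b)-p(0)=\int_0^{b}t\,\mathrm{d}x(t)$; integrating by parts (legitimate because the continuous factor~$t$ shares no discontinuity with the monotone integrator~$x$) gives $p(b)-p(0)=b\,x(b)-\int_0^{b}x(t)\,\mathrm{d}t$, and the normalization $p(0)=0$ completes this direction.

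\textbf{Monotonicity and the payment identity $\Rightarrow$ UIC.} This is a direct computation. Substituting the payment formula, the expected utility of a value‑$v$ user who bids $b$ is $u(b)=(v-b)\,x(b)+\int_0^{b}x(t)\,\mathrm{d}t$, so $u(v)=\int_0^{v}x(t)\,\mathrm{d}t$ and
\[
u(v)-u(b)=\begin{cases}\int_v^{b}\big(x(b)-x(t)\big)\,\mathrm{d}t, & b\ge v,\\ \int_b^{v}\big(x(t)-x(b)\big)\,\mathrm{d}t, & b<v.\end{cases}
\]
In both cases the integrand is nonnegative because $x$ is nondecreasing, so $u(v)-u(b)\ge 0$ for every~$b$; hence truthful bidding is optimal and the TFM is UIC.

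\textbf{Main obstacle.} The only place that needs care is the payment half of the forward direction: $x(\cdot)$ need not be continuous---it can jump, for instance where ties are broken---so one cannot assume a density and must argue through Riemann--Stieltjes sums, and the integration‑by‑parts step has to be justified for a bounded‑variation (monotone) integrator. Everything else is the classical two‑value exchange argument plus bookkeeping; the normalization $p_i(\bfb_{-i},0)=0$ is simply the convention that a zero bid is free, which is consistent with the individual‑rationality requirement already built into the TFM model.
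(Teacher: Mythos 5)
Your proof is correct and is the classical argument for Myerson's lemma: the two-value exchange yielding the payment sandwich, telescoping over a partition into Riemann--Stieltjes sums with integration by parts for the monotone integrator, and the direct utility computation for the converse. The paper does not prove this statement at all --- it simply reviews and cites Myerson --- so your write-up supplies exactly the standard proof that the citation points to, including the right care about possible jumps in $x_i(\bfb_{-i},\cdot)$. One minor remark: the paper's UIC definition formally lets a deviating user submit zero or multiple bids, so the ``if'' direction as you prove it covers only single-bid deviations (as in the classical statement); this does not affect anything downstream, since the paper only ever invokes the ``only if'' direction.
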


When the mechanism is deterministic, the confirmation probability $x_i$ is either $0$ or $1$.
In this case, user $i$'s payment can be simplified as 
\[
p_i(\bfb_{-i}, b_i) = 
%\text{the largest $z$ such that} \ 
%x_i({\bf b}_{-i}, z) = 0
%    p_i(\bfb_{-i}, b_i) 
	\left\{\begin{matrix}
	\inf \{z \in [0, b_i]: x_i(\bfb_{-i}, z) = 1\},& \text{ if $x_i(\bfb_{-i}, b_i) = 1$}; \\ 
	0,& \text{ if $x_i(\bfb_{-i}, b_i) = 0$.}
	\end{matrix}\right.
\]
Conceptually, user $i$ must pay the minimal price which makes its bid
confirmed.
%%% Local Variables:
%%% mode: latex
%%% TeX-master: "oca"
%%% End:

%\section{UIC + MIC + Global SCP $\Rightarrow$ Impossible for Finite Blocks}
\section{Warmup: Impossibility of UIC + MIC + Global SCP for Deterministic Mechanisms}
\label{sec:deterministic}
As a warmup, we first show a finite-block impossibility
for UIC + MIC + global SCP for
{\it deterministic} mechanisms.  
Recall that a TFM is said to be trivial if everyone's confirmation probability is zero for any bid vector assuming the miner follows the inclusion rule.
In this case, everyone's utility is always zero in an honest execution.
% In this case, no matter how users bid and how the miner creates the block, everyone's utility is always zero,
% so deviation never increases strategic players' utilities.
% Thus, the trivial mechanism satisfies UIC, MIC, and global SCP.
We will show that no non-trivial mechanism can satisfy all three properties simultaneously.
% We will show that it is the only mechanism that satisfies all three properties.
% We will show that the only mechanism satisfying UIC, MIC, and global SCP is the trivial mechanism.
Later in \cref{sec:randomglobalscp}, we extend 
the impossibility to randomized mechanisms.
Due to the revelation principle that
we prove in \cref{sec:revelation}, 
if we can prove the impossibility
for \truthful{} mechanisms, the impossibility
immediately extends to non-\truthful{} mechanisms as well.
Therefore, in this section, we shall assume
\truthful{} mechanisms.
%we may without loss of generality assume
%that the users' honest bidding strategy is simply truth-telling.
%If we can prove an impossibility for 
%direct-revelation mechanisms, the impossibility would extend
%to non-direct mechanisms as well.

% \paragraph{Simplified Notations.}
% For the impossibility results in this work, \Hao{Add sections}, it is useful to consider

\ignore{
In this note, we only focus on the {\it deterministic} mechanism.
We make a weak symmetry assumption: 
we assume that the confirmation  
rule and payment rule  (implemented by the blockchain)
depend only on the ordered vector of bids in the block,
and does not depend on other information such as the bidders' identities.
Without loss of generality, we may also assume that the bids are ordered
from high to low in a block, breaking ties arbitrarily.
}
%\elaine{change to:}

\ignore{
\begin{definition}[Weak symmetry]
We say that a mechanism satisfies weak symmetry if the following holds.
For any two vectors $\bfb$ and $\bfb'$ of length $n$ that are equivalent
up to some permutation. 
Let ${\sf out} := \{(b_i, x_i, p_i)\}_{i \in [n]}$
be a random variable that denotes 
the outcome of a random execution of the mechanism
over $\bfb$, 
where $b_i \in \R^+\cup \{0\}$ denotes user $i$'s bid, 
$x_i \in \{0, 1\}$ denotes whether $i$ is confirmed in this execution, 
and $0 \leq p_i \leq v_i \cdot x_i$
denotes the user's payment in this execution.
Similarly, 
${\sf out}' := \{(b'_i, x'_i, p'_i)\}_{i \in [n]}$
denote the same 
distribution but defined for a random execution
of the mechanism under $\bfb'$.
\elaine{TODO: add: there's also miner rev}
Then, there exists a permutation $\pi$
such that 
$\pi({\sf out}')$ and ${\sf out}$ are identically distributed.
%2) ${\bf x} = \pi({\bf x}')$; and 
%3) ${\bf p} = \pi({\bf p}')$ 
%where ${\bf x}$ and ${\bf x}'$ denote the confirmation probabilities
%in the two scenarios respectively,
%and similarly ${\bf p}$ and ${\bf p}'$ denote
%the payment vectors in the two scenarios, respectively.    
\end{definition}
}

\begin{lemma}
For any global SCP mechanism, the confirmed bids must correspond
to the highest bids. 
\label[lemma]{lem:ordered}
\end{lemma}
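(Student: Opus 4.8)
The plan is to argue by contradiction. Suppose there is a bid vector $\bfb$ for which the honest inclusion/confirmation rules confirm some bid $b_{\mathrm{low}}$ with positive probability while leaving a strictly higher bid $b_{\mathrm{high}}$ unconfirmed (or confirmed with strictly smaller probability). By weak symmetry (\cref{def:symmetry}) the confirmation probabilities depend only on the sorted sequence of bid amounts and positions, so this is a statement about the aggregate confirmation pattern. I would then compare two true-value profiles that induce the \emph{same} bid vector $\bfb$ when all users bid honestly (recall the mechanism is truthful, so a user with value $v$ bids $v$): one profile where the user holding $b_{\mathrm{high}}$ genuinely values its transaction at $b_{\mathrm{high}}$, and one where that slot is instead occupied by a user with a much larger true value, say $V \gg b_{\mathrm{high}}$, who under the global coalition's coordinated strategy is told to \emph{underbid} down to $b_{\mathrm{low}}+\epsilon$ or to swap the effective allocation. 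The key point is that the global coalition (miner plus all users) can coordinate the bids they submit, so starting from the true-value vector $\bfv$ they can choose to submit a \emph{different} bid vector $\bfb'$; global SCP requires that submitting the honest bids $\bfv$ already maximizes expected social welfare over all such coordinated deviations.

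The core computation: fix a true-value vector $\bfv$ in which two coordinates are $V$ (large) and $w$ (small), with $V > w$, and in which the honest outcome on $\bfv$ confirms the $w$-coordinate with probability strictly larger than it confirms the $V$-coordinate. Expected social welfare equals (expected total confirmed true value) minus (expected burned amount). Now consider the coordinated deviation in which the coalition swaps the two bids — the user with true value $V$ submits $w$ and the user with true value $w$ submits $V$ — and everyone else bids honestly. By weak symmetry the resulting distribution over (confirmation-pattern, payments, burn) on the sorted bid amounts is identical to the honest run, so the burned amount is unchanged in distribution; but the confirmation indicator that used to attach to value $w$ now attaches to value $V$, strictly increasing expected total confirmed true value. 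Hence the deviation yields strictly higher expected social welfare, contradicting global SCP. So the honest confirmation probabilities must be monotone in the bid amount, i.e. the confirmed bids are the highest ones.

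A couple of technical wrinkles I would need to handle. First, the argument must accommodate ties and the finite-block constraint: if $|S|\le k$ and fewer than $k$ bids are confirmed, I should phrase "confirmed bids are the highest" as "no confirmed bid amount is strictly below an unconfirmed bid amount among included bids," and for bids not even included, note that a strategic miner in the coalition could include the higher one instead — but in fact for this warmup lemma it suffices to reason about the confirmation rule applied to the honest block, since we may always take $\bfb$ short enough to fit. Second, the randomized case: $x_i(\bfb)$ is a probability, so I would compare $\E[\text{confirmed value}]$ before and after the swap and use linearity; the swap argument still gives a strict gain whenever the probabilities differ. The main obstacle I anticipate is making the weak-symmetry bookkeeping fully rigorous — precisely, that swapping two bid amounts in the input induces the corresponding transposition on the outcome distribution encoded as $\left(\mu,{\sf sorted}(\{(v_i,x_i,p_i)\})\right)$, so that the burn term is genuinely invariant and only the true-value-weighting of the confirmation indicators changes. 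Everything else is a short social-welfare inequality.
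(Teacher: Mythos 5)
Your proposal is correct and follows essentially the same route as the paper: a coordinated deviation in which the high-value and low-value users swap their (identical-multiset) bids, with weak symmetry guaranteeing the block, confirmation pattern, and burn are unchanged in distribution, so the confirmation now attaches to the higher true value and expected social welfare strictly increases, contradicting global SCP. The extra wrinkles you flag (ties, non-included bids, the randomized case) are handled exactly as you suggest, since the coalition rebuilds the same block and the strict-gain computation $(b_i-b_j)(x_j-x_i)>0$ goes through by linearity.
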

\begin{proof}
Suppose in some scenario, 
Alice bids her true value $b$ and Bob bids his true value $b' < b$;
however, Bob's bid is confirmed and Alice's is not.
Now, we can have Alice and Bob swap their bids. 
The miner 
creates the same block as before in which 
the position originally corresponding to 
Bob now has Alice's bid of $b'$.
Since the mechanism is weakly symmetric (\cref{def:symmetry}), Alice's bid is confirmed.
This way, the social welfare increases by $b - b'$ in comparison 
with the honest case, and this violates global SCP.
\end{proof}

\begin{lemma}
For any global SCP mechanism, 
the amount of burnt coins depends only on the number of confirmed bids.
\label[lemma]{lem:burn}
\end{lemma}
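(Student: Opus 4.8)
The plan is to prove that any two honest executions that confirm the same number $m$ of bids burn exactly the same amount of coins; as $m$ is arbitrary, this is precisely the claim. Fix bid vectors $\bfb=(b_1\ge\cdots\ge b_t)$ and $\bfb'=(b'_1\ge\cdots\ge b'_{t'})$, which --- the mechanism being \truthful{} --- also serve as the true-value profiles of their respective honest executions, and suppose each confirms exactly $m$ bids; by \Cref{lem:ordered} these are the top $m$ bids in either case. Write $q$ and $q'$ for the corresponding amounts of burnt coins, so the honest social welfare equals $\sum_{i\le m} b_i - q$ on profile $\bfb$ and $\sum_{i\le m} b'_i - q'$ on profile $\bfb'$. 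It suffices to show $q\le q'$, since running the same argument with the roles of $\bfb$ and $\bfb'$ exchanged gives $q'\le q$ and hence $q=q'$.

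The key observation is that the confirmation, payment, and miner-revenue rules act on the produced block alone, so the grand coalition on one true-value profile can reproduce the block arising in another scenario. Work on profile $\bfb$, and let $\bfB'$ be the block output by the honest inclusion rule on $\bfb'$, so that processing $\bfB'$ honestly confirms its $m$ highest bids and burns $q'$. Consider the deviation in which users $1,\dots,m$ bid the multiset of $m$ confirmed bid-amounts of $\bfB'$, all other users bid $0$, and the miner assembles exactly the block $\bfB'$ --- filling the unconfirmed slots with injected fake bids and placing the $m$ real users in the confirmed slots (weak symmetry, \cref{def:symmetry}, makes this possible; see below). Since the block equals $\bfB'$, the blockchain confirms the same $m$ bids and burns $q'$, and the confirmed bids now belong to real users $1,\dots,m$ whose true values are $b_1,\dots,b_m$; as payments internal to the coalition cancel and nothing else is confirmed, this deviation attains social welfare $\sum_{i\le m} b_i - q'$. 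Global SCP (\cref{def:globalSCP}) on profile $\bfb$ therefore forces $\sum_{i\le m} b_i - q \ge \sum_{i\le m} b_i - q'$, i.e.\ $q\le q'$.

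The only genuine obstacle is the placement step when the boundary of $\bfB'$ is a tie --- when the $m$-th and $(m{+}1)$-st largest bid-amounts in $\bfB'$ are equal --- since then a confirmed slot and an unconfirmed slot carry the same amount, and we must ensure it is a real user rather than an injected fake that lands in the confirmed slot. This is exactly the maneuver used in the proof of \Cref{lem:ordered}: by weak symmetry the outcome is a function of the sorted bid amounts and their positions only, with tie-breaking free to depend on auxiliary information, so the miner may choose the auxiliary data of the injected fakes (and the users that of their own bids) so that the real bids win the relevant ties. Granting this, the two inequalities above hold and the lemma follows. (Note that \Cref{lem:ordered} is used only to write the honest social welfare cleanly as $\sum_{i\le m}b_i-q$; even without it, the honest confirmed set has value at most $\sum_{i\le m}b_i$, which is all the argument needs.)
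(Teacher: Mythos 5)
Your proof is correct and follows essentially the same route as the paper's: the grand coalition in one scenario replicates the block produced in the other scenario, placing real users in the confirmed slots and fake (zero-value) bids elsewhere, and global SCP then forces the burn amounts to coincide. The only differences are cosmetic --- you argue both inequalities symmetrically rather than assuming $q<q'$ for contradiction, and you make explicit the tie-breaking issue at the block boundary that the paper's proof leaves implicit.
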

\begin{proof}
Suppose in two different scenarios, 
when everyone acts honestly, 
the blocks made are ${\bf B}$ and ${\bf B}'$ respectively, 
the confirmed bids are $\bfb \subseteq {\bf B}$
and $\bfb' \subseteq {\bf B}'$ respectively where $\bfb$ and $\bfb'$ are of the same
length,
and the burnt amount in the two scenarios are $q$ and $q'$ respectively,
where $q < q'$.
Now, suppose we are actually in the second scenario.
A global coalition can adopt the following strategy:  create a block identical to ${\bf B}$
in which the confirmed bids correspond to the users 
with the highest true values 
and the rest can be fake bids.
%--- and if there are not enough users,
%just inject fake bids. 
Observe 
that the social welfare is the sum of the true values of all confirmed
bids (where fake bids have a true value of $0$) minus 
the total coins burnt.
Therefore, the above strategy achieves strictly higher social welfare
than the honest case.
\end{proof}

\ignore{
\begin{lemma}\label[lemma]{lemma:sameburning}
Suppose a TFM $(\bfx, \bfp, \mu)$ is global SCP.
For any two block configurations $\bfb = (b_1,\dots, b_n)$ and $\bfb' = (b'_1,\dots, b'_{n'})$, if the number of confirmed bids in $\bfb, \bfb'$ are the same, the amount of burnt coins must be the same.
In other words, if $\sum_i x_i(\bfb) = \sum_j x_j(\bfb')$, then it must be $\sum_i p_i(\bfb) - \mu(\bfb) = \sum_j p_j(\bfb') - \mu(\bfb')$.
\end{lemma}
\begin{proof}
For the sake of reaching a contradiction, suppose there exist two block configurations $\bfb = (b_1,\dots, b_n)$ and $\bfb' = (b'_1,\dots, b'_{n'})$ such that $\sum_i x_i(\bfb) = \sum_j x_j(\bfb')$ but $\sum_i p_i(\bfb) - \mu(\bfb) \neq \sum_j p_j(\bfb') - \mu(\bfb')$.
Let $r$ denote the number of confirmed bids in $\bfb$; that is, $r = \sum_i x_i(\bfb)$.
Without loss of generality, we assume $x_1(\bfb) = \cdots = x_r(\bfb) = 1$ and $\sum_i p_i(\bfb) - \mu(\bfb) > \sum_j p_j(\bfb') - \mu(\bfb')$.

Now, imagine a world where there are $n$ users with the true values $(b_1,\dots, b_n)$.
If all users bid truthfully, the social welfare is $b_1 + \cdots + b_r - (\sum_i p_i(\bfb) - \mu(\bfb))$; that is, the sum of first $r$ users' true values minuses the amount of burning.
However, they can bid $\bfb'$ instead, where the first $r$ users submit the bids corresponding to $r$ confirmed bids in $\bfb'$, and the rest $n - r$ users submit other bids corresponding to $n' - r$ unconfirmed bids in $\bfb'$ (they may drop or inject some fake bids if $n \neq n'$).
In this case, the social welfare becomes $b_1 + \cdots + b_r - (\sum_j p_j(\bfb') - \mu(\bfb'))$, which is larger than the honest case.
\end{proof}

\cref{lemma:sameburning} implies that the amount of burning can only depend on the number of confirmed bids.
Henceforth, let $q_i$ denote the amount of burning when there are $i$ confirmed bids.

\begin{lemma}
$q_1, q_2, \dots $ is a non-decreasing sequence.
\end{lemma}
\begin{proof}
TODO.
\end{proof}

\begin{lemma}
Suppose a TFM $(\bfx, \bfp, \mu)$ is global SCP.
Given a block configurations $\bfb = (b_1,\dots, b_n)$, the following statements hold.
\begin{itemize}
\item 
$x_i(\bfb) = 1$ if $b_i > b_j$ and $b_j(\bfb) = 1$; that is, the allocation must be ordered.
\item 
$x_i(\bfb) = 1$ if and only if $b_i \geq q_i - q_{i-1}$.
\end{itemize}
\end{lemma}
\begin{proof}
TODO.
\end{proof}
}

\begin{theorem}
No non-trivial deterministic TFM can simultaneously satisfy UIC, MIC, and global SCP
when the block size is finite.
\label{thm:impossible}
\end{theorem}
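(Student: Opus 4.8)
The plan is to pin down the honest mechanism enough, via \cref{lem:ordered,lem:burn} and Myerson's lemma, to exhibit two honest executions that produce the \emph{same block} yet, for UIC to hold, would have to charge the lowest confirmed bidder two different amounts. Let $q_r$ denote the amount burnt when exactly $r$ bids are confirmed (well-defined by \cref{lem:burn}), set $q_0 = 0$, and write $\delta_r := q_r - q_{r-1}$. First I would use \cref{lem:ordered,lem:burn} together with global SCP --- via the coalition strategy in which $r$ designated users bid enormously and everyone else submits fake bids, which realizes any confirmed-count $r$ up to the maximum possible --- to conclude that the honest confirmation rule always confirms the top $r^*$ bids, where $r^*$ is chosen greedily from the $\delta_r$'s; consequently, once at least $k$ bids are larger than every $\delta_i$, exactly $k$ bids are confirmed. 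The one ingredient here that genuinely needs MIC is that the maximum confirmed count equals $k$, so that the block really does fill up: if the mechanism never confirmed more than $m < k$ bids, then facing $m+1$ enormous bids --- whose top $m$ are confirmed at a price pinned down by the unconfirmed $(m{+}1)$-st bid --- a miner could inject one fake bid slotted just below the lowest confirmed bid, using an unused block slot, which raises every confirmed payment while leaving the burnt amount at $q_m$, strictly increasing miner revenue and contradicting MIC. I expect this to be the main obstacle: it is exactly here that MIC is indispensable, and indeed without it the simple second-price auction with no burning of \cref{remark:second-price-no-burning} is UIC and global SCP while wasting block capacity.

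Granting that $k$ bids are confirmed whenever there are $k$ sufficiently large bids, the heart of the proof is the following pair of scenarios. Fix distinct huge values $v_1 > v_2 > \dots > v_k$, each exceeding every $\delta_i$. In Scenario~I there are exactly these $k$ truthful bidders: all $k$ are confirmed, the honest block must contain all $k$ confirmed bids and can hold no more, so it is exactly $(v_1,\dots,v_k)$, and by Myerson bidder~$k$'s confirmation threshold --- hence its payment --- is $\delta_k$ (with only the $k-1$ other huge bids around, bidder~$k$ is confirmed iff its own bid exceeds $\delta_k$). In Scenario~II I add a $(k{+}1)$-st truthful bidder with value $v_{k+1} \in (\delta_k, v_k)$. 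By \cref{lem:ordered} and the greedy characterization the confirmed set is unchanged --- still the top $k$ by value, $v_1,\dots,v_k$ --- and \emph{finiteness enters precisely here}: the block holds at most $k$ bids, the $k$ confirmed bids already fill it, so the honest block is again exactly $(v_1,\dots,v_k)$ and the $(k{+}1)$-st bidder's bid is absent from it. Since the payment rule is a function of the block alone, bidder~$k$ is charged the same amount $\delta_k$ as in Scenario~I. But Myerson applied to Scenario~II gives bidder~$k$'s threshold as $v_{k+1}$: bidding below $v_{k+1}$ drops bidder~$k$ below the top $k$ by value, hence (by \cref{lem:ordered}) out of the confirmed set, hence out of the block. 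Therefore $\delta_k = v_{k+1}$, contradicting $v_{k+1} > \delta_k$, and no such non-trivial deterministic TFM can exist.

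To close, I would highlight the role of the block-size bound: the contradiction rests entirely on Scenario~II's honest block being \emph{forced} to omit the $(k{+}1)$-st bid, which holds only because a block cannot contain $k+1$ bids; with an infinite block that bid would sit in the block, the payment rule would see a different input, and the argument collapses --- matching the fact that UIC, MIC, and global SCP are jointly achievable for infinite blocks. (As assumed throughout this section, it suffices to argue for truthful mechanisms; the extension to non-truthful ones follows from the revelation principle of \cref{sec:revelation}.)
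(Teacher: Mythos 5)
Your overall architecture is sound but organized differently from the paper's. The paper never tries to show that the block fills up: it lets $m$ be the maximum number of confirmed bids over all honest scenarios (finiteness is used only to make this maximum exist), shows via global SCP that with at least $m$ sufficiently large truthful bids exactly the top $m$ are confirmed, then in a scenario with $m+1$ large bids uses Myerson to pin every confirmed payment to $v_{m+1}$, and lets the miner pretend the $(m{+}1)$-st bid is $v'\in(v_{m+1},v_m)$ and run the honest inclusion rule, earning $m v'-q_m> m v_{m+1}-q_m$ and violating MIC --- one uniform argument, no case split on $m$ versus $k$. You instead spend MIC only on forcing $m=k$ and then close with a ``same block, two forced prices'' contradiction. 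That closing observation --- a full block cannot contain the price-setting bid, so the blockchain's payment rule cannot depend on it, clashing with the Myerson threshold --- is correct and is a nice, more explicit use of finiteness; in that branch MIC is not needed at all. So your route is legitimate, but two of its steps need repair.

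First, the MIC step for $m<k$ as you describe it does not go through: you ``inject one fake bid slotted just below the lowest confirmed bid, using an unused block slot'' and assert this ``raises every confirmed payment.'' The confirmation, payment, and burn rules are functions of the block, and \cref{lemma:myerson} and \cref{lem:burn} only pin down their behavior on blocks that arise from honest executions; a block you assemble by hand (the $m$ confirmed bids plus a fake bid in a spare slot) need not be such a block, so nothing forces the payments on it to become $v'$ --- the rules could, say, confirm nothing on it. The fix is the paper's deviation: the miner pretends the bid vector is $(v_1,\ldots,v_m,v')$ and runs the honest inclusion rule, so the resulting block is the honest block of that scenario and its payments, burn, and revenue are those pinned down by Myerson and \cref{lem:burn}. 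With this fix no unused slot is needed, which is exactly why the paper needs no case split. Second, the ``greedy characterization'' (the honest rule always confirms the top $r^*$ bids with $r^*$ chosen greedily from the $\delta_r$'s) is asserted rather than proved, and your supporting claim that every confirmed count $r\le m$ is realizable by $r$ enormous bids plus fake bids is itself unjustified (indeed $q_r$ may not even be defined for every $r$). Fortunately only a one-sided consequence is needed: in $(v_1,\ldots,v_{k-1},z)$ with the $v_i$ huge, once $z$ exceeds the fixed constant $q_k$, the coalition's option of replicating a $k$-confirming block (which exists once $m=k$) forces all $k$ bids to be confirmed by global SCP, so bidder $k$'s threshold --- hence payment --- in Scenario~I is at most $q_k$; choosing $v_{k+1}\in(q_k,v_k)$ then yields your contradiction, since in Scenario~II \cref{lem:ordered} and the block-size bound force the payment to be at least $v_{k+1}$. (Your exact-threshold claims of $\delta_k$ and $v_{k+1}$ are more than you need and not fully justified, but the one-sided bounds suffice.)
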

\begin{proof}
Since the block size is finite, there must exist a scenario
that maximizes the number of confirmed bids.
Suppose in this scenario, the block created is ${\bf B}$
and the confirmed bids contained in ${\bf B}$ are 
$\bfb = (b_1, \ldots, b_m)$ sorted from high to low.
Since the mechanism is non-trivial, we have $m \geq 1$.
Given \Cref{lem:burn}, we can use $q_i$ to denote the amount of burnt 
coins when exactly $i$ bids are confirmed.

First, we show that in any scenario with at least $m$
users whose true values are sufficiently large w.r.t. $q_0, q_1, \ldots, q_m$, 
if everyone acts honestly, 
the mechanism must confirm $m$ bids --- by
\Cref{lem:ordered}, they must also be the $m$ highest bids.
%Consider a scenario 
%in which $M \gg m$ users all have true values 
%that are sufficiently large, denoted $v_1 \geq v_2 \geq \ldots \geq v_M$. 
%We first argue that if everyone acts honestly,
%then the $m$ highest bidders will be confirmed. 
Suppose this is not the case, say, 
$m' < m$ bidders are confirmed 
--- by \Cref{lem:ordered}, they 
must be the $m'$ highest bids.
Now, the global coalition 
can create the block ${\bf B}$
where the $m$ confirmed bids correspond to
the users with the highest true values, 
and the rest can just be fake bids.
In comparison with the honest case, the increase
in social welfare
is $\sum_{i = m'+1}^m \left(v_i - (q_i - q_{i-1})\right)$
which is positive as long as users' true values $v_i$'s are sufficiently large.
This violates global SCP.
%Now, if we confirm one more bid whose true value is $v_{m'+1}$.
%The increase in social welfare is $v_{m'+1} - (q_{m'+1} - q_{m'})$
%which is positive as long as $v_{m'+1}$ is sufficiently large.

Imagine a scenario with at least $m+1$ users
whose true values are sufficiently high, and $v_1 \geq \cdots \geq v_m > v_{m+1}$.
By \cref{lem:ordered}, it must be $v_1,\dots,v_m$ are confirmed.
Because the mechanism is UIC, by Myerson's lemma, all $m$ confirmed 
bids must pay $v_{m+1}$.
However, such a mechanism violates MIC, because the miner can 
pretend that there is a fake bid $v' \in (v_{m+1}, v_m)$.
Because the burnt amount is fixed to $q_m$,  
the miner will get more revenue  
if every confirmed bid pays
$v'$ instead of $v_{m+1}$.
\end{proof}

\ignore{
\begin{remark}
\label{rmk:directrev}
The above impossibility result holds
even when the mechanism is not a direct revelation mechanism.
\end{remark}
}

\ignore{
\section{Philosophical Discussion about Global SCP and Global OCA-Proof}

Given that UIC + MIC + global SCP is not possible, we argue
the following: 
if there exists a UIC + MIC + global OCA-proof mechanism,
then the on-chain strategy that maximizes social welfare must hurt some user
(relative to the honest case), so it seems
like to incentivize that user to cooperate, we should have some offchain
transfer.  
However, philosophically this seems to contradict the motivation 
of the OCA-proof definition.

The argument is as follows.
Suppose there exists a UIC + MIC + global OCA-proof mechanism. 
Let $\sigma$ be the bidding strategy that maximizes
social welfare --- recall that $\sigma$ is individually rational.
Consider a new mechanism that is almost identical
to the old one, except that a user's honest strategy 
is to apply $\sigma$ to its true value $v$, and then  
post the outcome of $\sigma(v)$. 
We assume that $\sigma(v)$ may output 0 to multiple bids, among
which at most one is the real bid. 
Clearly the new mechanism still satisfies MIC because the inclusion rule
does not change.
The new mechanism also satisfies global SCP because
$\sigma$ is known to maximize social revenue in the old mechanism.
Therefore, the new mechanism cannot be UIC due to \Cref{thm:impossible}.
Due to the UIC of the old mechanism,
even in the new mechanism, 
an individual user's utility is 
maximized when it bids its true value.
This means that 
there exists some scenario in which 
some user $u$ 
prefers to bid 
its true value $v$ 
rather than bidding $\sigma(v)$. Observe that the same should also hold
for the old mechanism. 
In other words, in the old mechanism which is OCA-proof,
there is some scenario in which 
the on-chain optimal strategy requires some user to be sacrificed --- but this
seems to require offchain transfer to incentivize that user to cooperate,
which in turn violates the original motivation of (global) OCA-proofness.
}

\section{Impossibility of UIC + MIC + Global SCP 
for Randomized Mechanisms}
\label{sec:randomglobalscp}
In this section, we extend the finite-block impossibility of UIC + MIC
+ global SCP
to even randomized mechanisms.
Recall that a TFM consists of five rules as defined in \cref{sec:tfm-def}, and a randomized TFM may use randomness in any of the five rules.
Since the confirmation, the payment, and the miner revenue rules are executed by the blockchain,
the strategic players can only bias the randomness in and deviate from the bidding rule and the inclusion rule.
Again, due to the revelation principle
proven in \cref{sec:revelation},  
it suffices to consider \truthful{} mechanisms.

%\paragraph{Intuition.}
\subsection{Proof Roadmap}
A new challenge in the randomized setting is that whether a bid is confirmed becomes probabilistic,
and the arguments in \cref{sec:deterministic} no longer hold.
Here we give a brief roadmap of our new proof.
The key idea of the proof
is captured in
 \Cref{lem:zero-user-utility}, which says that if there are $i$ equal bids 
$(\underbrace{b, \ldots, b}_i)$, followed 
by other bids $(b_{i+1}, \ldots, b_{k+1})$ 
that are strictly smaller than $b$, 
then $b_{i+1}, \ldots, b_{k+1}$ 
must be  
unconfirmed
with probability $1$, and moreover, 
all the $b$-bids must pay $b$ when confirmed, i.e.,
the total user utility is $0$ --- throughout, we use $k$ to denote the block size.
Since \Cref{lem:zero-user-utility}
imposes a very strong constraint on the TFM, using
it to lead to the 
final impossibility is not too hard (see \cref{thm:random-scp-impossible}).

We therefore 
provide some intuition for the proof of \Cref{lem:zero-user-utility}.
Specifically, we use an inductive proof.
The base case is 
the scenario $(\underbrace{b, \ldots, b}_k, b_{k+1})$ where $b_{k+1} < b$. 
Using global SCP and the fact that the block
size is only $k$, we show that the bid $b_{k+1}$ must be
unconfirmed with probability $1$.
Then, we show that the $k$ bids at $b$ 
must pay $b$ if they are confirmed. 
To show this, we first argue that 
in a TFM that is MIC and global SCP, 
the total user utility is equivalent if %the $(k+1)$-th bid
$b_{k+1}$ is anything below $b$ (see \cref{lem:freeknob}).
We then consider a scenario where the $(k+1)$-th bid
$b_{k+1} < b$ is arbitrarily close to $b$, 
and consider lowering
one of the first $k$ bids (say, the first bid) to   
$b_{k+1} -\delta$ --- in this case, we argue that 
the first bid will become unconfirmed using global SCP.
Finally, by Myerson's lemma, we conclude
that in the 
original scenario
$(\underbrace{b, \ldots, b}_k, b_{k+1} < b)$, 
the first bid must pay $b$
when confirmed.

Once we prove the base case, we 
proceed by induction for $i = k-1, k-2, \ldots, 1$.
The induction step is similar in structure
to the base case, except that we first need to use
the induction hypothesis that in the scenario 
$(\underbrace{b, \ldots, b}_{i}, b_{i+1} < b, \ldots)$, 
all the $b$ bids pay $b$ when confirmed, 
to conclude that in the scenario  
$(\underbrace{b, \ldots, b}_{i-1}, b_{i} < b, \ldots)$, 
$b_i$ is unconfirmed with probability $1$ --- this follows
due to Myerson's lemma.
From this point on, the rest of the proof for the induction step
is similar to the base case. 

\ignore{
First, observe that when the number of users is more than the block size, at least one user's bid must be unconfirmed, and we show that it must be the one with the lowest true value (\cref{lem:top-k}).
If a strategic player knows that injecting a fake bid $b$ must be unconfirmed for certain, this fake bid is cost-free.
Thus, to prevent the strategic player from gaining profit by injecting fake bids, its utility must be independent of any bid with zero confirmation probability (\cref{lem:freeknob}).
% If a bid $b$ is unconfirmed for certain, other players' utilities must be independent of $b$;
% otherwise, other players may inject a fake bid $b$ to bias the utilities as injecting $b$ is cost-free (\cref{lem:freeknob}).
Next, intuitively, Myerson's lemma requires that the payment to be the minimum amount so that the confirmation probability is non-zero.
By taking the unconfirmed bid $b$ to be arbitrarily close to other users' true values, we have that other users' utilities must be zero (\cref{lem:zero-user-utility}),
and surprisingly, we show zero user utility also holds even if there is only one user (\cref{lem:only-user})!
Finally, as long as the mechanism is non-trivial, there must exist a scenario where the only user has a positive utility, which leads to a contradiction (\cref{thm:random-scp-impossible}).
}

\Hao{Added preamble.}

\ignore{
We use the notation
${\bf x}(\bfb)$  
to denote the vector of confirmation probabilities
of all transactions under the bid vector $\bfb$.
Similarly, we use $x_i(\bfb)$ to denote the confirmation
probability of the $i$-th bid.
}

\subsection{Formal Proofs}
In the rest of this section, we present the formal proofs.

\begin{lemma}
Suppose the mechanism satisfies global SCP.
Then, for any bid vector $\bfb = (b_1,\dots,b_t)$, 
if $b_i > b_j$, it must be that 
$x_i(\bfb) \geq x_j(\bfb)$
where 
$x_i(\bfb)$ and $x_j(\bfb)$
denote user $i$ and $j$'s confirmation probabilities, respectively.
\label[lemma]{lem:random-ordered}
\end{lemma}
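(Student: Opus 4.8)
The plan is to adapt the deterministic swap argument of \cref{lem:ordered} to expected social welfare. Fix a bid vector $\bfc=(c_1,\dots,c_t)$ and two coordinates $p\ne q$ with $c_p>c_q$; the goal is $x_p(\bfc)\ge x_q(\bfc)$. Consider the scenario whose true-value profile is the vector $\bfb$ obtained from $\bfc$ by transposing coordinates $p$ and $q$, so that under honest play every user bids $\bfb$ and the miner runs the honest inclusion rule. Now look at the global-coalition deviation in which the users holding positions $p$ and $q$ trade their bids (all other users bid honestly, and the miner still applies the honest inclusion rule to the received bid vector); the resulting bid vector is exactly $\bfc$. Global SCP (\cref{def:globalSCP}) guarantees that the honest execution's expected social welfare is at least that of this deviation, and the whole proof is extracting a sign from this inequality.

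The crux is a short computation of the social-welfare gap. When true values coincide with submitted bids — the honest case — the realized social welfare is a symmetric function $f$ of the sorted, identity-blind outcome encoding alone: miner revenue, plus, summed over the tuples of the encoding, each bid amount times its confirmation bit minus its payment. Hence the honest expected social welfare equals $\E[f(\cO_{\bfb})]$. In the deviation the true values no longer match the bids, so the realized social welfare exceeds $f$ applied to the outcome encoding $\cO_{\bfc}$ of the submitted vector by the correction $\sum_\ell (b_\ell-c_\ell)\,\xi_\ell=(c_q-c_p)\bigl(\xi_p-\xi_q\bigr)$, where $\xi_\ell\in\{0,1\}$ is user $\ell$'s realized confirmation. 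Taking expectations and invoking weak symmetry (\cref{def:symmetry}) — a permutation of the bid vector leaves the distribution of the outcome encoding, hence $\E[f(\cO)]$, unchanged — gives
\[
\E[\text{social welfare of deviation}]-\E[\text{social welfare of honest play}]\;=\;(c_q-c_p)\bigl(x_p(\bfc)-x_q(\bfc)\bigr).
\]
Global SCP forces the left-hand side to be nonpositive, and since $c_q-c_p<0$ we conclude $x_p(\bfc)\ge x_q(\bfc)$, which is the claim.

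Routing the argument through the symmetric function $f$ of the outcome encoding, rather than asserting that ``the swap simply exchanges the two confirmation probabilities,'' is the one place that needs care: weak symmetry only guarantees that the sorted outcome is identically distributed, so when several bids share a value it does not pin down which physical bid receives which outcome. Since $f$ is symmetric in the tuples, working with $\E[f(\cO_{\bfb})]=\E[f(\cO_{\bfc})]$ sidesteps ties entirely. No appeal to Myerson's lemma or to MIC is needed — the lemma follows from global SCP and weak symmetry alone — so I expect the write-up to be short, the only mild obstacle being the bookkeeping of which scenario to designate ``honest'' so that the deviation lands on the target vector $\bfc$ with the sign coming out in the right direction.
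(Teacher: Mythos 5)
Your proposal is correct and is at heart the same swap argument the paper uses for \cref{lem:random-ordered}: have the two users trade bids, note the bid multiset (hence, by weak symmetry, the distribution of the sorted outcome and the expected burn) is unchanged, and extract the sign $(b_i-b_j)(x_j-x_i)\le 0$ from global SCP. The only difference is bookkeeping: the paper takes $\bfb$ itself as the honest profile and asserts directly that the swap exchanges the two users' confirmation probabilities, whereas you take the transposed vector as the honest profile and compare expected welfare through the symmetric outcome functional, a variant that pins the lemma's probabilities $x_i(\bfb),x_j(\bfb)$ on the deviation side and handles identity-dependent tie-breaking more carefully than the paper's one-line claim.
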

\begin{proof}
For the sake of contradiction, suppose there exists two integers $i,j$ such that $b_i > b_j$ and $x_i(\bfb) < x_j(\bfb)$.
The global coalition can swap user $i$'s and user $j$'s bids.
Concretely, user $i$ should bid $b_j$ and user $j$ should bid $b_i$.
Since the bid vector is the same as before (though $b_i$ represents user $j$'s bid, and $b_j$ represents user $i$'s bid),
the expected burning is the same as before.
However, since the mechanism is weakly symmetric, the social welfare increases by $\left(b_i \cdot x_j(\bfb) + b_j \cdot x_i(\bfb)\right) -\left(b_i \cdot x_i(\bfb) + b_j \cdot x_j(\bfb)\right) > 0$, which violates global SCP.
\end{proof}

\begin{lemma}
Suppose the mechanism satisfies global SCP and let the block size be $k$.
Consider any  
 bid vector satisfying $b_1 \geq b_2 \geq \ldots \geq b_k > b_{k+1}$.
Then, 
it must be that $b_{k+1}$
has 0 probability of confirmation.
\label[lemma]{lem:top-k}
\end{lemma}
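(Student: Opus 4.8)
The plan is to argue by contradiction: suppose $x_{k+1}(\bfb)>0$, where $\bfb=(b_1,\dots,b_k,b_{k+1})$ with $b_1\ge\cdots\ge b_k>b_{k+1}$. I would work in the scenario with $k+1$ users in which user $i$ has true value $b_i$, so that honest play produces exactly the bid vector $\bfb$. For a block $\bfB$ (a subset of these $k+1$ bids with $|\bfB|\le k$), write $V(\bfB)$ for the expected social welfare produced by running the honest confirmation, payment, and miner-revenue rules on $\bfB$ in this scenario --- i.e.\ the expected total true value of the confirmed bids minus the expected amount burnt. By weak symmetry (\cref{def:symmetry}) this quantity is determined by the multiset of bid \emph{amounts} appearing in $\bfB$ together with the assignment of owners (true values) to those amounts, and it is unaffected by how $\bfB$ was formed.

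Two observations then do the work. First, the social welfare under honest play on $\bfb$ equals $\E_{\bfB}[V(\bfB)]$, where $\bfB$ is the random block output by the honest inclusion rule on $\bfb$. Second, for \emph{every} block $\bfB$ in the support of this distribution, the strategy ``all users bid honestly, but the miner deviates and force-includes exactly $\bfB$'' is a legal strategy for the grand coalition whose social welfare is precisely $V(\bfB)$; hence global SCP yields $\SW_{\mathrm{hon}}(\bfb)\ge V(\bfB)$. Combining the two, the average $\E_{\bfB}[V(\bfB)]$ is at least every value $V(\bfB)$ attained in the support, which forces $V(\bfB)$ to be constant on the support and equal to $\SW_{\mathrm{hon}}(\bfb)$.

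Now I would use $x_{k+1}(\bfb)>0$ to select a block $\bfB^{\dagger}$ in the support that contains user $k+1$ and in which user $k+1$'s bid is confirmed with positive probability. Since $|\bfB^{\dagger}|\le k$ while $k+1\in\bfB^{\dagger}$, some top-$k$ user $j^{\star}\in\{1,\dots,k\}$ is absent from $\bfB^{\dagger}$, with $b_{j^{\star}}\ge b_k>b_{k+1}$; moreover user $k+1$'s bid is the unique bid of amount $b_{k+1}$ in $\bfB^{\dagger}$, since every other bid there has amount $\ge b_k>b_{k+1}$. Consider the deviation in which the users owning the bids of $\bfB^{\dagger}\cap\{1,\dots,k\}$ bid honestly, user $j^{\star}$ bids $b_{k+1}$, all remaining users withhold their transactions, and the miner force-includes exactly these bids as a block $\bfB'$. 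By construction $\bfB'$ has the same multiset of amounts as $\bfB^{\dagger}$, so its confirmation, payment, and burn behavior are identical to those of $\bfB^{\dagger}$ (weak symmetry), but the bid of amount $b_{k+1}$ is now owned by user $j^{\star}$, whose true value is $b_{j^{\star}}>b_{k+1}$. Hence the social welfare of this deviation equals $V(\bfB^{\dagger})+(b_{j^{\star}}-b_{k+1})\cdot\big(\text{confirmation probability of the }b_{k+1}\text{-bid in }\bfB^{\dagger}\big)$, which is strictly larger than $V(\bfB^{\dagger})=\SW_{\mathrm{hon}}(\bfb)$ because that confirmation probability is positive. This contradicts global SCP, so $x_{k+1}(\bfb)=0$.

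The step I expect to require the most care is the bookkeeping that separates ``blocks'' from ``bid vectors'': one must check that the honest social welfare genuinely decomposes as $\E_{\bfB}[V(\bfB)]$ with each term matched by a legal coalition deviation (so that global SCP collapses all the $V(\bfB)$ to one value), and that the owner-swapping deviation --- replacing the owner of the doomed block slot by the absent top-$k$ user $j^{\star}$ while preserving the block's shape --- is legitimate and leaves the confirmation and burn quantities unchanged. This last point is precisely where \cref{def:symmetry} is invoked, and it is the only place where one must be careful that the tie-breaking information carried by bids does not secretly alter the outcome; everything else is then immediate.
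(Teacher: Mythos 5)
Your proof is correct, and its crux is the same as the paper's: since the block size is $k$, any block containing the $b_{k+1}$ bid must exclude some user $j^\star\in\{1,\dots,k\}$ with $b_{j^\star}>b_{k+1}$, and by weak symmetry (\cref{def:symmetry}) the coalition can hand that block slot to $j^\star$ (who bids $b_{k+1}$ untruthfully), gaining welfare proportional to the slot's positive confirmation probability — contradicting global SCP. The only difference is bookkeeping around the inclusion rule's randomness: the paper has the coalition simulate the inclusion coins ``in its head'' and deviate adaptively only on the event that $b_{k+1}$ is included, so the expected gain over honest play is directly $p\cdot p'\cdot(b_i-b_{k+1})>0$; you instead first show, via the decomposition $\SW_{\mathrm{hon}}(\bfb)=\E_{\bfB}[V(\bfB)]$ and the fact that force-including any fixed block is itself a legal deviation, that $V(\bfB)$ is constant and equal to $\SW_{\mathrm{hon}}(\bfb)$ on the support, and then deviate to one fixed bad block $\bfB^{\dagger}$. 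Both treatments are valid; the paper's coupled deviation is slightly more direct, while your averaging step makes the conditional structure (and the role of the finite support) explicit, and your care about ties is warranted but harmless here since the $b_{k+1}$ bid is the unique bid of its amount and all tied bids are truthful with equal values.
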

\begin{proof}
Suppose for the sake of contradiction
that $b_{k+1}$
has non-zero probability of confirmation.
This means it must be included in the block
with some non-zero probability $p$, and 
conditioned on being included, it has a non-zero probability $p'$ of
confirmation.
A global coalition whose true values
are $b_1, \ldots, b_{k+1}$ can play the following strategy.
First, run the honest mechanism in its head including flipping any coins required by the inclusion rule
assuming everyone bids honestly, 
and let ${\bf B}$ be the resulting block.
If $b_{k+1} \notin {\bf B}$, just have everyone bid truthfully
and build the block ${\bf B}$.
Else, if $b_{k+1} \in {\bf B}$, then let $b_i > b_{k+1}$
be one bid that is left out of the block --- such a bid
must exist because the block size is only $k$.
The coalition will have the $i$-th user 
actually bid $b_k$ and build the same block ${\bf B}$
where the $b_{k+1}$
actually corresponds to user $i$'s strategic bid.
With this strategy, 
the coalition's expected gain in utility is $p \cdot p' \cdot (b_i - b_{k+1}) > 0$,
which violates global SCP.
\end{proof}

\ignore{
\begin{lemma}
Suppose the mechanism is global SCP and the block size is $k$.
Let $\bfb = (b_1,\dots,b_t)$ be any vector in the descending order such that $t > k$.
For any $i > k$, if the bid vector  is $\bfb$ and $b_i < b_k$, then it must be $x_i(\bfb) = 0$.
In other words, $b_i$ is either not be included, or it can never be confirmed even when it is included.
\label[lemma]{lem:top-k}
\end{lemma}
\begin{proof}
For the sake of contradiction, suppose there exists a scenario where all users' valuations are $\bfb = (b_1,\dots,b_t)$, and exist an integer $i > k$ such that $b_i < b_k$ and $x_i(\bfb) > 0$.
Since $x_i(\bfb) > 0$, when the miner follows the inclusion rule, $b_i$ is in the created block with non-zero probability.
However, since the block size is $k$ and $i > k$, when $b_i$ is in the block, there must exist another user $j$ whose true value $b_j$ is larger than $b_i$ but $b_j$ is not included.
Thus, whenever $b_i$ is in the created block indicated by the honest inclusion rule, the global coalition can swap user $i$ and $j$'s bids, so that user $j$ bids $b_i$ instead and the miner places $b_i$ as before while it is submitted by user $j$.
Since $x_i(\bfb) > 0$, user $j$'s bid will be confirmed with non-zero probability.
Since the created block is the same as the honest case except that $b_i$ is now submitted by user $j$ who has higher true value than user $i$, the joint utility of the global coalition increases, which violates global SCP.
\end{proof}
}

% \elaine{TODO: define total sw earlier}

\begin{lemma}
Suppose the mechanism satisfies MIC and global SCP.
Let ${\bf a}$ be an arbitrary %length-$n$ vector, 
vector of positive length, 
and let $\bfb, \bfb'$
arbitrary vectors --- $\bfb$ and $\bfb'$ may or may not be of the same length,
and their lengths are allowed to be $0$.
Consider two scenarios  
with bid vectors $({\bf a}, \bfb)$
and $({\bf a}, \bfb')$ respectively. 
%containing bid vectors 
%the same length $n + n'$. 
%In other words,
%users change their bids from $\bfb$ to $\bfb'$
%in the second scenario.
%Suppose that in both scenarios, 
%the last $n'$ users have a confirmation probability of $0$.
Suppose that $\bfb$ and $\bfb'$ have 0 confirmation probability
in each of the two scenarios, respectively.
Then, both scenarios enjoy the same expected
miner utility, total social welfare, and total user utility.
\label[lemma]{lem:freeknob}
\end{lemma}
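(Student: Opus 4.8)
The plan is to relate each of the two scenarios to a common ``pivot'' execution in which the only bids present are those of $\bfa$. Write $W(\cdot)$ and $R(\cdot)$ for the expected social welfare and the expected miner revenue of the honest execution on a given bid vector. Throughout I will use the following observation: a strategic miner who injects a chosen set of fake bids and then applies the honest inclusion rule to the augmented bid vector produces exactly the block distribution that the honest execution would produce on that augmented vector; since the confirmation, payment, and miner-revenue rules are run by the blockchain on the resulting block, and by weak symmetry (\cref{def:symmetry}) these rules depend only on bid amounts and their positions, every downstream quantity (confirmations, payments, burnt coins, miner revenue) coincides with that of the honest execution on the augmented vector. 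Symmetrically, a miner may instead \emph{drop} some of the submitted bids and re-run the honest inclusion rule on the remaining ones, reproducing the honest execution on the shorter vector.

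First I would pin down the miner's revenue using MIC (\cref{def:MIC}). In the execution whose submitted bid vector is $\bfa$, the miner may deviate by injecting the fake bids $\bfb$ and running the honest inclusion rule on $(\bfa,\bfb)$; by the observation above this yields miner revenue $R(\bfa,\bfb)$, so MIC (applied with submitted bid vector $\bfa$) gives $R(\bfa) \ge R(\bfa,\bfb)$. Conversely, in the execution whose submitted bid vector is $(\bfa,\bfb)$, the miner may drop the $\bfb$-bids and run the honest inclusion rule on $\bfa$ alone, so MIC gives $R(\bfa,\bfb) \ge R(\bfa)$. Hence $R(\bfa,\bfb) = R(\bfa)$, and the identical argument gives $R(\bfa,\bfb') = R(\bfa)$.

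Next I would pin down the social welfare using global SCP (\cref{def:globalSCP}), recalling that social welfare equals the expected total true value of confirmed bids minus the expected amount burnt. In the world with true values $\bfa$, the global coalition can inject the fake bids $\bfb$ (which have true value $0$) and run the honest inclusion rule on $(\bfa,\bfb)$: since the $\bfb$-bids are confirmed with probability $0$ in that scenario by hypothesis, they contribute nothing to the welfare, so this deviation realizes exactly $W(\bfa,\bfb)$; global SCP then gives $W(\bfa) \ge W(\bfa,\bfb)$. Conversely, in the world with true values $(\bfa,\bfb)$, the coalition can drop the $\bfb$-bids — their owners then pay nothing and obtain utility $0$ — and reproduce the honest execution on $\bfa$, realizing $W(\bfa)$; global SCP gives $W(\bfa,\bfb) \ge W(\bfa)$. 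Hence $W(\bfa,\bfb) = W(\bfa) = W(\bfa,\bfb')$. Finally, since the total user utility equals the social welfare minus the miner revenue, the equality of both $W$ and $R$ across the two scenarios immediately gives equality of the total user utility, completing the argument.

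The step I expect to be the main obstacle is making the ``inject fakes / drop bids and re-run the honest inclusion rule'' reduction rigorous — i.e., verifying that neither the honest inclusion rule nor the blockchain's confirmation, payment, and miner-revenue rules can react to whether a bid is real or injected, which is precisely where the weak symmetry assumption is invoked — together with the bookkeeping that the bids we inject (namely $\bfb$ or $\bfb'$) are exactly those that, by hypothesis, have zero confirmation probability in the relevant scenario, so that they can never alter welfare or any participant's utility.
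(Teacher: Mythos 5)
Your proof is correct and follows essentially the same route as the paper's: since zero-confirmation bids can be injected or dropped at no cost, MIC pins down the expected miner revenue, global SCP pins down the expected social welfare, and the total user utility follows as their difference. The only cosmetic difference is that you pivot through the $\bfa$-only execution (itself a special case of the lemma, since $\bfb'$ may be empty) and obtain two-sided inequalities, whereas the paper compares the two scenarios directly with a single swap deviation.
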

\begin{proof}
We first prove that expected miner utility is the same
in both scenarios.
Suppose this is not true, and without loss
of generality, suppose
expected miner utility is higher in scenario 1.
Then, the miner can ignore the bids $\bfb$,
inject the fake bids $\bfb'$, pretend that the bid vector
is $(\bfa, \bfb')$, and run the honest mechanism.
Since the confirmation probability of $\bfb'$ is $0$, the
miner need not pay any cost for the fake bids.   
Therefore, the miner gets higher expected utility 
by taking the above strategy which violates MIC.

The proof of total social welfare is similar. 
Suppose, without loss
of generality, that the expected total social welfare in scenario 1 is higher.
Then, the global coalition can inject
fake bids $\bfb'$ and pretend that the bid vector is $(\bfa, \bfb')$, 
thus allowing it to increase its expected social welfare. This violates global
SCP.

The equivalence in total user utility follows directly from the above, since 
total user utility is the difference between the social welfare and 
the miner utility. 
\end{proof}

\begin{lemma}
Suppose the mechanism satisfies UIC, MIC, and global SCP, 
and the block size is $k$.
Let $\bfb = (b_1,\dots,b_{k+1})$ be any vector where $b_1 = b_2 = \cdots = b_i > b_{i+1} > \cdots > b_{k+1}$. %for some $i \in [k]$.
%Consider a scenario where there are $k+1$ users with valuations $\bfb$.
Then, 
under the bid vector $\bfb$, 
the following hold: 1) 
users $i+1, \ldots, k+1$ have $0$ confirmation probability; 
and 2) the total user utility is $0$ assuming 
every one is bidding their true value.
\label[lemma]{lem:zero-user-utility}
\end{lemma}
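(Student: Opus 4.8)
The plan is to establish both parts of the lemma simultaneously by downward induction on $i$, from the base case $i = k$ down to $i = 1$. Throughout, write $b := b_1 = \cdots = b_i$ and let $U$ denote the total expected user utility in the honest execution of the scenario under consideration. Two earlier results do the heavy lifting: Myerson's payment formula (\cref{lemma:myerson}) together with monotonicity of $x_j(\bfb_{-j}, \cdot)$, and \cref{lem:freeknob}, which (via MIC) says that replacing a suffix of bids that all have zero confirmation probability by another such suffix leaves $U$ unchanged. Also, individual rationality makes every user's expected utility nonnegative, so $U \ge 0$ always, and once we show $U = 0$ it follows that every user's utility is $0$.

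For the base case $i = k$ the vector is $(\underbrace{b,\dots,b}_k, b_{k+1})$ with $b_{k+1} < b$; part (1) (here just that $b_{k+1}$ is unconfirmed) is precisely \cref{lem:top-k}. For part (2), \cref{lem:freeknob} with $\bfa = (\underbrace{b,\dots,b}_k)$ shows that $U$ does not depend on the value of $b_{k+1} \in [0,b)$, so it suffices to bound $U$ in the instance $b_{k+1} = b - \epsilon$ for arbitrarily small $\epsilon > 0$. Fix one of the $k$ bidders at value $b$, say user $j$, whose competitors bid $(\underbrace{b,\dots,b}_{k-1}, b-\epsilon)$. If user $j$ were to bid any $z < b - \epsilon$, then $z$ would be the smallest of $k+1$ bids and hence unconfirmed by \cref{lem:top-k}; so $x_j(\bfb_{-j}, z) = 0$ for all $z < b - \epsilon$, and Myerson's formula then shows that user $j$'s expected payment when bidding $b$ is at least $b \cdot x_j(\bfb_{-j}, b) - \epsilon$, i.e., its expected utility is at most $\epsilon$. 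As the $(b-\epsilon)$-bidder is unconfirmed, $0 \le U \le k\epsilon$, and $\epsilon \to 0$ gives $U = 0$.

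For the inductive step, assume the lemma for $i+1$ and take the vector $(\underbrace{b,\dots,b}_i, b_{i+1}, \dots, b_{k+1})$ with $b > b_{i+1} > \cdots > b_{k+1}$. To prove part (1), raise user $i+1$'s bid from $b_{i+1}$ to $b$: the result is a valid parameter-$(i+1)$ instance, so by the inductive hypothesis (part (2)) and individual rationality, user $i+1$'s utility there is $0$; combining with Myerson's formula gives $\int_0^b x_{i+1}(\bfb_{-(i+1)}, t)\,dt = 0$, hence $x_{i+1}(\bfb_{-(i+1)}, t) = 0$ for all $t < b$ by monotonicity, so user $i+1$ is unconfirmed at $b_{i+1}$, and by \cref{lem:random-ordered} so are users $i+2, \dots, k+1$. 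Part (2) then follows as in the base case: since part (1) holds for every parameter-$i$ instance, \cref{lem:freeknob} with $\bfa = (\underbrace{b,\dots,b}_i)$ shows $U$ is unchanged if the suffix $b_{i+1}, \dots, b_{k+1}$ is replaced by $(b-\epsilon, b-2\epsilon, \dots, b-(k+1-i)\epsilon)$; for this suffix a deviation of any bidder at $b$ down below $b-(k+1-i)\epsilon$ makes its bid the smallest of $k+1$ bids, hence unconfirmed by \cref{lem:top-k}, and Myerson's formula caps that bidder's utility at $(k+1-i)\epsilon$, so $0 \le U \le i(k+1-i)\epsilon$ and $U = 0$.

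The step I expect to need the most care is arranging, at every level of the induction, for the payment-pinning deviation to leave the deviating bid strictly below all others so that \cref{lem:top-k} certifies it unconfirmed; this is why the suffix is pushed arbitrarily close to $b$ before the Myerson estimate is run, and why \cref{lem:freeknob} --- the one place MIC enters --- is needed to transport the resulting bound back to the original suffix. Everything else is a routine assembly of Myerson monotonicity, \cref{lem:random-ordered}, and individual rationality.
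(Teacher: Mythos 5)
Your proposal is correct and follows essentially the same route as the paper's proof: the same downward induction on $i$ with base case $i=k$, using \cref{lem:top-k} and \cref{lem:random-ordered} for the zero-confirmation claims, \cref{lem:freeknob} (where MIC enters) to transport the total-user-utility bound to a suffix pushed arbitrarily close to $b$, and Myerson's lemma plus individual rationality to drive the utility to $0$. The only differences are cosmetic (indexing of the induction and the exact constant in the $O(\delta)$ bound), so there is nothing to fix.
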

\begin{proof}
We prove both statements by induction.
Henceforth, given a bid vector $\bfb = (b_1, \ldots, b_{k+1})$,
we use $x_i(\bfb)$
to denote the confirmation probability of the $i$-th user
under $\bfb$.

\paragraph{Base case.}
The base case is when $i = k$;
that is, there are $k+1$ users with valuations $\bfb = (b_1,\dots,b_{k+1})$ where $b_1 = \cdots = b_k > b_{k+1}$.
By \cref{lem:top-k}, we have $x_{k+1}(\bfb) = 0$.
% that the $k+1$ user has $0$ confirmation probability.
Next, we show that the joint utility of all users must be zero.
By \cref{lem:freeknob} and \cref{lem:top-k}, the total user utility
under $\bfb = (\underbrace{b, \ldots, b}_k, b_{k+1})$ 
is the same as 
under $\bfb' = (\underbrace{b, b, \ldots,  b}_k, b - \delta)$ for 
any arbitrarily small $\delta > 0$ --- %specifically, 
in both cases, the last user must have 0 confirmation
probability.
Now, consider the 
scenario 
$\bfb' = (\underbrace{b, b, \ldots,  b}_k, b - \delta)$. 
If one of the $b$-bidders denoted $i \in [k]$ lowered 
its bid to less than $b-\delta$ (and everyone else's bids
stay the same), its confirmation 
probability would become $0$ 
by \cref{lem:top-k}.
Since the mechanism is UIC, by Myerson's lemma, 
under $(b, b, \ldots,  b, b - \delta)$, 
user $i$'s expected payment must be at least  
$(b-\delta) \cdot x_i(\bfb')$. Therefore,
user $i$'s 
expected utility 
is at most $\delta \cdot x_i(\bfb')$.
Therefore, the expected total user utility under $\bfb'$ 
is at most $k \cdot \delta \cdot x_i(\bfb') \leq k \cdot \delta$.
This means that under $\bfb = (b_1, \ldots, b_{k+1})$, 
the expected total user utility is at most 
$k \cdot \delta$ for any arbitrarily small $\delta > 0$, i.e.,
the expected total user utility is at most $0$.
Finally, by the individual rationality of the payment rule, each user's utility is non-negative when it bids truthfully.
Therefore, the expected total user utility is $0$.

\ignore{
For the sake of contradiction, suppose the joint utility of all users is $\Delta > 0$.
Now, consider another scenario where there are $k+1$ users with valuations $\bfb' = (b_1,\dots,b_k, b_k - \delta)$ such that $0 < \delta < \Delta / k$.
By \cref{lem:top-k}, we have $x_{k+1}(\bfb') = 0$.
Thus, applying \cref{lem:freeknob}, the joint utility of all users is also $\Delta$ when the valuations are $\bfb'$.
% Because of MIC, when all users bid truthfully, the miner revenue in both scenarios are the same.
% Otherwise, if the miner revenue is higher when the bid vector is $\bfb$, imagine that we are in the scenario where the bid vector is $\bfb'$.
% The miner can drop the bid $b_k-\delta$ and inject a fake bid $b_{k+1}$.
% Notice that $x_{k+1}(\bfb) = 0$, so injecting $b_{k+1}$ costs nothing for the miner.
% The miner's utility increases, which violates MIC.
% Similarly, if the miner revenue is higher when the bid vector is $\bfb'$, imagine that we are in the scenario where the bid vector is $\bfb$.
% The miner can drop the bid $b_{k+1}$ and inject a fake bid $b_k-\delta$.
% By \cref{lem:top-k}, we have $x_{k+1}(\bfb') = 0$ and injecting $b_k-\delta$ costs nothing for the miner.
% The miner's utility increases, which violates MIC.
% We conclude that the miner revenue in both scenarios must be the same.
% Notice that the $(k+1)$-th user is never confirmed in both scenarios.
% Because the mechanism is global SCP, by replacing the miner revenue with the social welfare in the argument above, we can also conclude that the expected social welfare should be the same in both scenarios.
% Recall that the social welfare is the sum of the utilities of all users and the miner.
% If the expected miner revenue and the expected social welfare do not change if we switch the scenarios, the joint utility of all users must remain the same.
Since there are at most $k$ users in the block, there exists a user $j$ whose utility is at least $\Delta / k$, so user $j$'s expected payment is at most $x_j(\bfb') \cdot b_j - \Delta / k$.
However, by \cref{lem:top-k}, only the top $k$ bids in the bid vector can have non-zero allocation probability.
Thus, $x_j(\bfb'_{-j}, v_j) = 0$ for all $v_j < b_k - \delta$.
% The unique payment rule indicated by the Myerson's lemma implies that the expected payment of user $j$ is at least $x_j(\bfb') \cdot (b_k - \delta)$.
By the Myerson's lemma, user $j$'s expected payment is at least $x_j(\bfb') \cdot (b_k - \delta)$.
Because $b_j = b_k$, $x_j(\bfb') \leq 1$, and $\delta < \Delta / k$, we have $x_j(\bfb') \cdot (b_k - \delta) > x_j(\bfb') \cdot b_j - \Delta / k$.
The payment cannot be simultaneously at most $x_j(\bfb') \cdot b_j - \Delta / k$ and at least $x_j(\bfb') \cdot (b_k - \delta)$, so we reach a contradiction.
Therefore, the joint utility of all users must be zero.
}
% To show that the joint utility of all users is zero, it suffices to show that any user $i$ must pay $b_i$ if its bid is confirmed.

\paragraph{Inductive step.}
Fix any $i \in \{2,\dots,k\}$, suppose the lemma statement holds.
We will show that the lemma also holds 
for any vector $\bfb = (b_1,\dots,b_{k+1})$ where $b_1 = b_2 = \cdots = b_{i-1} > b_i > b_{i+1} > \cdots > b_{k+1}$. 
%and $k+1$ users with the valuations $\bfb$, it must be $x_i(\bfb) = x_{i+1}(\bfb) = x_{i+2}(\bfb) = \cdots = x_{k+1}(\bfb) = 0$ and the joint utility of all users is still zero when they bid truthfully.

We first show that all users $i, \ldots, k+1$ are unconfirmed. 
Consider the %length-$(k+1)$ 
vector $\underbrace{b, b, \ldots, b}_i, b_{i+1}, \ldots, b_{k+1}$. 
By our induction hypothesis, 
$i$ users can have non-zero  
confirmation probability and moreover, the total user utility is $0$.
Because of individual rationality, each user's utility must be non-negative,
so every user's utility is also $0$.
In this case, whenever a bid is confirmed, the payment is equal to the bid (and hence the true value).
By UIC and Myerson's lemma, 
it means that if the $i$-th user lowers its bid to $b_{i} < b$ (which
becomes the scenario $\bfb$ we care about), then
user $i$'s confirmation probability becomes $0$.
By 
\cref{lem:random-ordered}, users $i+1, \ldots, k+1$ must
have 0 confirmation probability under $\bfb$ too.

\ignore{Let $\bfb' = (\bfb_{-i}, b_1)$; that is, $\bfb'$ is the same as $\bfb$ except that the $i$-th coordinate is replaced with $b_1 = b_2 = \cdots = b_{i-1}$.
Consider the scenario where $k+1$ users' valuation is $\bfb'$.
Since the first $i$ coordinate are the same, by the inductive assumption, the joint utility of all users is zero when they bid $\bfb'$.
In particular, user $i$'s utility is zero.
By the Myerson's lemma, if a user with the true value $v$ gets zero utility when it bids truthfully, the allocation probability must be zero when it underbids.
Thus, we have $x_i(\bfb) = 0$.
By \cref{lem:random-ordered}, we also have $x_{i+1}(\bfb) = \cdots = x_{k+1}(\bfb) = 0$.
}

Next, we show that the joint utility of all users is zero 
under $\bfb$ and assuming everyone bids truthfully.
By \cref{lem:freeknob}, %and our induction hypothesis, 
the total user utility under $\bfb$ is the same
as under 
the %length-$(k+1)$ 
scenario 
$\bfb' = (\underbrace{b, b, \ldots, b}_{i-1}, b-\delta, b-2\delta, \ldots, 
b-(k+2-i)\delta)$ for an arbitrarily small $\delta > 0$ --- 
since in both scenarios,
users $i, \ldots, k+1$ have 0 confirmation probability.
Under $\bfb'$, if any $b$-bidders denoted $j$ lowers
its bid to less than $b-(k+2-i)\delta$, its confirmation
probability  
becomes $0$ by \cref{lem:top-k}.
By UIC and Myerson's lemma, 
 $j$'s expected payment under $\bfb'$ is at least
$(b-(k+2-i)\delta) \cdot x_j(\bfb')$.
This means that the total user utility under
$\bfb'$ is at most 
$(k+2-i)\delta \cdot x_j(\bfb') \cdot (i-1) \leq k^2 \delta$.
Therefore, under $\bfb$, the total user utility
is at most $k^2\delta$ for any arbitrarily small $\delta > 0$.
This means that the total user utility under $\bfb$ is at most $0$.
Finally, by the individual rationality of the payment rule, each user's utility is non-negative when it bids truthfully.
Therefore, the expected total user utility is $0$.

\ignore{
%if their true values are $\bfb$ and they bid truthfully.
For the sake of contradiction, suppose the joint utility of all users is $\Delta > 0$.
% It implies that there exists a user $j$ whose utility is at least $\Delta / k$, and the expected payment of user $j$ is at most $x_j(\bfb) \cdot b_j - \Delta / k$.
Now, consider a real number $\delta$ such that $0 < \delta < \Delta/k^2$, and consider another scenario where there are $k+1$ users with valuations $\bfa = (a_1,\dots,a_{k+1})$ defined as \[
	\left\{\begin{matrix}
		a_z = b_z,	& z = 1,\dots,i-1\\ 
		a_z = b_1 - (z-i+1)\cdot \delta, & z = i,\dots,k+1.
	\end{matrix}\right.
\]
Because $a_1 = \cdots = a_{i-1} > a_i > a_{i+1} > \cdots > a_{k+1}$,
it must be $x_i(\bfa) = x_{i+1}(\bfa) = x_{i+2}(\bfa) = \cdots = x_{k+1}(\bfa) = 0$ as we have shown above.
By \cref{lem:freeknob}, the joint utility of all users is the same no matter the valuations are $\bfb$ or $\bfa$.
Thus, the utility of all users is $\Delta$ if their valuations are $\bfa$.
Since there are at most $k$ users in the block, there exists a user $j$ whose utility is at least $\Delta / k$, so user $j$'s expected payment is at most $x_j(\bfa) \cdot a_j - \Delta / k$.
However, by \cref{lem:top-k}, only the top $k$ bids in the bid vector can have non-zero allocation probability.
Thus, $x_j(\bfa_{-j}, v_j) = 0$ for all $v_j < a_{k+1}$.
By the Myerson's lemma, user $j$'s expected payment is at least $x_j(\bfa) \cdot a_{k+1}$.
Because $a_j \leq a_{k+1} + k\delta$, $x_j(\bfa) \leq 1$, and $\delta < \Delta / k^2$, we have $x_j(\bfa) \cdot a_{k+1} > x_j(\bfa) \cdot a_j - \Delta / k$.
The payment cannot be simultaneously at most $x_j(\bfa) \cdot a_j - \Delta / k$ and at least $x_j(\bfa) \cdot a_{k+1}$, so we reach a contradiction.
Therefore, the joint utility of all users must be zero.
}
\end{proof}

%\elaine{TODO: clean up below}

\begin{lemma}
Suppose the mechanism satisfies UIC, MIC, and global SCP, 
and the block size is $k$.
Let $a$ be any positive real number.
Consider a scenario with only one bid $a$. Then, the only user's utility
is zero assuming it bids its true value.
%If there is only one user in the bid vector with the valuation $a_1$, its utility is zero when it bids truthfully.
\label[lemma]{lem:only-user}
\end{lemma}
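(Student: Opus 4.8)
The plan is to reduce the one-bidder scenario to a $(k+1)$-bidder scenario that is already pinned down by \Cref{lem:zero-user-utility}, and then to delete the extra bidders using \Cref{lem:freeknob}. Concretely, first I would fix $k$ auxiliary true values $b_2 > b_3 > \cdots > b_{k+1}$ with all of them strictly below $a$ (for instance $b_j := a\bigl(1 - (j-1)/(2k)\bigr)$ for $j \in [k+1]$, so $b_1 = a$ and $b_{k+1} = a/2 > 0$), and consider the scenario with $k+1$ truthful bidders and bid vector $\bfb = (a, b_2, \ldots, b_{k+1})$. This vector has exactly the shape required by \Cref{lem:zero-user-utility} with $i = 1$ (one top bid, then strictly decreasing smaller bids), so that lemma gives: (i) bids $b_2, \ldots, b_{k+1}$ have confirmation probability $0$ under $\bfb$; and (ii) the total user utility under $\bfb$ is $0$ when everyone bids truthfully. (Intuitively, only the top bidder can be confirmed, and its utility is forced to $0$.)

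Next I would invoke \Cref{lem:freeknob} with ${\bf a} := (a)$ as the ``kept'' prefix, $\bfb := (b_2, \ldots, b_{k+1})$, and $\bfb' := ()$ the empty vector. Its hypotheses are met: $\bfb$ has confirmation probability $0$ in the scenario $({\bf a}, \bfb) = \bfb$ by part (i) above, and the empty vector $\bfb'$ vacuously has confirmation probability $0$ in the scenario $({\bf a}, \bfb') = (a)$. The lemma then yields that the total user utility is the same in the two scenarios, hence equal to $0$ in the single-bid scenario $(a)$. Since there is exactly one user there, that user's utility is $0$, which is the claim.

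The only thing requiring care -- rather than a genuine obstacle -- is matching the hypotheses of the two cited lemmas verbatim: the auxiliary bids must be chosen strictly below $a$ and strictly decreasing so that \Cref{lem:zero-user-utility} applies with $i = 1$, and one must use the explicit allowance in \Cref{lem:freeknob} that one of the appended vectors may have length $0$, which is precisely what lets us erase the auxiliary bidders without disturbing the total user utility. No further inequality chasing or Myerson-type computation is needed beyond what is already packaged inside \Cref{lem:zero-user-utility} and \Cref{lem:freeknob}.
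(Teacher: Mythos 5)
Your proposal is correct and is essentially the paper's own proof: the paper likewise takes an auxiliary strictly decreasing vector $(a, a_2, \ldots, a_{k+1})$, applies \Cref{lem:zero-user-utility} (in the $i=1$ case) to get zero total user utility and zero confirmation probability for the lower bids, and then uses \Cref{lem:freeknob} with an empty $\bfb'$ to pass to the single-bid scenario. Your extra care about choosing the auxiliary values explicitly and checking the zero-length allowance in \Cref{lem:freeknob} is fine but adds nothing beyond the paper's argument.
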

\begin{proof}
Fix $a_1 > 0$.
Consider any vector $\bfa = (a_1, a_2, \dots, a_{k+1})$ such that $a_1 > a_2 > \cdots > a_{k+1}$.
%Imagine there are $k+1$ users with valuations $\bfa$.
By \cref{lem:zero-user-utility}, 
under $\bfa$, all users except the 
first user must have 0 confirmation probability, and moreover,
the first user has utility $0$.
%when all users bid truthfully, we have $x_2(\bfa) = x_3(\bfa) = \cdots = x_{k+1}(\bfa) = 0$, and the joint utility of all users is zero.
By \cref{lem:freeknob}, 
the total expected user utility under 
$\bfa$ and  under a single bid $a_1$
are the same.
Therefore, %the first user's utility under 
under a single bid $a_1$, the user's utility is also $0$.
\ignore{
no matter we only have one user with the valuation $a_1$ or we have $k+1$ users with valuations $\bfa$, the joint utility of all users in both scenarios are the same.
Thus, if there is only one user in the bid vector with the valuation $a_1$, its utility is zero when it bids truthfully.
}
\end{proof}

\begin{theorem}
No non-trivial, possibly  randomized TFM can simultaneously satisfy UIC, MIC, and global SCP
when the block size is finite.
\label{thm:random-scp-impossible}
\end{theorem}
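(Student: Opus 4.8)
The plan is to show that the assumption of non-triviality is incompatible with UIC $+$ MIC $+$ global SCP and a finite block size~$k$. By the revelation principle proved in \cref{sec:revelation} it suffices to treat \truthful{} mechanisms, so ``honest bidding'' means bidding one's true value. The crux, as foreshadowed by the earlier lemmas, is that \cref{lem:only-user} pins down the single-user case completely: in every scenario with a single bid $a>0$ the lone user's utility is $0$, and since by UIC and Myerson's lemma (\cref{lemma:myerson}) this utility equals $\int_0^a x(s)\,ds$ for the (monotone, non-negative) single-bid confirmation curve $x(\cdot)$, we get $x\equiv 0$. Thus it suffices to prove the following: if the single-bid confirmation curve is identically $0$, then \emph{every} bid vector has all confirmation probabilities $0$ (i.e.\ the mechanism is trivial), contradicting non-triviality.

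To prove this I would take a supposed counterexample $\bfb=(b_1\ge\cdots\ge b_t)$ with some $x_j(\bfb)>0$ of \emph{minimum length}; by \cref{lem:random-ordered} we may assume $x_1(\bfb)>0$, and $t\ge 2$ by \cref{lem:only-user}. The easy regime is when some bid of $\bfb$ is unconfirmed: then, stripping all zero-probability bids via \cref{lem:freeknob} yields a strictly shorter scenario $\bfb'$ whose total user utility equals that of $\bfb$; by minimality $\bfb'$ has all confirmation probabilities $0$, so by individual rationality its total user utility is $0$, and hence so is that of $\bfb$. Then every user's utility under $\bfb$ is $0$; in particular user~$1$ both satisfies $x_1(\bfb)>0$ and pays its full value, so feeding this back into Myerson's payment formula forces $x_1(\bfb_{-1},s)=0$ for all $s<b_1$, and a short perturbation argument in the spirit of \cref{lem:zero-user-utility} (lower $b_1$ slightly, re-apply \cref{lem:freeknob}, and let the perturbation tend to $0$) upgrades this to $x_1(\bfb)=0$, a contradiction.

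The main obstacle is the remaining regime, where \emph{every} bid of the minimal counterexample $\bfb$ is confirmed with positive probability. When $\bfb$ has more than $k$ bids (and its $k$-th and $(k+1)$-st bids differ), \cref{lem:top-k} already contradicts this, so the substantive case is $t\le k$, together with tie cases. Here I would append strictly decreasing dummy bids $c_1>c_2>\cdots$, all below $b_t$, to pad $\bfb$ up to length $k+1$; \cref{lem:top-k} makes the bottom dummy unconfirmed, so by \cref{lem:freeknob} stripping dummies preserves total user utility, and when the padded vector falls into the form $\underbrace{b,\dots,b}_i>b_{i+1}>\cdots>b_{k+1}$ covered by \cref{lem:zero-user-utility}, that lemma supplies total user utility $0$, which transfers back to $\bfb$ and lets the Myerson-based argument above finish. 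The delicate part is coercing the padded configuration into the exact hypotheses of \cref{lem:zero-user-utility}, i.e.\ controlling ties among the bids of $\bfb$ (and the corresponding tie cases of \cref{lem:top-k}); I expect essentially all the care in the proof to go there, with the rest being a routine assembly of \cref{lem:only-user}, \cref{lem:random-ordered}, \cref{lem:freeknob}, and Myerson's lemma.
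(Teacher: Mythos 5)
Your opening step is sound and coincides with the paper's use of \cref{lem:only-user}: zero utility for the lone user plus Myerson's lemma does force the single-bid confirmation curve to vanish identically. The gap is in how you then contradict non-triviality. Your intermediate claim (``single-bid curve $\equiv 0$ implies the mechanism is trivial'') is true, but the paper proves it with one global-SCP deviation that your proposal never uses and that is the heart of its argument: in the world with a single user of huge true value $a$, the coalition of that user and the miner replaces the primary bid by $b_i$ and injects the remaining bids of the non-trivial scenario $\bfb$ as fake bids, so the lone user inherits confirmation probability $p(\bfb)>0$ while the total payment (hence the burn) is at most $\sum_j b_j$; choosing $a>\sum_j b_j/p(\bfb)$ makes the deviation's social welfare positive, whereas honest welfare would be zero if the single-bid confirmation were zero --- contradicting global SCP. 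The theorem then follows by comparing two large single bids $a<b$: both give zero utility by \cref{lem:only-user}, yet underbidding from $b$ to $a$ yields utility at least $(b-a)\,x(a)>0$, contradicting UIC. Your minimal-counterexample analysis is meant to substitute for exactly this impersonation step, and as sketched it does not close.

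Two concrete failures. First, in your ``easy regime'' you correctly get total user utility $0$ under $\bfb$, hence user $1$ pays $b_1$ per confirmation, and Myerson gives $x_1(\bfb_{-1},s)=0$ only for $s<b_1$; this does not ``upgrade'' to $x_1(\bfb)=0$, because a monotone confirmation curve may jump exactly at $b_1$ --- zero utility together with positive confirmation at one's own bid is locally consistent (think of a posted price equal to $b_1$), so there is no within-scenario contradiction to be had. The perturbation you invoke cannot repair this: \cref{lem:freeknob} only equates scenarios whose differing bids are unconfirmed in \emph{both}, and $b_1$ is confirmed with positive probability in $\bfb$ by assumption; in the paper the analogous Myerson step is only ever applied to bids strictly below a value at which utility is zero, never at the value itself, and its final contradiction is obtained across two different value profiles (the underbidding argument), not inside one scenario. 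Second, in your ``hard regime'' the padded vector must have the exact shape $b_1=\cdots=b_i>b_{i+1}>\cdots>b_{k+1}$ required by \cref{lem:zero-user-utility}; a minimal counterexample such as $(5,5,3,3)$ with block size $k\ge 4$, padded below by strictly decreasing dummies, has two tied blocks and is not covered, and this is precisely the case you defer without an argument. Note also that even where \cref{lem:zero-user-utility} does apply, \cref{lem:freeknob} transfers only aggregate quantities (miner utility, social welfare, total user utility), not individual confirmation probabilities, so that branch again funnels into the faulty ``upgrade'' step. The missing idea, in short, is the coalition's fake-bid impersonation of the non-trivial scenario in the single-user world; with it, the structural case analysis becomes unnecessary.
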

\begin{proof}
We will show that under any sufficiently large $a$, 
the confirmation probability under a single bid $a$ is non-zero. 
If we can show this, 
then we can show a contradiction to UIC.
Specifically, consider $b > a$ and both sufficiently large.
By \cref{lem:only-user}, if there is only one user with true value $b$,
its utility is zero when it bids truthfully.
%  no matter its true value is $a$ or $b$.
% under both scenarios, the user utility is $0$ (assuming the only user bids truthfully). 
% However,
% by Myerson's lemma, if the confirmation probabilities
% under a single $a$ and under a single $b$ 
% are both non-zero, it cannot be  
% that in both cases the user's utility is $0$.
% However, imagine that the user's true value is $b$, and the user underbids $a$.
% Otherwise, the user with the true value $b$ can underbid $a$.
However, the user can underbid $a$.
Since the confirmation probability is non-zero and the payment is at most $a$, the user enjoys positive utility, which violates UIC.

We now show that the confirmation probability under a single bid $a$ 
for a sufficiently large $a$ is non-zero assuming the miner honestly follows the inclusion rule.
Suppose this is not true for the sake of a contradiction;
that is, when the bid vector is just $a$ and the miner is honest, no one has positive confirmation probability, so the social welfare is zero.
Since the mechanism is non-trivial, there must exist 
a scenario $\bfb = (b_1,\dots,b_t)$ such that 
some user $i$ has positive confirmation probability $p(\bfb)$.
Let $a > \frac{1}{p(\bfb)} \cdot \sum_{j=1}^t b_j$.
Then, imagine that there is only one user with true value $a$.
The global coalition can replace the user's primary bid with $b_i$, inject the fake bids  
$\bfb_{-i}$, and pretend that the bid vector is $\bfb$.  
% Then, under a scenario with only one bid $a$, 
% we argue that the user must achieve positive expected utility assuming it is bidding truthfully, thus contradicting
% \Cref{lem:only-user}.
This way, the user would get positive confirmation probability $p(\bfb)$.
Since its payment is at most $\sum_{j=1}^t b_j$, 
the user's expected utility (and thus the social welfare) is positive.
This violates global SCP.

\ignore{
Now, suppose $a_1 > a_2 > \sum_{i=1}^t b_i$. 
Consider scenario 
1 which consists of a single bid $a_1$ and scenario 2 which 
has a single bid $a_2$.
The above shows that 
in both scenarios, 
the single user's confirmation probability is positive.
By \cref{lem:only-user},   
in both scenarios, the single user 
pays its bid whenever confirmed.
However, this contradicts UIC and Myerson's lemma.
}
\end{proof}

\section{Feasibility and Impossibility of UIC + MIC + OCA-Proof}
We can generalize the proof in \Cref{sec:randomglobalscp}, and 
rule out UIC, MIC, and OCA-proof (rather than global SCP) for 
\truthful{} mechanisms.
Recall that for a \truthful{} mechanism, the difference between  
OCA-proof and global SCP is that global SCP insists that 
the optimal strategy of the global coalition is the truthful strategy, 
whereas OCA-proofness allows it to be 
some other strategy in which each user acts independently and bids  
the outcome of some function $\sigma(\cdot)$.

Interestingly, if we allow the bidding rule to be not truth-telling, i.e.~considering non-\truthful{} mechanisms,
we can have a mechanism that satisfies UIC, MIC, and OCA-proof.
% Notice that if we insist the bidding rule and the globally optimal strategy $\sigma$ to be the same, 
% we will reach an impossibility since one can apply the direct revelation principle (see \cref{sec:revelation}), and the impossibility in \cref{sec:randomglobalscp} applies.
We present the feasibility for non-\truthful{} mechanisms in \cref{sec:non-direct-oca-counterexample},
and we prove the impossibility of UIC + MIC + OCA-proof for \truthful{} mechanisms in \cref{sec:oca-impossibility}.
Notice that because of the feasibility in \cref{sec:non-direct-oca-counterexample}, we must require the bidding rule to be truth-telling to reach an impossibility in \cref{sec:oca-impossibility}.

\subsection{A Non-\Truthful{} Mechanism with UIC + MIC + OCA-Proof}
\label{sec:non-direct-oca-counterexample}
% We show that the UIC+MIC+OCA-proofness impossibility
% does not hold for non-direct mechanisms.
% The rationale of designing the mechanism is to make the ranges of the globally optimal strategy $\sigma$ and the bidding rule \emph{disjoint}
% so that if all players form a global coalition, the mechanism is able to confirm the users with the highest true values,
% while if a user decides not to join the global coalition, following the bidding rule is a dominant strategy.

% We discuss a couple other relaxations that allow
% us to overcome the impossibility; however, the resulting mechanisms
% are somewhat contrived, and not necessarily meaningful from
% a practical point of view.
% Intuitively, for both of these relaxations, the idea
The rationale of the design is to signal to the mechanism when everyone is adopting
the globally optimal strategy $\sigma$ (as opposed to the bidding rule used to establish UIC).
When the mechanism detects that everyone is behaving according
to $\sigma$, it adopts a different behavior
to optimize social welfare.
We use the range $[0, 1)$ to encode
the actual bid, and use the range $[1, \infty)$
for signalling.
While the resulting mechanism is somewhat contrived and not necessarily meaningful from a practical point of view, 
it clarifies which notions of collusion-resilience most accurately capture the intended modeling goals
and illustrates some technical challenges involved in the proof in \cref{sec:oca-impossibility}.
% However, the resulting mechanisms
% are somewhat contrived, and not necessarily meaningful from
% a practical point of view.
Consider the following TFM:
\begin{itemize}
\item 
{\bf Globally optimal strategy $\sigma(v)$}:
Given a true value $v$, output a bid $v+1$.
\item 
{\bf Bidding rule}: 
Given a true value $v$, output a bid $1/(v+2)$.
\item 
{\bf Inclusion rule}: 
Let $S$ be the set of all pending bids that are in $[0,1)$.
If $|S| > k$, then randomly select $k$ bids from $S$ to include.
If $1 \leq |S| \leq k$, then include all bids in $S$.
If $|S| = 0$, choose the top up to $k$ bids to include. 
% \elaine{i changed it, there may not be k, please change the proof accordingly}
\item 
{\bf Confirmation, payment, and miner revenue rules}:
All included bids are confirmed.
Each confirmed bid pays nothing, and the miner gets nothing.
\end{itemize}

Obviously, this mechanism is non-trivial.

\begin{claim}
The above mechanism satisfies UIC, MIC, and OCA-proofness.
\end{claim}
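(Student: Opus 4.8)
The plan is to verify the three properties one at a time, exploiting the fact that in this mechanism every payment, as well as the miner's revenue, is identically zero; this collapses UIC and MIC almost immediately and turns OCA-proofness into a pure social-welfare calculation. For UIC I would fix a user $i$ with true value $v_i$ and an arbitrary bid vector $\bfb_{-i}$ from the other users. Since payments are always $0$, user $i$'s utility equals $v_i\cdot x_i$, where $x_i$ is the probability that its primary bid is confirmed, so it suffices to show that the honest bid $1/(v_i+2)\in(0,1/2]$ maximizes $x_i$. Writing $m$ for the number of entries of $\bfb_{-i}$ lying in $[0,1)$: any bid of user $i$ in $[0,1)$ makes $|S|=m+1\ge 1$, so $S$ is used and $x_i=\min\{1,\,k/(m+1)\}$ regardless of the exact value submitted; a bid in $[1,\infty)$ yields $x_i=0$ when $m\ge 1$ (then $S$ is nonempty and omits $i$), while when $m=0$ a bid in $[0,1)$ already gives $x_i=1$ since $k\ge 1$. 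Injecting extra fake bids only enlarges $|S|$ (or is ignored), so it cannot help. Hence the honest bidding rule is a best response, giving UIC. MIC is then immediate: the miner-revenue rule always outputs $0$, so the miner's utility is $0$ no matter how it deviates or what fake bids it injects, and in particular following the honest inclusion rule with no fake bids is (weakly) optimal.

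The substance of the claim is OCA-proofness, for which I would use the reference strategy $\sigma(v)=v+1$ already singled out in the construction. This $\sigma$ is individually rational because under it every confirmed user pays $0\le v$ (indeed, all payments vanish). The key observation is that, since payments and miner revenue are always $0$, no coins are ever burnt, so the social welfare of \emph{any} outcome equals the sum of the true values of the confirmed primary bids; because a block holds at most $k$ bids, this is bounded above by the sum of the $\min\{k,t\}$ largest true values, uniformly over all coordinated strategies of the grand coalition (the miner and all users). It then remains to check that playing $\sigma$ attains this bound: all bids become $v_i+1\ge 1$, so the set $S$ of bids in $[0,1)$ is empty, the inclusion rule includes the top $\min\{k,t\}$ bids by amount, which — as $v\mapsto v+1$ is increasing — are exactly the $\min\{k,t\}$ users of largest true value, and every included bid is confirmed. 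Hence the social welfare under $\sigma$ equals the maximum possible value, so no deviation of the coalition can do strictly better, and OCA-proofness follows.

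I do not expect a genuine obstacle — the construction is tailored so that each property holds — but the step requiring the most care is the UIC case analysis: one must rule out \emph{every} deviation (injecting fake bids, bidding in $[1,\infty)$, submitting several bids), not merely compare bids within $[0,1)$. A minor additional point is to confirm that the tie-breaking used in the ``top up to $k$'' branch of the inclusion rule does not disturb the social-welfare computation in the OCA-proofness argument; it does not, since equal bid amounts correspond to equal true values.
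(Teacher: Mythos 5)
Your proposal is correct and follows essentially the same route as the paper's own proof: UIC via the observation that placing the primary bid in $[0,1)$ (joining $S$) is the best any user can do, MIC from the identically zero miner revenue, and OCA-proofness because under $\sigma(v)=v+1$ all bids are at least $1$, so the top-$k$ branch of the inclusion rule fires and confirms the highest-value users. Your write-up is simply more explicit (e.g., the formula $x_i=\min\{1,k/(m+1)\}$ and the welfare upper bound over all coalition strategies), which fills in details the paper leaves terse.
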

\begin{proof}
For UIC, notice that if a user follows the bidding rule, its bid is always in $[0,1)$.
If there is no bid in $[0,1)$ before a user submits its bid, then bidding $1/(v+2)$ always guarantees user's bid to be included and confirmed, where $v$ denote the true value.
If there is already some bids in $[0,1)$ before a user submits its bid, then bidding $1/(v+2)$ is a dominant strategy since it guarantees the user's bid is added to $S$, the set of all bids in $[0,1)$, which is the best a user can do.
Next, MIC holds since the miner revenue is always zero.
Finally, if all users follow the globally optimal strategy $\sigma$, everyone's bid is at least $1$.
The honest inclusion rule will include the top up to $k$ bids, which maximizes the social welfare.
Thus, OCA-proofness holds.
\end{proof}

\begin{remark}
	\label{remark:revelation-OCA}
	We can try to apply revelation principle, and bake the bidding rule into the mechanism so that the resulting mechanism is \truthful{}.
	For example, whenever seeing a bid $b$, the miner and the mechanism view it as $1/(b+2)$.
	The modified mechanism, however, does not satisfy OCA-proofness anymore when the number of users is larger than the block size,
	since the miner should choose $k$ users with the highest true values instead of the random selection as indicated by the inclusion rule.
	This is not a coincidence: in the next section, we show that it is impossible to have a non-trivial \truthful{} mechanism satisfying UIC, MIC, and OCA-proofness.
\end{remark}

\subsection{Impossibility of UIC + MIC + OCA-Proof for \Truthful{} Mechanisms}
\label{sec:oca-impossibility}
The structure of the proof in this section is similar to the proof in \cref{sec:randomglobalscp}.
In fact, \cref{lem:random-oca-ordered,lem:random-oca-topk,lem:random-oca-freeknob,lem:random-oca-0util,lem:random-only-user,thm:oca-random-impossibility} are the analogs of \cref{lem:random-ordered,lem:top-k,lem:freeknob,lem:zero-user-utility,lem:only-user,thm:random-scp-impossible}, respectively,
except that we need to work on the images of $\sigma$ in order to apply OCA-proofness.
Before diving into the proof, we make a few remarks.
\begin{itemize}
\item 
A key step in \cref{sec:randomglobalscp} is to prove a user with true value $v$ must have zero utility, and we prove it by making an unconfirmed bid arbitrarily close to $v$ (see \cref{lem:zero-user-utility}).
To extend the similar idea to OCA-proofness, we require $\sigma$ to be strictly increasing and continuous.
In fact, we show that being strictly increasing is a consequence of UIC and OCA-proofness (\cref{lemma:sigma-monotone}),
and being strictly increasing implies that there exists a point at which $\sigma$ is continuous (\cref{cor:continuity}).
% To make our proofs easier to follow, we assume that $\sigma$ is continuous, and we prove our main result, \cref{thm:oca-random-impossibility}.
% Later on, in \cref{cor:oca-random-impossibility-no-continuous}, we explain how we remove the assumption of continuity;
% specifically, the fact that $\sigma$ is strictly increasing implies that there exists a point where $\sigma$ is continuous at.
% and we only need to assume that $\sigma$ is continuous at $v$ (see \cref{remark:continuity} for more discussion).
\item 
As we have seen a feasibility for non-\truthful{} mechanisms in \cref{sec:non-direct-oca-counterexample}, we must require the bidding rule to be truth-telling to reach an impossibility.
Indeed, the proofs of \cref{lemma:sigma-monotone,lem:random-oca-0util,lem:random-only-user,thm:oca-random-impossibility} rely on Myerson's lemma, and Myerson's lemma requires the mechanism to be \truthful{}.
Notice that the mechanism in \cref{sec:non-direct-oca-counterexample} does not satisfy the requirements of Myerson's lemma.
If we apply the revelation principle to make the mechanism become \truthful{} so that Myerson's lemma could apply, it breaks OCA-proofness (see \cref{remark:revelation-OCA}).
\item 
Notice that OCA-proofness requires $\sigma$ to output a single real-valued bid.
In fact, if we allow $\sigma$ to output multiple bids, we can have a (somewhat contrived) mechanism that satisfies UIC, MIC, and OCA-proof (see \cref{sec:counterexample}).
In particular, \cref{lemma:sigma-monotone} requires $\sigma$ to only output a single real number.
% \Hao{TODO: comment on which lemmas require $\sigma$ to output a single bid.}
\item 
The original definition of OCA-proofness in~\cite{roughgardeneip1559-ec}
also requires $\sigma$ to be individually rational, but this 
will not be needed for the impossibility proof.
\end{itemize}

Given any vector $\bfb = (b_1,\dots,b_t)$ of the length $t$, 
let $\sigma(\bfb)$ denote the element-wise application of $\sigma$, i.e.~$\sigma(\bfb) = \left(\sigma(b_1),\dots,\sigma(b_t)\right)$.

\begin{lemma}
Suppose the mechanism satisfies OCA-proofness with the globally optimal strategy $\sigma$.
Then, for any bid vector 
$\sigma(\bfb) = (\sigma(b_1), \dots, \sigma(b_t))$, 
if $b_i > b_j$, it must be that 
the confirmation probability of the bid $\sigma(b_i)$
is at least as high as that of $\sigma(b_j)$.
%$x_i(\sigma(b_i)) \geq x_j(\sigma(b_j))$
\label[lemma]{lem:random-oca-ordered}
\end{lemma}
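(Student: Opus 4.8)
The plan is to replay the proof of \cref{lem:random-ordered} almost verbatim, replacing the truthful reference profile by the OCA reference profile in which each user bids $\sigma$ of its true value. Concretely, I would fix a vector of true values $\bfb = (b_1,\dots,b_t)$ and suppose every user plays the reference strategy, so that user $\ell$ submits the bid $\sigma(b_\ell)$ and the miner follows the honest inclusion rule. Let $q_\ell$ denote the resulting confirmation probability of user $\ell$'s bid and let $Q$ denote the expected amount of burnt coins; then the expected social welfare of the reference profile equals $\sum_{\ell} b_\ell q_\ell - Q$.

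Next I would assume toward a contradiction that $b_i > b_j$ but the confirmation probability of the bid $\sigma(b_i)$ is strictly smaller than that of $\sigma(b_j)$, i.e.\ $q_i < q_j$. Consider the coordinated deviation in which the miner still follows the honest inclusion rule, every user other than $i$ and $j$ still submits $\sigma$ of its true value, but users $i$ and $j$ swap their bids: user $i$ submits $\sigma(b_j)$ and user $j$ submits $\sigma(b_i)$. The multiset of submitted bid amounts is unchanged, so by weak symmetry (\cref{def:symmetry}) the expected burnt amount is still $Q$; moreover the bid of amount $\sigma(b_i)$, now submitted by user $j$ (whose true value is $b_j$), is confirmed with probability $q_i$, while the bid of amount $\sigma(b_j)$, now submitted by user $i$ (true value $b_i$), is confirmed with probability $q_j$. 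Hence the expected social welfare of the deviation is $\sum_{\ell \ne i,j} b_\ell q_\ell + b_i q_j + b_j q_i - Q$, which differs from that of the reference profile by $(b_i - b_j)(q_j - q_i) > 0$, contradicting OCA-proofness.

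The one point that needs a remark is the degenerate case $\sigma(b_i) = \sigma(b_j)$, where the ``swap'' above is not genuinely a deviation; there the two bids carry equal amounts, so the permutation-invariance form of weak symmetry (applied to the transposition of $i$ and $j$) makes the pair of their outcomes exchangeable, hence $q_i = q_j$ and the asserted inequality holds with equality. I do not anticipate any real difficulty in this lemma — it is the most elementary step of \cref{sec:oca-impossibility}, mirroring \cref{lem:random-ordered} exactly. The substantive work comes afterwards: showing that $\sigma$ must be strictly increasing and therefore continuous at some point (\cref{lemma:sigma-monotone} and \cref{cor:continuity}), and then carrying the ``free knob'' and ``zero user utility'' arguments through the image $\sigma(\mathbb{R}_{\geq 0})$ rather than through raw bids, culminating in \cref{thm:oca-random-impossibility}.
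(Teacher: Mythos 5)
Your proof is correct and takes essentially the same route as the paper's: the identical swap deviation for users $i$ and $j$ under the reference profile, with weak symmetry giving an unchanged burn and the welfare gain $(b_i-b_j)(q_j-q_i)>0$ contradicting the optimality of $\sigma$. One small nit: your degenerate-case aside ($\sigma(b_i)=\sigma(b_j)$) invokes per-user exchangeability of equal bids, which weak symmetry does not guarantee (identity-dependent tie-breaking is allowed; only strong symmetry would give $q_i=q_j$), but the paper's own proof does not treat ties either, and in all downstream uses $\sigma$ is strictly increasing on the relevant range so the issue never arises.
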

\begin{proof}
Imagine that $t$ players each have true values $b_1, \ldots, b_t$,
and they all bid according to the globally optimal strategy $\sigma$.
Let $x_i, x_j$ denote the confirmation 
probabilities of $\sigma(b_i)$ and $\sigma(b_j)$ respectively.
Suppose for the sake of contradiction that $b_i > b_j$, and $x_j > x_i$.
The global coalition can engage in the following OCA: 
have users $i$ and $j$ swap their bids, that is, user $i$ who has
true value $b_i$ bids $\sigma(b_j)$, and user $j$ 
who has true value $b_j$ bids $\sigma(b_i)$.
The set of bids is the 
same as before, therefore, the expected burning amount does not change.
The expected social welfare thus 
increases by $b_i x_j + b_j x_i - (b_i x_i + b_j x_j) = (b_i - b_j) (x_j - x_i) > 0$.
This violates OCA-proofness and the fact that $\sigma$ is the social-welfare-maximizing-strategy
of the global coalition.
\end{proof}

\begin{lemma}
Suppose the mechanism satisfies OCA-proofness with the globally optimal strategy $\sigma$, and let the block size be $k$.
Given $b_1 \geq b_2 \geq \ldots \geq b_k > b_{k+1}$,
then, under the bid vector
$\sigma(b_1), \sigma(b_2), \ldots, \sigma(b_k), \sigma(b_{k+1})$, 
 it must be that %$b_{k+1}$
$\sigma(b_{k+1})$ has 0 probability of confirmation.
\label[lemma]{lem:random-oca-topk}
\end{lemma}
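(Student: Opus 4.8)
The plan is to mirror the proof of \cref{lem:top-k}, replacing truthful bidding by the globally optimal strategy $\sigma$ and global SCP by OCA-proofness. Suppose toward a contradiction that, when the $k+1$ users with true values $b_1 \ge b_2 \ge \dots \ge b_k > b_{k+1}$ all bid according to $\sigma$ and the miner runs the honest inclusion rule, the bid $\sigma(b_{k+1})$ contributed by the last user is confirmed with positive probability $\rho$. In particular, with some probability $p > 0$ the honest inclusion rule places $\sigma(b_{k+1})$ into its block ${\bf B}$, and conditioned on inclusion it is confirmed with positive probability.

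I would then exhibit a coordinated deviation of the grand coalition that strictly increases expected social welfare, contradicting OCA-proofness. The coalition jointly samples the coins the honest inclusion rule would use and computes the block ${\bf B}$ that would result if everyone bid $\sigma(b_j)$. If $\sigma(b_{k+1}) \notin {\bf B}$ in this simulation, the coalition plays honestly (everyone bids $\sigma(b_j)$, the miner builds ${\bf B}$). If $\sigma(b_{k+1}) \in {\bf B}$, then since $|{\bf B}| \le k$ while there are $k+1$ bids, at least one bid $\sigma(b_i)$ with $i \le k$ is left out; note $b_i \ge b_k > b_{k+1}$. In that case the coalition has user $i$ bid $\sigma(b_{k+1})$ and user $k+1$ bid $\sigma(b_i)$ (all other users still bid $\sigma(b_j)$), and the miner builds the block obtained from ${\bf B}$ by letting user $i$'s bid occupy the slot that user $k+1$'s bid held in ${\bf B}$. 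This block is legitimate because the multiset of submitted bid amounts is unchanged. The honest confirmation, payment, and miner-revenue rules then run on it.

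The payoff comparison is the heart of the argument. By weak symmetry, the submitted bid amounts, and hence the sorted amounts of the block, are identical to those in the honest run, so (coupling the blockchain's coins) the burnt amount has the same distribution and the bid in the relevant slot is confirmed with the same conditional probability as $\sigma(b_{k+1})$ was; the only change is that, whenever that slot is confirmed, the confirmed transaction now carries true value $b_i$ rather than $b_{k+1}$. Since all other slots contribute identically, the coalition's expected social welfare exceeds the honest one by at least $\rho \cdot (b_k - b_{k+1}) > 0$, where the bound uses $b_i \ge b_k > b_{k+1}$ in every realization. This contradicts the assumption that $\sigma$ maximizes the grand coalition's expected social welfare.

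The main obstacle I anticipate is the bookkeeping around weak symmetry: one must check that the miner can legitimately reproduce ${\bf B}$ after the swap and that the joint distribution of confirmations, payments, and miner revenue---hence the burnt amount---is preserved under the relabeling of which user submitted which amount, even when several of the values $\sigma(b_j)$ coincide and tie-breaking is involved. The equivalent permutation-invariance formulation of weak symmetry (stated right after \cref{def:symmetry}) handles this, and the dependence of the left-out index $i$ on the inclusion coins is harmless because $b_i > b_{k+1}$ always. Unlike the subsequent lemmas in this section, this step requires no monotonicity or continuity of $\sigma$.
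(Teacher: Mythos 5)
Your proposal is correct and follows essentially the same route as the paper's proof: the grand coalition simulates the honest inclusion rule's coins, and whenever $\sigma(b_{k+1})$ would be included, it reassigns that bid to a left-out user $i\le k$ (who exists because the block size is $k$), gaining $b_i-b_{k+1}>0$ in expected welfare conditioned on confirmation, contradicting OCA-proofness. The only differences are cosmetic — you make the swap fully symmetric (user $k+1$ bids $\sigma(b_i)$) and spell out the weak-symmetry bookkeeping, while the paper states the gain as $p\cdot p'\cdot(b_i-b_{k+1})$ — so no further changes are needed.
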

\begin{proof}
Imagine that $k+1$ players each have true values $b_1, \ldots, b_{k+1}$,
and they all bid according to the globally optimal strategy $\sigma$.
It suffices to show the following: 
suppose $\sigma(b_{k+1})$
has a non-zero probability $p$ of being included
in the block, then conditioned on being included, its probability of confirmation
must be $0$. 
Suppose this is not true, and conditioned on being 
included, $\sigma(b_{k+1})$
has probability $p' > 0$ of being confirmed.
Then, the global coalition can adopt the following OCA:
run the honest inclusion rule in its head, including flipping any coins required by the inclusion rule, 
and let ${\bf B}$ denote the resulting block.
If $\sigma(b_{k+1})$ is not included in ${\bf B}$, 
simply output the block ${\bf B}$.
Else if $\sigma(b_{k+1})$ is included in ${\bf B}$, 
then let $\sigma(b_i)$ be some bid that is left out --- such a bid must
exist because the block size is only $k$. 
Now, have the $i$-th user bid $\sigma(b_{k+1})$
and make an actual block ${\bf B}$ where the bid $\sigma(b_{k+1})$ 
corresponds to user $i$'s bid. 
The expected 
increase in social welfare relative to everyone adopting $\sigma$
is $p\cdot p' \cdot (b_i - b_{k+1}) > 0$, which violates
OCA-proofness and the fact that $\sigma$ is the social-welfare-maximizing strategy.
\end{proof}

\begin{lemma}
Suppose a \truthful{} mechanism satisfies UIC and OCA-proofness with the globally optimal strategy $\sigma$, and let the block size be $k$.
Then, there exists a constant $c^*$ such that for any real numbers $z,z'$ satisfying $z > z' > c^*$, it must be $\sigma(z) > \sigma(z')$.
Conceptually, $\sigma$ must be strictly increasing for large inputs.
\label[lemma]{lemma:sigma-monotone}
\end{lemma}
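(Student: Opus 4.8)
The plan is to argue by contradiction, and in fact to prove the slightly sharper statement that there is a threshold $c^*$ (which we choose large enough for certain ``positive confirmation probability'' facts to hold) such that no pair $z>z'\ge c^*$ can satisfy $\sigma(z)\le\sigma(z')$ --- this is precisely the assertion of the lemma. Fixing $c^*$ uses the standing non-triviality and finite-block assumptions exactly as in the proof of \Cref{thm:random-scp-impossible}: non-triviality yields a bid vector in which some bid is confirmed with positive probability under the honest miner, and by injecting that vector as fake bids the grand coalition can guarantee any single user with a sufficiently large true value a positive confirmation probability (and hence positive expected welfare via OCA-proofness). We take $c^*$ large enough that this, and the analogous statements for the $(k+1)$-user scenarios below, hold for all true values exceeding $c^*$.

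Now suppose toward a contradiction that $z>z'\ge c^*$ but $\sigma(z)\le\sigma(z')$. The key move is to place $k$ users at true value $z$ together with one user at true value $z'$ and consider the reference profile in which everyone bids via $\sigma$; since the true values are $z\ge\cdots\ge z>z'$, \Cref{lem:random-oca-topk} forces the bid $\sigma(z')$ to have confirmation probability $0$, so the user at $z'$ gets zero utility in the reference. Because the mechanism is truthful and UIC, \cref{lemma:myerson} applies, and I would then run the same kind of perturbation argument used for \cref{lem:zero-user-utility}: starting from this configuration (where, crucially, the $\sigma(z)$-bids remain among the top bids since $\sigma(z)\le\sigma(z')$), use \cref{lem:random-oca-topk} and \cref{lem:random-oca-ordered} to keep the extra bid unconfirmed while bringing it arbitrarily close to the top bids, then lower one of the $k$ top bids slightly so that it too becomes unconfirmed by \cref{lem:random-oca-topk}; Myerson's payment characterization then pins the price of the confirmed $\sigma(z)$-bids at essentially $\sigma(z)$, forcing the total user utility in this configuration to be $0$. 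Pushing this through to the single-user regime (as in \cref{lem:only-user}) and combining with the positive-welfare fact secured by the choice of $c^*$ yields the contradiction, in the same manner as the endgame of \Cref{thm:random-scp-impossible}.

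The delicate points, and where I expect the real work to lie, are twofold. First, the case analysis $\sigma(z)=\sigma(z')$ versus $\sigma(z)<\sigma(z')$: when the bids coincide the whole vector is a single repeated value, so one must reason about the mechanism's tie-breaking via weak symmetry (and about which user the honest reference places in the unconfirmed slot), whereas when $\sigma(z)<\sigma(z')$ one is in the unusual situation that a \emph{maximal} bid in the vector is never confirmed. Second, and more seriously, the present lemma assumes only UIC and OCA-proofness (no MIC), so the ``free knob'' lemma \cref{lem:freeknob} is unavailable and I cannot simply delete a confirmation-probability-zero bid; instead I would exploit that a weakly symmetric mechanism's outcome depends only on the multiset of submitted bids, and I would keep every bid vector I consider inside the image of $\sigma$ so that \cref{lem:random-oca-topk} and \cref{lem:random-oca-ordered} stay applicable, which is why the entire argument must be phrased in terms of $\sigma$-images. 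One should also check that $c^*$ can be taken uniformly (independent of the offending pair $z,z'$), which it can be, since the positive-welfare facts depend only on the fixed witnessing vector from non-triviality and on the true value being large.
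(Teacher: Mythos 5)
Your choice of $c^*$ (from the non-triviality witness, exactly as in the endgame of \cref{thm:random-scp-impossible}) and your application of \cref{lem:random-oca-topk} to the profile with true values $(z,\dots,z,z')$ are both on target, but the core of your plan---re-running the zero-utility cascade of \cref{lem:zero-user-utility}/\cref{lem:only-user} in $\sigma$-image form and deriving the contradiction in the single-user regime---does not go through under the hypotheses of this lemma. That cascade rests on two pillars that are unavailable here. First, the ``free knob'' transfer of total user utility between scenarios (\cref{lem:freeknob}, resp.\ \cref{lem:random-oca-freeknob}) needs MIC to pin down the miner's utility, and this lemma assumes only UIC and OCA-proofness; weak symmetry is not a substitute, since it only gives invariance under permutations of a \emph{fixed} bid vector and says nothing relating utilities across the two different multisets you need to compare (the original vector versus the one where the unconfirmed bid has been moved close to the top). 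Second, the $\sigma$-image version of the perturbation bound is $k\bigl(\sigma(z)-\sigma(z-\delta)\bigr)$, so making it vanish requires continuity of $\sigma$ at the relevant point---but in the paper continuity (\cref{cor:continuity}) is a \emph{consequence} of the monotonicity you are trying to prove, so invoking it here is circular; without monotonicity you cannot even guarantee $\sigma$ has any points of continuity. A further (fixable, but real) slip: your users have true value $z$ but bid $\sigma(z)$, whereas the Myerson-based ``payment pins utility to zero'' steps require the bid to equal the true value (the paper's \cref{lem:random-oca-0util} deliberately sets the true values to $\sigma(b)$ for this reason); with value $z$ and payment $\approx\sigma(z)$ you get utility $\approx(z-\sigma(z))x$, not $0$.

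The paper's own proof avoids all of this and is much lighter: with $c^*>\sum_j b_j/x$ fixed from the non-triviality witness, consider $k+1$ users all with true value $z$ bidding $\sigma(z)$. If no one is confirmed, the grand coalition's injection strategy (your ``positive-welfare fact'') already beats $\sigma$, contradicting OCA-proofness, so some user $j$ is confirmed with positive probability. If $\sigma(z)\le\sigma(z')$, then by the monotonicity half of \cref{lemma:myerson} user $j$ can weakly raise its bid from $\sigma(z)$ to $\sigma(z')$ and retain positive confirmation probability; but the resulting bid vector is exactly the $\sigma$-image of $(z,\dots,z,z')$, where \cref{lem:random-oca-topk} forces the bid $\sigma(z')$ to be unconfirmed---a contradiction, so $\sigma(z)>\sigma(z')$. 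The missing idea in your write-up is precisely this direct use of Myerson monotonicity in the all-$z$ scenario, which removes any need for the zero-utility machinery, MIC, or continuity.
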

\begin{proof}
%For the sake of contradiction, suppose for any $c$, there exist $z,z'$ such that $z > z' > c$ and $\sigma(z) \leq \sigma(z')$.
% Now, fix an arbitrary $c$.
% Since the mechanism is non-trivial, there exist a user $i$ and a bid vector $(b_1,\dots,b_t)$ such that user $i$ has positive confirmation probability $x > 0$.
Since the mechanism is non-trivial, there exists a bid vector $(b_1,\dots,b_t)$ such that the total confirmation probability is positive.
% Since the mechanism is non-trivial, there exists a block $\bfB = (b_1,\dots,b_t)$ such that the total confirmation probability is positive.
Let $b_i$ be the bid with the highest confirmation probability among $b_1,\dots,b_t$.
Let $x$ be the confirmation probability of $b_i$ and let $c^*$ be any real number larger than $\sum_{j=1}^t b_j / x$.
We will show that for any $z > z' > c^*$, 
$\sigma(z) > \sigma(z')$.

% \elaine{TODO: fix the proof}
%Next, let $z,z'$ be two real numbers such that $z > z' > c^*$ and $\sigma(z) \leq \sigma(z')$.
Consider a scenario where there are $k+1$ users all with the true value $z$.
Suppose they form a global coalition and follow the bidding strategy $\sigma$ so that the bid vector is $\bfs = (\underbrace{\sigma(z), \dots, \sigma(z)}_{k+1})$.
There are two possible cases.

First, there exists a user $j$ with positive confirmation probability assuming the miner honestly follows the inclusion rule and select from the bid vector $\bfs$.
Because the mechanism satisfies UIC, by Myerson's lemma, if user $j$ increases its bid, the confirmation probability must be non-decreasing.
Thus, if $\sigma(z) \leq \sigma(z')$, and user $j$ increases its bid from $\sigma(z)$ to $\sigma(z')$, its confirmation probability is still positive.
% (If $\sigma(z) = \sigma(z')$, then the confirmation probability does not change, which is still positive.)
However, if user $j$ bids $\sigma(z')$, the resulting bid vector will become $(\underbrace{\sigma(z), \dots, \sigma(z)}_{k},\sigma(z'))$.
By \cref{lem:random-oca-topk}, the bid $\sigma(z')$ has zero confirmation probability, which leads to a contradiction.
Consequently, it must be $\sigma(z) > \sigma(z')$.

Second, if everyone's confirmation probability is zero assuming the miner honestly follows the inclusion rule and select from the bid vector $\bfs$,
following $\sigma$ will lead to zero social welfare.
We argue that this case is impossible 
by showing 
an alternative strategy of the global coalition 
that achieves positive social welfare, contradicting
the fact that $\sigma$ is social-welfare-maximizing.
Specifically, the alternative 
strategy works as follows: 
the global coalition pretends that the bid vector is $(b_1,\dots,b_t)$ 
% the global coalition prepares the block $\bfB = (b_1,\dots,b_t)$ 
where the bid $b_i$ corresponds to the primary bid of any user, say, the first user.
Recall that the first user's true value is $z$.
Thus, social welfare is at least $z\cdot x - p$, where $p$ is 
the expected payment of all users.
Since $b_i$ has the highest confirmation probability among $b_1,\dots,b_t$, $p$ is at most $x\cdot \sum_{j=1}^t b_j$.
Because $z > c^* > \sum_{j=1}^t b_j / x$, the social welfare is positive.
% Thus, pretending the bid vector is $(b_1,\dots,b_t)$ can generate positive social welfare, which contradicts to the fact that $\sigma$ is the social-welfare-maximizing-strategy.
%Joining the OCA can increase the social welfare, which contradicts to the fact that $\sigma$ is the social-welfare-maximizing-strategy.
%Thus, the second case can never happen.

% If they form a global coalition 
% and instead bid $b_1, \ldots, b_t$

% and they all bid according to $\sigma$.
% Under the bid vector $\bfb = (\underbrace{\sigma(z), \dots, \sigma(z)}_{k+1})$, 
% the total confirmation probability must be positive.

% Otherwise, if the total confirmation probability is zero, following $\sigma$ will lead to zero social welfare.
% Then, the global coalition can pretend the bid vector is $(b_1,\dots,b_t)$ in which case some users have positive confirmation probability.
% Because $z > c^* > \max(b_1,\dots,b_t)$, the payment is always smaller than the true value.
% Thus, pretending the bid vector is $(b_1,\dots,b_t)$ can generate positive social welfare, which contradicts to the fact that $\sigma$ is the social-welfare-maximizing-strategy.

% By the argument above, there must exist a user $j$ with positive confirmation probability under the bid vector $\bfb$.

\end{proof}

\begin{corollary}
\label{cor:continuity}
Suppose a \truthful{} mechanism satisfies UIC and OCA-proofness with the globally optimal strategy $\sigma$, and let the block size be $k$.
Then, there exists a constant $c^*$ such that for any real number $a >
c^*$, there exists a real number $a^* \geq a$ at which $\sigma$ is continuous.
% Then, there exists three constants $a^*, b^*, c^*$ such that $b^* > a^* > c^*$ and $\sigma$ is continuous at $a^*, b^*$.
\end{corollary}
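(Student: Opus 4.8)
The plan is to deduce the corollary from \cref{lemma:sigma-monotone} together with the classical fact that a monotone real function has at most countably many points of discontinuity. First I would simply take $c^*$ to be the very same constant furnished by \cref{lemma:sigma-monotone}, so that $\sigma$ is strictly increasing — in particular monotone — on the interval $(c^*,\infty)$.

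Now fix any real number $a > c^*$ and look at the restriction of $\sigma$ to the compact interval $[a,a+1]$, which lies entirely inside $(c^*,\infty)$. On this interval $\sigma$ is strictly increasing, so all one-sided limits exist and every discontinuity is a jump discontinuity; moreover the sizes of the jumps occurring in $[a,a+1]$ sum to at most $\sigma(a+1)-\sigma(a)$, a finite quantity. Hence for each $\varepsilon>0$ there are only finitely many points of $[a,a+1]$ at which the jump exceeds $\varepsilon$, and therefore the set of all discontinuity points of $\sigma$ in $[a,a+1]$ is at most countable. Since $[a,a+1]$ is uncountable, there exists a point $a^* \in [a,a+1]$ at which $\sigma$ is continuous, and this $a^*$ satisfies $a^* \geq a$, as claimed.

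There is essentially no obstacle in this argument; the only point that requires a little care is to invoke monotonicity on a subinterval that is guaranteed to sit inside $(c^*,\infty)$ (which is why I work on $[a,a+1]$ rather than trying to pick a continuity point globally), and to note that we only need \emph{some} continuity point at or above $a$, not continuity at $a$ itself — which is all the later proofs will use when they perturb bids by an arbitrarily small $\delta$ near such a point $a^*$.
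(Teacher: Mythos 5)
Your proposal is correct and follows essentially the same route as the paper: take $c^*$ from \cref{lemma:sigma-monotone}, restrict $\sigma$ to $[a,a+1]\subset(c^*,\infty)$ where it is monotone, and invoke the classical fact that a monotone function has at most countably many (jump) discontinuities, so a continuity point $a^*\geq a$ exists. The only difference is that you sketch the standard proof of the countability fact, which the paper simply cites as well known.
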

\begin{proof}
	A well-known fact is that any function that is monotone
	must have only countably many discontinuities 
	within any interval $[x_1, x_2]$ 
	where $x_2 > x_1$; further, all such discontinuities 
	must be jump discontinuities.
	By \cref{lemma:sigma-monotone}, there exists a constant $c^*$ such that $\sigma$ is monotone for all inputs larger than $c^*$.
	Thus, for any $a > c^*$, within the interval $[a,a+1]$, $\sigma$ is monotone and there must exist a point $a^*$ such that $\sigma$ is continuous at $a^*$.
	% This means that there must exist 
	% two points $a^*, b^*$ such that $b^* > a^* > c^*$ and $\sigma$ is continuous at $a^*$ and $b^*$. 
\end{proof}

\begin{lemma}
Suppose the mechanism satisfies MIC and OCA-proofness with the globally optimal strategy $\sigma$.
Let ${\bf a}, \bfb, \bfb'$
be arbitrary vectors. 
Consider two scenarios  
$(\sigma({\bf a}), \sigma(\bfb))$ 
and $(\sigma({\bf a}), \sigma(\bfb'))$. 
% \elaine{todo: define vector notation earlier}
Suppose that 
$\sigma(\bfb)$ and $\sigma(\bfb')$ have 0 confirmation
probability in each of the two scenarios, respectively. 
Then, both scenarios enjoy the same expected
miner utility, social welfare, and total user utility.
\label[lemma]{lem:random-oca-freeknob}
\end{lemma}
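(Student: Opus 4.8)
The plan is to follow the proof of \cref{lem:freeknob} essentially line by line, replacing global SCP by OCA-proofness and using the globally optimal strategy $\sigma$ as the reference bidding rule. Write $U_1$ and $W_1$ (resp.\ $U_2$ and $W_2$) for the expected miner utility and the expected social welfare in the first scenario, under the bid vector $(\sigma({\bf a}), \sigma(\bfb))$ with an honest miner (resp.\ in the second scenario, under $(\sigma({\bf a}), \sigma(\bfb'))$). It suffices to prove $U_1 = U_2$ and $W_1 = W_2$: the total-user-utility claim then follows at once, since total user utility equals social welfare minus miner utility, and both of those agree across the two scenarios.

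For the miner-utility equality I would argue by contradiction: suppose $U_2 > U_1$ (the reverse case is symmetric). Consider the first scenario, where the pending bids are $(\sigma({\bf a}), \sigma(\bfb))$. A strategic miner can drop the real bids $\sigma(\bfb)$, inject fake bids with values $\sigma(\bfb')$, and produce exactly the block that the honest inclusion rule would output on the pretended bid vector $(\sigma({\bf a}), \sigma(\bfb'))$; the blockchain then applies the honest confirmation, payment, and miner-revenue rules to that block. Since $\sigma(\bfb')$ has confirmation probability $0$ in the second scenario, the injected bids are never confirmed and so cost the miner nothing, and this deviation yields the miner a revenue distributed exactly as the honest miner revenue of the second scenario, i.e.\ with expectation $U_2 > U_1$. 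This contradicts MIC for the bid vector $(\sigma({\bf a}), \sigma(\bfb))$.

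For the social-welfare equality I would again argue by contradiction, assuming $W_2 > W_1$ (symmetric otherwise), and appeal to OCA-proofness inside the first scenario. There the grand coalition consists of the $|{\bf a}|$ users with true values ${\bf a}$, the $|\bfb|$ users with true values $\bfb$, and the miner, and its reference surplus-maximizing strategy is precisely the honest execution, with welfare $W_1$. Instead, the coalition can have the ${\bf a}$-users bid $\sigma({\bf a})$, have the $\bfb$-users drop out, have the miner inject fake bids with values $\sigma(\bfb')$, and produce the block that the honest inclusion rule would output on $(\sigma({\bf a}), \sigma(\bfb'))$. Because $\sigma(\bfb')$ is never confirmed, those fake bids add $0$ to the social welfare and never enter any payment, so the set of confirmed transactions, their true values, and the burned amount are all distributed exactly as in the honest execution of the second scenario; hence the coalition attains welfare $W_2 > W_1$, contradicting OCA-proofness and the optimality of $\sigma$ for the grand coalition. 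Together with $U_1 = U_2$, this also gives the total-user-utility equality.

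I do not expect a genuine obstacle here: this is a direct transcription of \cref{lem:freeknob}. The one mild subtlety is the ``free fake bid'' fact already exploited there --- a bid with confirmation probability $0$ imposes no cost on whoever submitted or injected it --- which is what lets one emulate one scenario's outcome inside the other. The only bookkeeping point specific to the OCA setting is that the two scenarios involve different sets of real users; this is handled by letting the ``extra'' users drop out while the miner supplies the corresponding values as fake bids. Crucially, the reference strategy for OCA-proofness has honest users bidding $\sigma(\cdot)$, which is exactly why the hypotheses are phrased in terms of the bid vectors $\sigma(\bfb)$ and $\sigma(\bfb')$ rather than $\bfb$ and $\bfb'$.
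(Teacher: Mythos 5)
Your proposal is correct and follows essentially the same route as the paper's proof: miner utility via MIC using the cost-free swap of unconfirmed fake bids, social welfare via an OCA deviation that emulates the other scenario's bid vector (the paper deviates in the $(\bfa,\bfb')$ world toward $(\sigma(\bfa),\sigma(\bfb))$, you do the mirror-image direction, which is symmetric), and total user utility as the difference of the two. Your explicit handling of the mismatched user sets (extra users drop out while the miner injects the corresponding $\sigma$-values as fake bids) is just a more careful spelling-out of the deviation the paper states tersely.
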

\begin{proof}
The equivalence in expected miner utility follows directly from MIC
and the fact that the miner can change from $\bfb$
to $\bfb'$ for free, or vice versa, since these bids are unconfirmed.

The equivalence 
in social welfare 
follows from OCA-proofness, and the fact that $\sigma$ maximizes social welfare.
Suppose, for example, the social welfare under
$(\sigma(\bfa), \sigma(\bfb))$
is strictly greater than
$(\sigma(\bfa), \sigma(\bfb'))$. Then, under the scenario
where the users' true values are $(\bfa, \bfb')$, everyone following $\sigma$ will
not maximize social welfare, since there is an OCA that have  
the users bid $(\sigma(\bfa), \sigma(\bfb))$ instead which strictly increases
the social welfare.

The equivalence in total user utility follows directly from the above, since 
total user utility is the difference between the social welfare and 
the miner utility. 
\end{proof}

% Below, we will first prove the impossibility
% assuming that $\sigma$ is a continuous function,
% and we will then discuss how to remove this assumption. 

\begin{lemma}
Suppose a \truthful{} mechanism satisfies UIC, MIC and OCA-proofness with the globally optimal strategy $\sigma$, 
and the block size is $k$.
%  Suppose $\sigma$ is a continuous function. 
% \elaine{NOTE assumption.}
Then, there exists a constant $c^*$ such that for any $i \in [k]$, the following holds.
\begin{itemize}
\item 
% Fix any $b > b_2 > \ldots > b_{k+1} > c^*$,
% and for all $i \in [k]$, let $\bfs_i = \underbrace{\sigma(b), \ldots, \sigma(b)}_i, \sigma(b_{i+1}), \ldots, \sigma(b_{k+1})$.
% If the bid vector is $\underbrace{\sigma(b), \ldots, \sigma(b)}_i, \sigma(b_{i+1}), \ldots, \sigma(b_{k+1})$,
Consider a scenario with the arbitrary bid vector
$\bfs = \underbrace{\sigma(b), \ldots, \sigma(b)}_i, \sigma(b_{i+1}), \ldots, \sigma(b_{k+1})$,
where $b > b_{i+1} > b_{i+2} > \ldots > b_{k+1} > c^*$.
% Then, for any $i \in [k]$, if the bid vector is $\bfs_i$,
Then, if $\sigma$ is continuous at $b$, the bids $\sigma(b_{i+1}), \ldots, \sigma(b_{k+1})$ have 0 confirmation probability.
\item 
Moreover, if there are $k+1$ users with the true value $\bfs$ and $\sigma$ is continuous at $b$, the expected total user utility is $0$ when all users bid truthfully.
\end{itemize}

% Consider a scenario with the bid vector
% $\underbrace{\sigma(b), \ldots, \sigma(b)}_i, \sigma(b_{i+1}), \ldots, \sigma(b_{k+1})$,
% where $b > b_{i+1} > b_{i+2} > \ldots > b_{k+1} > c^*$.
% Let $b > b_{i+1} > b_{i+2} > \ldots > b_{k+1} > c^*$,
% and consider the scenario
% with the bid vector
% $\underbrace{\sigma(b), \ldots, \sigma(b)}_i, \sigma(b_{i+1}), \ldots, 
% \sigma(b_{k+1})$.
% Then, the bids $\sigma(b_{i+1}), \ldots, 
% \sigma(b_{k+1})$ have 0 confirmation probability, 
% and moreover, the expected 
% total user utility is $0$ assuming 
% all bids represent some user's true value.

% $b > b_{i+1} > b_{i+2} > \ldots > b_{k+1} > c$,
% if the bid vector is 
% $\underbrace{\sigma(b), \ldots, \sigma(b)}_i, \sigma(b_{i+1}), \ldots, 
% \sigma(b_{k+1})$,
% then the bids $\sigma(b_{i+1}), \ldots, 
% \sigma(b_{k+1})$ have 0 confirmation probability, 
% and moreover, the expected 
% total user utility is $0$ assuming 
% all bids represent some user's true value.
\label[lemma]{lem:random-oca-0util}
\end{lemma}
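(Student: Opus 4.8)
The plan is to prove the two bullets simultaneously by downward induction on $i$, from $i=k$ down to $i=1$ (and, at each level, in the order: first bullet, then second bullet), following the proof of \cref{lem:zero-user-utility} essentially line by line but with each ingredient replaced by its OCA-analogue: \cref{lem:random-oca-topk} plays the role of \cref{lem:top-k}, \cref{lem:random-oca-freeknob} the role of \cref{lem:freeknob}, and \cref{lem:random-oca-ordered} the role of \cref{lem:random-ordered}. I would take $c^*$ to be the constant furnished by \cref{lemma:sigma-monotone}, so that $\sigma$ is strictly increasing on $(c^*,\infty)$; then whenever $b>b_{i+1}>\cdots>b_{k+1}>c^*$ we have $\sigma(b)>\sigma(b_{i+1})>\cdots>\sigma(b_{k+1})$, which is exactly what lets every bid vector arising in the argument be written as $\sigma$ applied to a strictly decreasing sequence and hence be fed to \cref{lem:random-oca-topk,lem:random-oca-ordered}.

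For the first bullet the base case $i=k$ is immediate from \cref{lem:random-oca-topk}. For the step $i\mapsto i-1$ I would mimic \cref{lem:zero-user-utility}: apply the (already proved) second bullet at level $i$ --- it applies since $b>b_{i+1}>\cdots>b_{k+1}>c^*$ and $\sigma$ is continuous at $b$ --- to the vector $\sigma\big((\underbrace{b,\dots,b}_{i},b_{i+1},\dots,b_{k+1})\big)$ and conclude, using individual rationality, that every confirmed bid there pays exactly its own value; then UIC and Myerson's lemma force the confirmation probability of the $i$-th position to $0$ whenever that position bids anything strictly below $\sigma(b)$, in particular when it bids $\sigma(b_i)<\sigma(b)$, and \cref{lem:random-oca-ordered} propagates the zero confirmation probability to positions $i+1,\dots,k+1$.

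For the second bullet I would again follow \cref{lem:zero-user-utility}: use \cref{lem:random-oca-freeknob} to pass from the given scenario to the one whose $k+1-i$ smallest bids are $\sigma(b-\delta),\sigma(b-2\delta),\dots,\sigma(b-(k+1-i)\delta)$ for an arbitrarily small $\delta>0$ (all unconfirmed by \cref{lem:random-oca-topk}, and all strictly below $\sigma(b)$ by strict monotonicity of $\sigma$); in that modified scenario, once one of the $\sigma(b)$-bidders lowers its bid below $\sigma(b-(k+1-i)\delta)$ its confirmation probability vanishes by \cref{lem:random-oca-topk}. The one genuinely new point is that, because $\sigma$ need not be onto, ``moving the bid to just below $\sigma(b)$'' is not literally an available bid; instead I would use the monotonicity half of Myerson's lemma to upgrade ``$x_j=0$ at every point $\sigma(b')$ with $c^*<b'<b-(k+1-i)\delta$'' to ``$x_j=0$ on the whole interval below $L_\delta:=\sup\{\sigma(b')\colon c^*<b'<b-(k+1-i)\delta\}$'', and continuity of $\sigma$ at $b$ guarantees $\sigma(b)-L_\delta\to0$ as $\delta\to0$. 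Myerson's lemma then bounds each $\sigma(b)$-bidder's utility by $\sigma(b)-L_\delta$, so the total user utility in the modified scenario tends to $0$ as $\delta\to0$; since the free-knob ties this quantity to the one in the original scenario and individual rationality makes the latter nonnegative, it must equal $0$.

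The step I expect to be the main obstacle is precisely this accounting around the non-surjectivity of $\sigma$: at every place where the template in \cref{lem:zero-user-utility} writes a bid ``$b-\delta$'' or lowers a bid to just under $b$, one must check that the surrogate ``$\sigma(b')$ with $b'$ slightly below $b$'' simultaneously lands strictly below $\sigma(b)$ (needing strict monotonicity, hence $c^*$) and arbitrarily close to $\sigma(b)$ (needing continuity of $\sigma$ at $b$), and that every auxiliary bid vector handed to \cref{lem:random-oca-topk,lem:random-oca-ordered,lem:random-oca-freeknob} really is of the required $\sigma(\text{decreasing})$ form. A related point of care is to keep straight which notion of ``total user utility'' is in play --- the one for a coalition that has bid $\sigma$ of its true values (the quantity that \cref{lem:random-oca-freeknob} transports) versus the one for users who have bid truthfully (the quantity Myerson's lemma controls) --- and to use the Myerson inequality $p_j\le\sigma(b)\,x_j$ to pass between them.
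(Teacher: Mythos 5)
Your proposal follows the paper's proof of this lemma essentially step for step: the same downward induction on $i$ (first bullet, then second bullet at each level), with \cref{lem:random-oca-topk,lem:random-oca-ordered,lem:random-oca-freeknob} and \cref{lemma:sigma-monotone} playing exactly the roles they play in the paper, the same $\delta$-perturbed vector, and the same combination of Myerson's lemma, continuity of $\sigma$ at $b$, and individual rationality to drive the total user utility to $0$. Your handling of bids outside the image of $\sigma$ (upgrading to the whole interval below $L_\delta$ via the monotonicity half of Myerson's lemma) is a legitimate tightening of a step the paper states more loosely rather than a different route; the only small slip is that, in the modified scenario, the unconfirmedness of \emph{all} $k+1-i$ lowered bids follows from the already-proved first bullet at the same level (the paper's ``by the argument above''), not from \cref{lem:random-oca-topk} alone, which only covers the single smallest bid.
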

% \begin{lemma}
% Consider a mechanism satisfies UIC, MIC and OCA-proofness, 
% and the block size is $k$.
% Suppose $\sigma$ is a continuous function. \elaine{NOTE assumption.}
% Let $b > b_{i+1} > b_{i+2} > \ldots > b_{k+1}$,
% and consider the scenario
% with the bid vector
% $\underbrace{\sigma(b), \ldots, \sigma(b)}_i, \sigma(b_{i+1}), \ldots, 
% \sigma(b_{k+1})$.
% Then, the bids $\sigma(b_{i+1}), \ldots, 
% \sigma(b_{k+1})$ have 0 confirmation probability, 
% and moreover, the expected 
% total user utility is $0$ assuming 
% all bids represent some user's true value.
% \label[lemma]{lem:random-oca-0util}
% \end{lemma}
\begin{proof}
By \cref{lemma:sigma-monotone}, there exists a constant $c^*$ such that for any real numbers $z,z'$ satisfying $z > z' > c^*$, it must be $\sigma(z) > \sigma(z')$.
Throughout this proof, we assume $b_{k+1} > c^*$ so that $\sigma$ is strictly increasing for any input larger than $b_{k+1}$.
% Moreover, we choose $b$ such that $\sigma$ is continuous at $b$.
By \cref{cor:continuity}, there exists a real number $b > b_{k+1}$ such that $\sigma$ is continuous at $b$, and we choose $b$ such that this condition holds.
We will prove the lemma by induction.

\paragraph{Base case.}
The base case is when $i = k$;
that is, there are $k+1$ users with the true values $\bfs = (\underbrace{\sigma(b), \ldots, \sigma(b)}_k, \sigma(b_{k+1}))$.
Notice the true value is $\sigma(b)$ instead of $b$ for the first $k$ users.
% we want to show that the lemma holds under  
% $\bfs = \underbrace{\sigma(b), \ldots, \sigma(b)}_k, \sigma(b_{k+1})$.
By \cref{lem:random-oca-topk}, 
$\sigma(b_{k+1})$ has 0 confirmation probability.
By \cref{lem:random-oca-freeknob},
the total user utility under 
$\bfs$ is the same
as under $\bfs' = (\underbrace{\sigma(b), \ldots, \sigma(b)}_k, 
\sigma(b-\delta))$ where $\delta > 0$ can be arbitrarily small 
--- since in both scenarios, the last bid
is unconfirmed.
Now, suppose the bid vector is $\bfs'$.
Fix any user with the true value $\sigma(b)$, say the first user.
% consider any user with the true value $\sigma(b)$, say the first user.
% If it bids $\sigma(\tilde{b})$ for any $\tilde{b} \in [0, b-\delta)$,
If it lowered its bid to  
anything less than $\sigma(b - \delta)$,
its confirmation probability
must be $0$ by \cref{lem:random-oca-topk}.
% Since the mechanism is UIC, by Myerson's lemma, the confirmation probability must be monotone.
% Thus, if user $j$ bids truthfully, either $x_j(\bfs') = 0$, or it must be $\sigma(b) \geq \sigma(\tilde{b})$ for any $\tilde{b} \in [0, b-\delta)$ whenever $x_j(\bfs') > 0$.
% \Hao{By using this argument, we don't need to assume $\sigma$ is non-decreasing.}
% Let $U$ be the supremum of $\{\sigma(\tilde{b}): \tilde{b} \in [0, b-\delta)\}$.
% If $x_j(\bfs') = 0$, user $j$'s utility is zero.
% If $x_j(\bfs') > 0$, by Myerson's lemma, user $j$'s expected payment is at least $U \cdot x_j(\bfs')$.
% In this case, user $j$'s utility is at most $(\sigma(b) - U) \cdot x_j(\bfs') \leq \sigma(b) - U$.
% Therefore, the total user utility under $\bfs'$ (and hence under $\bfs$) is at most $k \cdot (\sigma(b) - U)$.
Since the mechanism is UIC, by Myerson's lemma, 
the payment of the first user is at least
% the first bid is at least 
$\sigma(b - \delta) \cdot x_1(\bfs')$.
In this case, the first user's utility is at most $(\sigma(b) - \sigma(b - \delta)) \cdot x_1(\bfs') \leq \sigma(b) - \sigma(b - \delta)$.
Therefore, the total user utility under $\bfs'$ (and hence under $\bfs$) is at most $k \cdot (\sigma(b) - \sigma(b - \delta))$.
% where $x_1(\bfs')$ denotes the confirmation probability
% of the first bid under $\bfs'$.
% Therefore, the total user utility 
% under $\bfs'$ (and hence under $\bfs$) is at most 
% $k \cdot (\sigma(b) - \sigma(b - \delta)) \cdot x_1(\bfs) \leq 
% k \cdot (\sigma(b) - \sigma(b - \delta))$.
 Since $\sigma$ is continuous at $b$, and the above holds for any $\delta > 0$, 
%  it means that $\sigma(b) - \sigma(b - \delta) = 0$.
%Since $\sigma$ is strictly increasing for any input larger than $c^*$, the left-sided limit exists for any point larger than $c^*$.
% Because the argument above holds for any $\delta > 0$, 
the total user utility under $\bfs$ is at most $0$. 
By the individual rationality of the payment rule, each user's utility is non-negative when it bids truthfully.
Therefore, the expected total user utility is $0$.

\paragraph{Inductive step.}
Fix any $i \in \{2,\dots,k\}$, and 
suppose the lemma holds.
% $\bfs_i := (\underbrace{\sigma(b),  \ldots, \sigma(b)}_i, \sigma(b_{i+1}), \ldots, \sigma(b_{k+1}))$,
We want to show that the lemma also holds for any $\bfs_{i-1} = (\underbrace{\sigma(b), \ldots, \sigma(b)}_{i-1}, \sigma(b_i), \sigma(b_{i+1}), \ldots, \sigma(b_{k+1}))$.
% $\bfs_{i-1} := (\underbrace{\sigma(b),  \ldots, \sigma(b)}_{i-1}, 
% \sigma(b_i), \ldots, \sigma(b_{k+1}))$
% where $b > b_{i} > \cdots > b_{k+1}$.
Start from the scenario 
$\bfs_i = (\underbrace{\sigma(b), \ldots, \sigma(b)}_i, \sigma(b_{i+1}), \ldots, \sigma(b_{k+1}))$, 
%$\underbrace{\sigma(b),  \ldots, \sigma(b)}_i, \sigma(b_{i+1}), \ldots, 
%\sigma(b_{k+1})$, 
and imagine that the $i$-th bid lowers
from $\sigma(b)$ to $\sigma(b_i)$.
% and imagine that user $i$ bids $\sigma(b_i)$ instead of $\sigma(b)$.
Since the $i$-th user has 0 utility under $\bfs_i$, 
by UIC and Myerson's lemma, 
its confirmation probability must become 0 when
it lowers its bid to $\sigma(b_i)< \sigma(b)$.
By \cref{lem:random-oca-ordered}, the bids $\sigma(b_{i+1}), \ldots, \sigma(b_{k+1})$ must
have 0 confirmation probability too.

We next show that under $\bfs_{i-1}$, total user utility is $0$.
Consider another bid vector\\ $\bfs' := (\underbrace{\sigma(b), \ldots, \sigma(b)}_{i-1}, 
\sigma(b-\delta), \ldots, \sigma(b- (k+2 - i)\delta))$ for some $\delta > 0$.
By the argument above, the bids numbered $i+1, \ldots, k+1$ are not confirmed.
By \cref{lem:random-oca-freeknob}, the total user utility under $\bfs_{i-1}$
is the same as under the scenario $\bfs'$.
% $\bfs' := (\underbrace{\sigma(b), \ldots, \sigma(b)}_{i-1}, 
% \sigma(b-\delta), \ldots, \sigma(b- (k+2 - i)\delta))$,
% since in both of these scenarios, the bids numbered $i+1, \ldots, k+1$
% are not confirmed.
In scenario $\bfs'$, 
if any of the $\sigma(b)$ bidders, say the first bidder,
lowers 
its bid from $\sigma(b)$ to 
anything less than $\sigma(b- (k+2 - i)\delta)$, 
its confirmation probability becomes $0$ by \cref{lem:random-oca-topk}.
By UIC and Myerson, 
the first user's payment
under $\bfs'$
is at least $\sigma(b- (k+2 - i)\delta) \cdot x_1(\bfs')$,
and its utility is at most 
$(\sigma(b) - \sigma(b- (k+2 - i)\delta)) \cdot x_1(\bfs')
\leq \sigma(b) - \sigma(b- (k+2 - i))\delta$.
Therefore, the total user utility under $\bfs'$ (and also under $\bfs_{i-1}$)
is upper bounded by 
$k \cdot (\sigma(b) - \sigma(b- (k+2 - i)\delta))$. 
%Since $\sigma$ is strictly increasing for any input larger than $c^*$, the left-sided limit exists for any point larger than $c^*$.
 Since $\sigma$ is continuous at $b$ and the above holds
%Because the argument above holds for any $\delta > 0$, it means
 for any arbitrarily small $\delta > 0$, it means
that the total user utility under 
$\bfs_{i-1}$ is at most $0$.
By the individual rationality of the payment rule, each user's utility is non-negative when it bids truthfully.
Therefore, the expected total user utility is $0$.
\end{proof}

\begin{lemma}
% Suppose the direct-revelation mechanism satisfies UIC, MIC and OCA-proofness with the globally optimal strategy $\sigma$ which is a continuous function,
% and the block size is $k$.
% Then, there exists a constant $c^*$ such that for any $a > c^*$,
% if there is only a single user with the true value $\sigma(a)$,
% its utility is zero when it bids truthfully.
% Consider a scenario with a single bid $\sigma(a)$. Then, the single user's 
% utility is zero assuming the bid reflects its true value $\sigma(a)$.
Suppose a \truthful{} mechanism satisfies UIC, MIC and OCA-proofness with the globally optimal strategy $\sigma$,
and the block size is $k$.
Then, there exists a constant $c^*$ such that for any $a^* > c^*$,
if there is only a single user with the true value $\sigma(a^*)$ and $\sigma$ is continuous at $a^*$,
its utility is zero when it bids truthfully.
\label[lemma]{lem:random-only-user}
\end{lemma}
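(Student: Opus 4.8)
The plan is to replay the proof of \cref{lem:only-user} from the global-SCP section, substituting \cref{lem:random-oca-0util} and \cref{lem:random-oca-freeknob} for \cref{lem:zero-user-utility} and \cref{lem:freeknob}. First I would let $c^*$ be the constant furnished by \cref{lemma:sigma-monotone} (the same constant appears in \cref{lem:random-oca-0util}), so that $\sigma$ is strictly increasing on $(c^*,\infty)$. Fix any $a^* > c^*$ at which $\sigma$ is continuous; this is the value for which we must establish the claim. Since $a^* > c^*$, I can pick $k$ real numbers $a^* > b_2 > b_3 > \cdots > b_{k+1} > c^*$, which is possible by density of the reals in the nonempty interval $(c^*,a^*)$.

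Next I would invoke \cref{lem:random-oca-0util} with $i = 1$ and $b := a^*$, which is legitimate because $\sigma$ is continuous at $a^* = b$ and $b_{k+1} > c^*$. This yields two facts about the scenario in which $k+1$ users have true values forming the vector $\bfs = (\sigma(a^*),\sigma(b_2),\ldots,\sigma(b_{k+1}))$ and all bid truthfully: the bids $\sigma(b_2),\ldots,\sigma(b_{k+1})$ have $0$ confirmation probability, and the expected total user utility under $\bfs$ equals $0$. I would then apply \cref{lem:random-oca-freeknob} with ${\bf a} = (a^*)$, $\bfb = (b_2,\ldots,b_{k+1})$, and $\bfb' = ()$ the empty vector (a legitimate instantiation, exactly as length-$0$ vectors were allowed in \cref{lem:freeknob}): scenario $1$ is $(\sigma({\bf a}),\sigma(\bfb)) = \bfs$ and scenario $2$ is $(\sigma({\bf a}),\sigma(\bfb')) = (\sigma(a^*))$, a single bid. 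Since $\sigma(\bfb)$ has $0$ confirmation probability in scenario $1$ by the previous step, and $\sigma(\bfb')$ trivially has $0$ confirmation probability in scenario $2$, the lemma gives that the two scenarios have the same expected total user utility. As this quantity is $0$ in scenario $1$, it is $0$ in scenario $2$, and scenario $2$ consists of exactly one user with true value $\sigma(a^*)$; hence that user's utility when bidding truthfully is $0$, which is the desired conclusion.

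I do not expect a real obstacle, since both cited lemmas are already established and the proof is essentially an assembly step. The only points to handle with care are bookkeeping: confirming that a single choice of $c^*$ simultaneously serves \cref{lemma:sigma-monotone}, \cref{lem:random-oca-0util}, and the present lemma; transporting the hypothesis ``$\sigma$ continuous at $a^*$'' correctly into the invocation of \cref{lem:random-oca-0util} with $b := a^*$; and making sure the empty-vector instantiation of \cref{lem:random-oca-freeknob} is admissible (it is, by the same reasoning that dropping or injecting already-unconfirmed bids is free, which justified the length-$0$ cases in \cref{lem:freeknob}).
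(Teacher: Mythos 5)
Your proposal is correct and follows essentially the same route as the paper's own proof: invoke \cref{lem:random-oca-0util} with $i=1$ and $b=a^*$ to get zero total user utility and unconfirmed lower bids in the $(k+1)$-bid scenario, then use \cref{lem:random-oca-freeknob} with an empty $\bfb'$ to transfer the zero total user utility to the single-bid scenario $\sigma(a^*)$. The bookkeeping points you flag (a single $c^*$ serving all cited lemmas, continuity at $a^*$, and the admissibility of the empty-vector instantiation) are handled the same way, implicitly, in the paper.
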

\begin{proof}
By \cref{lem:random-oca-0util}, 
there exists a constant $c^*$ such that 
under the scenario $\bfa := \left(\sigma(a^*), \sigma(a_2), \ldots, 
\sigma(a_{k+1})\right)$ where $a^* > a_2 > \ldots > a_{k+1} > c^*$, the
utility of the first user is $0$ assuming it is bidding truthfully.
Moreover, all users except the first user must have 0 confirmation probability.
By \cref{lem:random-oca-freeknob}, the total expected user utility under $\bfa$ and under a single bid $\sigma(a^*)$ are the same. 
Therefore, under a single bid $\sigma(a^*)$, the user's utility is also 0.
% By MIC, the expected miner revenue 
% must be the same under $\bfb$ and 
% a single bid $\sigma(a)$ --- specifically, under
% $\bfb$, the bids number $2, \ldots, k+1$ 
% have 0 confirmation probability by \cref{lem:random-oca-0util}.
% By OCA-proofness, the expected social welfare
% under a single bid $\sigma(a)$ cannot be higher than under $\bfb$.
% Since the total user utility is the social welfare minus miner revenue,
% the total user utility under a single bid $\sigma(a)$
% cannot exceed the total user utility under $\bfb$ which is $0$.
\end{proof}

\begin{theorem}
No non-trivial, possibly randomized \truthful{}
TFM can simultaneously satisfy UIC, MIC, 
and OCA-proofness 
% (for a continuous $\sigma$)
when the block size is finite.
% \footnote{We can remove the assumption that $\sigma$ is continuous. See \cref{cor:oca-random-impossibility-no-continuous}.}
\label{thm:oca-random-impossibility}
\end{theorem}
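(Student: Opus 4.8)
The plan is to replay the proof of \cref{thm:random-scp-impossible}, with OCA-proofness in place of global SCP and the images of $\sigma$ in place of raw bids; this is exactly the substitution pattern already carried out in \cref{lem:random-oca-ordered,lem:random-oca-topk,lem:random-oca-freeknob,lem:random-oca-0util,lem:random-only-user}. Suppose for contradiction that a non-trivial, \truthful{}, possibly randomized TFM satisfies UIC, MIC, and OCA-proofness with globally optimal strategy $\sigma$, and fix a single constant $c^*$ large enough that \cref{lemma:sigma-monotone}, \cref{cor:continuity}, \cref{lem:random-oca-0util}, and \cref{lem:random-only-user} all apply above it.

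The first step is to show that there is a threshold $W$ such that for every real $w > W$, the single-bid vector $(\sigma(w))$ has strictly positive confirmation probability under the honest inclusion rule. Since the mechanism is non-trivial, fix a bid vector $\bfb = (b_1,\dots,b_t)$ and an index $i$ with $x_i(\bfb) = p > 0$, and set $W := \tfrac1p \sum_{j=1}^t b_j$. Fix any $w > W$ and suppose toward a contradiction that $(\sigma(w))$ has zero confirmation probability. Consider a single user with true value $w$: in the reference outcome it bids $\sigma(w)$ and the miner includes honestly, so nothing is confirmed, and hence the expected social welfare of the reference is at most $0$ (it equals the confirmed value minus the nonnegative burnt amount). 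But the grand coalition (this user together with the miner) can deviate: the user submits $b_i$ as its primary bid along with the fake bids $\bfb_{-i}$, and the miner runs the honest inclusion rule on $\bfb$. Then the user's transaction (true value $w$) is confirmed with probability $p$, while the expected burnt amount is bounded by the expected total payment, which by Myerson's lemma is at most $\sum_j b_j$. Hence this deviation achieves expected social welfare at least $wp - \sum_j b_j > 0$, contradicting OCA-proofness and the optimality of $\sigma$.

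The second step derives a contradiction with UIC, just as in \cref{thm:random-scp-impossible}. Using \cref{cor:continuity} twice, pick $a^* > \max(c^*, W)$ at which $\sigma$ is continuous, and then pick $b > a^*$ at which $\sigma$ is continuous; by \cref{lemma:sigma-monotone}, $\sigma(b) > \sigma(a^*)$. Now consider a single user with true value $\sigma(b)$. By \cref{lem:random-only-user} (applied with parameter $b$), its utility is $0$ when it bids truthfully. But it can instead underbid $\sigma(a^*)$: since $a^* > W$, the single-bid vector $(\sigma(a^*))$ is confirmed with some probability $x > 0$, and because the mechanism is \truthful{}, Myerson's lemma bounds the payment by $\sigma(a^*) \cdot x$, so the user's utility from this deviation is at least $(\sigma(b) - \sigma(a^*))\, x > 0$. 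This contradicts UIC, completing the proof.

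I expect the only genuinely new content --- and hence the main obstacle --- to be the first step: converting non-triviality into a positive-confirmation-probability statement for single bids of the form $\sigma(w)$. This is the one place that invokes OCA-proofness directly rather than the already-proved $\sigma$-lemmas, and it requires care that the deviating coalition may submit bids outside the image of $\sigma$, that the burnt coins are correctly charged against the social welfare of both the reference and the deviation, and that the eventual choices of $a^*$ and $b$ simultaneously clear the threshold $W$, the monotonicity range of \cref{lemma:sigma-monotone}, and the continuity requirement of \cref{cor:continuity} inherited by \cref{lem:random-oca-0util,lem:random-only-user}. Everything else is a mechanical transcription of the global-SCP argument.
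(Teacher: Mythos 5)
Your proposal is correct and takes essentially the same route as the paper's proof: it first converts non-triviality plus OCA-proofness into positive confirmation probability for a single bid $\sigma(w)$ with $w$ sufficiently large (via the same ``replace the primary bid by $b_i$ and inject $\bfb_{-i}$'' deviation, with the burnt/payment amount bounded by $\sum_j b_j$), and then contradicts UIC by taking two large continuity points $b > a^*$ of $\sigma$, applying \cref{lem:random-only-user}, \cref{lemma:sigma-monotone}, and Myerson's lemma to show the user with true value $\sigma(b)$ profits by underbidding $\sigma(a^*)$. The only differences are presentational (your explicit threshold $W$ and the order of the two steps), so no further comparison is needed.
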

% \begin{proof}
% Since the mechanism is non-trivial,
% there exists a scenario $\bfb = (b_1, \ldots, b_t)$
% such that some user $i$ has positive confirmation probability.
% By \cref{lem:random-only-user}, there exists a constant $c^*$ such that for any $a > c$,
% if there is only a single user with the true value $\sigma(a)$,
% its utility is zero when it bids truthfully.

% Let $a > \sum_{j \in [n]} b_j$,
% and consider a scenario where there is only one user with the true value $a$.
% \end{proof}

\begin{proof}
We will show that under any sufficiently large $a$ at which $\sigma$ is continuous, 
the confirmation probability under 
a single bid $\sigma(a)$ is non-zero. 
If we can show this, 
then we can show a contradiction to UIC and the Myerson's lemma. 
Specifically, consider $b > a$ which both are sufficiently large and
at which $\sigma$ is continuous (\cref{cor:continuity} guarantees such $a,b$ exist).
By \cref{lem:random-only-user},   
under both scenarios, the user utility is $0$ (assuming
the bid, $\sigma(a)$ or $\sigma(b)$, is the same as the user's true value). 
By \cref{lemma:sigma-monotone}, for sufficiently large $a$,
we have $\sigma(b) > \sigma(a)$.
However,
by Myerson's lemma, if the confirmation probabilities
under a single $\sigma(a)$ and under a single $\sigma(b)$ 
are both non-zero, it cannot be  
that in both cases the user's utility is $0$.
Otherwise, the user with the true value $\sigma(b)$ can underbid $\sigma(a)$.
Since the confirmation probability is non-zero and the payment is at most $\sigma(a)$, the user enjoys positive utility, which violates UIC.

We now show that the confirmation probability under a single bid $\sigma(a)$ 
for a sufficiently large $a$ is non-zero assuming the miner honestly follows the inclusion rule.
Suppose this is not true for the sake of a contradiction;
that is, when the bid vector is just $\sigma(a)$ and the miner is honest, no one has positive confirmation probability, so the social welfare is zero.
Since the mechanism is non-trivial,
there exists a scenario $\bfb = (b_1, \ldots, b_t)$
% there exists a block $\bfB = (b_1, \ldots, b_t)$
such that the total confirmation probability
is non-zero, and let $x > 0$
be the confirmation probability 
of some bid $b_i$ in $\bfb$.
Consider $a > \sum_{j \in [n]} b_j/x$, and suppose
there is only one user with the true value $a$.
We will show that the strategy $\sigma(a)$ does not maximize social welfare 
which contradicts to the fact that $\sigma$ is social-welfare-maximizing-strategy.
% gives us the contradiction.
Specifically, 
the user can post the bids $(b_1, \ldots, b_t)$ instead
where position $i$ corresponds to 
the user's primary bid and all others are fake bids.
Note that under $\bfb$, the %total burning
total user payment 
is at most $\sum_{j \in [n]} b_j$. 
Since the user's true value is greater 
than $\sum_{j \in [n]} b_j/x$ and its confirmation probability
is $x$, 
the total user utility (and thus the social welfare) 
is positive under this new bidding 
strategy.

\ignore{
We reach a contradiction with \elaine{refer}
by arguing that for a sufficiently large $a$,
the expected user utility  
under a single bid $\sigma(a)$ 
must be positive assuming the bid reflects the user's true value.
\elaine{TO FIX: this is not true because it could be that the output
of sigma is small}
}
\end{proof}

% \elaine{TODO: relax the continuous assumption}

% \begin{remark}
% \label{remark:continuity}
% The assumption that $\sigma$ is continuous is only used in the proof of \cref{lem:random-oca-0util}, where we require $\sigma$ is continuous at a sufficiently larger number $b$.
% Hence, we can relax our assumption such that we do not need $\sigma$ to be continuous everywhere, but only require $\sigma$ to be continuous at a sufficiently large real number.
% \end{remark}

\begin{remark}
\label{remark:second-price-no-burning}
Notice that Chung and Shi~\cite{foundation-tfm} showed that no TFM can simultaneously satisfy UIC and $1$-SCP when the block size is finite.
Their impossibility does not rely on MIC, while MIC is necessary for proving \cref{thm:oca-random-impossibility}.
% In fact, if we only require UIC and OCA-proofness, the impossibility does not hold.
Specifically, consider \emph{the second-price auction without burning} defined as follows.
Given the block size $k$, the miner includes the top up to $k$ bids in the block.
If the number of included bids is less than $k$, then all included bids are confirmed, and pay nothing.
If the number of included bids is $k$, then the top $k-1$ bids are confirmed, while the $k$-th bid is unconfirmed (break ties arbitrarily).
All confirmed bids pay the $k$-th bid, and all the payments go to the miner.
This mechanism satisfies UIC since the confirmation and the payment satisfy the requirements of Myerson's lemma.
The mechanism also satisfies OCA-proofness when $\sigma$ is just bidding truthfully since the confirmed bids are always the ones with the highest true values, which maximizes the social welfare.
However, it does not satisfy MIC, since the miner is incentivized to inject a fake bid to increase the $k$-th bid.
\end{remark}

\ignore{
\paragraph{Removing the continuous assumption.}
So far, 
our \cref{thm:oca-random-impossibility} relies
on the globally optimal strategy $\sigma$ being a continuous function. 
We can in fact remove this assumption as shown
in the proof of the following corollary.
 
\begin{corollary}
No non-trivial, possibly randomized direct-revelation 
TFM can simultaneously satisfy UIC, MIC, 
and OCA-proofness %(for a continuous $\sigma$)
when the block size is finite.
\label{cor:oca-random-impossibility-no-continuous}
\end{corollary}
\begin{proof}
In \cref{lemma:sigma-monotone}, we proved that $\sigma$ strictly monotone. 
A well-known fact is that any function that is monotone
must have only countably many discontinuities 
within any interval $[x_1, x_2]$ 
where $x_2 > x_1$; further, all such discontinuities 
must be jump discontinuities.
This means that there must exist 
two points $a^*, b^*$ such that $b^* > a^* > c^*$ and $\sigma$ is continuous at $a^*$ and $b^*$. 

Now, we can modify the statement of \cref{lem:random-oca-0util}
and require it to hold only on the points $b = a^*$ or $b = b^*$ (rather than
for an arbitrary $b > c^*$). 
Further, we will modify the statement of \cref{lem:random-only-user}
and require it to hold only on the points $a = a^*$ or $b = b^*$,
% point $a = b^* > c^*$, rather than for any choice of $a > c^*$.
(rather than for an arbitrary $a > c^*$). 
With these modifications, the proof of \cref{lem:random-oca-0util,lem:random-only-user}
would hold just like before. 
Finally, we can redo the proof of \cref{thm:oca-random-impossibility} except that we choose $a = a^*$ and $b = b^*$.
% \elaine{TODO: explain how to change 6.8}
\end{proof}
}

%%% Local Variables:
%%% mode: latex
%%% TeX-master: "oca"
%%% End:

\section{How to Circumvent the Impossibilities}
\label{sec:bypass-impossibility}
%\section{Feasibility of UIC, MIC, and Weak OCA-Proofness}
% \elaine{TODO: in defn, give two defns of oca proofness}
In this section, we discuss possible ways to circumvent the impossibilities.
Recall that in the definition of OCA-proofness, we require all users act independently in the globally optimal strategy $\sigma$, and $\sigma$ should only output a single bid.
If we remove either of the two requirements, we can have mechanisms that satisfies UIC, MIC and OCA-proofness, where we present in \cref{sec:posted-price-burning,sec:counterexample}.
These mechanisms are somewhat contrived
and not obviously relevant for real-world blockchain protocols;
the primary point of these mechanisms is to ``stress-test'' the definitions, and these examples highlight the subtlety of the modeling and help us to have a better understanding of the trade-offs between various notions of collusion-resilience.
In \cref{sec:inclusion-rule-respecting}, we consider an even weaker definition, called \emph{inclusion-rule-respecting}, which might be the weakest but still meaningful definition that captures the notion of ``no way to steal from the protocol.'' 
% Finally, in \cref{sec:MPC-model}, we discuss the use of cryptography
% in the TFM literature, and point out a future direction by using
% cryptography and Bayesian incentive compatibility.
Later in \cref{sec:BayesianIC}, 
we introduce a new model that captures the use of trusted hardware 
and formalize a Bayesian notion of incentive compatibility.
We show that, in this model, it is possible to construct a TFM that satisfies Bayesian incentive compatibility.

%\subsection{Posted Price Auction with Random Selection and Total Burning}
\subsection{Allowing the Globally Optimal Strategy to Coordinate}
\label{sec:posted-price-burning}
OCA-proofness requires that all users act independently in 
the globally optimal strategy $\sigma$. 
However, if we remove this restriction, then 
the following auction would simultaneously satisfy UIC, MIC, and OCA-proof (with the aforementioned relaxation).

\ignore{
the above auction 
has the following strategy $\sigma$ which 
maximizes the social welfare: have the up to $k$ users
with the highest true values bid above $r$, and everyone
else bids $0$. 
}

%\subsection{Allowing the Globally Optimal Strategy to Coordinate}
% We discuss ways for circumventing the impossibilities.
\paragraph{Posted price auction with random selection and total burning.}
%Let us first consider the following auction, called
%``posted price auction with random selection and total burning''.
The mechanism is parametrized with some reserve price $r>0$ and block size $k$.
Any bid that is at least $r$ 
is considered eligible. Randomly select 
up to $k$ bids to include in the block. All included
bids are confirmed and they each pay $r$. All payment is burnt and the miner
receives nothing.
It is not hard to see that the above mechanism satisfies UIC and MIC. 
It also satisfies OCA-proofness with the aforementioned relaxation
since the global coalition can just have 
the users with the top $k$ values bid more than $r$, while everyone else bids $0$.  

Note that allowing all users to coordinate bids does not trivialize
the definition, as this relaxed OCA-proofness definition still
requires that the miner follow the TFM's inclusion rule honestly.  For
example, in the auction above, the miner has to select $k$ bids
uniformly at random among all eligible bids.

% % \ignore{
% \paragraph{Vickrey auction with total burning.}
% Suppose we confirm the top $k$ bids and they pay the $(k+1)$-th price,
% and all payment is burnt. 
% If fewer than $k$ bids exist, then confirm all of them and the all pay $0$.
% The mechanism is obviously UIC and MIC. 
% It also satisfies OCA-proofness with the aforementioned relaxation
% since the global coalition can just have the top up to $k$
% bidders bid their true values, and everyone else bids $0$. 
% % }

%We argue that it also satisfies a relaxed notion of OCA-proofness 
%when we relax the restriction on the globally optimal strategy $\sigma$
%as follows:
%\begin{itemize}[leftmargin=6mm]
%\item 
%{\it Allowing $\sigma$ to coordinate.}
%\item 
%\end{itemize} 

\ignore{
Since the ``inclusion-rule-respecting'' notion  
strictly weaker than simply allowing $\sigma$ to coordinate,  
it admits more mechanisms. For example, the 
Vickrey auction with all fees burnt (i.e., 
the top $k$ bids are confirmed and pay the $(k+1)$-th price, 
miner gets nothing)
satisfies UIC, MIC, and inclusion-rule-respecting. 
}

\ignore{
In particular, in the MPC-assisted model,
the MPC protocol ensures the honest execution of the inclusion rule,  
which makes mechanism design easier.
On the other hand, Shi et al.~\cite{crypto-tfm} also argued
that the MPC-assisted model does not trivialize 
TFM design.  %Unlike the traditional auction theory literature,
Even with an MPC protocol enforcing the inclusion rule, a miner
can still inject fake bids, and  
the number of bidders is unknown ahead of time (i.e., the 
mechanism is ``permissionless'').
Partly due to these challenges, Shi et al. showed that 
it is impossible 
to construct a finite-block TFM that simultaneously
satisfies UIC, MIC, and $c$-SCP for any $c\geq 2$, and this lower bound
holds even %when we are willing to 
for Bayesian notions of incentive compatibility.
}

% \elaine{TODO: finish the discussion here}
\ignore{
Interestingly, in the MPC-assisted model, assuming
that all honest users bid according to some known a-priori 
distributions, 
we can indeed construct a TFM that satisfies 
ex-post UIC, Bayesian MIC, and ex-post global SCP. 
Specifically, we can adopt the 
second-price auction with reserve with no burning, 
where the reserve is set to be 
revenue-optimizing price. 
}

\ignore{
It is not hard to see that 
the MPC-assisted model
also opens up the feasibility of  
simultaneously 
achieving 
UIC, MIC, and global SCP (or OCA-proofness), e.g., 
consider the second price auction with no burning 
(i.e., miner collects all payments) in the MPC-assisted model.
}

% \subsection{Additional Subtleties}
\subsection{Allowing the Globally Optimal Strategy to Output Multiple Bids}
\label{sec:counterexample}

% We discuss a couple other relaxations that allow
% us to overcome the impossibility; however, the resulting mechanisms
% are somewhat contrived, and not necessarily meaningful from
% a practical point of view.
% Intuitively, for both of these relaxations, the idea
% is to signal to the mechanism when everyone is adopting
% the globally optimal strategy $\sigma$. 
% When the mechanism detects that everyone is behaving according
% to $\sigma$, it adopts a different behavior
% to optimize social welfare.
% In the first relaxation, we signal
% by having the globally optimal $\sigma$ output multiple
% bids, so we can signal through fake bids. 
% In the second relaxation, we signal by leveraging
% the fact that the honest bidding rule is not a direct-revelation mechanism. 
% This way, we can use the range $[0, 1)$ to encode
% the actual bid, and use the range $[1, \infty)$
% for signalling.

% \paragraph{Allowing the globally optimal strategy to output multiple bids.}
We next show that when we allow the globally optimal
strategy $\bfsig$ to output multiple bids, 
then we can actually construct a \truthful{} mechanism
that satisfies UIC, MIC, and OCA-proofness.
Similar to \cref{sec:non-direct-oca-counterexample}, we try to signal to the mechanism when everyone is adopting
the globally optimal strategy $\sigma$. 
In this mechanism, since $\sigma$ can output multiple bids, users can signal through fake bids. 
We assume the miner and the mechanism can distinguish whether a set of bids come from the same user.
In practice, this can be implemented by checking whether these bids are signed by the same public key.
Specifically, consider the following mechanism
parametrized by a reserve price $r > 1$.
\begin{itemize}
\item 
{\bf Globally optimal strategy $\bfsig(v)$}: %is defined as follows.
If the user's true value $v \geq r$, 
$\bfsig(v)$ outputs $(r, r/v)$ 
where $r/v \in (0, 1]$  is a uniquely decodable encoding
of the user's true value $v$.
%all bids in the auxiliary information vector 
%$\bfb_{\rm aux}$ are strictly smaller than $r$,
%and moreover, $\bfb_{\rm aux}$
%jointly encodes the user's true value $v$.
Otherwise, if $v < r$, $\bfsig(v)$
simply outputs $\emptyset$, i.e., 
the user drops its bid.
\item 
{\bf Inclusion rule.} 
\begin{itemize}
\item 
{\bf Case 1}:  Check if it is the case that every user $i \in [n]$ submitted
a message of the form $(r, r/v_i)$ where $r/v_i \in (0, 1]$
for all $i \in [n]$.
If so, decode the true value vector $(v_1, \ldots, v_n)$.
Choose the 
$\min(n, k)$ bids with the highest true values, and include
their corresponding $r$ bids in the block.
\item 
{\bf Case 2}:
Else, if the number of bids is even, henceforth denoted $2n$, 
check whether there are exactly $n$ bids at $r$
and $n$ number of bids that lie in the range $(0, 1]$.  
If so, interpret the bid
vector as $(r, r/v_1),   \ldots, (r, r/v_{n})$, as if 
each pair $(r, r/v_i)$
comes from the $i$-th user for $i \in [n]$.
Now, apply the algorithm of Case 1.
\ignore{
suppose every user's bid is either of the form  
$(r, \bfb_{\rm aux})$
or $0$.  
Let $n$ be the number of bids that are of the form 
$(r, \bfb_{\rm aux})$. For these $n$ bids, 
decode each user's true value from 
the auxiliary information vector $\bfb_{\rm aux}$.
Choose the $\min(n, k)$ bids with the highest
true values to include in the block. 
}
\item 
{\bf Case 3}: 
Else, if the number of bids is odd, henceforth denoted $2n+1$, check if
there are $n$ bids at $r$ and $n$ bids in the range of $(0, 1]$,
and one last bid denoted $r'$.
If $r' > r$, then choose $r'$ to include
in the block, and among the remaining $n$ bids 
at $r$, randomly choose $\min(k-1, n)$
of them to include in the block; else if $r' \leq r$, 
randomly choose 
$\min(k, n)$ bids 
among all the $r$-bids
%and the $r'$ bid 
to include in the block.
\item 
{\bf Case 4}:
In all other cases, 
randomly choose 
$\min(k, n)$ bids among all bids that are \emph{strictly larger than} $r$ to include
in the block.
\Hao{I changed ``at least'' to ``strictly larger''}

\ignore{
not everyone's bid is of the form 
suppose every user's bid is either of the form 
$(r, \bfb_{\rm aux})$ or $0$.  
Let $n$ be the number of bids that bid at least $r$.
Among these $n$ bids,
randomly select $\min(k,n)$  
of them to include in the block.
}
\end{itemize}
\item 
{\bf Confirmation, payment, and miner revenue rules}:
All included bids that are at least $r$ are confirmed.
Each confirmed bid pays $r$, and the miner gets nothing.
\end{itemize}

% The proof that the above mechanism satisfies weak symmetry, UIC, MIC, and 
% a relaxed notion of OCA-proofness 
% where $\sigma$ is allowed to output multiple bids
% in deferred to \cref{sec:deferred-proof}.

% \ignore{
\begin{claim}
The above mechanism satisfies UIC, MIC, and 
a relaxed notion of OCA-proofness 
in which $\sigma$ is allowed to output multiple bids.
\end{claim}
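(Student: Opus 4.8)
The plan is to verify the three properties separately, in increasing order of difficulty, exploiting two structural features of the mechanism: every confirmed bid pays exactly the reserve $r$ and that payment is burnt, and a bid can be confirmed only if it is at least $r$ (so, since $r>1$, every auxiliary component, which lies in $(0,1]$, is automatically unconfirmed). \emph{MIC} is then immediate: the miner revenue is identically zero for every block, so the miner's utility is zero no matter which bids it includes or injects, and in particular the honest inclusion rule with no injected bids is (weakly) revenue-maximizing.

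For \emph{UIC} I would split on the true value $v$. If $v<r$, truthful bidding submits a single bid strictly below $r$, which is never among the confirmed bids in any of the four cases (it is neither an $r$-bid eligible for inclusion in Cases~1--3 nor, since it is $\le r$, a candidate in Case~4; if it happens to lie in $(0,1]$ it is only ever read as someone's auxiliary component, hence unconfirmed), so the user's utility is $0$; and any confirmed outcome would cost $r>v$, so no deviation yields positive utility. If $v=r$ the utility is identically $0$ and there is nothing to show. If $v>r$, truthful bidding submits the single bid $v$, which --- being neither equal to $r$ nor in $(0,1]$ --- can only place the bid vector in Case~3 (when the other bids already form the $n$-at-$r$/$n$-in-$(0,1]$ pattern, so that $v$ plays the role of the extra bid $r'>r$ and is included, hence confirmed, with probability~$1$) or in the catch-all Case~4 (so that $v$ is one of the bids strictly above $r$ and is selected uniformly). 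In either case the user's expected utility equals $(v-r)$ times its confirmation probability, and I would argue this is a best response by three points: (i) the largest utility the user can ever obtain is $v-r$, because the price is always $r$, and Case~3 already attains it; (ii) in Case~4 the random selection ignores the magnitude of any bid above $r$, injecting further bids above $r$ only lowers the user's own selection probability while adding cost, and bidding at or below $r$ makes the user ineligible; and (iii) a single user cannot unilaterally create the global structural pattern required by Cases~1--3 when the other users' bids do not already have it --- and whenever the other users' bids \emph{do} have that pattern, truthful bidding already lands in Case~3 with confirmation probability~$1$. Hence truthful bidding is a dominant strategy.

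For the \emph{relaxed OCA-proofness} I would take $\sigma=\bfsig$ as defined. When each user bids $\bfsig(v_i)$, the users with $v_i\ge r$ --- say there are $n$ of them --- contribute exactly $n$ pairs $(r,r/v_i)$ and the rest drop out, so the bid vector triggers Case~1: the mechanism decodes the $v_i$'s, includes and confirms the $\min(n,k)$ users with the largest values, each paying $r$ (all burnt), while the miner receives nothing; the resulting expected social welfare is the sum of the $\min(n,k)$ largest values $v_i$ minus $\min(n,k)\cdot r$. I would then show this is optimal over all coordinated strategies of the grand coalition: since the miner revenue is always $0$, every payment is burnt, so each confirmed bid subtracts $r$ from social welfare; a bid is confirmable only if it is at least $r$, so confirming a transaction is worthwhile only when its true value is at least $r$; and since at most $k$ bids fit in a block, the social welfare cannot exceed the sum over the $\min(n,k)$ largest values $v_i\ge r$ of $(v_i-r)$ --- exactly the value achieved by $\bfsig$. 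Therefore $\bfsig$ maximizes expected social welfare over all coalition strategies, which is the relaxed OCA-proofness condition.

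\emph{Main obstacle.} The only genuinely delicate part is the UIC argument for users with $v>r$: one has to rule out \emph{every} way a strategic user might exploit the decoding branches (Cases~1--2), in particular submitting a pair $(r,r/v')$ whose encoded value $v'$ is fabricated to be arbitrarily large. The resolution is that such a move either leaves the applicable case unchanged or moves into a case in which the user is still confirmed at price $r$ with no greater probability than under truthful bidding. Organizing the four-way case analysis so that these exploits are visibly \emph{impossible}, rather than merely unprofitable, is the crux of the proof.
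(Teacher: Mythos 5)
Your MIC and relaxed-OCA arguments are fine and essentially match the paper's (the OCA upper bound via ``every confirmed bid burns $r$, at most $k$ bids fit, only bids $\ge r$ confirm'' is the intended reasoning, just spelled out more). The gap is in the UIC argument for $v>r$: your point (iii) --- that a single user cannot unilaterally create the structural pattern of Cases 1--3 when the others' bids do not already have it --- is false, and your ``Main obstacle'' paragraph doubles down on exactly the wrong resolution. A strategic user may submit arbitrarily many fake bids, so it \emph{can} complete the pattern: if the other users' bids consist only of $r$-bids and bids in $(0,1]$ in unequal numbers (say three bids at $r$ and one bid in $(0,1]$), the user can inject bids in $(0,1]$ (and/or at $r$) to balance the counts and trigger Case 2, planting a pair $(r,\epsilon)$ that encodes an arbitrarily large fabricated value; likewise, when exactly one other bid exceeds $r$, the user can pad with $r$-bids and $(0,1]$-bids to force Case 3 with that bid playing the role of $r'$. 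So these deviations are \emph{possible}, not ``visibly impossible''; the correct claim is that they are \emph{unprofitable}, and your proof never establishes that.

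Closing the gap requires the comparison the paper actually performs, splitting on the number $t$ of other bids strictly above $r$. If $t=0$ (which covers every situation in which the pattern-completing deviations are available for Cases 1--2), truthful bidding already lands in Case 3 or Case 4 with inclusion probability $1$, hence utility $v-r$, the maximum, so no deviation can help. If $t\ge 1$, Cases 1--2 are unreachable no matter what the user submits, and Case 3 is reachable only when $t=1$; there the user's own $r$-bid is included with probability $\min(k-1,n)/n$, which is $0$ when $k=1$ (versus $1/2$ under truthful bidding in Case 4) and never exceeds the truthful probability when $k\ge 2$, while remaining deviations stay in Case 4, where your point (ii) applies. Without this case analysis your argument silently skips the very deviations (fabricated encodings in Cases 2--3) that make the mechanism's UIC claim delicate.
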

\begin{proof}
% We assume $k \geq 1$ in this proof.
% Otherwise, if $k = 0$, the mechanism can never include any bid, and it is trivial.
MIC is obvious since the miner always gets 0 revenue.
% For UIC, observe that any user with a positive true value 
% will trigger case 2 of the inclusion rule if it bids truthfully.
% Therefore, any user with a positive true value sees a 
% posted price auction
% with random selection
% which satisfies UIC.
For OCA-proofness, observe that if the global coalition
all take the strategy $\bfsig$, then 
case 1 of the inclusion rule will be triggered,
and thus the social welfare is maximized. 

Proving UIC is the most complicated part.
Fix any user $i$.
Notice that each confirmed bid pays $r$, so if user $i$'s true value is at most $r$, it cannot get positive utility regardless its bid.
Thus, bidding truthfully is a dominant strategy.
Henceforth, we assume user $i$'s true value is strictly larger than $r$.
There are two possibilities.
\begin{enumerate}
\item 
First, before user $i$ submits its bid, the bid vector contains exactly $n$ bids at $r$ and $n$ bids lie in the range $(0,1]$ for some $n$.
Then, bidding truthfully will lead to Case 3, where user $i$'s bid is guaranteed to be included and confirmed, and thus maximizes the utility.
\item 
Second, before user $i$ submits its bid, the bid vector does not contain exactly $n$ bids at $r$ and $n$ bids lie in the range $(0,1]$ for some $n$.
Suppose there is no bid strictly larger than $r$ in the bid vector before user $i$ submits its bid.
Then, if user $i$ bids truthfully, its bid will be the only bid larger than $r$, and the inclusion rule goes to Case 4. 
Since $k \geq 1$, user $i$'s bid is guaranteed to be included and confirmed, and thus maximizes the utility.
On the other hand, suppose there are $t \geq 1$ bids strictly larger than $r$ in the bid vector before user $i$ submits its bid.
Then, the inclusion can never go to Case 1 or Case 2.
If user $i$ bids truthfully, the inclusion rule must go to Case 4, and user $i$'s bid is included with probability $\min(k,t+1) / (t+1)$.
However, if the strategic bidding leads to Case 3, it must be $t = 1$.
Then, user $i$'s inclusion probability has dropped from $1/2$ to $0$ if $k=1$, or from $1$ to at most 1 if $k\geq 2$.
On the other hand, if the strategic bidding leads to Case 4, 
any untruthful bidding $ > r$ leads to the same inclusion probability and the same payment, so the utility does not change,
while any untruthful bidding $\leq r$ leads to the zero inclusion probability which leads to zero utility. 
Finally, notice that injecting fake bids never helps in Case 4.
% However, if user $i$ bids strategically, the inclusion rule may either go to Case 3 or Case 4. 
% For Case 3, it must be user $i$ bids something $\leq r$, and user $i$ has to compete with other $r$-bid for $k-1$ slots.
% For Case 4, the inclusion probability cannot be larger than $\min(k,t+1) / (t+1)$ given $t$ bids larger than $r$ are already in the bid vector.
Thus, in all cases, the utility is better than bidding truthfully.
\end{enumerate}
% Thus, the mempool cannot be parsed as $(r, r/v_i)$ according to everyone's identity $i$.
% Then, no matter how user $i$ bids, it will never lead to case 1.
% \Hao{We need to change the inclusion rule in Case 2,4 that it only randomly selects the bids that are strictly larger than $r$. Otherwise, the strategic user may want to inject fake bid to create Case 3.}
%\elaine{TO FIX: UIC is actually very subtle}
\end{proof}

Recall that weakly symmetric mechanisms do not make use of the bidders' identities or other
auxiliary information except for tie-breaking among equal bids. Because the mechanism above
relies on the fact that the inclusion rule can distinguish whether a set of bids come from the same
user, it may not be obvious that the mechanism satisfies weak symmetry. The next claim proves this formally.

\begin{claim}
The above mechanism satisfies weak symmetry. 
\end{claim}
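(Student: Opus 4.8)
The plan is to establish weak symmetry through the permutation‑invariance restatement recorded immediately after \cref{def:symmetry}. In fact I would prove the stronger assertion that the mechanism's outcome distribution --- encoded as $\big(\mu,\ {\sf sorted}(\{(b_i,x_i,p_i)\}_i)\big)$, where $\mu$ is the miner's revenue and $(b_i,x_i,p_i)$ is the amount, confirmation status, and payment of the $i$-th bid --- is a function of the \emph{multiset of bid amounts} alone; weak symmetry then follows, since a permutation of the bid vector preserves this multiset. As a preliminary observation, in every one of the four cases of the inclusion rule the miner's revenue is identically $0$, and every confirmed bid pays exactly $r$ while every unconfirmed bid pays $0$. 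Hence $\mu$ is constant and each bid's payment is determined by its confirmation status, so it suffices to show that the joint distribution of the pairs $(b_i,x_i)$ is a function of the amount multiset.

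Next I would check that which case of the inclusion rule is invoked is itself a function of the amount multiset, with a single apparent exception: Cases~1 and~2 are separated by whether the submitted bids partition, by submitting user, into singleton pairs of the form $(r,r/v)$ with $r/v\in(0,1]$. This is the one delicate point. The resolution is that the distinction is invisible in the outcome: whenever Case~1 applies, the amount multiset consists of exactly $n$ copies of $r$ together with $n$ values in $(0,1]$ for some $n\ge 0$, so the Case~2 test also passes; and in either case the inclusion rule includes exactly $\min(n,k)$ of the (identical‑amount) bids equal to $r$ and nothing else. In Case~1 the included ones are the $r$-bids of the users with the $\min(n,k)$ largest decoded true values, and in Case~2 those matched to an arbitrary fixed pairing of the $r$-bids with the $(0,1]$-bids; but since all bids equal to $r$ carry the amount $r$ and every confirmed bid pays $r$, the resulting multiset $\{(b_i,x_i)\}_i$ --- namely $\min(n,k)$ copies of $(r,1)$, a further $n-\min(n,k)$ copies of $(r,0)$, and one copy of $(b,0)$ for each of the $n$ bids $b\in(0,1]$ --- is the same in both cases, regardless of how ties among true values are broken, and $\min(n,k)$ depends only on the amount multiset. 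So the two‑case test may be collapsed into a single amount‑keyed test without changing the outcome distribution.

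It then remains to verify that Cases~3 and~4 --- which, unlike Case~1, never consult bidder identities --- are also amount‑determined. In Case~3, with $n$ bids at $r$, $n$ bids in $(0,1]$, and one further bid $r'$, the mechanism deterministically includes $r'$ when $r'>r$ and then picks a uniformly random subset of size $\min(k-1,n)$ (resp. size $\min(k,n)$ when $r'\le r$) among the $n$ bids equal to $r$, and never includes any bid in $(0,1]$; since those bids are indistinguishable in amount and each confirmed bid pays $r$, the induced distribution over $\{(b_i,x_i)\}_i$ depends only on the amount multiset. In Case~4, writing $m$ for the number of bids strictly larger than $r$, the mechanism confirms a uniformly random size‑$\min(k,m)$ subset of those $m$ bids; uniform selection is symmetric, so the induced distribution over which amounts are confirmed depends only on the sub‑multiset of amounts exceeding $r$. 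Assembling these observations, the outcome distribution is a function of the amount multiset, hence permutation‑invariant, hence the mechanism is weakly symmetric. I would finish by rephrasing this in the literal language of \cref{def:symmetry}: sort the bids by amount, and in Cases~1--2 break the ties among the bids equal to $r$ by ordering them according to the decoded true values of their submitting users --- which is legitimate, as tie‑breaking may depend on auxiliary information such as bidder identity --- so that ``include the $r$-bids of the top $\min(n,k)$ users'' becomes ``include the first $\min(n,k)$ bids equal to $r$'', a rule that depends only on amounts and positions; the remaining cases are already of this form.

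The main obstacle is exactly the Case~1/Case~2 point: the mechanism genuinely uses the bidders'-identity/grouping information --- this is precisely what lets it confirm the highest‑valued users and thereby satisfy OCA‑proofness --- and one must argue that this use never surfaces in the weak‑symmetry outcome. The key thing to articulate carefully is that the grouping information only ever decides which among several equal‑amount bids at $r$ is labeled ``confirmed,'' and, those bids being indistinguishable in both amount and payment, the distribution of $\big(\mu,{\sf sorted}(\{(b_i,x_i,p_i)\}_i)\big)$ is unchanged; equivalently, the choice of which of the highest‑valued users to confirm can be folded entirely into the (identity‑permitted) tie‑breaking among equal bids. Once that is said, the rest is a routine case‑by‑case check.
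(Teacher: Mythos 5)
Your proof is correct and rests on the same key observation as the paper's: the only identity-dependent step (Case~1) merely selects among equal-amount bids at $r$, so it can be folded into the tie-breaking permitted by \cref{def:symmetry}, and your closing reformulation (sort by amount, break ties among the $r$-bids by decoded true values, then include the first $\min(n,k)$ of them) is exactly the equivalent description the paper gives. The additional case-by-case verification via the permutation-invariance characterization is a more detailed but equivalent packaging of the same argument.
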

\begin{proof}
To show weak symmetry, we give an equivalent description as follows.
Given a bid vector $\bfb$, the sorting algorithm checks whether $\bfb$ satisfies the condition of Case 1.
If so, the sorting algorithm places the multiple $r$ bids corresponding to the true value vector $(v_1,\dots,v_n)$ (defined in Case 1 above) from high to low, and sorts the rest in the descending order.
Otherwise, if $\bfb$ does not satisfy the condition of Case 1, the sorting algorithm sorts $\bfb$ in the descending order and breaks tie arbitrarily.

In the original mechanism, the only case that depends on the extra information (identity, in this case) is Case 1.
By applying the sorting algorithm described above, the inclusion rule can implement Case 1 by only depending on the amount of the bids and their relative position in the sorted bid vector.

% If there is any bid strictly greater than $r$, the mechanism
% chooses up to $k$ random bids that are at least $r$ 
% to confirm. 
% If there is no bid strictly greater than $r$,  
% then the mechanism chooses 
% up to $k$ bids that are exactly $r$ to confirm.
%\elaine{TO FIX}
\end{proof}

\subsection{Inclusion-Rule-Respecting}
\label{sec:inclusion-rule-respecting}
It is also natural to consider 
a further relaxation of OCA-proofness henceforth 
called ``inclusion-rule-respecting''.
In comparison with OCA-proofness, 
inclusion-rule-respecting 
only requires
that the globally optimal strategy
$\sigma$ not involve
altering the inclusion rule,
and all other constraints on $\sigma$ are removed.
%Clearly, the above auction satisfies inclusion-rule-respecting, too.
Inclusion-rule-respecting is strictly weaker
than the notions in \Cref{sec:posted-price-burning}
and
\Cref{sec:counterexample}.
Therefore, the mechanisms mentioned in \Cref{sec:posted-price-burning}
and 
\Cref{sec:counterexample}
simultaneously satisfy UIC, MIC, and inclusion-rule-respecting too.

The ``inclusion-rule-respecting'' notion has an intuitive interpretation:
it discourages miners from altering the intended protocol implementation; in other words,
all profiting strategies can be implemented
as bidding strategies %in the layer above the protocol.
above the protocol layer.
In this sense, ``inclusion-rule-respecting'' also captures
the intuition of ``no way to steal from the protocol''.
One can also view it as follows: 
global SCP 
captures ``not stealing from the protocol'' where the protocol  
is the union of the miner's inclusion rule as well as users' honest
bidding strategies;  and 
inclusion-rule-respecting captures the same notion but where
the protocol is only the underlying blockchain protocol.
OCA-proofness is somewhere in between.

%%% Local Variables:
%%% mode: latex
%%% TeX-master: "oca"
%%% End:

% \input{oca-revelation-counterexample}
% \input{crypto}

% \section{Relaxed Incentive Compatibility: The Bayesian Perspective}
\section{Transaction Fee Mechanisms in the Trusted Hardware Model}
\label{sec:BayesianIC}

% In the previous section, we have seen several ways to bypass the impossibility results.
% In this section, we will introduce a new model, dubbed the trusted-hardware model,
% which also helps us bypass the impossibility results.
In this section, we introduce a new model, the trusted-hardware model, 
which also helps us bypass the impossibility results.
Trusted hardware, like Intel SGX, refers to a secure computation environment 
whose execution is enforced and whose internal state remains hidden from strategic players. 
In a transaction fee mechanism, trusted hardware can enforce the inclusion rule and 
ensure that strategic players cannot view or tamper with honest users' bids during block construction.
% which enables the design of mechanisms that satisfy relaxed notions of incentive compatibility.
% Conceptually, a trusted hardware is a device whose computation is enforced
% and the internal state is hidden from the strategic players.
% Thus, in a transaction fee mechanism,
% In addition to the modifications in \cref{sec:counterexample},
% using trusted hardware, like Intel SGX, can also help us bypass the impossibility results.
% We introduce a new model which also helps us bypass the impossibility results.
% Consider the following example:
% honest users submit their bids under encryption with the decryption key stored in trusted hardware.
% The block is then produced inside the trusted hardware, where an honest miner simply forwards all submitted bids.
% A strategic miner, however, may manipulate the input to the trusted hardware by dropping some honest bids or injecting its own fake bids.
% In this setting, honest users' bids are hidden from the strategic miner,
% and the miner must decide what to submit to the trusted hardware before knowing the honest users' actual bids.
% it is natural to consider Bayesian incentive compatibility,
% since the strategic miner must decide what to submit to the trusted hardware before knowing the honest users' actual bids.

We first define the model formally in \cref{sec:trusted-hardware}, 
then introduce the Bayesian variants of incentive compatibility properties in \cref{sec:BayesianIC-definitions}.
% In \cref{sec:trusted-hardware}, we formally introduce the trusted-hardware model.
% Then, in \cref{sec:BayesianIC-definitions}, we define the Bayesian versions of incentive compatibility properties.
% UIC, MIC, and $c$-SCP.
% Because the strategic players considered in global SCP and OCA-proofness are the coalition of all users and the miner,
% and there are no honest users outside the coalition,
% Bayesian notions of global SCP and OCA-proofness are equivalent to their \emph{ex post} counterparts.
In \cref{sec:second-price-in-MPC}, we show that the second-price auction with reserve satisfies UIC, Bayesian MIC, and global SCP
demonstrating that we can bypass the impossibility 
by adopting the Bayesian notion of incentive compatibility.
Finally, in \cref{sec:Bayesian-impossibility}, we show that 
there remains a fundamental trade-off between global SCP and $1$-SCP
even if we only aim for Bayesian incentive compatibility.
% Specifically, we show that for any mechanism satisfying Bayesian UIC,
% Bayesian MIC, Bayesian $1$-SCP and global SCP,
% the social welfare must be zero.

\subsection{The Trusted Hardware Model}
\label{sec:trusted-hardware}
Recall that in the plain model (\cref{sec:tfm-def}), 
the miner observes all honest users' bids and may deviate from the inclusion rule. 
In contrast, the trusted-hardware model assumes that the inclusion rule is enforced 
within a trusted execution environment (e.g., Intel SGX), 
which users interact with directly via encrypted bids.
In this setting, when designing a mechanism,
we only need to specify the \emph{allocation rule}
which is a combination of the inclusion and confirmation rules.
We say a TFM is \emph{trivial} if the allocation probability of all transactions is always zero;
otherwise, it is called \emph{non-trivial}.
% users' bids are encrypted,
% so the inclusion rule can only be implemented by the trusted hardware.
% In the honest execution, the honest miner simply forwards all received bids to the trusted hardware,
% and the trusted hardware generates a block according to the inclusion rule.
% Because the trusted hardware relies on the miner to relay the bids,
% a strategic miner can still inject fake bids, 
% or censor some honest users' bids by not forwarding them to the trusted hardware.

% Unlike the plain model where the inclusion rule is implemented by the miner,
% the inclusion rule in the trusted-hardware model is enforced by the trusted hardware.
% % Because the inclusion rule is enforced by the trusted hardware,
% % and the confirmation is enforced by the blockchain protocol,
% Thus, when designing a mechanism,
% we only need to specify the \emph{allocation rule}
% which is a combination of the inclusion and confirmation rules.
Below, we define the game that captures the using of trusted hardware in the transaction fee mechanism design.
To formalize the fact that the bids are hidden from the miner,
we assume the existence of an ideal trusted functionality $\mathsf{T}$
who the honest users interact with to submit their bids
and execute the mechanism rules.
% Let $\mathsf{T}$ denote the trusted functionality who enforces the mechanism rules.
% Recall that we consider three kinds of strategic players in this paper:
% an individual user, the miner, and a coalition of the miner and some users.
When the strategic player is an individual user,
the execution of a transaction fee mechanism in the trusted-hardware model is described by the following game:
\begin{enumerate}
\item 
Each honest user submits a bid to $\mathsf{T}$.
The strategic user can submit a non-negative number of bids to $\mathsf{T}$.
\item 
The trusted party $\mathsf{T}$ collects all bids and outputs the outcomes of all players
according to the allocation rule, the payment rule, and the miner revenue rule.
\end{enumerate}
When the strategic player, denoted as $\mcal{C}$, is the miner or a coalition of the miner and some users,
the execution of a transaction fee mechanism in the trusted-hardware model is described by the following game:
\begin{enumerate}
\item 
Each honest user $i$ submits its identity $i$ to the miner, 
and its bid $b_i$ to $\mathsf{T}$.
\item
Let $\mcal{H}$ denote the set of honest identities received by the miner.
The coalition $\mcal{C}$ chooses a subset $\mcal{H}' \subseteq \mcal{H}$,
and a vector of bids $\bfb_{\mcal{C}}$ of arbitrary length.
Then, $\mcal{C}$ submits $\mcal{H}'$ and $\bfb_{\mcal{C}}$ to $\mathsf{T}$.
\item
The trusted party $\mathsf{T}$ outputs the outcomes of all players 
according to the allocation rule, payment rule, and the miner revenue rule
by using $\{b_i\}_{i \in \mcal{H}'} \cup \bfb_{\mcal{C}}$ as the input.
\end{enumerate}

In the games above, 
an honest user always takes on a single identity, and submits a bid according to the bidding rule.
However, a strategic user can take a non-negative number of identities,
and submit bids on behalf of these identities which possibly deviate from the bidding rule.
An honest miner does not take on any identity or submit any bid,
and it always forwards all users' identities to $\mathsf{T}$.
A strategic miner can take a non-negative number of identities and submit fake bids.
Moreover, we allow the strategic miner to choose a subset $\mcal{H}' \subseteq \mcal{H}$ of 
users' identities.
This captures the fact that the trusted hardware relies on the miner to relay the bids,
so the miner can censor some honest users' bids by not forwarding them to the trusted hardware.
% Unlike the plain model,
% the proposed block is decided by $\FMPC$ in the MPC-assisted model,
% so the miner can no longer choose the included transactions arbitrarily.
% \Hao{Do we need to comment on the distribution of the identities or whether the strategic players can hijack honest players' identities.}
A miner-user coalition can adopt a combination of the above strategies.

% The players' strategy space is defined by the games above.
% Here we compare the strategy space of the trusted-hardware model with that of the plain model.

% ?\paragraph{Utility and Strategy space}
% The players' utilities are identically defined as the utilities in the plain model.
% A user's utility is identically defined as the utility in the plain model.
% % In the plain model, we focus on the miner who has been selected to propose the next block.
% In the MPC-assisted model, the block preparation is 
% carried out by the MPC protocol running by all miners,
% and the protocol decides the total miner revenue based on the mechanism.
% We assume the revenue is split among all miners according to their mining power.
% Let $\mu$ denote the total miner revenue.
% For a miner controlling $\rho$ fraction of the mining power,
% its utility is defined as $\rho \cdot \mu$.
% For a coalition consisting of a subset of users and miners,
% the joint utility of the coalition is the sum of the utilities of all players in the coalition.

\begin{remark}
Shi, Chung, and Wu~\cite{crypto-tfm} introduced the \emph{MPC-assisted model}, 
which further restricts the miner's strategy space.
In this model, the mechanism is executed by a multiparty computation (MPC) protocol
run by multiple miners.
The MPC-assisted model is appropriate when the strategic players do not control all the parties participating in the MPC.
However, global SCP and OCA-proofness focus on the global coalition of all users and the miners.
If such a global coalition corrupts all MPC participants, the protocol's security is compromised. Therefore, the MPC-assisted model is incompatible with global SCP and OCA-proofness.

In terms of strategy space, the trusted hardware model lies between the plain model and the MPC-assisted model. Both the trusted hardware and MPC-assisted models prevent the strategic miner from learning the honest users' bids. However, unlike in the MPC-assisted model, 
the trusted hardware model still allows the strategic miner to censor some bids.
\end{remark}

\subsection{Defining Bayesian Incentive Compatibility}
\label{sec:BayesianIC-definitions}
% In the MPC-assisted model, the strategic players have to submit their bids 
% without having seen other honest users' bids.
% Thus, it is natural to consider the \emph{Bayesian} notion of incentive compatibility.
% Conceptually, a mechanism satisfies Bayesian incentive compatibility if the strategic players' utilities are maximized by following the honest strategy
% assuming honest users' values are drawn from a known distribution.
% Because the strategic players considered in global SCP and OCA-proofness are the coalition of all users and the miner,
% Bayesian notions of global SCP and OCA-proofness are equivalent to the ex post counterparts
% given no users outside the coalition exist.
% Below, we formally define the Bayesian versions of UIC, MIC, and $c$-SCP.
% Throughout this section, 
% we assume that each honest user's bid is i.i.d.~sampled from an a-priori known distribution $\mcal{D}$.

The properties, UIC, MIC, and SCP, defined in \cref{sec:IC}
require that the honest behavior maximizes the strategic players' utilities
\emph{for any} bid vector submitted by the honest players.
In contrast, Bayesian incentive compatibility assumes the honest players' values are drawn from a known distribution,
and it only requires that the honest behavior maximizes 
the strategic players' expected utilities \emph{over the distribution of the honest players' values}
assuming the honest players follow the bidding rule.
This notion is particularly justified when the strategic players have to submit their bids 
without having seen other honest users' bids.

Let $\mcal{D}$ denote a distribution representing the true values of honest users,
which is known to the strategic players.
Let $\beta$ denote the bidding rule,
and we define $\beta(\mcal{D})$ as the distribution of bids 
when a user follows the bidding rule $\beta$ and its true value is drawn from $\mcal{D}$.
The Bayesian incentive compatibility properties are defined as follows.
% Given any bid vector $\bfb$ and $\bfb'$, 
% let ${\sf util}_i(\bfb, \bfb')$ denote user $i$'s utility when user $i$ submits bid vector $\bfb$ and other players submit bid vector $\bfb'$.
% Similarly, let ${\sf util}_{\mcal{M}}(\bfb, \bfb')$ denote the miner's utility when the miner submits bid vector $\bfb$ and other players submit bid vector $\bfb'$.

\begin{definition}[Bayesian user incentive compatible]
\label{def:Bayesian-UIC}
We say a TFM satisfies \emph{Bayesian user incentive compatible (Bayesian UIC)}
iff the following holds: 
for any number $n$ of users,
for any user $i$, for any true value $v_i$ of user $i$,
and for any bid vector $\bfb_{i}$ submitted by user $i$,
it holds that \[
\underset{{\bfb}_{-\mcal{C}}\sim \beta(\mcal{D})^{n-1}}{\E}
\left[{\sf util}_i({\bfb}_{-\mcal{C}}, v_i)\right] \geq
\underset{{\bfb}_{-\mcal{C}}\sim \beta(\mcal{D})^{n-1}}{\E} 
%\E_{{\bfb}_{-\mcal{C}} \sim \beta(\mcal{D})^{n-1}}
\left[{\sf util}_i({\bfb}_{-\mcal{C}}, {\bf b}_i)\right],
\]
where ${\sf util}_i({\bf b})$ denotes user $i$'s expected utility 
(taken over the random coins of the TFM)
assuming the miner implements the TFM honestly.
% when the bids received by $\FMPC$ is ${\bf b}$.
% In other words, submitting its true value $v_i$ maximizes user $i$'s expected utility
% assuming other users' values are drawn from $\mcal{D}$, 
% other users follow the bidding rule,
% and the miner implements the TFM honestly.
\end{definition}

\begin{definition}[Bayesian miner incentive compatible]
\label{def:Bayesian-MIC}
We say a TFM satisfies \emph{Bayesian miner incentive compatible (Bayesian MIC)}
iff the following holds: 
for any set $\mcal{H}$ of users with arbitrary size $n$,
for any subset $\mcal{H}' \subseteq \mcal{H}$,
% for any miner coalition $\mcal{C}$ controlling at most $\rho$ fraction of the mining power, 
and for any bid vector $\bfb_{\mcal{M}}$ submitted by the miner,
it holds that \[
\underset{{\bfb}_{\mcal{H}}\sim \beta(\mcal{D})^{n}}{\E}
\left[{\sf util}_{\mcal{M}}({\bfb}_{\mcal{H}})\right] \geq
\underset{{\bfb}_{\mcal{H}}\sim \beta(\mcal{D})^{n}}{\E} 
%\E_{{\bfb}_{\mcal{H}} \sim \beta(\mcal{D})^{n-1}}
\left[{\sf util}_{\mcal{M}}({\bfb}_{\mcal{H}'}, \bfb_{\mcal{M}})\right],
\]
where ${\bfb}_{\mcal{H}'}$ denotes the subvector of ${\bfb}_{\mcal{H}}$ corresponding to the users in $\mcal{H}'$, and
${\sf util}_{\mcal{M}}({\bf b})$ denotes the miner's expected utility 
(taken over the random coins of the TFM)
given the input to the allocation rule is $\bfb$.
% when the bids received by $\FMPC$ is ${\bf b}$.
\end{definition}

\begin{definition}[Bayesian side-contract-proof]
\label{def:Bayesian-SCP}
We say a TFM satisfies \emph{Bayesian $c$-side-contract-proof (Bayesian $c$-SCP)}
iff the following holds: 
for any miner-user coalition $\mcal{C}$ consisting of the miner
and at most $c$ users,
for any true values $\bfv_{\mcal{C}}$ of the users in $\mcal{C}$,
for any bid vector $\bfb_{\mcal{C}}$ submitted by $\mcal{C}$,
for any set $\mcal{H}$ of honest users with arbitrary size $n$,
and for any subset $\mcal{H}' \subseteq \mcal{H}$,
it holds that \[
\underset{{\bfb}_{\mcal{H}}\sim \beta(\mcal{D})^{n}}{\E}
\left[{\sf util}_{\mcal{C}}({\bfb}_{\mcal{H}}, \bfv_{\mcal{C}})\right] \geq
\underset{{\bfb}_{\mcal{H}}\sim \beta(\mcal{D})^{n}}{\E} 
\left[{\sf util}_{\mcal{C}}({\bfb}_{\mcal{H}'}, \bfb_{\mcal{C}})\right],
\]
where ${\bfb}_{\mcal{H}'}$ denotes the subvector of ${\bfb}_{\mcal{H}}$ corresponding to the users in $\mcal{H}'$, and
${\sf util}_{\mcal{C}}({\bf b})$ denotes the coalition's expected utility 
(taken over the random coins of the TFM)
given the input to the allocation rule is $\bfb$.
% where $\mcal{D}_{-\mcal{C}}$ denote the distribution of the true values of the users not in $\mcal{C}$,
% and ${\sf util}_{\mcal{C}}$ denotes the coalition's expected utility 
% (taken over the random coins of the TFM).
\end{definition}

Because the strategic players considered in global SCP and OCA-proofness are the coalition of all users and the miner,
and there are no honest users outside the coalition,
Bayesian notions of global SCP and OCA-proofness are equivalent to their \emph{ex post} counterparts,
i.e., \cref{def:globalSCP} and \cref{def:OCAproof}.

\subsection{Second-Price Auction Achieves UIC + Bayesian MIC + Global SCP}
\label{sec:second-price-in-MPC}
In this section, we assume each user's true value is independently drawn from a distribution~$\mathcal{D}$,
and we assume the probability density function of $\mathcal{D}$ exists. 
Let $F$ and $f$ denote the cumulative distribution function and probability density function of~$\mathcal{D}$, respectively.
We define the \emph{virtual value} function as $\phi(v) = v - \frac{1 - F(v)}{f(v)}$.
We assume $\mathcal{D}$ is \emph{regular}, i.e., $\phi(v)$ is strictly increasing and $f$ is strictly positive over the support.
We set the reserve price to be $r = \phi^{-1}(0)$.
Note that for any bounded distribution over $[0, v_{\max}]$, we have $\phi(v_{\max}) = v_{\max} \geq 0$, so $\phi^{-1}(0)$ exists.

\begin{mdframed}
	\begin{center}
		{\bf Second-price auction with reserve}
	\end{center}

\noindent {\bf Parameters}: the reserve $r$ and the block size $k$.

\noindent {\bf Inputs}: a bid vector $(b_1,\dots,b_N)$ where $b_1 \geq \cdots \geq b_N$ (break tie arbitrarily).

\begin{itemize}
\item 
{\bf Bidding rule}: 
Each user bids its true value.
\item 
{\bf Allocation rule and payment rule}: 
All bids among $b_1,\dots,b_k$ that are at least $r$ are confirmed.
Each confirmed bid pays $\max(r, b_{k+1})$.
\item 
{\bf Miner revenue rules}:
For each confirmed bid, the reserve is burnt, and the miner gets the rest.
\end{itemize}
\end{mdframed}

\begin{theorem}
The second-price auction with reserve above satisfies the following properties:
\begin{itemize}
\item 
It satisfies UIC, and global SCP for users' values regardless of distribution $\mathcal{D}$.
\item 
If each buyer's true value is i.i.d.~sampled from a bounded regular distribution $\mcal{D}$,
it additionally satisfies Bayesian MIC.
\end{itemize}
\end{theorem}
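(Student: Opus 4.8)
The plan is to treat the three claims separately, since UIC and global SCP are ex-post, structural facts that hold for every $\mathcal{D}$, whereas Bayesian MIC is where the i.i.d.\ regularity assumption and the choice $r=\phi^{-1}(0)$ are actually used. For UIC I would simply verify the two conditions of Myerson's lemma (\cref{lemma:myerson}): fixing the other bids, a bidder's confirmation indicator is a monotone threshold function of its own bid (it is confirmed iff its bid is at least $r$ and lands among the top $k$), and the charge $\max(r,b_{k+1})$ to a confirmed bid is exactly the critical value $\inf\{z : x_i(\bfb_{-i},z)=1\}$, with zero payment below the threshold. Hence the Myerson conditions hold verbatim and the mechanism is UIC for every $\mathcal{D}$.

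For global SCP, the key observation is that, because exactly $r$ is burnt per confirmed slot, the social welfare of \emph{any} outcome equals $\sum_{i\text{ confirmed}}(v_i-r)$: the payments cancel in the welfare sum and only the per-slot burn survives, with a confirmed fake bid contributing $0-r=-r$. By choosing bid amounts freely, the grand coalition can arrange for any prescribed set of at most $k$ bidders to be exactly the confirmed set (everyone else bids $0$), and confirming a fake bid or a bidder with $v_i<r$ only decreases $\sum(v_i-r)$. The honest outcome confirms precisely the (up to $k$) highest-value bidders with $v_i\ge r$, which maximizes $\sum(v_i-r)$; so honest behaviour is welfare-optimal and global SCP holds for every $\mathcal{D}$.

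For Bayesian MIC, assume $\mathcal{D}$ is bounded and regular so that $r=\phi^{-1}(0)$ exists, and recall that a miner strategy is a censoring choice plus a fixed vector of injected fake bids chosen before the honest bids are seen; by symmetry, censoring to a set of size $j$ is equivalent to running with $j\le n$ i.i.d.\ honest bidders. I would proceed in four steps. \textbf{(i)} A phantom bid strictly below $r$ can be deleted: it is never confirmed, and a short case check shows it changes neither the confirmed set nor the clearing price, hence neither any payment nor the burn; so we may assume every phantom is $\ge r$. \textbf{(ii)} Accounting identity: $\mathsf{util}_{\mathcal{M}}=R^{\mathrm{real}}-r\cdot(\#\text{confirmed})$, where $R^{\mathrm{real}}$ is the total payment made by the \emph{honest} bidders, since the miner collects $(\#\text{confirmed})(p-r)$ but itself pays $p$ for each confirmed phantom bid and those terms cancel. \textbf{(iii)} In any deviation the honest bidders face a DSIC $k$-unit auction (the honest rule run on their bids together with the fixed phantom bids, with the censored bidders ignored); by the revenue-equivalence half of Myerson's lemma, $R^{\mathrm{real}}=\E[\sum_i\phi(v_i)x_i]$, which is at most the revenue of Myerson's optimal $k$-unit auction on $j$ i.i.d.\ regular bidders~\cite{myerson}. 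Since $\mathcal{D}$ is regular and $r=\phi^{-1}(0)$, that optimal auction \emph{is} the honest rule on $j$ bidders, and Myerson revenue is monotone in the number of bidders, so $R^{\mathrm{real}}_{\mathrm{dev}}\le R^{\mathrm{real}}_{\mathrm{hon}(j)}\le R^{\mathrm{real}}_{\mathrm{hon}(n)}$. \textbf{(iv)} Since $\#\text{confirmed}=\min(k,\#\{\text{bids}\ge r\})$ and all phantoms are $\ge r$, the number confirmed in the deviation is at least the number confirmed by $\mathrm{hon}(j)$; combining with (ii)--(iii) gives $\mathsf{util}_{\mathcal{M}}^{\mathrm{dev}}\le\mathsf{util}_{\mathcal{M}}^{\mathrm{hon}(j)}=k\,\E[(V^{(j)}_{(k+1)}-r)^+]\le k\,\E[(V^{(n)}_{(k+1)}-r)^+]=\mathsf{util}_{\mathcal{M}}^{\mathrm{hon}(n)}$, where $V^{(m)}_{(k+1)}$ is the $(k{+}1)$-st largest of $m$ i.i.d.\ draws (the quantity being $0$ when $m\le k$), and the last inequality holds because this order statistic pointwise dominates across sample sizes under the natural coupling.

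The step I expect to be the main obstacle is (iii): making fully rigorous the claim that the honest bidders in a deviation face a bona fide DSIC mechanism to which Myerson's characterization applies — the phantom bids and the censoring are part of the mechanism's description, not strategic participants — and handling the degenerate cases (fewer than $k+1$ bids present, or fewer than $k+1$ bids above $r$) where clearing prices and confirmation counts behave specially. It is also worth flagging where the hypotheses bite: without regularity, or with a reserve other than $\phi^{-1}(0)$, the honest rule is no longer revenue-optimal, the inequality $R^{\mathrm{real}}_{\mathrm{dev}}\le R^{\mathrm{real}}_{\mathrm{hon}}$ can fail, and a miner could profitably inject a fixed phantom bid that typically nudges the clearing price upward — so i.i.d.\ regularity is genuinely needed for the Bayesian MIC half of the theorem.
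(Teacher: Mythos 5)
Your UIC and global SCP arguments coincide with the paper's own (Myerson's conditions for UIC; welfare $=\sum_{\text{confirmed}}(v_i-r)$ maximized by confirming the top-$k$ values $\ge r$ for global SCP); the only small omission there is that, since the paper's UIC strategy space includes fake bids, you should add the paper's one-line observation that a user injecting fake bids can only (weakly) raise the clearing price and hence cannot help.

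For Bayesian MIC you take a genuinely different, and in my reading correct, route. The paper introduces an auxiliary mechanism $\Pi'$ identical to $\Pi$ but with no burning, proves $\Pi'$ is Bayesian MIC by showing a profitable price inflation would require $d'(1-F(d'))>d(1-F(d))$ with $d'>d\ge\phi^{-1}(0)$, impossible because $h(v)=v(1-F(v))$ has $h'(v)=-\phi(v)f(v)\le 0$ past the monopoly reserve, and then transfers the conclusion to $\Pi$ via the observation that injecting bids cannot decrease the number of confirmed (hence burnt) slots; censoring is dispatched in one sentence. You instead work directly in $\Pi$ via the accounting identity $\mathsf{util}_{\mathcal{M}}=R^{\mathrm{real}}-r\cdot(\#\text{confirmed})$, bound $R^{\mathrm{real}}$ in any deviation by the revenue-equals-expected-virtual-welfare identity and the pointwise optimality of confirming the top-$k$ nonnegative virtual values (which, for regular $\mathcal{D}$ and $r=\phi^{-1}(0)$, is exactly the honest rule), note that injected bids $\ge r$ only increase the burn, and handle censoring by an order-statistic coupling showing $\mathsf{util}^{\mathrm{hon}(j)}=k\,\E[(V^{(j)}_{(k+1)}-r)^+]$ is monotone in the number of bidders. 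What each approach buys: yours is more rigorous exactly where the paper is quick --- the combined censor-plus-inject deviation, and the paper's per-bid argument that treats the random clearing price $d$ as if it were a fixed number --- but it imports machinery (the virtual-welfare identity and the optimality of the Myerson reserve for the $k$-unit problem) that goes beyond the Myerson payment formula stated in the paper's preliminaries; the paper's $\Pi$-versus-$\Pi'$ comparison is more elementary and self-contained, using only the monotonicity of $v(1-F(v))$. Your own flag about step (iii) is handled by noting that the phantom bids and the censoring set are fixed constants (chosen before the honest bids are drawn, as the trusted-hardware model dictates), so the honest bidders indeed face a DSIC, individually rational $k$-feasible mechanism with zero payment at zero bid, and the degenerate cases (fewer than $k+1$ bids, or fewer than $k+1$ bids above $r$) only set the relevant terms to zero.
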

\begin{proof}
We start with the proof of UIC and global SCP.
The proof of Bayesian MIC is more involved, so we will postpone it to the end.

\paragraph{UIC.}
Since the allocation and payment rules align with the conditions of Myerson's lemma, 
a user cannot increase their utility by bidding untruthfully. 
Additionally, injecting fake bids cannot reduce the payment required. 
Thus, the mechanism satisfies UIC. 
This property holds even if users observe others' bids before submitting their own.

\paragraph{Global SCP.}
Each confirmed bid burns an amount $r$, so confirming a transaction with a value~$< r$ reduces total social welfare.
On the other hand, confirming a transaction with value~$v \geq r$ increases social welfare by $v - r$, regardless of the exact payment.
Because the mechanism always confirms the highest bids that are at least $r$,
everyone submits the true value maximizes the social welfare.
Thus, it satisfies global SCP.

\paragraph{Bayesian MIC.}
In the second-price auction, censoring honest users' bids never increases the payment from confirmed users
nor the number of confirmed users.
Hence, we assume the miner always forwards all honest users' bids to the trusted hardware,
while it may inject fake bids.

Let $\Pi$ denote the second-price auction with reserve~$r$ described above.
Define $\Pi'$ as an identical mechanism, except that no payment is burned: the miner receives the full payment.
We will show the following:
\begin{enumerate}
    \item $\Pi'$ satisfies Bayesian MIC.\footnote{Although $\Pi'$ satisfies Bayesian MIC, it does not satisfy global SCP; see \cref{remark:burning-reserve}.}
    \item If the miner can profit from deviating in $\Pi$, 
	it can also profit from deviating in $\Pi'$.
\end{enumerate}
This implies that $\Pi$ must also satisfy Bayesian MIC.

We first show that $\Pi'$ satisfies Bayesian MIC.
% In the MPC-assisted model, the only strategy available to the miner is injecting fake bids.
Let $d$ be the payment from a confirmed user in an honest execution,
so the miner receives $d$ per confirmed bid.
Now, suppose the miner deviates from the inclusion rule or
injects fake bids so that the payment becomes $d'$. 
Since only confirmed bids generate revenue, the deviation is profitable only if
\begin{equation}
\label{eq:second-price}
    d' \cdot \Pr[v \geq d' \mid v \geq d] > d,
\end{equation}
or equivalently, $d' \cdot (1 - F(d')) > d \cdot (1 - F(d))$.

Since \cref{eq:second-price} only holds if $d' > d$,
we assume $d' > d$ henceforth.
We define $h(v) = v \cdot (1 - F(v))$,
and we have $h'(v) = -\phi(v) \cdot f(v)$.
Because $\phi(v)$ is strictly increasing, $h(v)$ is non-increasing for $v \geq \phi^{-1}(0)$.
Since $d' > d \geq r = \phi^{-1}(0)$, \cref{eq:second-price} cannot hold.
Thus, fake bid injection does not increase the miner's utility in~$\Pi'$.

Now we are ready to show that $\Pi$ also satisfies Bayesian MIC.
Given any bid vector $\bfb$ from honest users,
let $P$ denote the total payment collected from the confirmed users,
and let $t$ be the number of confirmed bids
in an honest execution of $\Pi$.
In this case, the miner's utility is $P - t \cdot r$.
On the other hand, 
let $P'$ denote the total payment collected from the honest confirmed users (not including the miner's payment),
and let $t'$ be the number of confirmed bids (including fake bids)
if the miner injects some fake bids $\bff$ in $\Pi$.
In this case, the miner's utility is $P' - t' \cdot r$.
Note that $t,t', P, P'$ are the functions of $\bfb$ and $\bff$.

For the sake of contradiction, suppose $\Pi$ does not satisfy Bayesian MIC.
That is, there exists a number $n$ of users such that the miner's expected utility
increases after injecting fake bids $\bff$
when $n$ users' true values are drawn from $\mcal{D}$.
Thus, we have $
\mathop{\E}_{\bfb \sim \mcal{D}^n} \left[P' - t' \cdot r\right]
> 
\mathop{\E}_{\bfb \sim \mcal{D}^n} \left[P - t \cdot r\right]
$.
After inject fake bids, the number of confirmed bids (including fake bids) never decreases,
so we have $t' \geq t$.
Thus, we have $
	\mathop{\E}_{\bfb \sim \mcal{D}^n} \left[P - t \cdot r\right]
	\geq 
	\mathop{\E}_{\bfb \sim \mcal{D}^n} \left[P - t' \cdot r\right]
$.
Combining the two inequalities, we have \[
	\mathop{\E}_{\bfb \sim \mcal{D}^n} \left[P'\right]
	> 
	\mathop{\E}_{\bfb \sim \mcal{D}^n} \left[P\right].
\]
Because the allocation rule and the payment rule of $\Pi$ and $\Pi'$ are identical,
for any $\bfb$ and $\bff$,
the number of confirmed users and the corresponding payments are identical in $\Pi$ and $\Pi'$.
Consequently, the miner can profit by injecting fake bids $\bff$ in $\Pi'$,
and it contradicts the fact that $\Pi'$ satisfies Bayesian MIC.
\end{proof}

\begin{remark}[On the Necessity of Burning the Reserve]
	\label{remark:burning-reserve}
	Burning the reserve is essential for ensuring global SCP.
	If the reserve is not burned and the miner receives the full payment, then in situations where fewer than $k$ users have values $\geq r$, low-value users are incentivized to overbid to occupy block space.
	Without burning, the coalition's utility increases with these overbids.
	Conversely, if all payments are burned, the global coalition is incentivized to reduce payments, e.g., by having all but the top $k$ bidders bid~$0$ to keep the price at~$r$.
\end{remark}

\begin{remark}[First-price auction satisfies Bayesian UIC, MIC, and global SCP in the permissioned setting]
We have seen that the second-price auction can bypass the impossibility result by appealing to the Bayesian notion of MIC. 
Interestingly, the first-price auction can also bypass the impossibility---this time via the Bayesian notion of UIC---provided 
we are in a \emph{permissioned setting}, 
where the number of users $n$ is fixed \emph{a priori} and 
is common knowledge to all players and the mechanism.

Under the assumption that other users' bids are i.i.d.~draws from a known distribution $\mcal{D}$, 
a bidder with true value $v$ in a first-price auction has a Bayesian optimal strategy to bid $\gamma(n) \cdot v$, 
where $\gamma(n) \in [0,1]$ depends on $n$.\footnote{For an educational derivation, see, e.g., \url{https://cs.brown.edu/courses/cs1951k/lectures/2020/first_price_auctions.pdf}}  
When $n$ is known \emph{a priori}, this equilibrium bidding rule ensures that the first-price auction satisfies Bayesian UIC.  
The first-price auction also satisfies MIC, as shown in \cite{roughgardeneip1559}, 
and it satisfies global SCP since it always selects the bidders with the highest values.

However, Bayesian UIC here crucially relies on the permissioned setting.  
In designing a TFM for blockchains, 
we require a mechanism that works for any number $n$ of users.
In such a setting, no single dominant bidding strategy exists that works for all possible $n$.
\end{remark}

\begin{remark}[Smaller Reserve in the Permissioned Setting]
If the number $n$ of users is known in advance (e.g., in a permissioned setting), the reserve can be set below $\phi^{-1}(0)$. This reduces the burned payment and increases both miner revenue and social welfare.

Formally, given $n$, reserve $r$, and a fake bid vector $\bff$ injected by the miner, let:
\begin{itemize}
    \item $P(r, n, \bff)$ be the expected total payment from honest users (excluding fake bids),
    \item $B(r, n, \bff)$ be the expected total amount burned (including fake bids).
\end{itemize}
Bayesian MIC holds if
\begin{equation}
\label{eq:reserve-MIC}
P(r, n, \emptyset) - B(r, n, \emptyset) \geq P(r, n, \bff) - B(r, n, \bff),
\end{equation}
where the LHS is the miner's expected utility under honest execution, and the RHS under fake bids~$\bff$.

For example, if values are i.i.d.~drawn from the uniform distribution over~$[0, 1]$, 
then $\phi(v) = 2v - 1$ and $\phi^{-1}(0) = 1/2$. 
The smallest $r$ satisfying \cref{eq:reserve-MIC} for various $n$ is as follows:

\begin{table}[h!]
	\centering
	\begin{tabular}{l|llllllll}
		$n$	 & 1     & 2     & 3     & 4     & 5     & 10    & 100   & 1000  \\ \hline
		$r$  & 0.250 & 0.305 & 0.335 & 0.355 & 0.370 & 0.410 & 0.480 & 0.497 \\ 
	\end{tabular}
\end{table}

As $n$ grows, $r \to \phi^{-1}(0) = 1/2$. 
Intuitively, for large $n$, the block is almost surely full, so we have 
$B(r, n, \bff) \approx B(r, n, \emptyset)$, and \cref{eq:reserve-MIC} reduces to
\[
P(r, n, \emptyset) \geq P(r, n, \bff),
\]
matching the condition in classical optimal mechanism design.
\end{remark}

\subsection{Impossibility of Bayesian UIC + Bayesian MIC + Bayesian $1$-SCP + Global SCP}
\label{sec:Bayesian-impossibility}

In this section, we will show that for any mechanism satisfying Bayesian UIC,
Bayesian MIC, Bayesian $1$-SCP and global SCP,
the social welfare must be zero.
Notice that if users' true values are sampled from a bounded distribution over $[0, M]$,
we can run the posted-price auction with a price $M$:
only bids at least~$M$ are confirmed, each pays~$M$, and all payments are burned.
Although some transactions are confirmed,
the mechanism is still undesired because everyone's utility is zero.
Our impossibility result in this section demonstrates that it is a consequence of simultaneously
achieving all four incentive compatibility properties.

We begin by recalling the following lemma from~\cite{crypto-tfm}.

\begin{lemma}[Theorem C.5 in \cite{crypto-tfm}]
\label{lem:zero-miner-revenue}
Suppose each user's true value is i.i.d.~sampled from a distribution $\mcal{D}$.
For any (possibly randomized) TFM 
% in the MPC model,
satisfying Bayesian UIC, Bayesian MIC, and Bayesian $1$-SCP, 
and for any number $n$ of users, it must be \[
\underset{\bfb\sim\mcal{D}^{n}}{\E} [\mu(\bfb)] = 0.
\]	
\end{lemma}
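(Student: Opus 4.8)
The plan is to prove the two inequalities $\E_{\bfb\sim\mcal{D}^n}[\mu(\bfb)]\ge 0$ and $\E_{\bfb\sim\mcal{D}^n}[\mu(\bfb)]\le 0$ separately; write $R(m):=\E_{\bfb\sim\mcal{D}^m}[\mu(\bfb)]$ for the expected miner revenue in an honest $m$-user execution. The lower bound is essentially free: miner revenue is non-negative, and in any case Bayesian MIC says the miner does no worse by censoring every honest bid and injecting nothing, which yields revenue $\mu(\emptyset)=0$ by budget feasibility; hence $R(n)\ge 0$. All the content is in the upper bound $R(n)\le 0$, i.e.\ that in expectation every coin collected is burnt.

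Next I would record the constraints the three hypotheses impose. From Bayesian UIC and Myerson's lemma (\cref{lemma:myerson}), each honest user's expected payment is a fixed functional of the allocation rule; combined with weak symmetry and the i.i.d.\ assumption, in the honest $m$-user game every user has a common expected payment $q(m)$ and a common expected confirmation probability, so the expected number of confirmed bids is $\bar x(m)$ and each user is confirmed with probability $\bar x(m)/m$. From Bayesian MIC applied on the one hand to ``censor a subset of the honest users'' and on the other to ``inject a single fake bid $b$'' (then averaging the resulting inequality over $b\sim\mcal{D}$ and using that the fake coordinate has expected payment $q(n{+}1)$ by weak symmetry), one gets that $R$ is non-decreasing and that $R(n)\ge R(n{+}1)-q(n{+}1)$; together these sandwich the increments, $0\le R(m{+}1)-R(m)\le q(m{+}1)$. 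From Bayesian $1$-SCP, the honest strategy weakly dominates every deviation by a coalition of the miner and one user whose true value is drawn from $\mcal{D}$ (the coalition may censor honest users and inject fake bids, and these inequalities may again be averaged over randomized coalition bids).

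The crux is to choose the $1$-SCP deviation that makes the miner-revenue term resurface. I would have the coalition of the miner and user $1$ proceed as follows: the miner censors one honest user, and user $1$ submits a \emph{fresh} bid $b\sim\mcal{D}$ instead of its true value $v_1\sim\mcal{D}$. A censored honest bid is thereby replaced by an i.i.d.\ draw, so the mechanism still sees an input distributed as $\mcal{D}^{n-1}$; by weak symmetry user $1$'s submitted bid is confirmed with probability $\bar x(n{-}1)/(n{-}1)$ and has expected payment $q(n{-}1)$, and since $v_1$ is independent of whether that bid is confirmed, the coalition's expected gain from user $1$ is $\tfrac{\bar x(n-1)}{n-1}\E[\mcal{D}]$ while its revenue contribution is $R(n{-}1)$. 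Bayesian $1$-SCP then yields an inequality of the shape $q^{\mathrm{util}}(n)+R(n)\ \ge\ \tfrac{\bar x(n-1)}{n-1}\E[\mcal{D}]-q(n{-}1)+R(n{-}1)$, where $q^{\mathrm{util}}(n)$ is a user's honest expected utility; I would combine this with the analogous deviations that vary how many honest users are censored (and optionally re-inject i.i.d.\ fake bids to restore the user count), together with the Myerson payment identities and the monotonicity of $R$, to squeeze $R(n)$ between expressions that must both equal $0$. The main obstacle — where I expect essentially all of the difficulty to sit — is exactly this bookkeeping: tracking the expected payments of the coalition's genuine versus injected bids under weak symmetry, and picking the right family of censor/inject deviations so that the miner-revenue term $R(\cdot)$ survives with the sign needed to force $R(n)=0$ rather than cancelling. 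No conceptual ingredient beyond Myerson's lemma, weak symmetry, and the i.i.d.\ structure should be needed; it is the combinatorial accounting that has to be done carefully.
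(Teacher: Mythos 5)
You have set up true inequalities, but the proof is missing at its decisive step, and the specific route you sketch cannot be completed. (For context: the paper does not prove this lemma itself---it imports it as Theorem C.5 of the cited work of Shi, Chung, and Wu---so the comparison is with that argument.) Every deviation in your proposed family is \emph{value-oblivious}: the miner censors honest bids or injects i.i.d.\ fakes, and the colluding user replaces its bid by a fresh draw from $\mcal{D}$ that is independent of its true value $v_1$. Such constraints provably cannot force $\E_{\bfb\sim\mcal{D}^n}[\mu(\bfb)]=0$. Consider a posted-price auction with price $p$ in which every bid of at least $p$ is confirmed, pays $p$, and all payments go to the miner. This mechanism is Bayesian UIC and Bayesian MIC (censoring loses revenue; a fake bid's payment is an internal transfer back to the miner), and one can check directly that it satisfies every inequality you list---$R$ non-decreasing, $R(n)\ge R(n+1)-q(n+1)$, and your censor-and-rebid coalition inequality, the last because $\E[v\,\mathbf{1}_{v\ge p}]\ge \E[v]\Pr[v\ge p]$---yet its expected miner revenue is $n\,p\Pr_{v\sim\mcal{D}}[v\ge p]>0$. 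It fails Bayesian $1$-SCP only through \emph{value-dependent} deviations (a colluding user with $v_1<p$ overbids to $p$; its payment is a within-coalition transfer), which your family never uses. So the ``bookkeeping'' you defer to cannot close the argument no matter how the censor/inject counts are chosen.

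The missing ingredient is exactly the one the standard proof uses: apply Bayesian $1$-SCP to the coalition of the miner and user $i$ for \emph{all} pairs (true value $v$, deviation bid $b$), combine with Myerson's payment identity (\cref{lemma:myerson}) to bound $\E_{\bfb_{-i}\sim\mcal{D}^{n-1}}\!\left[\mu(\bfb_{-i},b)\right]-\E_{\bfb_{-i}\sim\mcal{D}^{n-1}}\!\left[\mu(\bfb_{-i},v)\right]$ by $(v-b)\bigl(\overline{x}_i(v)-\overline{x}_i(b)\bigr)$, and then run the subdivision-and-limit (``sandwich'') argument---the same device as the paper's \cref{lem:util-not-change} and \cref{lem:burning-myerson}---to conclude that the expected miner revenue is \emph{independent} of any single user's bid. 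From there the conclusion is short: setting one bid to $0$ and using Bayesian MIC (injecting a single fake $0$-bid, which pays nothing by individual rationality) gives $R(n)\le R(n-1)\le\cdots\le R(1)$, and $R(1)$ equals the revenue on the single bid vector $(0)$, which is $0$ by individual rationality and budget feasibility; combined with your (correct) lower bound $R(n)\ge 0$ this yields the lemma. Without this value-dependent sandwich step your proposal has a genuine gap, not merely unfinished accounting.
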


\begin{lemma}
	\label{lem:util-not-change}
	Suppose each user's true value is i.i.d.~sampled from a distribution $\mcal{D}$.
	Consider any (possibly randomized) TFM satisfying Bayesian UIC and global SCP.
	Then, 
	% for any bid vector $\bfb_{-i}$, 
	\Hao{Double check the expectation.}
	for any $n\geq 1$, for any user $i$,
	and for any $b_i, b'_i$, 
	it holds that \[
		\underset{\bfb_{-i}\sim\mcal{D}^{n}}{\E}[\mathsf{SW}_{-i}(\bfb_{-i}, b_i)]
		=
		\underset{\bfb_{-i}\sim\mcal{D}^{n}}{\E}[\mathsf{SW}_{-i}(\bfb_{-i}, b'_i)].
	\] 
\end{lemma}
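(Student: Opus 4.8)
The plan is to reduce the claim to two ingredients---Bayesian UIC controls how user~$i$'s \emph{interim} utility (averaged over $\bfb_{-i}\sim\mcal{D}^{n}$) changes with its bid, and global SCP controls how the interim ``rest-of-welfare'' $\mathsf{SW}_{-i}$ changes with user~$i$'s bid---and then to close the argument with a telescoping estimate over a fine partition of $[b_i,b'_i]$.

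First I would fix notation. Write $\overline{x}_i(b):=\E_{\bfb_{-i}\sim\mcal{D}^{n}}[x_i(\bfb_{-i},b)]$ and $\overline{p}_i(b):=\E_{\bfb_{-i}\sim\mcal{D}^{n}}[p_i(\bfb_{-i},b)]$ for the interim confirmation probability and expected payment of a user who bids $b$ against $n$ i.i.d.\ honest bids, and set $\mathsf{SW}_{-i}(\bfb):=\mathsf{SW}(\bfb)-\mathsf{util}_i(\bfb)$; note that $\mathsf{SW}_{-i}$ depends on the bid vector and on the honest users' true values but not on $v_i$, so it is well defined under any deviation of user~$i$. Applying Bayesian UIC (\cref{def:Bayesian-UIC}) once with true value $b_i$ and once with true value $b'_i$ yields the inequalities $b_i\overline{x}_i(b_i)-\overline{p}_i(b_i)\ge b_i\overline{x}_i(b'_i)-\overline{p}_i(b'_i)$ and $b'_i\overline{x}_i(b'_i)-\overline{p}_i(b'_i)\ge b'_i\overline{x}_i(b_i)-\overline{p}_i(b_i)$. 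Adding them shows $\overline{x}_i$ is monotone non-decreasing; combining them (the first to write the interim-utility gap, the second to bound the payment difference) gives $\E[\mathsf{util}^{(b_i)}_i(\bfb_{-i},b_i)]-\E[\mathsf{util}^{(b_i)}_i(\bfb_{-i},b'_i)]\le(b'_i-b_i)\bigl(\overline{x}_i(b'_i)-\overline{x}_i(b_i)\bigr)$, where $\mathsf{util}^{(b_i)}_i$ denotes user~$i$'s utility when its true value is $b_i$. (Equivalently this is Myerson's characterization, \cref{lemma:myerson}, applied to the interim rules $\overline{x}_i,\overline{p}_i$, which are incentive compatible by Bayesian UIC.)

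Next I would invoke global SCP (\cref{def:globalSCP}) at the true-value profile $(\bfb_{-i},b_i)$: the grand coalition could instead have user~$i$ bid $b'_i$ while everyone else bids truthfully, so honest social welfare is at least the social welfare of that deviation. Writing both sides with the decomposition $\mathsf{SW}=\mathsf{SW}_{-i}+\mathsf{util}^{(b_i)}_i$ and cancelling the common terms, one gets for every $\bfb_{-i}$ that $\mathsf{SW}_{-i}(\bfb_{-i},b'_i)-\mathsf{SW}_{-i}(\bfb_{-i},b_i)\le\mathsf{util}^{(b_i)}_i(\bfb_{-i},b_i)-\mathsf{util}^{(b_i)}_i(\bfb_{-i},b'_i)$; taking expectations over $\bfb_{-i}$ and substituting the bound from the previous step yields $\E[\mathsf{SW}_{-i}(\bfb_{-i},b'_i)]-\E[\mathsf{SW}_{-i}(\bfb_{-i},b_i)]\le(b'_i-b_i)\bigl(\overline{x}_i(b'_i)-\overline{x}_i(b_i)\bigr)$, valid for every pair $b_i,b'_i$.

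Finally, assuming without loss of generality $b_i<b'_i$ (the case $b_i=b'_i$ is trivial), I would partition $[b_i,b'_i]$ into $L$ equal subintervals $b_i=c_0<c_1<\dots<c_L=b'_i$, apply the previous inequality to each consecutive pair $(c_j,c_{j+1})$, and sum. Because $\overline{x}_i$ is monotone and takes values in $[0,1]$, the telescoping sum $\sum_{j}\bigl(\overline{x}_i(c_{j+1})-\overline{x}_i(c_j)\bigr)$ is at most $1$, so the total is at most $\frac{b'_i-b_i}{L}$, which tends to $0$ as $L\to\infty$; hence $\E[\mathsf{SW}_{-i}(\bfb_{-i},b'_i)]\le\E[\mathsf{SW}_{-i}(\bfb_{-i},b_i)]$. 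Running the same telescoping in the opposite direction gives the reverse inequality, so the two expectations coincide. The only non-routine step is this last telescoping, and the key point there is that monotonicity of the interim allocation rule (a consequence of Bayesian UIC) is exactly what keeps the accumulated slack bounded; everything else is bookkeeping around the $\mathsf{SW}=\mathsf{SW}_{-i}+\mathsf{util}_i$ split, being careful that $\mathsf{SW}_{-i}$ is always evaluated at user~$i$'s \emph{bid} while user~$i$'s utility keeps its \emph{true} value.
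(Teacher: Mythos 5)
Your proposal is correct and follows essentially the same route as the paper's proof: the same interim quantities $\overline{x}_i,\overline{p}_i$ and Bayesian-UIC bound $\E[\mathsf{util}_i(\bfb_{-i},b_i)-\mathsf{util}_i(\bfb_{-i},b'_i)]\le (b'_i-b_i)\bigl(\overline{x}_i(b'_i)-\overline{x}_i(b_i)\bigr)$, the same pointwise global-SCP inequality relating the change in $\mathsf{SW}_{-i}$ to user $i$'s utility loss, and the same telescoping over an equal partition of $[b_i,b'_i]$ followed by swapping the roles of $b_i$ and $b'_i$ to get equality. No gaps to report.
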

\begin{proof}
The proof is similar to the proof of \cref{lem:util-not-change-expost},
while the mechanism only satisfies Bayesian UIC instead of ex post UIC.
We define \[
\overline{x}_i(\cdot) = \underset{\bfb_{-i}\sim\mathcal{D}^{n}}{\E} [x_i(\bfb_{-i}, \cdot)], \quad 
\overline{p}_i(\cdot) = \underset{\bfb_{-i}\sim\mathcal{D}^{n}}{\E} [p_i(\bfb_{-i}, \cdot)].
\]
By Bayesian UIC, for any user $i$, when user $i$'s true value is $b_i$,
it must be 
\begin{align*}
\underset{\bfb_{-i}\sim\mathcal{D}^{n}}{\E} \left[\mathsf{util}_i(\bfb_{-i}, b_i) - \mathsf{util}_i(\bfb_{-i}, b'_i)\right]
& =
[b_i \cdot \overline{x}_i(b_i) - \overline{p}_i(b_i)] - [b_i \cdot \overline{x}_i(b'_i) - \overline{p}_i(b'_i)]\nonumber\\
& \leq
[b'_i - b_i][\overline{x}_i(b'_i) - \overline{x}_i(b_i)].
% \label{eq:util-not-change-1}
\end{align*}
By global SCP, for any bid vector $\bfb_{-i}$, when user $i$'s true value is $b_i$, it must be \[
	\mathsf{SW}_{-i}(\bfb_{-i}, b'_i) - \mathsf{SW}_{-i}(\bfb_{-i}, b_i)
	\leq \mathsf{util}_i(\bfb_{-i}, b_i) - \mathsf{util}_i(\bfb_{-i}, b'_i).
\]
Thus, we have \[
	\underset{\bfb_{-i}\sim\mathcal{D}^{n}}{\E}
	\left[\mathsf{SW}_{-i}(\bfb_{-i}, b'_i) - \mathsf{SW}_{-i}(\bfb_{-i}, b_i)\right]
	\leq [b'_i - b_i][\overline{x}_i(b'_i) - \overline{x}_i(b_i)].
\]

Now, for any $b_i$ and $b'_i$,
we can divide the interval $[b_i, b'_i]$ into $L$ equally sized segments $b_i^{(0)},\dots,b_i^{(L)}$ 
where $b_i^{(0)} = b_i$ and $b_i^{(L)} = b'_i$.
Then, we have 
\begin{align*}
	\underset{\bfb_{-i}\sim\mathcal{D}^{n}}{\E}
	\left[\mathsf{SW}_{-i}(\bfb_{-i}, b'_i) - \mathsf{SW}_{-i}(\bfb_{-i}, b_i)\right]
	&= \sum_{j=0}^{L-1} \underset{\bfb_{-i}\sim\mathcal{D}^{n}}{\E}
	\left[\mathsf{SW}_{-i}(\bfb_{-i}, b_i^{(j+1)}) - \mathsf{SW}_{-i}(\bfb_{-i}, b_i^{(j)})\right]\\
	% &\leq \sum_{j=0}^{L-1} \overline{\text{i-loss}}(b_i^{(j)}\rightarrow b_i^{(j+1)})\\
	&\leq \sum_{j=0}^{L-1}(b_j^{(j+1)} - b_j^{(j)}) \cdot
	\left[\overline{x}_i(b_i^{(j+1)}) - \overline{x}_i(b_i^{(j)})\right]\\
	&=\frac{b'_i - b_i}{L} \cdot 
	\left[\overline{x}_i(b'_i) - \overline{x}_i(b_i)\right].
	% \leq \frac{b'_i - b_i}{L}.
\end{align*}
By taking the limit for $L \rightarrow \infty$, we have 
$\underset{\bfb_{-i}\sim\mathcal{D}^{n}}{\E}\left[\mathsf{SW}_{-i}(\bfb_{-i}, b_i)\right]
\leq
\underset{\bfb_{-i}\sim\mathcal{D}^{n}}{\E}	\left[\mathsf{SW}_{-i}(\bfb_{-i}, b'_i)\right]$.
Because the argument above holds for any $b_i$ and $b'_i$,
it must be $\underset{\bfb_{-i}\sim\mathcal{D}^{n}}{\E}\left[\mathsf{SW}_{-i}(\bfb_{-i}, b_i)\right]
=
\underset{\bfb_{-i}\sim\mathcal{D}^{n}}{\E}	\left[\mathsf{SW}_{-i}(\bfb_{-i}, b'_i)\right]$.
\end{proof}

\begin{lemma}
\label{lem:add-user}
Suppose each user's true value is i.i.d.~sampled from a distribution $\mcal{D}$.
Consider any (possibly randomized) TFM satisfying Bayesian UIC and global SCP.
Then, 
for any $n\geq 1$, for any user $i$,
and for any $b_i$,
it holds that \[
	\underset{\bfb_{-i}\sim\mcal{D}^{n}}{\E}[\mathsf{SW}_{-i}(\bfb_{-i})]
	=
	\underset{\bfb_{-i}\sim\mcal{D}^{n}}{\E}[\mathsf{SW}_{-i}(\bfb_{-i}, b_i)].
\] 
\end{lemma}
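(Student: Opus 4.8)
The plan is to reduce everything to the single case $b_i = 0$ using \cref{lem:util-not-change}, which already asserts that $\underset{\bfb_{-i}\sim\mcal{D}^{n}}{\E}[\mathsf{SW}_{-i}(\bfb_{-i},b_i)]$ is the same for every $b_i$. Hence it suffices to establish, for each fixed $\bfb_{-i}$, the pointwise identity (expectation taken over the coins of the TFM)
\[
\E[\mathsf{SW}_{-i}(\bfb_{-i})] = \E[\mathsf{SW}_{-i}(\bfb_{-i},0)];
\]
the lemma then follows by averaging over $\bfb_{-i}\sim\mcal{D}^{n}$ and invoking \cref{lem:util-not-change}. I would prove this identity by two applications of global SCP (which, as remarked, coincides with its ex post version here), one in the world with $n+1$ users and one in the world with $n$ users.

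For the inequality $\E[\mathsf{SW}_{-i}(\bfb_{-i})]\le\E[\mathsf{SW}_{-i}(\bfb_{-i},0)]$, fix $\bfb_{-i}$ and consider the $(n+1)$-user world in which the users of $\bfb_{-i}$ bid honestly and the distinguished user $i$ has true value $0$. By individual rationality a user with true value $0$ pays nothing and thus has zero utility, so the honest social welfare equals $\E[\mathsf{SW}_{-i}(\bfb_{-i},0)]$. Now let the grand coalition deviate by having user $i$ submit no bid at all --- an admissible move in the trusted-hardware model, where a strategic user may submit any number of bids, including zero. The mechanism then sees exactly $\bfb_{-i}$, user $i$ contributes nothing to welfare, and the social welfare of this deviation is $\E[\mathsf{SW}_{-i}(\bfb_{-i})]$; global SCP yields the inequality.

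For the reverse inequality, fix $\bfb_{-i}$ and consider the $n$-user world in which everyone bids honestly, so that the honest social welfare is $\E[\mathsf{SW}_{-i}(\bfb_{-i})]$. Let the grand coalition deviate by having the miner inject a single fake bid equal to $0$, so that the allocation, payment, and miner-revenue rules are run on the vector $(\bfb_{-i},0)$, exactly as in the $(n+1)$-user world above. The fake bid has true value $0$ and, by individual rationality (equivalently, the normalization condition of Myerson's lemma), pays $0$, so it adds nothing to the confirmed value and causes nothing to be burnt; hence the social welfare of this deviation equals $\E[\mathsf{SW}_{-i}(\bfb_{-i},0)]$. Global SCP gives $\E[\mathsf{SW}_{-i}(\bfb_{-i})]\ge\E[\mathsf{SW}_{-i}(\bfb_{-i},0)]$, and combining the two inequalities yields the pointwise identity. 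I expect the only delicate point to be this last accounting step: one must verify that the injected $0$-bid contributes exactly zero to social welfare --- both its value and its payment, hence its burnt amount, are zero --- so that the deviation's welfare is genuinely $\mathsf{SW}_{-i}(\bfb_{-i},0)$ rather than merely a one-sided bound on it; one should also record that both ``dropping a bid'' and ``injecting a fake bid'' are legitimate coalition strategies in the model of \cref{sec:trusted-hardware}.
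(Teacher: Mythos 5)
Your proposal is correct and follows essentially the same route as the paper's own proof: reduce to the case $b_i=0$ via \cref{lem:util-not-change}, then establish the two directions by global SCP, once by having the grand coalition inject a fake $0$-bid in the $n$-user world and once by having a zero-value user drop out in the $(n+1)$-user world, using individual rationality to see that a $0$-bid pays nothing and contributes zero utility. The only difference is presentational (you state two inequalities where the paper argues by contradiction on the two possible strict orderings), so nothing further is needed.
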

\begin{proof}
First, we show that $\underset{\bfb_{-i}\sim\mcal{D}^{n}}{\E}[\mathsf{SW}_{-i}(\bfb_{-i})]
=
\underset{\bfb_{-i}\sim\mcal{D}^{n}}{\E}[\mathsf{SW}_{-i}(\bfb_{-i}, 0_i)]$.
To see this, if $\underset{\bfb_{-i}\sim\mcal{D}^{n}}{\E}[\mathsf{SW}_{-i}(\bfb_{-i})]
<
\underset{\bfb_{-i}\sim\mcal{D}^{n}}{\E}[\mathsf{SW}_{-i}(\bfb_{-i}, 0_i)]$,
then consider a world consists of $n$ users.
The global coalition can inject a fake bid $0_i$, and the social welfare increases.
On the other hand, if $\underset{\bfb_{-i}\sim\mcal{D}^{n}}{\E}[\mathsf{SW}_{-i}(\bfb_{-i})]
>
\underset{\bfb_{-i}\sim\mcal{D}^{n}}{\E}[\mathsf{SW}_{-i}(\bfb_{-i}, 0_i)]$,
then consider a world consists of $n+1$ users where user $i$'s true value is zero.
Because user $i$'s utility is zero under an honest execution, 
we have $\underset{\bfb_{-i}\sim\mcal{D}^{n}}{\E}[\mathsf{SW}_{-i}(\bfb_{-i}, 0_i)] 
= \underset{\bfb_{-i}\sim\mcal{D}^{n}}{\E}[\mathsf{SW}(\bfb_{-i}, 0_i)]$.
If user $i$ drops out, the social welfare becomes $\underset{\bfb_{-i}\sim\mcal{D}^{n}}{\E}[\mathsf{SW}_{-i}(\bfb_{-i})]$,
which is larger than the honest execution.
Thus, it must be $\underset{\bfb_{-i}\sim\mcal{D}^{n}}{\E}[\mathsf{SW}_{-i}(\bfb_{-i})]
=
\underset{\bfb_{-i}\sim\mcal{D}^{n}}{\E}[\mathsf{SW}_{-i}(\bfb_{-i}, 0_i)]$.
The lemma directly follows from the argument above and \cref{lem:util-not-change}.
\end{proof}

\begin{lemma}
\label{lem:user-same-utility}
Suppose each user's true value is i.i.d.~sampled from a distribution $\mcal{D}$,
and let $u = \underset{b\sim\mcal{D}^{n}}{\E}[\mathsf{util}(b)]$ 
denote the expected utility of the lone user if only one user submits a bid drawn from $\mcal{D}$.
Then, for any possibly randomized TFM that simultaneously satisfy Bayesian UIC,
Bayesian MIC, Bayesian $1$-SCP,
and global SCP,
and for any $n$, it must be \[
\underset{\bfb\sim\mcal{D}^{n}}{\E}[\mathsf{SW}(\bfb)] = n\cdot u.
\]
\end{lemma}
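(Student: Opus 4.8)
The plan is to induct on $n$, deriving the claim from zero expected miner revenue (\cref{lem:zero-miner-revenue}) together with the additivity-of-welfare property supplied by \cref{lem:add-user}; no further strategic reasoning is needed, since all of the game theory is already packed into those two lemmas. Throughout, let $S_n := \underset{\bfb\sim\mcal{D}^n}{\E}[\mathsf{SW}(\bfb)]$, and recall that $u$ denotes the lone user's expected utility.

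\emph{Base case $n=1$.} By \cref{lem:zero-miner-revenue} applied with one user, the expected miner revenue is $0$, so the whole social welfare is just the lone user's utility: $S_1 = u = 1\cdot u$.

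\emph{Inductive step.} Assume $S_n = n\cdot u$, and consider $n+1$ i.i.d.\ users. Fix an arbitrary user $i\in[n+1]$ and write $\mathsf{SW}(\bfb) = \mathsf{SW}_{-i}(\bfb_{-i}, b_i) + \mathsf{util}_i(\bfb_{-i}, b_i)$. Taking the expectation over $\bfb\sim\mcal{D}^{n+1}$ and using \cref{lem:add-user}, which tells us that $\underset{\bfb_{-i}\sim\mcal{D}^{n}}{\E}[\mathsf{SW}_{-i}(\bfb_{-i},b_i)]$ is independent of $b_i$ and equals $\underset{\bfb_{-i}\sim\mcal{D}^{n}}{\E}[\mathsf{SW}_{-i}(\bfb_{-i})]$, the contribution of the first summand is $\underset{\bfb_{-i}\sim\mcal{D}^{n}}{\E}[\mathsf{SW}_{-i}(\bfb_{-i})]$. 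Since in that last expression user $i$ is absent altogether, it is literally the social welfare of an $n$-user instance, so by the inductive hypothesis it equals $S_n = n\cdot u$. Hence $\underset{\bfb\sim\mcal{D}^{n+1}}{\E}[\mathsf{util}_i(\bfb)] = S_{n+1} - n\cdot u$ for every $i\in[n+1]$; in particular all $n+1$ users have the same expected utility. Summing over $i$ and applying \cref{lem:zero-miner-revenue} for $n+1$ users (so that $\sum_{i} \E[\mathsf{util}_i(\bfb)] = S_{n+1}$) gives $S_{n+1} = (n+1)\big(S_{n+1} - n\cdot u\big)$, which rearranges to $S_{n+1} = (n+1)\cdot u$, closing the induction.

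The only delicate points are bookkeeping ones: one must check that $\mathsf{SW}_{-i}(\bfb_{-i})$ evaluated on an $n$-coordinate vector coincides with the $n$-user social welfare (so that the inductive hypothesis is legitimately applicable), and that \cref{lem:zero-miner-revenue} is quoted at the correct number of users in each of its two invocations. I expect these to be routine, so there is no serious obstacle here once \cref{lem:add-user} and \cref{lem:zero-miner-revenue} are in hand.
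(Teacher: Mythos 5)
Your proposal is correct and follows essentially the same route as the paper's own proof: induction on $n$, the decomposition $\mathsf{SW} = \mathsf{SW}_{-i} + \mathsf{util}_i$, \cref{lem:add-user} to remove user $i$, and \cref{lem:zero-miner-revenue} for zero miner revenue, with only a cosmetic difference in the last step (the paper concludes each user's expected utility equals $u$ directly, while you sum over users and solve for $S_{n+1}$). The ``routine bookkeeping'' you flag---identifying $\underset{\bfb_{-i}\sim\mcal{D}^{n}}{\E}[\mathsf{SW}_{-i}(\bfb_{-i})]$ with the $n$-user social welfare regardless of which identities remain---is exactly where the paper explicitly invokes weak symmetry (\cref{def:symmetry}), so you should cite that assumption there.
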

\begin{proof}
We show the lemma by induction on the number of users $n$.
By \cref{lem:zero-miner-revenue},
the miner revenue must be zero.
Thus, $\underset{b\sim\mcal{D}}{\E}[\mathsf{SW}(b)] = \underset{b\sim\mcal{D}}{\E}[\mathsf{util}(b)] = u$,
and the base case $n=1$ holds.

Next, we show the induction step.
Suppose the lemma holds for $k$ users, i.e., $\underset{\bfb\sim\mcal{D}^{k}}{\E}[\mathsf{SW}(\bfb)] = k\cdot u$ for an integer $k$.
We first show that each user's expected utility is the same regardless of the identity.
By weak symmetry, for any $n$,
the social welfare $\underset{\bfb\sim\mcal{D}^{n}}{\E}[\mathsf{SW}(\bfb)]$ is independent of the user identities.
By \cref{lem:add-user},
for any user identity $i$, 
we have $\underset{\bfb\sim\mcal{D}^{n}}{\E}[\mathsf{SW}(\bfb)] 
= \underset{\bfb\sim\mcal{D}^{n+1}}{\E}[\mathsf{SW}_{-i}(\bfb)]$.
For any bid vector $\bfb$, we have $\mathsf{SW}(\bfb) = \mathsf{SW}_{-i}(\bfb) + \mathsf{util}_i(\bfb)$.
Thus, we have $\underset{\bfb\sim\mcal{D}^{n+1}}{\E}[\mathsf{util}_i(\bfb)]
= \underset{\bfb\sim\mcal{D}^{n+1}}{\E}[\mathsf{SW}(\bfb)]  - \underset{\bfb\sim\mcal{D}^{n}}{\E}[\mathsf{SW}(\bfb)] $,
which implies that each user's expected utility is the same regardless of the identity
for any $n$.
% In other words, for any $n$,
% each user's expected utility is the same regardless of the identity.

By assumption, we have $\underset{\bfb\sim\mcal{D}^{k}}{\E}[\mathsf{SW}(\bfb)] = k\cdot u$.
Because the miner revenue is always zero and $\underset{\bfb\sim\mcal{D}^{k}}{\E}[\mathsf{SW}(\bfb)] 
= \underset{\bfb\sim\mcal{D}^{k+1}}{\E}[\mathsf{SW}_{-i}(\bfb)]$,
we have $\underset{\bfb\sim\mcal{D}^{k+1}}{\E}[\mathsf{util}_i(\bfb)] = u$ for any user $i$.
Finally, $\underset{\bfb\sim\mcal{D}^{k+1}}{\E}[\mathsf{util}_i(\bfb)] = 
\underset{\bfb\sim\mcal{D}^{k+1}}{\E}[\mathsf{util}_i(\bfb)] + \underset{\bfb\sim\mcal{D}^{k}}{\E}[\mathsf{SW}(\bfb)]
=  k\cdot u + u = (k+1) \cdot u$.
Thus, the lemma holds for $k+1$ users, and we conclude the proof.
\end{proof}

% \begin{theorem}
% Suppose each user's true value is i.i.d.~sampled from a bounded distribution $\mcal{D}$
% over $[0, M]$,
% and suppose the block size is finite.
% Then, for any possibly randomized TFM that simultaneously satisfy Bayesian UIC,
% Bayesian MIC, Bayesian $1$-SCP,
% and global SCP,
% the expected social welfare must be zero.
% % Then, no non-trivial, possibly randomized truthful TFM can simultaneously satisfy Bayesian UIC,
% % Bayesian MIC, Bayesian $1$-SCP,
% % and global SCP when the block size is finite.
% \end{theorem}
% \begin{proof}
% For the sake of contradiction, suppose there exists a number $n$ of users
% such that the expected social welfare $U \coloneqq \underset{\bfb\sim\mcal{D}^{n}}{\E}[\mathsf{SW}(\bfb)]$ is positive.
% By \cref{lem:user-same-utility}, we have $u \coloneqq \underset{b\sim\mcal{D}^{n}}{\E}[\mathsf{util}(b)] = U/n > 0$.
% Now, let $k$ be the block size,
% and consider any natural number $N > MK/u$.
% By \cref{lem:user-same-utility}, when there are $N$ users whose values are i.i.d.~sampled from $\mcal{D}$,
% the social welfare is $N  u > Mk$.
% However, since a block can only contain $k$ transactions and each user's utility is upper bounded by $M$,
% the social welfare is upper bounded by $Mk$.
% Thus, we reach a contradiction.
% \end{proof}

\begin{lemma}
\label{lem:largest-is-small}
Let $\mcal{D}$ be any non-negative distribution with bounded mean.
Given any natural numbers $n$,
let $X_1,\dots, X_n$ be i.i.d.~random variables drawn from $\mcal{D}$,
and let $Y_n = \max(X_1,\dots, X_n)$.
Then, it holds that \[
\lim_{n \rightarrow \infty}\E[Y_n]/n = 0.
\]
\end{lemma}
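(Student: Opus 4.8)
The plan is to reduce the whole statement to the tail (survival) function of $\mcal{D}$. Write $F(t) = \Pr[X_1 \le t]$ for the CDF. Since $Y_n \le t$ exactly when every $X_i \le t$, we have $\Pr[Y_n \le t] = F(t)^n$, and hence, because $Y_n \ge 0$, the layer-cake representation of expectation gives
\[
\E[Y_n] = \int_0^\infty \Pr[Y_n > t]\,dt = \int_0^\infty \bigl(1 - F(t)^n\bigr)\,dt .
\]
The crude bound $\E[Y_n] \le \sum_i \E[X_i] = n\,\E[X_1]$ only shows $\E[Y_n]/n$ is bounded, so I would instead exploit the elementary inequality $1 - x^n \le n(1-x)$, valid for all $x \in [0,1]$ (factor $1 - x^n = (1-x)(1 + x + \cdots + x^{n-1})$ and bound each of the $n$ summands by $1$). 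This is what lets the ``$n$'' be canceled against the one from dividing by $n$.

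Next I would fix an arbitrary threshold $M > 0$ and split the integral at $M$. On $[0,M]$ the integrand is at most $1$, so $\int_0^M (1 - F(t)^n)\,dt \le M$. On $[M,\infty)$ apply the inequality with $x = F(t)$ to get $1 - F(t)^n \le n\bigl(1 - F(t)\bigr)$, so $\int_M^\infty (1 - F(t)^n)\,dt \le n\int_M^\infty \bigl(1 - F(t)\bigr)\,dt$. Dividing by $n$,
\[
\frac{\E[Y_n]}{n} \;\le\; \frac{M}{n} \;+\; \int_M^\infty \bigl(1 - F(t)\bigr)\,dt .
\]
Finally I would use that $\mcal{D}$ has bounded mean: $\int_0^\infty \bigl(1 - F(t)\bigr)\,dt = \E[X_1] < \infty$, so the tail $g(M) := \int_M^\infty \bigl(1 - F(t)\bigr)\,dt \to 0$ as $M \to \infty$. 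Given $\eps > 0$, choose $M$ with $g(M) < \eps/2$, and then for every $n > 2M/\eps$ the displayed bound yields $\E[Y_n]/n < \eps$. Since $\eps$ was arbitrary and $\E[Y_n]/n \ge 0$, this proves $\lim_{n\to\infty}\E[Y_n]/n = 0$.

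This is essentially a routine fact, so there is no deep obstacle; the two points that require a little care are (i) the order of quantifiers — $M$ must be chosen first, depending on $\eps$, and only afterward is $n$ sent to infinity — and (ii) justifying $\E[Y_n] = \int_0^\infty \Pr[Y_n > t]\,dt$ for a general (not necessarily absolutely continuous) non-negative distribution, which is just the standard layer-cake identity and needs no extra hypotheses. An alternative route would show $Y_n/n \to 0$ in probability via $\Pr[Y_n/n > \eps] \le n\Pr[X_1 > \eps n] \to 0$ (using that finite mean forces $t\Pr[X_1>t]\to 0$) and then upgrade to convergence of expectations, but the direct CDF computation above is shorter and self-contained.
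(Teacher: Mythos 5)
Your proof is correct, and it is essentially the paper's own argument in integrated form: your split of the integral at $M$ mirrors the paper's truncation $X_i = A^t_i + B^t_i$, your inequality $1 - F(t)^n \le n\bigl(1 - F(t)\bigr)$ is exactly the union bound that the paper realizes as $\max_i B^t_i \le \sum_i B^t_i$, and $\int_M^\infty \bigl(1-F(t)\bigr)\,dt = \E[(X_1 - M)^+]$ plays the role of $\E[B^t_1]$. Both arguments then fix the threshold first (depending on $\eps$, using finite mean) and only afterwards send $n \to \infty$, so there is nothing to fix.
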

\begin{proof}
Given any real number $t > 0$,
for all $i \in [n]$, let $A^{t}_i$ and $B^{t}_i$ be the random variables defined as follows:
\[
A^{t}_i =
\begin{cases}
X_i, & \text{if } X_i \leq t,\\
0, & \text{otherwise};\\
\end{cases}
\text{ and }
B^{t}_i =
\begin{cases}
X_i, & \text{if } X_i > t,\\
0, & \text{otherwise}.\\
\end{cases}
\]
By definition, we have $X_i = A^{t}_i + B^{t}_i$.
Thus, we have \[
Y_n = \max(X_1,\dots, X_n) \leq \max(A^{t}_1,\dots, A^{t}_n) + \max(B^{t}_1,\dots, B^{t}_n)
\leq t + \max(B^{t}_1,\dots, B^{t}_n).
\]

Next, fix any $\epsilon > 0$.
Because the mean of $\mcal{D}$ is bounded,
there exists a threshold $t$ such that $\E[B^{t}_1] \leq \epsilon$.
Because $\max(B^{t}_1,\dots, B^{t}_n) \leq \sum_{i=1}^n B^{t}_i$,
we have \[
\E[\max(B^{t}_1,\dots, B^{t}_n)] \leq \sum_{i=1}^n \E[B^{t}_i] = n \cdot \E[B^{t}_1] \leq n \cdot \epsilon.
\]

Combining the arguments above, we have \[
\E[Y_n] \leq t + \E[\max(B^{t}_1,\dots, B^{t}_n)] \leq t + n \cdot \epsilon.
\]
Thus, we have \[
	\lim_{n \rightarrow \infty}\frac{\E[Y_n]}{n} \leq \epsilon.
\]
Because the argument holds for any $\epsilon > 0$,
we conclude that $\lim_{n \rightarrow \infty}\E[Y_n]/n = 0$.
\end{proof}

\begin{theorem}
Let $\mcal{D}$ be any non-negative distribution with bounded mean.
Suppose each user's true value is i.i.d.~sampled from $\mcal{D}$,
and suppose the block size is finite.
Then, for any possibly randomized TFM that simultaneously satisfy Bayesian UIC,
Bayesian MIC, Bayesian $1$-SCP,
and global SCP,
the expected social welfare must be zero.
\end{theorem}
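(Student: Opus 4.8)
The plan is to combine the structural lemmas just proved with the finiteness of the block size and the sublinear growth of the maximum of i.i.d.\ draws. By \cref{lem:user-same-utility}, for every $n$ we have $\E_{\bfb\sim\mcal{D}^n}[\mathsf{SW}(\bfb)] = n\cdot u$, where $u$ is the expected utility of a lone honest user. Hence it suffices to prove $u = 0$, since then the expected social welfare vanishes for every $n$ and, a fortiori, is zero.

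First I would derive an upper bound on the expected social welfare from the block-size constraint. Let $k$ be the finite block size. In an honest execution with true values $\bfb = (b_1,\dots,b_n)$, at most $k$ transactions are confirmed, and by definition $\mathsf{SW}(\bfb)$ equals the total true value of the confirmed transactions minus the amount burned. Budget feasibility of the miner-revenue rule together with individual rationality of the payment rule guarantees that the burned amount is non-negative (indeed it never exceeds the total confirmed value), so $\mathsf{SW}(\bfb)$ is at most $k$ times the largest true value. Writing $Y_n = \max(b_1,\dots,b_n)$, taking expectations over $\bfb\sim\mcal{D}^n$ gives $\E_{\bfb\sim\mcal{D}^n}[\mathsf{SW}(\bfb)] \le k\cdot\E[Y_n]$.

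Combining the two estimates yields $n\cdot u \le k\cdot\E[Y_n]$, i.e.\ $u \le k\cdot\E[Y_n]/n$ for every $n$. Letting $n\to\infty$ and invoking \cref{lem:largest-is-small} (applicable since $\mcal{D}$ has bounded mean), the right-hand side tends to $0$, so $u \le 0$. Conversely, individual rationality of the payment rule ensures that a lone honest user's utility is always non-negative, hence $u \ge 0$. Therefore $u = 0$, and by \cref{lem:user-same-utility} we conclude $\E_{\bfb\sim\mcal{D}^n}[\mathsf{SW}(\bfb)] = 0$ for every $n$.

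Given the machinery already in place (\cref{lem:zero-miner-revenue,lem:util-not-change,lem:add-user,lem:user-same-utility,lem:largest-is-small}), this last step is short; the only place that calls for a moment's care is the reduction of social welfare to the confirmed true values, where one uses that burning can only decrease social welfare. I expect no real obstacle here --- the substantive work was in establishing those lemmas, particularly the linearity statement \cref{lem:user-same-utility} and the quotation of \cref{lem:zero-miner-revenue} from \cite{crypto-tfm}.
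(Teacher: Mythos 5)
Your proposal is correct and follows essentially the same route as the paper: it combines \cref{lem:user-same-utility} (expected social welfare equals $n\cdot u$) with the bound $\mathsf{SW}\le k\cdot\max_i b_i$ and \cref{lem:largest-is-small} to force $u=0$. The only cosmetic difference is that you argue directly (with the explicit remark $u\ge 0$ from individual rationality) whereas the paper phrases the same estimate as a proof by contradiction with a fixed $\epsilon<u/k$.
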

\begin{proof}
Let $k$ be the block size, and let $m$ be the mean of $\mcal{D}$.
For the sake of contradiction, suppose there exists a number $n$ of users
such that the expected social welfare $U \coloneqq \underset{\bfb\sim\mcal{D}^{n}}{\E}[\mathsf{SW}(\bfb)]$ is positive.
By \cref{lem:user-same-utility}, we have $u \coloneqq \underset{b\sim\mcal{D}^{n}}{\E}[\mathsf{util}(b)] = U/n > 0$.

\cref{lem:largest-is-small} implies that for any $\epsilon > 0$,
and for sufficiently large $N$,
when there are $N$ users whose values are i.i.d.~sampled from $\mcal{D}$,
the expectation of the largest user's value is strictly smaller than $N \epsilon$.
Because only at most $k$ users can be confirmed,
the expected social welfare is strictly smaller than $Nk\epsilon$.
Let $\epsilon < u / k$.
By \cref{lem:user-same-utility}, when there are $N$ users,
the expected social welfare is $N u > Nk\epsilon$,
which leads to a contradiction.
\end{proof}

\section{Static Revelation Principle for Transaction Fee Mechanisms}
\label{sec:revelation}
% Informally speaking, the {\it revelation principle} in 
% traditional auction theory says that any mechanism
% can be simulated by an equivalent direct-revelation mechanism.
Informally speaking, the {\it revelation principle} says that any mechanism
can be simulated by an equivalent \truthful{} mechanism.
Whenever the revelation principle holds, without loss
of generality, we may assume that the 
users' honest bidding strategy is simply truth-telling.
We focus here on a revelation principle for ``static'' mechanisms in
which users interact with a mechanism simply by submitting a bid.%
%
%In the blockchain environment, a bid is associated with a transaction.
%Consequently, a TFM is a ``one-round'' mechanism where each user only submits %the bid once,
%so the result in this section is named ``static'' revelation
%principle.
\footnote{More generally, typical revelation principles in
  auction theory transform non-direct mechanisms (in which the
  action space may be different from the valuation space, for example
  due to multiple rounds of user-mechanism interaction) into a direct mechanisms.}
%, where a non-direct mechanism allows a bidder to interact with the auctioneer or other bidders multiple times.}

TFMs differ from traditional auctions in that they additionally
need to satisfy MIC and collusion-resilience, 
and also the separation of the inclusion rule (executed
by the miner) and the confirmation/payment rules (executed
by the blockchain).
Therefore, revelation principles are not ``automatic.''
%one needs to be careful 
%when the revelation principle holds. 
In this section, we prove that for any fixed $c$, the revelation
principle holds for TFMs that must satisfy 
UIC, MIC, and $c$-SCP.
As a direct corollary, the revelation principle
holds for TFMs that satisfy 
UIC, MIC, and global SCP.
The main subtlety 
in the proof is that when we bake the user's non-truth-telling
bidding rule $\beta$ into the mechanism itself, 
not only do we need to 
have the miner's inclusion rule 
execute $\beta$, 
the blockchain's confirmation/payment rules
must also double-check the correct enforcement of $\beta$ again.
Otherwise, a strategic miner 
may not honestly enforce $\beta$.

Notice that we focus on the mechanisms that satisfy $c$-SCP instead of OCA-proofness in this section.
$c$-SCP requires that following the honest bidding rule and the honest inclusion rule is a dominant strategy for a coalition consisting of the miner and at most $c$ users.
For a non-\truthful{} mechanism with the bidding rule $\beta$, $\beta$ is used by users to satisfy both the UIC and the $c$-SCP conditions.
In \cref{sec:revelation-single}, we assume the honest bidding rule $\beta$ only outputs a single bid.
Later in \cref{sec:revelation-multiple}, we generalize the proof so that $\beta$ can output any non-negative number of bids.

\subsection{Static Revelation Principle: Bidding Rules That Output Single Bid}
\label{sec:revelation-single}
The outcome of a TFM is defined as a tuple $(\bfx, \bfp, \mu)$
where $\bfx \in \{0, 1\}^t$ is a bit-vector indicating 
whether each bid is confirmed or not, $\bfp \in \R_{\geq 0}^t$
is the vector of payments for all bids, and 
$\mu \in \R_{\geq 0}$
is the miner's revenue.

\begin{theorem}[Static revelation principle]
\label{thm:revelation}
Let $c$ be any natural number.
Suppose $\Pi$ is a non-\truthful{} TFM for block size $k$ 
that is UIC, MIC, and $c$-SCP with an individually rational bidding rule $\beta$.
If $\beta$ always outputs a single bid,
then there exists a \truthful{} TFM $\Pi'$ for block size $k$
that is UIC, MIC, and $c$-SCP such that 
1) the honest bidding rule is truth-telling, 
and 2) given any vector of true values, the outcome under 
an honest execution of $\Pi$ is identically distributed as the 
outcome under an honest execution of $\Pi'$.
\end{theorem}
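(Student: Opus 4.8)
The plan is to construct $\Pi'$ by ``baking $\beta$ into the mechanism'' at \emph{every} layer --- not just the miner's inclusion rule, but also the blockchain-executed confirmation, payment, and miner-revenue rules. Concretely: the honest bidding rule of $\Pi'$ is the identity (users simply report their true values); the miner's honest inclusion rule in $\Pi'$, on receiving a vector of real-valued reports $\bfb = (b_1,\dots,b_t)$, forms $\widehat{\bfb} = (\beta(b_1),\dots,\beta(b_t))$, runs $\Pi$'s inclusion rule on $\widehat{\bfb}$ (using $\Pi$'s coins), and includes exactly those original reports $b_i$ whose images were selected, resolving ties by position just as $\Pi$ does; and the blockchain, given any block $\bfB$ of real-valued bids, applies $\beta$ coordinatewise to obtain $\widehat{\bfB}$ and then runs $\Pi$'s confirmation, payment, and miner-revenue rules on $\widehat{\bfB}$ with $\Pi$'s coins. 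The key invariant this construction buys is: \emph{for any block $\bfB$ whatsoever, the outcome $(\bfx,\bfp,\mu)$ that $\Pi'$ produces on $\bfB$ equals the outcome $\Pi$ produces on $\beta(\bfB)$}, with the randomness literally coupled. In particular a strategic miner in $\Pi'$ cannot dodge the $\beta$-transformation by deviating from the inclusion rule, since the honest blockchain re-applies $\beta$ to whatever block it sees --- this is exactly the subtlety flagged before the theorem.

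With the invariant in hand, the two required properties of $\Pi'$ and the three incentive-compatibility conditions all follow by ``pulling back'' through $\beta$. For the outcome equivalence (and hence individual rationality and budget feasibility of $\Pi'$, which ride along with it): an honest run of $\Pi'$ on true values $\bfv$ has each user $i$ report $v_i$, hence effective bid $\beta(v_i)$, hence reproduces verbatim an honest run of $\Pi$ on $\bfv$ (everyone bids $\beta(v_i)$, honest inclusion, honest blockchain rules), so the outcome distributions coincide. For UIC of $\Pi'$: fix user $i$ with true value $v$ and any reports $\bfb_{-i}$ of the others; reporting $r$ in $\Pi'$ gives user $i$ exactly the utility that bidding $\beta(r)$ would give in $\Pi$ against the $\Pi$-bids $\beta(\bfb_{-i})$; since $\Pi$ is UIC, $\beta(v)$ maximizes this over \emph{all} bids, a fortiori over $\{\beta(r):r\in\R_{\geq 0}\}$, so reporting $v$ truthfully is dominant. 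For MIC and $c$-SCP the argument has the same shape: the miner's (resp.\ the coalition's) deviation space in $\Pi'$ maps under $\beta$ \emph{into} the corresponding deviation space in $\Pi$ (a subset of reported bids maps to a sub-vector of $\widehat{\bfb}$, and an injected fake $f$ maps to the fake $\beta(f)$), while the honest strategy in $\Pi'$ maps to the honest strategy of $\Pi$ evaluated on the $\beta$-images of whatever the remaining players reported; since that honest strategy is optimal in $\Pi$ by MIC (resp.\ $c$-SCP), it is optimal in the smaller $\Pi'$-deviation space as well, and in particular fake-bid injection cannot help. Taking the coalition to be all users and the miner yields the stated corollary for global SCP; non-triviality of $\Pi'$ is equivalent to that of $\Pi$ by the outcome equivalence.

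The step I expect to require the most care is the one the paper itself highlights: making sure that re-applying $\beta$ inside the blockchain-executed rules is both \emph{necessary} --- without it, a strategic miner in $\Pi'$ could forward the block as-is and obtain $\Pi$'s outcome on un-transformed bids, which could break MIC or $c$-SCP --- and \emph{consistent} with $\Pi$ remaining a valid TFM (block size is preserved since $|\widehat{\bfB}| = |\bfB| \leq k$; individual rationality and budget feasibility are inherited as above). A second, more minor point is weak symmetry of $\Pi'$: when $\beta$ is non-monotone the naive description of $\Pi'$ sorts by $\beta(\text{bid})$ rather than by bid amount, so one should either observe that $\beta$ may be taken non-decreasing (arguing from UIC and Myerson's lemma) or re-cast the sorting/tie-breaking step of $\Pi'$ so that the $\beta$-composed mechanism still satisfies \cref{def:symmetry}. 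Everything else is bookkeeping --- repackaging the coins of $\Pi$'s five rules into the five rules of $\Pi'$ so that the coupling in the equivalence claim is literal.
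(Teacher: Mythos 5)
Your construction and argument coincide with the paper's own proof: bake $\beta$ into both the miner's inclusion rule and the blockchain-executed confirmation/payment/miner-revenue rules, couple the randomness so that $\Pi'$ on any block $\bfB$ behaves exactly as $\Pi$ on $\beta(\bfB)$, and pull UIC, MIC, and $c$-SCP back through this correspondence exactly as the paper does (its MIC step is just your "deviation space maps into $\Pi$'s" argument phrased as a contradiction). The weak-symmetry caveat you raise is additional care the paper's proof does not itself take, so there is no gap relative to it.
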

\begin{proof}
Suppose we are given a TFM 
$\Pi = (\beta, \bfI, \bfC, \bfP, \mu)$
where $\beta: \R_{\geq 0} \rightarrow \R$ 
denotes the user's honest bidding rule, 
$\bfI$ denotes the inclusion rule,
$\bfC$ denotes the confirmation rule, 
$\bfP$ denotes the payment rule, 
and $\mu$ denotes the miner revenue rule.
Given a vector $\bfb = (b_1,\dots,b_t)$, let $\beta(\bfb)$ denote the element-wise application of $\beta$;
that is, $\beta(\bfb) = (\beta(b_1),\dots,\beta(b_t))$.
We construct a \truthful{} mechanism
$\Pi' = (\bfI', \bfC', \bfP', \mu')$
as follows: 
% Suppose a TFM $\Pi = (\bfI, \bfC, \bfP, \mu)$ satisfies $\beta$-UIC, MIC, and $\beta$-OCA-proof given an individually rational bidding strategy $\beta: \mathbb{R} \rightarrow \mathbb{R}$.
% We define another TFM $\Pi' = (\bfI', \bfC', \bfP', \mu')$ such that 
\begin{itemize}
\item 
Inclusion rule $\bfI'$: Given a bid vector $\bfb$, the miner selects the bids by using $\bfI$ as if the bid vector is $\beta(\bfb)$.
That is, the miner selects the bids from $\bfb$ if the corresponding bids in $\beta(\bfb)$ are selected by $\bfI$.
% In more detail, the miner first computes $\beta(\bfb)$, and computes $S = \bfI(\beta(\bfb))$.
% Then, the created block is $\bfb_S$.
\item 
Confirmation rule $\bfC'$, payment rule $\bfP'$ and miner revenue rule $\mu'$:
Given a created block $\bfB$, the mechanism applies $\bfC$ and $\bfP$ to $\beta(\bfB)$.
Then, a bid $b$ in the original block $\bfB$ is confirmed if and only if its corresponding bid $\beta(b)$ is confirmed when applying $\bfC$ to $\beta(\bfB)$, and its payment is the same as its corresponding bid when applying $\bfP$ to $\beta(\bfB)$.
The miner revenue is $\mu(\beta(\bfB))$; that is, applying $\mu$ as if the block is $\beta(\bfB)$.
\end{itemize}

Notice that $\beta$ is individually rational under $\Pi$,
so the payment corresponding to the bid $\beta(v)$ never exceeds $v$.
Thus, under the induced mechanism $\Pi'$, the payment never exceeds the bid, so the individual rationality of the payment rule is respected.
\Hao{Here we require IR to hold with probability $1$ instead of in expectation.}

Next, by directly checking the syntax, the user's confirmation probability and expected payment when submitting a bid $b$ under $\Pi'$ is identically distributed to submitting a bid $\beta(b)$ under $\Pi$ assuming the miner follows the inclusion rule.
Similarly, the miner's revenue when creating a block $\bfB$ under $\Pi'$is identically distributed to creating a block $\beta(\bfB)$ under $\Pi$.
Thus, the outcome under an honest execution of $\Pi$ is identically distributed as the outcome under an honest execution of $\Pi'$.

Finally, we show that the induced mechanism $\Pi'$ satisfies UIC, MIC, and global SCP.
Since submitting a bid $b$ and creating a block $\bfB$ under $\Pi'$ is equivalent to submitting a bid $\beta(b)$ and creating a block $\beta(\bfB)$ under $\Pi$, respectively,
the fact that $\Pi'$ satisfies UIC and global SCP directly follows from the fact that $\Pi$ satisfies UIC and global SCP under the user's honest bidding rule $\beta$.

For proving $\Pi'$ satisfies MIC, suppose $\Pi'$ is not MIC for the sake of contradiction.
In this case, there exists a bid vector $\bfb$ and a block $\tilde{\bfB}$ such that if the miner creates the block $\tilde{\bfB}$ instead of $\bfI'(\bfb)$, the miner's utility increases.
Then, consider another miner under the mechanism $\Pi$, and imagine that the bid vector is $\beta(\bfb)$.
Notice that $\mu(\beta(\tilde{\bfB})) = \mu'(\tilde{\bfB})$ and $\mu(\bfI(\beta(\bfb))) = \mu'(\bfI'(\bfb))$.
Thus, if the miner creates the block $\beta(\tilde{\bfB})$ instead of $\bfI(\beta(\bfb))$, the miner's utility increases.
It violates the fact that $\Pi$ is MIC.
Thus, $\Pi'$ must satisfy MIC.

\end{proof}

\subsection{Static Revelation Principle: Allowing Bidding Rules that Output Multiple Bids}
\label{sec:revelation-multiple}

In this section, we extend \cref{thm:revelation} to allow the bidding rule to output multiple bids.
The basic idea for proving the revelation principle is still the same --- trying to bake the non-truth-telling
bidding rule $\beta$ into the mechanism itself.
However, when $\beta$ may output multiple bids, a user $i$ will submit a vector $\bfb_i$ of bids instead of a single real number, and the inclusion rule may ask the miner to only select a subset of the bids from $\bfb_i$.
Therefore, the new challenge in this case is how to make the blockchain's confirmation/payment rules
also double-check the correct enforcement of $\beta$ again, given that the blockchain can only see a subset of the bids from $\bfb_i$.

To handle this difficulty, we need to relax the syntax of the inclusion rule.
% \paragraph{Relaxed syntax requirement.}
% \begin{remark}[Relaxed syntax requirement]
Earlier in our definitions (\cref{sec:tfm-def}), we require
that the inclusion rule outputs a subset of the input bid vector.
In this section, we slightly relax this syntax requirement, and allow
the honest inclusion rule to  
append some metadata for each bid.\footnote{We assume the metadata do not occupy the block space. See \cref{remark:metadata} for more discussion.}
% inject fake bids into the block.
The purpose of the metadata in the honest execution is to encode the auxiliary information for the blockchain's confirmation/payment rules.
% We can also modify the individual rationality requirement accordingly to additionally
% require that an honest miner's utility is always non-negative, where the miner's utility
% is its revenue minus the fees it paid for the fake bids.
This relaxation in the syntax is non-essential for the impossibility results in this paper,
since all the proofs for our impossibility results (\cref{sec:deterministic,sec:randomglobalscp,sec:oca-impossibility}) still hold
even if the honest inclusion rule can append the metadata.
% \end{remark}
Now, we are ready to present the revelation principle.
% To include the auxiliary information, the induced direct-revelation mechanism requires the block size to be $3k$, given a non-direct-revelation mechanism for block size $k$.

\begin{theorem}
\label{thm:revelation-multiple}
Let $c$ be any natural number.
Given any non-\truthful{} TFM $\Pi$ for block size $k$ 
that is UIC, MIC, and $c$-SCP with an individually rational bidding rule, there exists a \truthful{} 
TFM $\Pi'$ for block size $k$ \Hao{double check}
that is UIC, MIC, and $c$-SCP such that 
1) the honest bidding rule is truth-telling, 
and 2) given any vector of true values, the outcome under 
an honest execution of $\Pi$ is identically distributed as the 
outcome under an honest execution of $\Pi'$.
\end{theorem}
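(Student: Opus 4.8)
The plan is to follow the template of \cref{thm:revelation}, the only genuinely new difficulty being that $\beta$ now expands a single submitted bid into a vector of bids, so a miner running the honest inclusion rule may wish to include only part of each expanded vector. Write $\Pi=(\beta,\bfI,\bfC,\bfP,\mu)$, and for a bid vector $\bfb=(b_1,\dots,b_t)$ let $\beta(\bfb)$ be the \emph{virtual bid vector} obtained by replacing each $b_i$ with the subvector $\beta(b_i)$ and concatenating, where each virtual bid remembers which $b_i$ produced it and at most one virtual bid per $b_i$ is marked primary. Construct the truthful mechanism $\Pi'=(\bfI',\bfC',\bfP',\mu')$ as follows: the honest bidding rule is truth-telling; on input $\bfb$ the inclusion rule $\bfI'$ forms $\beta(\bfb)$, runs $\bfI$ on it to obtain a virtual block $\bfB_{\mathrm{virt}}$ of size at most $k$, and outputs the block consisting of every submitted bid that contributes at least one virtual bid to $\bfB_{\mathrm{virt}}$, each annotated with metadata naming the selected subset of its own expansion (this is exactly where the relaxed syntax allowing metadata that does not occupy block space is used); the rules $\bfC',\bfP',\mu'$ first rebuild the virtual block --- for each physical bid $b_i$ they recompute $\beta(b_i)$ and keep precisely those virtual bids that both lie in $\beta(b_i)$ and are named by the metadata, discarding malformed metadata and capping the total at $k$ --- and then apply $\bfC,\bfP,\mu$ to it, declaring $b_i$ confirmed iff its primary virtual bid is confirmed, charging $b_i$ the sum of the $\bfP$-payments of its virtual bids, and paying the miner $\mu(\bfB_{\mathrm{virt}})$.

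With the construction in hand I would first dispatch the routine checks. Individual rationality of $\bfP'$ holds because $\beta$ is individually rational in $\Pi$, so a confirmed user's total $\Pi'$-payment equals its total $\Pi$-payment when bidding $\beta(v)$, which is at most $v$; budget feasibility of $\mu'$ holds because $\mu(\bfB_{\mathrm{virt}})$ is budget-feasible for the virtual block, whose confirmed payments total exactly the payments charged in $\Pi'$; and the physical block has at most $k$ bids since each one contributes at least one of the at most $k$ virtual bids. The outcome-equivalence claim is then immediate: in an honest execution of $\Pi'$ on true values $\bfv$ the reconstructed virtual block equals $\bfI(\beta(v_1),\dots,\beta(v_n))$, and running $\bfC,\bfP,\mu$ on it reproduces the distribution of $(\bfx,\bfp,\mu)$ of an honest execution of $\Pi$, coupling the internal randomness of $\bfI,\bfC,\bfP,\mu$.

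The core of the argument, as in \cref{thm:revelation}, is a strategy-space correspondence. For an individual user against an honest miner, for the miner, and for any coalition of the miner with up to $c$ users, I would exhibit a map from deviations in $\Pi'$ to deviations in $\Pi$ that preserves the relevant individual or joint utility and sends the honest $\Pi'$-strategy to the honest $\Pi$-strategy: a user submitting $(c_1,\dots,c_m)$ in $\Pi'$ is simulated by a user submitting the concatenation $\beta(c_1) \| \cdots \| \beta(c_m)$ in $\Pi$; a miner that, having seen the honest bids, names subsets $S_i\subseteq\beta(b_i)$ and injects extra virtual bids subject to virtual block size $\le k$ is simulated by a $\Pi$-miner that includes exactly those virtual bids and injects the same fakes. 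Because the backend of $\Pi'$ recomputes each $\beta(b_i)$ itself and ignores or truncates metadata not of the prescribed form, every $\Pi'$-deviation maps to a \emph{legitimate} $\Pi$-strategy, so by UIC/MIC/$c$-SCP of $\Pi$ under the bidding rule $\beta$ no $\Pi'$-deviation can beat the honest-$\Pi$ value, which the honest $\Pi'$-behavior attains; hence $\Pi'$ is UIC, MIC, and $c$-SCP with truth-telling as its honest bidding rule, and global SCP follows as the special case $c=\infty$.

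I expect the metadata-sanitization step to be the main obstacle: one must argue that an adversarial metadata string chosen by a strategic miner in $\Pi'$ never grants more power than the miner has in $\Pi$ --- in particular that it cannot use the metadata to attribute to an honest user a virtual bid vector other than $\beta(v_i)$. This is clean when $\beta$ is deterministic, since the backend simply recomputes $\beta(v_i)$; if randomized bidding rules are to be allowed one must either have the metadata carry the realized expansion and check membership in the support, or observe that in $\Pi$ an honest user's bid vector is itself only a sample of $\beta(v_i)$, so that the correspondence and hence the conclusion are unaffected. I would also double-check the corner cases: empty expansions $\beta(v)=\emptyset$, where the bid simply disappears from the physical block, and the accounting when $\sum_i|\beta(b_i)|>k$ before $\bfI$ is applied, which is precisely the situation for which the ``metadata does not occupy block space'' relaxation is needed.
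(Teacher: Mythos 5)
Your proposal matches the paper's proof essentially step for step: the same metadata-annotated inclusion rule that simulates $\bfI(\beta(\bfb))$ on the virtual (expanded) bids, the same blockchain-side rules that recompute $\beta$ and re-run $\bfC,\bfP,\mu$ on the reconstructed virtual block, the same honest-outcome coupling, and the same deviation-mapping reductions establishing UIC, MIC, and $c$-SCP of $\Pi'$ from those of $\Pi$. The only cosmetic difference is that the paper handles malformed annotations by declaring the entire block invalid (no confirmations, zero miner revenue) rather than sanitizing and truncating them, and both choices serve the identical purpose you identify, namely forcing every strategic $\Pi'$ block to correspond to a legitimate strategy in $\Pi$.
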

% In the above, given a vector of valuations $\bfv\in\{0, 1\}^t$, 
% the outcome of a TFM is encoded as a tuple $(\bfx, \bfp, \mu)$
% where $\bfx \in \{0, 1\}^t$ is a bit-vector indicating 
% whether each bid is confirmed or not, $\bfp \in \R_{\geq 0}^t$
% is the vector of payments for all bids, and 
% $\mu \in \R_{\geq 0}$
% is the miner's revenue.

\paragraph{Proof of \Cref{thm:revelation-multiple}.}
The rest of this section is dedicated to proving \cref{thm:revelation-multiple}.
%Given any non-direct TFM $\Pi = (\bfI, \bfC, \bfP, \mu)$ with an individually rational bidding rule $\beta$ for the block size $k$, we define the induced TFM $\Pi' = (\bfI', \bfC', \bfP', \mu')$ with the truth-telling bidding rule for the block size $3k$ as follows.
Suppose we are given a TFM 
$\Pi = (\beta, \bfI, \bfC, \bfP, \mu)$
where $\beta: \R_{\geq 0} \rightarrow \R^*$ 
denotes the user's honest bidding strategy, 
$\bfI$ denotes the inclusion rule,
$\bfC$ denotes the confirmation rule, 
$\bfP$ denotes the payment rule, 
and $\mu$ denotes the miner revenue rule.
We construct a \truthful{} mechanism
$\Pi' = (\bfI', \bfC', \bfP', \mu')$
as follows: 
\begin{itemize}
\item 
Inclusion rule $\bfI'$: 
given a bid vector $\bfb = (b_1,\dots,b_t)$, 
let $\beta(\bfb) = \left(\beta(b_1), \ldots, \beta(b_t)\right)$.
%Run the original inclusion rule $\bfI(\beta(\bfb))$.
Create a block ${\bf B}'$ 
that includes  the following: 
\begin{itemize}
\item 
for each $j \in [t]$, if some bid in 
the vector $\beta(b_j)$
is selected by the original inclusion rule $\bfI(\beta(\bfb))$, 
then include $b_j$ in $\bfB'$; 
\item 
for each included bid $b_j$, attach the following extra annotation in $\bfB'$:
\[
{\sf info}(b_j) = 
\text{bids in $\beta(b_j)$ selected by $\bfI(\beta(\bfb))$}
\]
\end{itemize}
\ignore{
A {\it valid} block
is therefore 
of the form $\{b, {\bf s}\}_j$
where ${\bf s} \subseteq \beta(b)$.
}
A {\it valid block} $\bfB$ is one in which the annotation for each 
$b_j \in \bfB$ is a subset 
of the vector $\beta(b_j)$.
% It is not hard to see that we can encode
% a valid block using a vector of length at most $3k$.\footnote{For example, we can divide the block into three parts:
% the first part encodes users' original bids $b_j$;
% the second part encodes $\beta(b_j)$;
% and the third part encodes the mapping between the first part and the second part.}

\ignore{
where $b_i$ comes from user $i$, the inclusion rule $\bfI'$ always outputs a block consists of the actual bids $\bfa$, simulated bids $\bfs$, and the indicated bids $\bfr$ defined as follows.
	\begin{itemize}
	\item 
	Compute $\bfb' = (\beta(b_1),\dots,\beta(b_t))$ and compute $S = \bfI(\bfb')$ where $|S| \leq k$.
	Let $\hat{\bfs} = \bfb'_S$ where $\hat{\bfs}$ is a subvector of $\bfb'$ selected by the set $S$.
	Notice that $\beta$ may drop or inject fake bids, so the length of $\bfb$ and $\bfb'$ may not be the same.
	If $|\hat{\bfs}| < k$, append zeros until the length is $k$, and let $\bfs = (s_1,\dots,s_k)$ denote the resulting vector.
	\item 
	Initialize $\bfa$ as an empty list.
	For each user $i$, if there is a bid $s \in \hat{\bfs}$ selected from $\beta(b_i)$, append $b_i$ to $\bfa$.
	If $|\hat{\bfa}| < k$, append zeros until the length is $k$, and let $\bfa = (a_1,\dots,a_k)$ denote the resulting vector.
	Notice that for all $j \in [|\hat{\bfa}|]$, $a_j$ is exactly $b_i$ for some user $i$. 
	\item 
	% Notice that for any bid $s \in \hat{\bfs}$, $s$ must be selected from $\beta(a_j)$ for some $j$, where $a_j = b_i$ for some user $i$. 
	For all $i \in [|\hat{\bfs}|]$, assign $r_i = j$ if $s_i$ is selected from $\beta(a_j)$.
	Let $\bfr = (r_1,\dots,r_{|\hat{\bfs}|})$ be the resulting vector.
	% If $|\hat{\bfr}| < k$, append zeros until the length is $k$, and let $\bfr = (r_1,\dots,r_k)$ denote the resulting vector.
	\end{itemize}
The inclusion rule outputs a block $\left(\bfa, \bfs, \bfr\right)$.

We call a block is \emph{valid} if the block can be generated from $\bfI'(\bfb)$ for a bid vector $\bfb$.
}
% Let $\bfc = (c_1,\dots,c_{|\bfB|})$ such that $c_j = i$ if the $j$-th entry of $\bfB$ is selected from $\beta(b_i)$.
% Conceptually, $\bfc$ encodes the information that the $j$-th entry of $\bfB$ is submitted by user $i$ if all users adopt the bidding strategy $\beta$.
% Finally, initialize $\bfa$ as an empty list.
% For all $i \in [t]$, if $i \in \bfc$, append $b_i$ to $\bfa$.

% Append zeros to $\bfa, \bfB, \bfc$ until the length of each vector is $k$, and let $\hat{\bfa}, \hat{\bfB}, \hat{\bfc}$ denote the resulting vectors.
% The inclusion rule outputs a block $\left(\hat{\bfa}, \hat{\bfB}, \hat{\bfc}\right)$ of the length $3k$.
% the miner selects the bids by using $\bfI$ as if the bid vector is $\beta(\bfb)$.
% Then, the miner selects the bids from $\bfb$ if the corresponding positions in $\beta(\bfb)$ are selected by $\bfI$.
% In more detail, the miner first computes $\beta(\bfb)$, and computes $S = \bfI(\beta(\bfb))$.
% Then, the created block is $\bfb_S$.
\item 
The confirmation rule $\bfC'$, the payment rule $\bfP'$, and the miner revenue rule $\mu'$:
% If the block size is not $3k$, no one is confirmed, and the miner revenue is zero.
If the input block is not valid, 
then no one is confirmed, and the miner revenue is zero.
Else, parse the block as $\{b'_j, {\bf s}'_j = {\sf info}(b'_j)\}_j$, 
form an imaginary block $\{{\bf s}'_j\}_j$, 
and run the original confirmation rule $\bfC$, payment rule $\bfP$
and miner revenue rule $\mu$ 
on $\{{\bf s}'_j\}_j$.
A bid $b'_j$ in the input block 
is considered confirmed 
if the primary bid in $\beta(b'_j)$
is contained in ${\bf s}'_j$ and is confirmed by $\bfC(\{{\bf s}'_j\}_j)$, 
and its payment
is the sum of all payments of ${\bf s}'_j$ in $\bfP(\{{\bf s}'_j\}_j)$.
%For a user that bid $b$, if its primary bid in $\beta(b)$
%is confirmed, then we the user's bid is considered confirmed,
%and its payment is the total payment of all bids in $\beta(b)$. 
The miner is paid $\mu(\{{\bf s}'_j\}_j)$.
 
%For any bid that is confirmed, if the bid
%corresponds to the primary bid in $\beta(b_i)$ 
%then the user's original is considered confirmed. 
%A user's payment 
%$\bfC(\bfB)$, 
%$\bfP(\bfB)$,  
%and 
%$\mu(\bfB)$. 
%A user $i$'s bid is confirmed 

\ignore{
\begin{enumerate}
\item 
Parse $\bfs = (s_1,\dots,s_k)$,
and let $\hat{\bfs} = (s_1,\dots, s_{|r|})$.
% Let $R = |r|$ be the effective block size, and 
\item 
Run $\bfC(\hat{\bfs})$, and let $S \subseteq [k]$ be a set of indices that indicating the confirmed bids in $\bfC(\hat{\bfs})$.
Run $\bfP(\hat{\bfs})$, and let $p_i$ be the payment of the confirmed bid $s_i$ for all $i \in S$.
\item 
For all $j \in [k]$, if there exists $i \in S$ such that $r_i = j$ and $s_i$ is the primary bid of $a_j$, then $a_j$ is confirmed.
Otherwise, $a_j$ is unconfirmed.
% For all $i \in S$, $\bfC'$ assigns the bid $a_{r_i}$ to be confirmed.
\item 
Let $A_j$ be the set of all indices such that $i \in S$ and $r_i = j$.
For all confirmed bid $a_j$, the payment is set to be $\sum_{i \in A_j} p_i$.
% Conceptually, $a_j$ has to pay all the confirmed simulated bids which are
\item 
The miner is paid $\mu(\hat{\bfs})$.
\end{enumerate}
}
\end{itemize}
\ignore{
Conceptually, the induced inclusion rule outputs $\left(\bfa, \bfs, \bfr\right)$, where $\bfa$ contains the actual bids from the users, $\bfs$ simulates the created block as if all users submit the bids according to the bidding rule $\beta$, and $\bfr$ indicates which actual bid is associated with each simulated bid in $\bfs$.
Then, the mechanism simulates the original mechanism by running $\bfC, \bfP, \mu$ upon the simulated bids $\bfs$.
The induced confirmation rule confirms the bid if its associated primary bid is confirmed in the simulation.
All the confirmed bid must pay for all its associated simulated bids.

Notice that $\beta$ is individually rational under $\Pi$,
so the payment associated with the bids $\beta(a)$ never exceeds $a$.
%  (notice that $\beta(a)$ could be multiple bids).
Thus, under the induced mechanism $\Pi'$, the payment never exceeds the actual bid, so the individual rationality of the payment rule is respected.
}

More intuitively, in the \truthful{} mechanism
$\Pi'$, the miner simulates
$\beta$ by applying $\beta$ to all 
bids it has received, resulting in $\beta(\bfb)$.
It then simulates the original inclusion rule
of the non-\truthful{} mechanism, and the result is $\bfI(\beta(\bfb))$, 
indicating which of $\beta(\bfb)$ would have been included by the original
$\bfI$.
The miner includes in the block each bid $b_j$ 
such that at least one of 
$\beta(b_j)$ would have been included  
by $\bfI(\beta(\bfb))$, and for each included bid $b_j$,
it attaches the extra information which bids among $\beta(b_j)$
would have been included by $\bfI(\beta(\bfb))$.
This way, the blockchain's confirmation, payment, and miner-revenue
rules can double-check the correct enforcement of $\beta$
again, before simulating the effect of the original confirmation, payment,
and miner revenue rules of the non-\truthful{} mechanism --- formally,
the checking is accomplished through a block validity check 
in our reduction.

% It may also be easier for the reader to first think of
% the special case when $\beta$ outputs a single bid --- in this case,
% the inclusion rule of $\Pi'$ can simply include each $b_j$
% such that $\beta(b_j)$ would have been included
% by $\bfI(\bfb)$, and the blockchain would reapply $\beta$
% before simulating the effect of the original  
% payment, confirmation, and miner revenue rules.
% \elaine{TODO: rewrite the proof}

\begin{lemma}
Assume the miner follows the mechanism honestly.
Then, the outcome when all users follow the bidding rule $\beta$ under the mechanism $\Pi$
is identically distributed as the outcome when all users bid truthfully under the mechanism $\Pi'$.
\label{lem:same-outcome}
\end{lemma}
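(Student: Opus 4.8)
The plan is to prove the lemma by an explicit coupling of the random coins of the two executions under which they produce literally the same outcome tuple; equality of distributions is then immediate. First I would fix an arbitrary vector of true values $\bfv = (v_1,\dots,v_t)$ and unfold both executions. In $\Pi$ with all users following $\beta$, user $j$ submits the bid vector $\beta(v_j)$, so the mechanism sees the concatenation $\beta(\bfv) = (\beta(v_1),\dots,\beta(v_t))$; the honest miner computes the block $B := \bfI(\beta(\bfv))$ using coins $\rho_1$, and the blockchain then produces $\bfC(B),\bfP(B),\mu(B)$ using coins $\rho_2,\rho_3,\rho_4$. In $\Pi'$ with all users bidding truthfully, user $j$ submits the single bid $v_j$, the honest miner runs $\bfI'$ on $\bfb=(v_1,\dots,v_t)$, which internally recomputes $\beta(\bfb)=\beta(\bfv)$ and $B = \bfI(\beta(\bfv))$ (here I would reuse $\rho_1$), and emits the annotated block $\bfB'$ of the construction; finally the blockchain runs $\bfC',\bfP',\mu'$ on $\bfB'$, reusing $\rho_2,\rho_3,\rho_4$. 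Note $\bfB'$ is \emph{valid} by construction, since the annotation ${\sf info}(b'_j)$ attached to each included bid $b'_j = v_j$ is a sub-vector of $\beta(v_j)$, so $\bfC'/\bfP'/\mu'$ do not take the ``reject'' branch.

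The heart of the argument is the identity: the imaginary block $\{{\bf s}'_j\}_j$ that $\bfC'/\bfP'/\mu'$ reconstruct from $\bfB'$ equals $B$. Indeed, by definition of $\bfI'$ we have ${\bf s}'_j = {\sf info}(b'_j) = \beta(v_j)\cap B$ (the bids of $\beta(v_j)$ selected by $\bfI(\beta(\bfv))$); since the bids of $\beta(\bfv)$ are partitioned among the users according to which $\beta(v_j)$ they originate from, the union over $j$ recovers exactly all bids selected by $\bfI(\beta(\bfv))$, i.e.\ $B$ (any difference in ordering is immaterial by the weak symmetry of $\Pi$, \cref{def:symmetry}). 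Given this, I would verify the three coordinates of the outcome match term by term: (i)~user $j$'s primary bid is confirmed in $\Pi'$ iff the primary component of $\beta(v_j)$ lies in ${\sf info}(b'_j)$ and is confirmed by $\bfC(\{{\bf s}'_j\}_j)=\bfC(B)$, which is precisely the condition for user $j$'s primary bid to be confirmed in $\Pi$ (if $\beta(v_j)\cap B=\emptyset$, or $b'_j\notin\bfB'$, both executions leave user $j$ unconfirmed); (ii)~user $j$'s total payment in $\Pi'$ is the sum of the $\bfP(B)$-payments of the bids in ${\sf info}(b'_j)=\beta(v_j)\cap B$, which equals user $j$'s total payment in $\Pi$ (bids of $\beta(v_j)$ outside $B$ pay zero in both); and (iii)~the miner revenue in $\Pi'$ is $\mu'(\bfB') = \mu(\{{\bf s}'_j\}_j) = \mu(B)$, matching $\Pi$. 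Since this holds for every value of the shared coins $(\rho_1,\rho_2,\rho_3,\rho_4)$, the two outcome tuples are equal as random variables, hence identically distributed.

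I do not expect a serious obstacle: the statement is a bookkeeping verification that the ``bake $\beta$ into the mechanism'' construction behaves as designed. The only points needing care are keeping the randomness consistent — handled by the explicit coupling above, using that $\bfI',\bfC',\bfP',\mu'$ each invoke the corresponding original rule exactly once and in the same place — and the harmless reordering when reassembling $B$ from $\{{\bf s}'_j\}_j$, which is absorbed by weak symmetry. (The fact that $\bfB'$ also respects the block-size bound $k$, since $|B|\le k$ and the bids of $B$ come from at most $k$ distinct users, is not needed for this lemma but confirms $\Pi'$ is well defined.)
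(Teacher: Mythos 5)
Your proposal is correct and follows essentially the same route as the paper's proof: unfold both honest executions, observe that the imaginary block $\{{\sf info}(b'_j)\}_j$ reconstructed by $\bfC',\bfP',\mu'$ coincides (in distribution) with $\bfB=\bfI(\beta(\bfv))$, and then check confirmation, payments, and miner revenue coordinate by coordinate. Your explicit coin-coupling merely makes precise the paper's ``identically distributed'' step, so there is nothing further to add.
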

\begin{proof}
% Given any valid block $\left(\bfa, \bfs, \bfr\right)$ under $\Pi'$, let $\hat{\bfs}$ be a subvector of $\bfs$ consisting of the first $|\bfr|$ entries of $\bfs$.	
Let $(v_1,\dots,v_t)$ be the true values of all users.
Under mechanism $\Pi$, if all users follow the bidding rule $\beta$, the created block will be $\bfB = \bfI(\beta(v_1),\dots,\beta(v_t))$.
For any user $i$, since the fake bids have no intrinsic values, user $i$'s bid is considered as confirmed if and only if its primary bid in $\beta(v)$ is confirmed.
User $i$'s payment is the sum of the payments of all its bids $\beta(v)$.
The miner revenue in this case is $\mu(\bfB)$.

Under mechanism $\Pi'$, if all users bid truthfully, the created block will be $\bfB'$ parsed as $\{b'_j, {\sf info}(b'_j)\}_j$.
Notice that $\{{\sf info}(b'_j)\}_j$ is identically distributed as $\bfB$.
The bid $b'_j$ under mechanism $\Pi'$ is considered confirmed if the primary bid in $\beta(b'_j)$
is contained in ${\sf info}(b'_j)$ and is confirmed by $\bfC(\{{\sf info}(b'_j)\}_j)$, 
and its payment is the sum of all payments of ${\bf s}'_j$ in $\bfP(\{{\sf info}(b'_j)\}_j)$.
The miner revenue in this case is $\mu(\{{\sf info}(b'_j)\}_j)$.

Because $\{{\sf info}(b'_j)\}_j$ is identically distributed as $\bfB$, the outcomes of the honest execution of two mechanisms are also identically distributed by checking all the syntax above.
% Under mechanism $\Pi'$, if all users bid truthfully, the created block will be $\left(\bfa, \bfs, \bfr\right)$ such that $\hat{\bfs} = \bfI((\beta(v_1),\dots,\beta(v_t)))$.
% For any user $i$, its bid is confirmed if and only if its primary bid under bidding rule $\beta$ is confirmed when applying $\bfC(\hat{\bfs})$.
% Conditioned user $i$'s bid is confirmed, its payment is the sum of all confirmed and associated bids in $\hat{\bfs}$.
% Thus, the confirmation and the payment of user $i$ are the same in both scenarios.

% The miner revenue is the same as it is $\mu(\hat{\bfs})$ by definition in both scenarios.
\end{proof}

\begin{lemma}
The following statements hold.
\begin{itemize}
	\item If $\Pi$ is UIC, then $\Pi'$ is UIC.
	\item If $\Pi$ is MIC, then $\Pi'$ is MIC.
	\item For all $c$, if $\Pi$ is $c$-SCP, then $\Pi'$ is $c$-SCP.
\end{itemize}
% Suppose $\Pi$ satisfying UIC, MIC, and global SCP for the block size $k$.
% Then, there exists a direct-revelation mechanism satisfying UIC, MIC, and global SCP for the block size $3k$.
\label{lem:revelation-reduction}
\end{lemma}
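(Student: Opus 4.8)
The plan is to reduce each incentive-compatibility property of $\Pi'$ to the corresponding property of $\Pi$ via a \emph{utility-preserving correspondence between strategy profiles}. First I would generalize \cref{lem:same-outcome} from the all-honest case to arbitrary strategies: given any vector of true values, any (possibly strategic) bid vectors submitted by the users in $\Pi'$, and any block produced by the miner in $\Pi'$, there is a matching strategy profile in $\Pi$ obtained by replacing each real-valued bid $b$ by its expansion $\beta(b)$ and replacing the miner's block by the ``imaginary block'' $\bigcup_j {\bf s}'_j$ that $\bfC'/\bfP'/\mu'$ construct internally; and under the natural coupling of random coins ($\bfI'$ reuses the coins of $\bfI$, and $\bfC'/\bfP'/\mu'$ reuse the coins of $\bfC/\bfP/\mu$), every user's confirmation indicator and total payment, and the miner's revenue, are identical in the two executions. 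Invalid blocks in $\Pi'$ need not be considered since they produce no confirmations, zero payments (by individual rationality), and zero miner revenue, hence are weakly dominated by any honest execution, whose utilities and revenue are nonnegative by individual rationality and budget feasibility.

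The substance of the correspondence is a syntactic check, mirroring the proof of \cref{lem:same-outcome}, with one extra piece of bookkeeping: by construction of $\bfP'$, a user's total payment in $\Pi'$ equals the sum of the $\bfP$-payments of \emph{all} bids in the $\beta$-expansion of the vector it submitted, which is exactly the total payment it would incur in $\Pi$ after submitting that expanded vector; and a user's transaction is confirmed in $\Pi'$ exactly when the primary bid of $\beta$ applied to its primary coordinate lies in the imaginary block and is confirmed by $\bfC$ there. I would also record the identity $\bigcup_j {\bf s}'_j = \bfI(\beta(\bfb))$ on honest blocks, which shows the imaginary block is always a legal block for a (possibly strategic) miner in $\Pi$ on input $\beta(\bfb)$: bids arising from submitted $b'_j$ are among the bids the $\Pi$-miner received, and bids arising from injected $b'_j$ are fake bids the $\Pi$-miner injects. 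For this to be well-defined one must include in the definition of a valid block, besides ${\bf s}'_j \subseteq \beta(b'_j)$ for each annotated bid, the size bound $|\bigcup_j {\bf s}'_j| \le k$, so that $\bfC,\bfP,\mu$ are only ever run on size-$\le k$ blocks as in $\Pi$.

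Granting the correspondence, the three claims follow quickly. For UIC: fix user $i$ with true value $v_i$ and arbitrary other bids $\bfb'_{-i}$ in $\Pi'$; bidding truthfully corresponds to bidding the honest rule $\beta(v_i)$ in $\Pi$ against other bids $\beta(\bfb'_{-i})$, any deviation $\tilde{\bfb}_i$ corresponds to bidding $\beta(\tilde{\bfb}_i)$ against the same other bids, and utilities are equal; UIC of $\Pi$ gives ${\sf util}_i^{\Pi}(\beta(v_i)) \ge {\sf util}_i^{\Pi}(\beta(\tilde{\bfb}_i))$, hence ${\sf util}_i^{\Pi'}(v_i) \ge {\sf util}_i^{\Pi'}(\tilde{\bfb}_i)$. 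For MIC: an honest miner in $\Pi'$ produces a valid block with no injected bids whose imaginary block is $\bfI(\beta(\bfb))$, i.e.\ the honest, no-fake-bid block of $\Pi$ on input $\beta(\bfb)$; any strategic miner in $\Pi'$ either produces an invalid (dominated) block or a valid one corresponding to some miner strategy in $\Pi$ on $\beta(\bfb)$ with equal revenue; MIC of $\Pi$ then yields MIC of $\Pi'$. For $c$-SCP: any joint strategy of the miner and $\le c$ users in $\Pi'$ against outside bids $\bfb'_{-C}$ corresponds to a joint strategy of the same coalition in $\Pi$ against outside bids $\beta(\bfb'_{-C})$ with equal joint utility (sum of coalition users' utilities plus miner revenue, each term preserved), and honest in $\Pi'$ corresponds to honest in $\Pi$ (coalition users play $\beta$, miner honest); $c$-SCP of $\Pi$ gives the conclusion.

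The step I expect to be the main obstacle is not conceptual but definitional: making the strategy correspondence well-defined and legality-preserving when $\beta$ may emit arbitrarily many fake bids. Concretely, I would need to pin down the validity predicate so that it simultaneously forces ${\bf s}'_j \subseteq \beta(b'_j)$ and $|\bigcup_j {\bf s}'_j| \le k$ (so the simulated $\Pi$ never sees an oversized block), and to be explicit that the mechanism can identify each user's primary coordinate — exactly as in the plain model, via transaction content — since $\bfC'$ refers to ``the primary bid in $\beta(b'_j)$''. Once these conventions are fixed, the correspondence holds by inspection and all three implications drop out, in direct analogy with \cref{lem:same-outcome}.
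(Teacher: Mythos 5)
Your proposal is correct and follows essentially the same route as the paper: a utility-preserving simulation in which each bid $b$ in $\Pi'$ is mapped to $\beta(b)$ and each (valid) block to its imaginary annotation block $\{{\bf s}'_j\}_j$, after which UIC, MIC, and $c$-SCP of $\Pi'$ are transferred from the corresponding properties of $\Pi$ exactly as in the paper's three contrapositive arguments. Your explicit treatment of invalid blocks and of the size bound $|\bigcup_j {\bf s}'_j| \le k$ in the validity predicate is a reasonable tightening of details the paper leaves implicit, not a different approach.
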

\begin{proof}
We will prove three properties individually.

\paragraph{$\Pi'$ is UIC.}
For the sake of contradiction, suppose $\Pi'$ is not UIC.
That is, under mechanism $\Pi'$, there exists a user $i$ with true value $v$ and a bid vector $\bfb_{-i}$ such that user $i$'s utility increases if it submits a vector $\bff$ (possibly includes some fake bids) instead of $v$ when other users' bids are $\bfb_{-i}$.
Denote the above as Scenario 1.
% Let $\left(\bfa_\bff, \bfs_\bff, \bfr_\bff\right)$ and $\left(\bfa_v, \bfs_v, \bfr_v\right)$ be the created block if user $i$ bids $\bff$ and $v$, respectively.
Then, consider another Scenario 2, where the mechanism is $\Pi$, and other users' bids are $\beta(\bfb_{-i})$.
The strategic user $i$ can submit a vector $\beta(\bff)$ instead of $\beta(v)$.

By \cref{lem:same-outcome}, the confirmation probability and the payment of the bids $\bff$ ($v$, resp.) in Scenario 1 are identically distributed as the confirmation probability and the payment of the bids $\beta(\bff)$ ($\beta(v)$, resp.) in Scenario 2.
% Follow the same argument as the proof of \cref{lem:same-outcome}, 
Thus, the expected utility of user $i$ if it submits $\bff$ ($v$, resp.) in Scenario 1 is the same as the expected utility of user $i$ if it submits $\beta(\bff)$ ($\beta(v)$, resp.) in Scenario 2.
% By \cref{thm:revelation}, if user $i$ bids $\beta(\bff)$, the confirmation and the payment are the same as user $i$ bids $\bff$ in the scenario 1.
% Similarly, by \cref{thm:revelation}, if user $i$ bids $\beta(v)$, the confirmation and the payment are the same as user $i$ bids $v$ in the scenario 1.
Thus, in Scenario 2, user $i$'s utility increases if it $\beta(\bff)$ instead of $\beta(v)$.
It violates UIC under $\Pi$, so $\Pi'$ must satisfy UIC.
% Notice that submitting a bid $b$ and creating a block $\bfB$ under $\Pi'$ is equivalent to submitting a bid $\beta(b)$ and creating a block $\beta(\bfB)$ under $\Pi$, respectively.
% Thus, the fact that $\Pi'$ satisfies UIC and global SCP directly follows from the fact that $\Pi$ satisfies $\beta$-UIC and $\beta$-OCA-proof.

\paragraph{$\Pi'$ is MIC.}
For the sake of contradiction, suppose $\Pi'$ is not MIC.
That is, there exists a bid vector $\bfb$ and a block $\tilde{\bfB} = \{b'_j, {\bf s}'_j = {\sf info}(b'_j)\}_j$ such that if the miner creates the block $\tilde{\bfB}$ instead of $\bfI'(\bfb)$, the miner's utility increases.
Denote the above as Scenario 1.
Then, consider another Scenario 2, where the mechanism is $\Pi$, and the incoming bid vector is $\beta(\bfb)$.
The strategic miner can create a block $\{{\bf s}'_j\}_j$ instead of $\bfI(\beta(\bfb))$.

Recall the miner's utility is the revenue minuses the cost of injecting the fake bids.
The miner revenue of the block $\tilde{\bfB}$ ($\bfI'(\bfb)$, resp.) in Scenario 1 is the same as the miner revenue of the block $\{{\bf s}'_j\}_j$ ($\bfI(\beta(\bfb))$, resp.) in Scenario 2.
If $\{{\bf s}'_j\}_j$ contains some fake bids, the expected payment of those fake bids in Scenario 2 is also the same as the expected payment of the fake bids when the created block is $\tilde{\bfB}$ in Scenario 1.
% By direct calculation, the miner revenue and the cost of injecting fake bids when the miner creates the block $\hat{\bfs}'$ in scenario 2 are the same when the miner creates the block $(\bfa',\bfs',\bfr')$ in scenario 1.
% Similarly, when the miner follows the mechanism, the miner revenue is the same in both scenarios.
Therefore, in Scenario 2, if the miner creates $\{{\bf s}'_j\}_j$ instead of $\bfI(\beta(\bfb))$, the miner's utility increases.
It violates MIC under $\Pi$, so $\Pi'$ must satisfy MIC.

\paragraph{$\Pi'$ is $c$-SCP for all $c$.}
% The proof is similar to proving MIC.
For the sake of contradiction, suppose $\Pi'$ is not $c$-SCP for some $c$.
That is, there exist a coalition $C$ formed by the miner and at most $c$ users, a bid vector $\bfb_{-C}$, and a block $\tilde{\bfB}$ such that when all non-colluding users' bids are $\bfb_{-C}$, $C$'s utility increases if the created block is $\tilde{\bfB}$ instead of $\bfI'(\bfb_{-C}, \bfb_{C})$, where $\bfb_C$ is the true values of all users in $C$.
Denote the above as Scenario 1.
% % Consider scenario 1 defined as follows:
% 1) the mechanism is $\Pi'$;
% 2) a coalition $C$ formed by the miner and at most $c$ users;
% 3) there exists a bid vector $\bfb_{-C}$ and a block $\bfB' = (\bfa',\bfs',\bfr')$ such that when all non-colluding users' bids are $\bfb_{-C}$, $C$'s utility increases if the created block is $\bfB'$ compared to the honest case.
% Since $\Pi'$ is not $c$-SCP, scenario 1 must exist.
Then, consider another Scenario 2, which is the same as scenario 1 except that the mechanism is $\Pi$, and the bid vector of non-colluding user is $\beta(\bfb_{-C})$.
Parse $\tilde{\bfB} = \{b'_j, {\bf s}'_j = {\sf info}(b'_j)\}_j$.
The coalition $C$ can create a block $\{{\bf s}'_j\}_j$ instead of $\bfI\left(\beta(\bfb_{-C}, \bfb_C)\right)$.

Following the same argument as UIC and MIC,
we conclude that the joint utility of $C$ of the block $\tilde{\bfB}$ ($\bfI'(\bfb_{-C}, \bfb_C)$, resp.) in Scenario 1 is the same as the joint utility of $C$ of the block $\{{\bf s}'_j\}_j$ ($\bfI(\beta(\bfb_{-C}, \bfb_C))$, resp.) in Scenario 2.
% By direct calculation, the miner revenue and every colluding user's utility when the coalition creates the block $\hat{\bfs}'$ in scenario 2 are the same when the coalition creates the block $(\bfa',\bfs',\bfr')$ in scenario 1.
% Similarly, when the coalition follows the mechanism, the miner revenue and every colluding user's utility is the same in both scenarios.
Thus, in Scenario 2, if the coalition creates $\{{\bf s}'_j\}_j$ instead of $\bfI\left(\beta(\bfb_{-C}, \bfb_C)\right)$, the joint utility of $C$ increases.
It violates $c$-SCP under $\Pi$, so $\Pi'$ must satisfy $c$-SCP.
\end{proof}

The proof of \cref{thm:revelation-multiple} directly follows from \cref{lem:same-outcome} and \cref{lem:revelation-reduction}.

\begin{remark}
	\label{remark:metadata}
	We assume the metadata appended by the inclusion rule do not occupy the block space.
	If the metadata occupy the block space, it is not hard to see that we can encode
	a valid block using a vector of length at most $3k$.
	For example, we can divide the block into three parts:
	the first part encodes users' original bids $b_j$;
	the second part encodes $\beta(b_j)$;
	and the third part encodes the mapping between the first part and the second part.
\end{remark}

%%% Local Variables:
%%% mode: latex
%%% TeX-master: "oca"
%%% End:

\bibliographystyle{alpha}
\bibliography{refs,gametheory,tfm-refs}

\appendix
\section{Comparison of Collusion-Resilience Notions}
\label{sec:comparison}
In this work, we discuss three different definitions of collusion-resilience: $c$-SCP, global SCP, and OCA-proofness.
% In this section, we take a closer look at the different notions of collusion-resilience.
Both global SCP and OCA-proofness focus on the global coalition consisting of the miner and all users, while $c$-SCP considers a coalition formed by the miner and at most $c$ users.
Both $c$-SCP and global SCP require following the honest bidding rule is a dominant strategy,
so if a TFM satisfies $c$-SCP for all $c$, it must also satisfy global SCP.
On the other hand, an OCA-proof TFM allows each user to bid according to an individually rational bidding strategy $\sigma: \mathbb{R}_{\geq 0} \rightarrow \mathbb{R}$ as long as $\sigma$ does not depend on others' bids.
Hence, for any TFM whose bidding rule only outputs a single real, the TFM satisfies global SCP implies it also satisfies OCA-proofness by simply taking the honest bidding rule as the individually rational bidding strategy $\sigma$.
However, in \cref{sec:tfm-def}, the honest bidding rule can output multiple bids in general.
In this case, global SCP may not imply OCA-proofness.
Consider the following somewhat contrived mechanism.

\begin{mdframed}
	\begin{center}
		{\bf Even Auction}
	\end{center}

% \vspace{2pt}
\noindent\textbf{Parameters:} the block size is infinite.

\vspace{2pt}
\noindent\textbf{Mechanism:}
	\begin{itemize}
	\item 
	{\bf Bidding rule.}
	Each user submits two bids $(v,0)$, where $v$ is the true value.
	\item 
	{\bf Inclusion rule.}
	Include all pending bids.
	\item 
	{\bf Confirmation rule.}
	If the number of the included bids is odd, no one is confirmed.
	Otherwise, if the number of the included bids is even, all included bids are confirmed.
	\item 
	{\bf Payment rule and miner revenue rule.}
	Everyone pays nothing, and the miner revenue is zero.
	\end{itemize}
\end{mdframed}
\begin{lemma}
\label{lem:global-not-OCA}
The even auction above satisfies global SCP but not OCA-proofness.
\end{lemma}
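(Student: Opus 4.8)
The plan is to verify the two halves of the statement separately; each is short and the example is engineered so that the relevant computations are trivial.

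\emph{Global SCP.} First I would record that the Even Auction never burns any coins: the payment rule charges every user $0$ and the miner‑revenue rule pays the miner $0$, so for any bid vector and any block the burnt amount is $0$. Consequently, for every vector of true values $\bfv=(v_1,\dots,v_n)$ and every coordinated strategy of the grand coalition, the realized social welfare equals the sum of the true values of the confirmed \emph{primary} bids (fake bids have true value $0$), and is therefore at most $\sum_{i\in[n]}v_i$. It then remains only to check that the honest profile attains this upper bound: with $n$ honest users the submitted bid vector has length $2n$, which is even, so the confirmation rule confirms all bids, in particular all primary bids, and the social welfare is exactly $\sum_i v_i$. Hence the honest profile is surplus‑maximizing for the grand coalition, which is global SCP (\cref{def:globalSCP}).

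\emph{Failure of OCA‑proofness.} Here I would exploit the fact (stated right after \cref{def:OCAproof}) that a witnessing reference strategy $\sigma$ must be a map $\mathbb{R}_{\geq 0}\to\R$ that outputs a \emph{single} bid, in contrast to the honest two‑bid rule. Fix an arbitrary individually rational $\sigma$ and take the instance with a single user of true value $v>0$. When this user bids $\sigma(v)$ and the miner runs the honest inclusion rule, the resulting block contains exactly one bid; one is odd, so by the confirmation rule nothing is confirmed and the social welfare is $0$. But the grand coalition (this one user, together with the miner) can instead have the user submit the two‑bid vector $(v,0)$; the honest inclusion rule includes both bids, two is even, so both are confirmed — in particular the primary bid of value $v$ — giving social welfare $v>0$. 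Thus $\sigma$ is not surplus‑maximizing on this instance, and since $\sigma$ was arbitrary, no individually rational single‑bid strategy witnesses OCA‑proofness, so the Even Auction is not OCA‑proof.

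I do not anticipate a genuine obstacle; the only care needed is the bookkeeping that social welfare has the claimed closed form ``value of confirmed primary transactions minus burnt coins'' with burnt coins $=$ total payment $-$ miner revenue $=0$, and the observation that the OCA deviation used above is a legitimate coalition strategy (a strategic user may post an arbitrary bid vector, and here the miner need not even deviate from the honest inclusion rule). The one conceptual point worth emphasizing in the write‑up is the asymmetry the example rests on: global SCP is measured against the honest \emph{two}‑bid rule, whereas OCA‑proofness restricts the reference rule to a \emph{single} bid per user, and the parity gadget is exactly what makes any single‑bid reference rule fail whenever the number of users is odd.
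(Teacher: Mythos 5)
Your proposal is correct and follows essentially the same route as the paper's proof: honest bidding yields an even number of included bids, so everyone is confirmed and (with zero burning) welfare is maximized, giving global SCP; whereas any single-bid strategy $\sigma$ with an odd number of users (you take one user) produces an odd number of bids, confirming nothing, while the honest two-bid profile achieves positive welfare, so OCA-proofness fails. Your write-up merely spells out the welfare bookkeeping and the explicit improving deviation that the paper leaves implicit.
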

\begin{proof}
If everyone follows the bidding rule, the number of the included bids must be even, so everyone's bid is confirmed.
Thus, the social welfare is maximized, and the mechanism satisfies global SCP.
However, when there are an odd number of users, following any bidding strategy $\sigma: \mathbb{R}_{\geq 0} \rightarrow \mathbb{R}$ will lead to an odd number of bids, and no one is confirmed.
Thus, it does not satisfy OCA-proofness.
\end{proof}

Notice that whether the individually rational bidding strategy $\sigma$ outputs a single bid or multiple bids is important.
If we require $\sigma$ to only output a single real as we defined in \cref{def:OCAproof},
there is no non-trivial TFM that can satisfy UIC, MIC, and OCA-proofness as we proved in \cref{thm:oca-random-impossibility}.
However, if we allow $\sigma$ to output multiple bids, we obtain a feasibility in \cref{sec:counterexample}.

When we analyze whether a TFM satisfies $c$-SCP for some $c$, $1$-SCP is the weakest requirement since $c'$-SCP always implies $c$-SCP for any $c' > c$ by definition.
Interestingly, even so, $1$-SCP is incomparable with global SCP and OCA-proofness.
The relations between $1$-SCP, global SCP, and OCA-proofness is depicted in \cref{fig:ICgraph}.

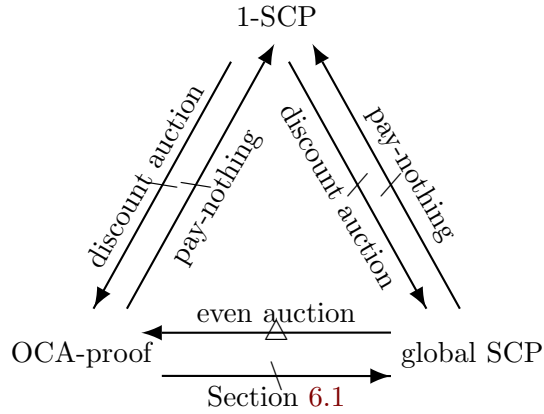
\begin{figure}[h]
    \centering
    % \includegraphics[width=0.5\textwidth]{ICgraph}
%     \begin{subfigure}[b]{0.38\textwidth}
    \begin{tikzpicture}[
	squarednode/.style={rectangle, draw=white, minimum size=12mm},
	]
	\node[squarednode] (A) at (90:3) {$1$-SCP};
	\node[squarednode] (B) at (210:3) {OCA-proof};
	\node[squarednode] (C) at (330:3) {global SCP};
	\draw[decoration={markings,
	mark=at position 0.5 with \node[transform shape] (tempnode) {$\backslash$};,
	mark=at position 1 with {\arrow[scale=1.5,>={latex}]{>}}
	},postaction={decorate}, thick] 
	(A.225) -- node[rotate=60,above] {discount auction} (B.75);
	\draw[decoration={markings,
	mark=at position 0.5 with \node[transform shape] (tempnode) {$\backslash$};,
	mark=at position 0 with {\arrow[scale=1.5,>={latex}]{<}}
	},postaction={decorate}, thick] 
	(A.255) -- node[rotate=60,below] {pay-nothing} (B.45);
	\draw[decoration={markings,
	mark=at position 0.5 with \node[transform shape] (tempnode) {$\backslash$};,
	mark=at position 1 with {\arrow[scale=1.5,>={latex}]{>}}
	},postaction={decorate}, thick]
	(A.285) -- node[rotate=300,below] {discount auction} (C.135);
	\draw[decoration={markings,
	mark=at position 0.5 with \node[transform shape] (tempnode) {$\backslash$};,
	mark=at position 0 with {\arrow[scale=1.5,>={latex}]{<}}
	},postaction={decorate}, thick]
	(A.315) -- node[rotate=300,above] {pay-nothing} (C.105);
	\draw[decoration={markings,
	mark=at position 0.5 with \node[transform shape] (tempnode) {$\backslash$};,
	mark=at position 1 with {\arrow[scale=1.5,>={latex}]{>}}
	},postaction={decorate}, thick]
	(B.345) -- node[rotate=0,below] {\cref{sec:non-direct-oca-counterexample}} (C.195);
	\draw[decoration={markings,
	mark=at position 0.5 with \node[transform shape] (tempnode) {$\triangle$};,
	mark=at position 0 with {\arrow[scale=1.5,>={latex}]{<}}
	},postaction={decorate}, thick]
	(B.15) -- node[rotate=0,above] {even auction} (C.165);
\end{tikzpicture}
    \caption{Relationship between collusion-resilience notions.}
%The three properties considered in this paper are incomparable.
%    For each pair, there is a mechanism that only satisfies one of them, but not the other.}
    \label{fig:ICgraph}
%\end{subfigure}
\end{figure}

We explain \cref{fig:ICgraph} in more detail below:
\begin{itemize}
\item 
global SCP $\overset{\triangle}{\implies}$ OCA-proofness:
As we discussed above, if the honest bidding rule only outputs a single real, global SCP implies OCA-proofness;
if the honest bidding rule outputs multiple reals, \cref{lem:global-not-OCA} states that global SCP does not imply OCA-proofness.
\item 
OCA-proofness $\centernot\implies$ global SCP:
In \cref{sec:non-direct-oca-counterexample}, we give a non-\truthful{} TFM that satisfies UIC, MIC and OCA-proofness.
However, in \cref{thm:random-scp-impossible}, we show that no non-trivial TFM can satisfy UIC, MIC and global SCP no matter it is \truthful{} or non-\truthful{}.
Combining these results, we conclude that OCA-proofness does not imply global SCP.
\item 
OCA-proofness $\centernot\implies$ $1$-SCP, global SCP $\centernot\implies$ $1$-SCP:
Conceptually, since a strategic miner can unilaterally decide which transactions are included, the miner can favor the colluding user's bid even if it harms the social welfare.
% Thus, even if a TFM guarantees that the honest execution always maximizes the social welfare, the 
This is formally proven in \cref{lem:not-1-SCP}.
\item 
$1$-SCP $\centernot\implies$ OCA-proofness, $1$-SCP $\centernot\implies$ global SCP:
The counterexample is given in \cref{lem:discount}.
Intuitively, the attack relies on the fact that many colluding users ``amortize'' the cost of deviation.
\end{itemize}

Below, we introduce the (somewhat contrived) auctions for constructing the counterexamples.

\begin{mdframed}
	\begin{center}
		{\bf Pay-Nothing Auction}
	\end{center}

% \vspace{2pt}
\noindent\textbf{Parameters:} the block size $k$.

\vspace{2pt}
\noindent\textbf{Mechanism:}
	\begin{itemize}
	\item 
	{\bf Bidding rule.}
	Each user submits the true value.
	\item 
	{\bf Inclusion rule.}
	Choose the highest up to $k$ bids.
	\item 
	{\bf Confirmation rule, payment rule and miner revenue rule.}
	All included bids are confirmed, everyone pays nothing, and the miner revenue is zero.
	\end{itemize}
\end{mdframed}
\begin{lemma}
\label{lem:not-1-SCP}
The pay-nothing auction above satisfies global SCP and OCA-proofness.
However, it does not satisfy $1$-SCP.
\end{lemma}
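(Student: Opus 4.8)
The plan is to dispatch the three assertions one at a time; each is short because in the pay-nothing auction all user payments and the miner revenue are identically zero, so no coins are ever burnt and the social welfare in any execution is simply the total true value of the confirmed transactions.

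First I would prove global SCP. Let $n$ be the number of users. Since the block size is $k$, at most $\min(n,k)$ transactions can be confirmed in any execution, so the social welfare is at most the sum of the $\min(n,k)$ largest true values. When all users bid truthfully and the miner runs the honest inclusion rule, the included (and hence confirmed) transactions are exactly the $\min(n,k)$ highest-valued ones, so this bound is attained. Therefore the honest strategy maximizes expected social welfare over all coordinated strategies of the grand coalition, which is precisely global SCP.

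Next, OCA-proofness. I would take the individually rational bidding strategy $\sigma$ to be truth-telling, i.e.\ $\sigma(v)=v$ (which satisfies $\sigma(v)\le v$ for all $v\ge 0$). When all users bid according to $\sigma$ and the miner follows the honest inclusion rule, the execution coincides with the honest execution analyzed above, so the same argument shows that the social welfare is maximized; hence OCA-proofness holds with this choice of $\sigma$. (For this truthful mechanism, OCA-proofness with $\sigma=\textrm{identity}$ and global SCP are literally the same statement.)

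Finally, the failure of $1$-SCP. Here I would exhibit an explicit scenario. Fix any block size $k\ge 1$ and consider $k+1$ users with true values $v_1>v_2>\dots>v_{k+1}>0$, let the coalition $C$ consist of the miner together with user $k+1$, and let the non-colluding bid vector be $\bfb_{-C}=(v_1,\dots,v_k)$. Since every confirmed bid pays nothing and the miner revenue is zero, the coalition's joint utility equals $v_{k+1}$ if user $k+1$'s primary bid is confirmed and $0$ otherwise. Under the honest strategy the miner includes the top $k$ bids $v_1,\dots,v_k$, so user $k+1$ is unconfirmed and the coalition's utility is $0$; but the miner can instead include user $k+1$'s bid (dropping one of the higher bids), which confirms user $k+1$ and yields joint utility $v_{k+1}>0$. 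This strictly profitable deviation shows the pay-nothing auction is not $1$-SCP. The only point requiring care—hardly an obstacle—is unpacking the $c$-SCP definition (\cref{def:cSCP}) correctly: the miner's utility is its revenue (here always $0$) and the colluding user's utility is $x\cdot v - p$ with $p=0$, so no burning term enters and the comparison above is just a comparison of the numbers $0$ and $v_{k+1}$.
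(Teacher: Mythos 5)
Your proof is correct and follows essentially the same route as the paper's: honest execution confirms the highest-valued bids with zero payments and zero burning, so social welfare is maximized (giving both global SCP and OCA-proofness with $\sigma$ the identity), while in the $k{+}1$-user scenario the miner can costlessly include the colluding $(k{+}1)$-th user's bid to gain $v_{k+1}>0$, breaking $1$-SCP. No gaps; your treatment is just slightly more explicit about the definitions than the paper's version.
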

\begin{proof}
When everyone follows the mechanism, the users with the highest values will be confirmed, so the social welfare is maximized.
Thus, the mechanism satisfies global SCP and OCA-proofness.
However, if there are $k+1$ users with the true values $v_1 > \cdots > v_{k+1} > 0$.
When everyone follows the mechanism, the $(k+1)$-th user's bid will not be included in the block, so the user's utility is zero.
The platform can collude with the $(k+1)$-th user, and include its bid.
Since the payment is always zero, the coalition now has positive utility, which violates $1$-SCP.
\end{proof}

\iffalse
\begin{mdframed}
	\begin{center}
		{\bf Even-or-Odd Auction}
	\end{center}

% \vspace{2pt}
\noindent\textbf{Parameters:} the fixed price $r$, and the block size is infinite.

\vspace{2pt}
\noindent\textbf{Mechanism:}
	\begin{itemize}
	\item 
	{\bf Bidding rule.}
	Each user submits the true value.
	\item 
	{\bf Inclusion rule.}
	Include all pending bids.
	\item 
	{\bf Confirmation rule.}
	Let $S$ be the set of all included bids that are at least $r$.
	If $|S|$ is even, then all included bids that are at least $r/2$ are confirmed.
	Otherwise, if $|S|$ is odd, then all included bids that are at least $r$ are confirmed.
	% All bids in $S$ are confirmed.
	% For each included bid $b$, if $b \geq r$, $b$ is confirmed.
	\item 
	{\bf Payment rule.}
	If $|S|$ is even, then each confirmed bid pays $r/2$.
	Otherwise, if $|S|$ is odd, then each confirmed bid pays $r$.
	Unconfirmed bids pay nothing.
	% Let $\bfb = (b_1,\dots,b_N)$ be the list of the confirmed bids.
	% For all $i \in [N]$, let $\mathsf{sum}(\bfb_{-i})$ denote the sum of $(b_1,\dots,b_{i-1},b_{i+1},\dots,b_N)$.
	% Then, if $20 > \mathsf{sum}(\bfb_{-i}) > 10r$, then $b_i$ pays $r/2$.
	% Otherwise, if $\mathsf{sum}(\bfb_{-i}) \leq 10r$ or $\mathsf{sum}(\bfb_{-i}) \geq 20r$, then $b_i$ pays $r$.
	\item 
	{\bf Miner revenue rule.}
	If $|S|$ is even, then the miner revenue is zero, and all payments are burnt.
	Otherwise, if $|S|$ is odd, then the miner is paid $r/2$. Else is burnt.
	\end{itemize}
\end{mdframed}
\fi

\begin{mdframed}
	\begin{center}
		{\bf Discount Auction}
	\end{center}

% \vspace{2pt}
\noindent\textbf{Parameters:} the fixed price $r$, and the block size is infinite.

\vspace{2pt}
\noindent\textbf{Mechanism:}
	\begin{itemize}
	\item 
	{\bf Bidding rule.}
	Each user submits the true value.
	\item 
	{\bf Inclusion rule.}
	Include all pending bids.
	\item 
	{\bf Confirmation rule.}
	Let $S$ be the set of all included bids, and define $f$ as \[
	f(t) = \left\{\begin{matrix}
		r, &\text{if } t \leq 10; \\ 
		r/2, & \text{otherwise}.
		\end{matrix}\right.	
	\]
	% If there are at least $3$ bids in $S$ that are larger than $r/2$, then all bids larger than $r/2$ are confirmed.
	% Otherwise, if the number of bids $> r/2$ is less than $3$, only the bids larger
	% , and define $f(t) = \frac{r}{2}(\frac{1}{t} + 1)$.
	Let $t^*$ be the largest integer such that there are $t^*$ bids in $S$ larger than $f(t^*)$.
	All bids larger than $f(t^*)$ are confirmed.
	\item 
	{\bf Payment rule.}
	Each confirmed bid pays $f(t^*)$.
	Unconfirmed bids pay nothing.
	\item 
	{\bf Miner revenue rule.}
	The miner revenue is zero, and all payments are burnt.
	\end{itemize}
\end{mdframed}
\begin{lemma}
\label{lem:discount}
The discount auction above satisfies $1$-SCP.
However, it does not satisfy global SCP or OCA-proofness.
\end{lemma}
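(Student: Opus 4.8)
The plan is first to pin down the allocation rule of the discount auction in closed form. Writing $n_{>r}$ and $n_{>r/2}$ for the number of submitted bids strictly exceeding $r$ and $r/2$, one checks that the ``$h$-index'' quantity $t^*$ equals $n_{>r/2}$ when $n_{>r/2}>10$ and equals $n_{>r}$ otherwise; hence the confirmed set is exactly the bids above $r/2$, each paying $r/2$, in the first regime, and exactly the bids above $r$, each paying $r$, in the second. The single structural fact used throughout is that the miner's revenue is identically $0$ and every payment is burnt, so any coalition's joint utility equals (the total true value of its confirmed members) minus $(\text{price})\cdot(\text{number of confirmed bids the coalition submitted})$.

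For the $1$-SCP direction, take a coalition consisting of the miner and one user of true value $v$; by the fact above its joint utility is $v$ if the user's primary bid is confirmed, minus the price times the number of coalition-submitted confirmed bids. Honestly, letting $m$ be the number of honest bids above $r/2$: the price is $r/2$ (and the user confirmed iff $v>r/2$) when $m+[v>r/2]>10$, and the price is $r$ (and the user confirmed iff $v>r$) otherwise. I will argue no deviation beats this. The only qualitative gain available is to push the price from $r$ down to $r/2$, which requires forcing the block to contain more than $10$ bids above $r/2$; since honest bids contribute at most $m$ and the user's own primary at most $1$, at least $\max(0,\,10-m)$ \emph{coalition-submitted fake bids} above $r/2$ are needed, and every one of them is itself confirmed at price $r/2$, burning $r/2$ of the coalition's money. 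In the $r$-price regime we have $m\le 10$ always, and $m\le 9$ whenever $v>r/2$ (otherwise $m+[v>r/2]>10$ would put us in the other regime); so triggering the discount when $v>r/2$ costs the coalition at least $\tfrac r2\cdot\bigl(1+(10-m)\bigr)\ge r$ (the injected fake(s) plus the user's primary, all confirmed at $r/2$), giving joint utility $\le v-r$, which is $\le\max(v-r,0)$, the honest value. A short bookkeeping of the subcases $v>r$, $r/2<v\le r$, and $v\le r/2$, together with individual rationality (which caps the user's utility at $0$ whenever it is unconfirmed honestly, and caps any confirmed outcome at $v-\tfrac r2\le 0$ when $v\le r/2$), and the trivial observation that keeping the price at $r$ or censoring honest bids never helps, establishes $1$-SCP.

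For the failure of global SCP, I will exhibit a bad instance: $10$ users, each with true value $r-\epsilon$ for a small $\epsilon>0$. Honestly $n_{>r/2}=10$ (not exceeding $10$) and $n_{>r}=0$, so $t^*=0$, nothing is confirmed, and the social welfare is $0$. Now let the grand coalition (which contains the miner) inject one fake bid at $r/2+\epsilon'$ for a small $\epsilon'>0$: the block then has $11$ bids above $r/2$, so the price becomes $r/2$ and all $11$ bids are confirmed, giving social welfare $10(r-\epsilon)-11\cdot\tfrac r2=\tfrac{9r}{2}-10\epsilon>0$ once $\epsilon<\tfrac{9r}{20}$. This coordinated strategy (all users truthful, miner injecting one fake bid, honest inclusion) strictly raises social welfare, contradicting global SCP. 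For OCA-proofness, observe that \emph{any} individually rational single-bid strategy $\sigma$ satisfies $\sigma(r-\epsilon)\le r-\epsilon<r$, so on this instance every user bids strictly below $r$; hence $n_{>r}=0$ and $n_{>r/2}\le 10$ under $\sigma$, again yielding social welfare $0$, which is beaten by the same fake-bid injection (which $\sigma$, outputting a single real, cannot perform). Thus no choice of $\sigma$ makes the honest-inclusion/$\sigma$ profile welfare-maximal, so OCA-proofness fails as well.

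The main obstacle I anticipate is the $1$-SCP case analysis: one must check that triggering the $r/2$ discount always costs the coalition, through burnt payments on the newly confirmed fake and primary bids, at least as much as it gains, uniformly over the regimes determined by where $m$ sits relative to $10$ and where $v$ sits relative to $r/2$ and $r$. The threshold ``$10$'' in $f$ and the factor-of-two gap between $f$'s two values are precisely what make this ledger balance, so the argument should be robust but will require care with the boundary cases $m=9,10$.
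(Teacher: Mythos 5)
Your proof is correct and follows essentially the same route as the paper's: a gain-versus-burn accounting for $1$-SCP (any fake bid that triggers the $r/2$ discount is itself confirmed and burns at least the $r/2$ saving on the colluding user's payment), and a ten-user instance in which the grand coalition injects an eleventh bid above $r/2$ to create positive welfare, establishing the failure of global SCP and OCA-proofness. The only cosmetic differences are that you derive the allocation in closed form and argue the unilateral-deviation cases directly where the paper invokes Myerson's lemma, and you use true values $r-\epsilon$ with a fake bid at $r/2+\epsilon'$ where the paper uses values exactly $r$ and a fake bid at $r$.
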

\begin{proof}
% The key observation of this mechanism is that each user $i$'s payment depends on $\bfb_{-i}$.
% Because the miner revenue is always zero and the threshold for confirmation is fixed, the only way for a coalition formed by the miner and user $i$ to increase their joint utility is to lower the payment.
% If they want to lower the payment, they have to inject some fake bids to make $\mathsf{sum}(\bfb_{-i})$ larger so that the user's payment can decrease by $r/2$.
% However, $\bfb$ is a list of the confirmed bids, and they have to pay at least $r/2$ for each confirmed bid, so the joint utility does not increase.

We first show that the mechanism satisfies $1$-SCP.
Consider any coalition $C$ formed by the miner and user $i$.
Because the miner revenue is always zero, $C$'s joint utility can only come from user $i$.
Fix any bid vector $\bfb_{-i}$ from other users, user $i$'s confirmation and the payment satisfy the requirements in Myerson's lemma.
Thus, overbidding and underbidding cannot increase user $i$'s utility.
Next, we show that injecting fake bids does not help either.
To increase user $i$'s utility, the only way is to lower the payment.
However, regardless of the bids, the payment for each bid is always in $[r/2,r]$, so injecting fake bids can only lower the payment by at most $r/2$.
On the other hand, all confirmed bids must pay at least $r/2$, which means user $i$ has to pay at least $r/2$ for each fake bid.
Thus, injecting fake bids do not increase user $i$'s utility, so we conclude that the mechanism satisfies $1$-SCP.

Now, we show that the mechanism does not satisfy OCA-proofness.
Imagine a scenario where there are $10$ users each with the true value $r$.
Without injecting any fake bid, the lowest possible payment for each user is $f(10) = r$,
so the social welfare is at most $0$.
However, if the global coalition inject one fake bid $r$, everyone's payment will further lower to $f(11) = r/2$.
In this case, the social welfare becomes $10r - 11r/2 = 9r/2 > 0$.
Thus, the mechanism does not satisfy OCA-proofness.

The mechanism does not satisfy global SCP either since the bidding rule only outputs a single real.
If the mechanism satisfied global SCP, the global coalition could pick
the suggested bidding rule as the individually rational bidding
strategy $\sigma$ to maximize the social welfare.
Given that the mechanism does not satisfy OCA-proofness, it cannot satisfy global SCP either.
\end{proof}

%%% Local Variables:
%%% mode: latex
%%% TeX-master: "oca"
%%% End:

% \input{deferred}
\section{Alternative Proof of \cref{thm:random-scp-impossible}}

The proof in \cref{sec:randomglobalscp} relies on the fact that the miner can 
deviate from the inclusion rule and choose an arbitrary subset of transactions to include in the block.
In particular, \cref{lem:top-k} assumes that the global coalition 
can internally simulate the inclusion rule and strategically choose which transactions are included.
In this section, we present an alternative proof of \cref{thm:random-scp-impossible}
which does not rely on deviations from the inclusion rule.
As a result, this alternative proof rules out the possibility of achieving UIC, MIC,
and global SCP even when the inclusion rule is strictly enforced, for example using trusted hardware.

The two proofs, however, are incomparable.
The alternative proof in this section relies on the value domain being unbounded,
since it analyzes the behavior of payments and burning as user values tend to infinity (see the proof intuition below).
In contrast, the proof in \cref{sec:randomglobalscp} still holds
even when the value domain is bounded, say $[0, M]$,
assuming the confirmation probability under a single bid $a < M$ is non-zero.
This assumption is mild.
If it does not hold, i.e., if a single bid $a$ has positive confirmation probability
only when $a = M$,
then its payment must be $M$ by Myerson's lemma.
Consequently, the single user's utility is always zero.
Moreover, when the value domain is unbounded,
we can always find a sufficiently large value $M$ such that
confirmation probability is positive as shown in \cref{sec:randomglobalscp}.
Because each proof has its advantages,
we include both in this paper.

% Recall that in the MPC-assisted model, 
% the miners and the global coalition can only submit bids strategically,
% while the block generation is enforced by the ideal functionality $\FMPC$.
% However, the proof in \cref{sec:randomglobalscp} does not hold in the MPC-assisted model,
% because \cref{lem:top-k} relies on the fact that the global coalition 
% runs the inclusion rule in its head, and chooses the included transactions strategically,
% which are not allowed in the MPC-assisted model.
% In this section, we will show that in the MPC-assisted model,
% no non-trivial mechanism can 
% simultaneously satisfy UIC, MIC, and global SCP in the ex post sense.
% This impossibility not only extends the results \cref{sec:randomglobalscp} 
% to the MPC-assisted model,
% but also implies that achieving Bayesian notion of MIC in \cref{sec:second-price-in-MPC}
% is the best we can do.
% Before giving the formal proof, we outline the intuition.

\paragraph{Proof intuition.}
Consider first a lonely world with only a single user submitting a true value~$v$.
By Myerson's lemma, the user's payment is given by $p(v) = v \cdot x(v) - \int_0^v x(t) \, dt$.
We show that $\lim_{v \rightarrow \infty} p(v)/v = 0$ (\cref{lemma:payment-burning-little}),
i.e., when the bid~$v$ is sufficiently large, the user's payment becomes negligible compared to~$v$.
Since the miner's revenue is upper bounded by the total payment, the miner's revenue is also negligible in the lonely world.

Now, consider a crowded world where $n$ users each submit the same bid~$v$.
We show that for any $n$, if the mechanism satisfies global SCP,
then $\lim_{v \rightarrow \infty} Q_n(v)/v = 0$, 
where $Q_n(v)$ denotes the total amount of burnt coin (\cref{lem:burning-myerson} and \cref{lemma:payment-burning-little}).
Conceptually, burnt coins represent the ``payment'' of the global coalition,
so the proof parallels the analysis of a single user's payment.

Next, we show that for any $n$, the miner revenue $\mu_n(v)$ in the crowded world 
must satisfy $\lim_{v \rightarrow \infty} \mu_n(v)/v = 0$ (\cref{lemma:miner-revenue-little}).
If this is not true, i.e., $\mu_n(v)$ is large with respect to $v$,
then there must be a user $i$ whose payment is at least $\mu_n(v)/n$,
which is also large with respect to $v$.
However, as we have seen, the miner revenue is negligible in the lonely world.
% Moreover, the amount of burnt coins can be viewed as the cost of injecting fake bids,
% and it is also negligible.
Thus, if there is only one user $i$ submitting $v$,
the miner can inject $n-1$ fake bids,
inflating user~$i$'s payment to $\mu_n(v)/n$.
Because the amount of burnt coins is negligible,
the miner can extract $\mu_n(v)/n$ for user $i$,
which is much larger than the honest case.

Finally, by global SCP, we show that adding an additional user to the system does not affect the joint utility of the original players (\cref{lem:add-user-expost}).
% For simplicity, let's assume the miner revenue is always zero 
% and that users have identical utility if they bid the same value~$v$.
As we have shown, the miner's revenue is negligible,
so \cref{lem:add-user-expost} implies that each original user's utility almost remains unchanged when an additional user is added.
Thus, starting from the lonely world, where the only user has positive utility~$u_v$, the total social welfare in the crowded world becomes $n u_v$, growing linearly with~$n$.
However, because the block size is finite, this is impossible.
% In the formal proof of \cref{thm:expost-MIC-global-SCP},
% we will remove the simplifying assumptions that miner revenue is always zero and that users' utilities are identical.

\begin{remark}[On the permissionless setting]
In this alternative proof, 
it is essential to consider the \emph{permissionless setting}, where the number of users is not fixed \emph{a priori} and strategic players may inject an arbitrary number of fake bids.  
This assumption is critical because the proof does not rely on any deviation from the inclusion rule.
Hence, it remains valid even if the miner is forced to follow the inclusion rule exactly.  

If, in contrast, we further assume that the number of users is fixed \emph{a priori}, the miner cannot inject any fake bids, and its strategy space becomes empty. 
In such a restricted setting, MIC would hold trivially. 
However, the second-price auction satisfies both UIC and global SCP,
which implies that the impossibility does not hold.
\end{remark}

The rest of this section is devoted to the formal proof.
For convenience, we will use the following notation.
Consider any number $n$ of users each with the true value $v$,
and let $\bfv_n = (\underbrace{v,\dots,v}_n)$.
Let $X_n(v) = \sum_i x_i(\bfv_n)$ denote the sum of all users' allocation probability,
$P_n(v) = \sum_i p_i(\bfv_n)$ denote the total payment,
$\mu_n(v) = \mu(\bfv_n)$ denote the miner revenue,
and $Q_n(v) = P_n(v) - \mu_n(v)$ denote the total burnt coin.

\begin{lemma}[Technical lemma implied by the proof of Myerson's Lemma~\cite{myerson,myerson-lecture-hartline}]
Let $f(z)$ be a non-decreasing function. 
Suppose that $y \cdot (f(z)-f(y)) \leq g(z) - g(y) \leq z \cdot (f(z)-f(y))  $ 
for any $z \geq y \geq 0$, and moreover, $g(0) = 0$.
Then, it must be that 
\[
g(z) = z \cdot f(z) - \int_0^{z} f(t) dt.
\]
\label[lemma]{lem:sandwich}
\end{lemma}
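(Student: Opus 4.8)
The plan is to recover $g$ from the sandwich inequality via a telescoping argument, combined with the standard fact that Riemann sums of the monotone function $f$ converge. The case $z=0$ is immediate from $g(0)=0$, so fix $z>0$ throughout.

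First I would fix $n$, take the uniform partition $z_i:=iz/n$ for $i=0,1,\dots,n$ of $[0,z]$, and telescope:
\[
g(z)=g(z)-g(0)=\sum_{i=0}^{n-1}\bigl(g(z_{i+1})-g(z_i)\bigr).
\]
Applying the hypothesis with $y=z_i$ and $z=z_{i+1}$ (valid since $z_{i+1}\ge z_i\ge 0$) bounds each summand:
\[
z_i\bigl(f(z_{i+1})-f(z_i)\bigr)\ \le\ g(z_{i+1})-g(z_i)\ \le\ z_{i+1}\bigl(f(z_{i+1})-f(z_i)\bigr).
\]
Summing gives $L_n\le g(z)\le U_n$, where $L_n:=\sum_{i=0}^{n-1}z_i\bigl(f(z_{i+1})-f(z_i)\bigr)$ and $U_n:=\sum_{i=0}^{n-1}z_{i+1}\bigl(f(z_{i+1})-f(z_i)\bigr)$.

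Next I would identify the common limit of $L_n$ and $U_n$. Their difference is
\[
U_n-L_n=\sum_{i=0}^{n-1}(z_{i+1}-z_i)\bigl(f(z_{i+1})-f(z_i)\bigr)=\frac{z}{n}\bigl(f(z)-f(0)\bigr)\xrightarrow[n\to\infty]{}0,
\]
using monotonicity of $f$ and the telescoping of $\sum_i(f(z_{i+1})-f(z_i))$, so it suffices to evaluate $\lim_n U_n$. A one-line summation-by-parts (Abel's identity), again exploiting that $\sum_{i\le j}(f(z_{i+1})-f(z_i))=f(z_{j+1})-f(0)$ telescopes, rearranges the sum into
\[
U_n=z\,f(z)-\frac{z}{n}\sum_{i=0}^{n-1}f\!\left(\tfrac{iz}{n}\right).
\]
The subtracted term is exactly a left Riemann sum of $f$ on $[0,z]$, which converges to $\int_0^z f(t)\,dt$ because a monotone function on a compact interval is Riemann integrable. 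Hence $\lim_n U_n=\lim_n L_n=z\,f(z)-\int_0^z f(t)\,dt$, and since $L_n\le g(z)\le U_n$ for every $n$, the identity $g(z)=z\,f(z)-\int_0^z f(t)\,dt$ follows. (Equivalently, one may phrase this step as: $L_n,U_n$ are lower/upper Riemann–Stieltjes sums for $\int_0^z t\,df(t)$, which exists since $t\mapsto t$ is continuous and $f$ has bounded variation, and integration by parts gives $\int_0^z t\,df(t)=z\,f(z)-\int_0^z f(t)\,dt$.)

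The only place that calls for care — the ``main obstacle,'' though a mild one — is that $f$, and hence $g$, may have jump discontinuities, so no appeal to continuity of $g$ is available. Monotonicity of $f$ is precisely what rescues the argument: it makes $f$ Riemann integrable (so the Riemann sum above converges) and forces $U_n-L_n\to 0$. Everything else is routine telescoping and Abel summation, which I would not write out in full.
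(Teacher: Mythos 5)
Your proof is correct: the paper gives no separate proof of \cref{lem:sandwich}, deferring to the standard derivation of Myerson's payment identity, and your telescoping-plus-Riemann-sum argument is exactly that standard derivation (it is also the same discretization the paper carries out explicitly in \cref{lem:util-not-change-expost}). Nothing is missing --- monotonicity of $f$ is indeed what gives both the Riemann integrability of $f$ on $[0,z]$ and the vanishing of $U_n - L_n$, so the sandwich pins $g(z)$ to the common limit $z\,f(z)-\int_0^z f(t)\,dt$.
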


\begin{lemma}
\label[lemma]{lem:burning-myerson}
% Fix any $n$ and $v$.
% Suppose the world consists of $n$ users each with the true value $v$,
% and let $\bfv_n = (\underbrace{v,\dots,v}_n)$.
% Let $X_n(v) = \sum_i x_i(\bfv_n)$ denote the sum of all users' allocation probability,
% and let $Q_n(v) = \sum_i p_i(\bfv_n) - \mu(\bfv_n)$ denote the amount of burnt coin.
For any $n$,
if the mechanism satisfies global SCP, 
then $X_n(\cdot)$ must be non-decreasing,
and the function $Q_n(\cdot)$ satisfies
\[
	Q_n(v) = v \cdot X_n(v) - \int_0^{v} X_n(t) d t.
\]
\end{lemma}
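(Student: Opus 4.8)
The plan is to reprove Myerson's lemma in disguise, with the \emph{total burnt amount} $Q_n(\cdot)$ playing the role of the payment and $X_n(\cdot)$ the role of the allocation rule. The crucial observation is a ``revealed preference'' consequence of global SCP. Fix any common value $v\ge 0$ and imagine $n$ users, each with true value $v$, who together with the miner form the grand coalition. Consider the particular (legal) deviation in which all $n$ users instead submit a common bid $z\ge 0$ while the miner still runs the honest inclusion rule. By weak symmetry, the outcome distribution on the bid vector $(z,\dots,z)$ is exactly the one defining $X_n(z)$, $P_n(z)$, $\mu_n(z)$, $Q_n(z)$; and since each confirmed transaction is still worth $v$ while the only loss of welfare is the burnt coins, the expected social welfare of this deviation equals $v\cdot X_n(z)-Q_n(z)$. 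Global SCP says the honest profile (all bid $v$) is at least as good, i.e.
\[
v\cdot X_n(v)-Q_n(v)\ \ge\ v\cdot X_n(z)-Q_n(z)\qquad\text{for all }v,z\ge 0 .
\]

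First I would extract from this the two one-sided bounds needed to invoke \cref{lem:sandwich}. Fix $z\ge y\ge 0$. Taking true value $y$ and deviation bid $z$ in the display gives $y X_n(y)-Q_n(y)\ge y X_n(z)-Q_n(z)$, i.e.\ $Q_n(z)-Q_n(y)\ge y\,(X_n(z)-X_n(y))$; taking true value $z$ and deviation bid $y$ gives $z X_n(z)-Q_n(z)\ge z X_n(y)-Q_n(y)$, i.e.\ $Q_n(z)-Q_n(y)\le z\,(X_n(z)-X_n(y))$. Combining the two bounds yields $(z-y)(X_n(z)-X_n(y))\ge 0$, so $X_n(\cdot)$ is non-decreasing, which is the first claim and also a hypothesis of \cref{lem:sandwich}.

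Next I would check the normalization $Q_n(0)=0$: by the normalization clause of Myerson's lemma every zero bid pays zero, so $P_n(0)=0$; since the burnt amount lies between $0$ and the total collected payment (burnt coins come out of the payments, and budget feasibility forbids the miner revenue from exceeding the total payment), we get $0\le Q_n(0)\le P_n(0)=0$. Now apply \cref{lem:sandwich} with $f=X_n$ (non-decreasing), $g=Q_n$, and $g(0)=0$: the two displayed one-sided inequalities are precisely its hypotheses, so it yields $Q_n(v)=v\cdot X_n(v)-\int_0^v X_n(t)\,dt$, completing the proof.

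I expect the only real subtlety — more a point to state carefully than a genuine obstacle — to be the justification that the ``everyone bids $z$'' deviation produces social welfare exactly $v\cdot X_n(z)-Q_n(z)$; this is where weak symmetry (so the outcome depends only on the common bid amount, not on identities) and the accounting identity ``social welfare $=$ value of confirmed transactions $-$ burnt coins'' are both used. Everything else is a routine transcription of the Myerson argument, and because we are working with the grand coalition we never have to reason about fake-bid injection or about the miner deviating: only one direction of global SCP is needed, namely that honest behavior is an upper bound on attainable social welfare.
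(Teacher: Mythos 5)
Your proof is correct and takes essentially the same route as the paper's: the same ``burning sandwich'' inequalities obtained by applying global SCP to the all-bid-$z$ deviation of the grand coalition, monotonicity of $X_n$ extracted from them, the normalization $Q_n(0)=0$, and an appeal to \cref{lem:sandwich}. The only cosmetic difference is that you justify $Q_n(0)=0$ via Myerson's normalization (which presupposes UIC), whereas the individual rationality built into the payment rule already forces $p_i(0)=0$, so the step goes through under global SCP alone, exactly as in the paper.
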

\begin{proof}
% The proof is similar to the proof of Myerson's lemma.
% When all users bid $\bfv_n$, user $i$'s utility is $v \cdot x_i(\bfv_n) - p_i(\bfv_n)$,
% and the miner revenue is $\mu(\bfv_n) = \sum_i p_i(\bfv_n) - Q_n(v)$.
If all users bid $b$ instead of their true value $v$,
the social welfare is $v \cdot X_n(b) - Q_n(b)$.
Consider any $0 \leq y \leq z$,
by global SCP, 
it must be $y \cdot X_n(y) - Q_n(y) \geq y \cdot X_n(z) - Q_n(z)$,
and $z \cdot X_n(z) - Q_n(z) \geq z \cdot X_n(y) - Q_n(y)$.
Thus, we obtain the following ``burning sandwich'' inequality: \[
y \cdot [X_n(z) - X_n(y)] \leq Q_n(z) - Q_n(y) \leq z \cdot [X_n(z) - X_n(y)],
\]
which implies that $X_n(\cdot)$ must be non-decreasing.
Because $p_i(0) = 0$ for any user $i$, we have $Q_n(0) = 0$.
%  and $\mu(0) = 0$,
By \cref{lem:sandwich},
we have \[
	Q_n(v) = v \cdot X_n(v) - \int_0^{v} X_n(t) d t.
\]
\end{proof}

% \begin{lemma}
% \label[lemma]{lemma:burning-little}
% Suppose the mechanism satisfies global SCP, and the block size is finite.
% Then, for any $n$,
% it must be $\lim_{v \rightarrow \infty} Q_n(v)/v = 0$.
% %  and $\epsilon > 0$,
% % there exists a real number $v$ such that $Q_n(v) < \epsilon \cdot v$.
% \end{lemma}
\begin{lemma}
\label[lemma]{lemma:payment-burning-little}
Consider any natural number $k$ and any non-decreasing function $f(z): \mathbb{R}_{\geq 0} \rightarrow [0,k]$.
We define function $g(\cdot)$ as
\[
g(v) = v \cdot f(v) - \int_0^{v} f(t) dt.
\]
Then, it holds that $\lim_{v \rightarrow \infty} g(v)/v = 0$.
\end{lemma}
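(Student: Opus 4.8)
The claim is that for any non-decreasing $f:\mathbb{R}_{\ge 0}\to[0,k]$, the function $g(v)=v f(v)-\int_0^v f(t)\,dt$ satisfies $g(v)/v\to 0$. Geometrically, $g(v)$ is the area of the region between the horizontal line at height $f(v)$ and the graph of $f$ over $[0,v]$; since $f$ is non-decreasing and bounded by $k$, this region shrinks relative to $v$ as $v$ grows. First I would rewrite $g(v)/v = f(v) - \frac{1}{v}\int_0^v f(t)\,dt$, the difference between $f(v)$ and the running average of $f$ on $[0,v]$. Since $f$ is bounded and non-decreasing, $f(v)$ converges to some limit $L\le k$ as $v\to\infty$; and the running average of a function converging to $L$ also converges to $L$. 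Hence the difference tends to $0$.

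To make the running-average claim rigorous without invoking heavy machinery, I would argue directly with an $\varepsilon$-cut. Fix $\varepsilon>0$. Since $f$ is non-decreasing and bounded, $L=\sup_t f(t)=\lim_{t\to\infty}f(t)$ exists, and there is a threshold $T$ with $L-\varepsilon \le f(t)\le L$ for all $t\ge T$. For $v>T$, split $\int_0^v f = \int_0^T f + \int_T^v f$. The first piece is at most $kT$, a constant; the second lies in $[(L-\varepsilon)(v-T),\,L(v-T)]$. Therefore
\[
g(v) = vf(v) - \int_0^v f(t)\,dt \le vL - (L-\varepsilon)(v-T) = \varepsilon v + L T - \varepsilon T \le \varepsilon v + kT,
\]
using $f(v)\le L$ and $L\le k$. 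Dividing by $v$ gives $g(v)/v \le \varepsilon + kT/v$, so $\limsup_{v\to\infty} g(v)/v \le \varepsilon$. Since $g(v)\ge 0$ always (the line at height $f(v)$ dominates $f$ on $[0,v]$, as $f$ is non-decreasing), and $\varepsilon>0$ was arbitrary, $\lim_{v\to\infty} g(v)/v = 0$.

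There is no real obstacle here — this is an elementary analysis fact — but the one point requiring a little care is the nonnegativity and the bound $g(v)\ge 0$, which relies on monotonicity of $f$ (so that $f(t)\le f(v)$ for $t\le v$, hence $\int_0^v f(t)\,dt \le v f(v)$). I would state that explicitly. I would also note that we do not need continuity of $f$: the Riemann integral $\int_0^v f$ is well-defined since monotone functions are integrable, and all the inequalities above are integral inequalities that hold pointwise under the integrand bounds, so the argument goes through verbatim for any non-decreasing bounded $f$.
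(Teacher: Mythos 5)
Your proof is correct and follows essentially the same route as the paper's: rewrite $g(v)/v = f(v) - \frac{1}{v}\int_0^v f(t)\,dt$, use boundedness and monotonicity to get a limit $L$ for $f$, and split the integral at a threshold beyond which $f$ is within $\varepsilon$ of $L$ to show the running average also tends to $L$. Your explicit remarks on $g(v)\ge 0$ and on integrability of monotone $f$ are fine additions but do not change the argument.
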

\begin{proof}
It is equivalent to show that \[
	\lim_{v \rightarrow \infty} \left[f(v) - \frac{1}{v} \int_0^{v} f(t) d t\right] = 0.
\]
It is well known that a bounded monotonic function must have a finite limit at infinity.
Thus, we have $\lim_{v \rightarrow \infty} f(v) = L$ for some $L \in [0, k]$.
	
Next, we will show that $\lim_{v \rightarrow \infty} \frac{1}{v} \int_0^{v} f(t) d t = L$.
In other words, we need to show that for any $\delta > 0$,
we can find a real number $v^*$ such that $\frac{1}{v} \int_0^{v} f(t) d t \in [L - \delta, L + \delta]$ for all $v > v^*$.
Because $\lim_{v \rightarrow \infty} f(v) = L$,
there exists a real number $v_0$ such that $f(v) \in [L - \delta, L + \delta]$ for all $v > v_0$,
so we have \[
	(L - \delta) \cdot (v - v_0) \leq \int_{v_0}^{v} f(t) d t 
	\leq (L + \delta) \cdot (v - v_0).
\]
Because $\lim_{v \rightarrow \infty} (v - v_0)/v = 1$,
we have \[
\lim_{v \rightarrow \infty}\frac{1}{v} \int_{v_0}^{v} f(t) d t \in [L - \delta, L + \delta].
\]
We can split the integral as \[
	\frac{1}{v}\int_0^{v} f(t) d t
	= \frac{1}{v}\int_0^{v_0} f(t) d t + \frac{1}{v}\int_{v_0}^{v} f(t) d t.
\]
The term $\int_0^{v_0} f(t) d t$ is a constant independent of $v$,
so we have $\lim_{v \rightarrow \infty} \frac{1}{v} \int_0^{v_0} f(t) d t = 0$.
Combining the argument above, we have
\begin{align*}
L - \delta \leq \lim_{v \rightarrow \infty} \frac{1}{v} \int_0^{v} f(t) d t 
= \lim_{v \rightarrow \infty}\frac{1}{v}\int_0^{v_0}  f(t) d t + 
\lim_{v \rightarrow \infty}\frac{1}{v} \int_{v_0}^{v} f(t) d t \leq L + \delta.
\end{align*}
Because the argument above holds for any $\delta > 0$, we have 
$\lim_{v \rightarrow \infty} \frac{1}{v} \int_0^{v} f(t) d t = L$.
Therefore, we have \[
\lim_{v \rightarrow \infty} \left[f(v) - \frac{1}{v} \int_0^{v} f(t) d t\right] = 0,\]
so we conclude the proof.
\end{proof}

\begin{lemma}
\label[lemma]{lemma:miner-revenue-little}
% Let $\mu_n(v)$ be the miner's revenue when $n$ users all submit $v$.
Suppose the mechanism satisfies UIC,
MIC,
and global SCP.
Then, for any $n$,
it must be $\lim_{v \rightarrow \infty} \mu_n(v)/v = 0$.
\end{lemma}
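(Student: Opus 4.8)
The plan is a proof by contradiction that pits MIC against two sublinearity facts: a lone user's payment and the total amount burnt in the ``crowded'' configuration $\bfv_n$ both grow sublinearly in the common bid $v$. First I would collect these facts. In the lonely world with a single bid $v$, UIC and Myerson's lemma give $P_1(v)=v\cdot X_1(v)-\int_0^v X_1(t)\,dt$ with $X_1(\cdot)$ non-decreasing and valued in $[0,1]$, so \cref{lemma:payment-burning-little} (applied with $f=X_1$ and $k=1$) yields $\lim_{v\to\infty}P_1(v)/v=0$; since budget feasibility gives $\mu_1(v)\le P_1(v)$, also $\lim_{v\to\infty}\mu_1(v)/v=0$, which already settles the case $n=1$. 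For the crowded configuration, \cref{lem:burning-myerson} shows $X_n(\cdot)$ is non-decreasing with $Q_n(v)=v\cdot X_n(v)-\int_0^v X_n(t)\,dt$; since $X_n(\cdot)\le n$, \cref{lemma:payment-burning-little} (with $f=X_n$, $k=n$) gives $\lim_{v\to\infty}Q_n(v)/v=0$ for every $n$.

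Now fix $n\ge 2$ and suppose toward a contradiction that $\mu_n(v)/v\not\to 0$, so there exist $\epsilon>0$ and a sequence $v_j\to\infty$ with $\mu_n(v_j)\ge\epsilon v_j$. Consider the lonely world in which one honest user has true value $v_j$. A strategic miner injects $n-1$ fake bids all equal to $v_j$, applies a uniformly random permutation to the resulting $n$ equal bids, runs the honest inclusion rule on this vector in its head (sampling all required coins), publishes the resulting block, and lets the blockchain apply the remaining rules honestly. This reproduces the honest execution on $\bfv_n=(v_j,\dots,v_j)$, so the expected total payment, burnt amount, and miner revenue are $P_n(v_j)$, $Q_n(v_j)$, $\mu_n(v_j)$, while by weak symmetry the random placement makes the lone real bid pay $P_n(v_j)/n$ in expectation. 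Hence the injected bids cost the miner $P_n(v_j)-P_n(v_j)/n$ in expectation while it collects $\mu_n(v_j)=P_n(v_j)-Q_n(v_j)$, so its net expected utility from this deviation is $P_n(v_j)/n-Q_n(v_j)$. Budget feasibility gives $P_n(v_j)\ge\mu_n(v_j)\ge\epsilon v_j$, while for $j$ large the two sublinearity facts give $Q_n(v_j)<\epsilon v_j/(2n)$ and $\mu_1(v_j)<\epsilon v_j/(2n)$. For such $j$ the deviation earns strictly more than $\epsilon v_j/n-\epsilon v_j/(2n)=\epsilon v_j/(2n)>\mu_1(v_j)$, the honest miner utility in that lonely world --- contradicting MIC. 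Therefore $\mu_n(v)/v\to 0$ for every $n$.

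The step I expect to be the main obstacle is the cost-accounting inside the deviation: one must pin down that injecting fake bids that end up confirmed costs the miner exactly the sum of their (expected) payments --- the convention already implicit in \cref{lem:freeknob}, where unconfirmed injected bids ``need not pay any cost'' --- and combine this with weak symmetry (via the uniform permutation) to fix the lone real bid's expected payment at $P_n(v_j)/n$. Once that bookkeeping is set up, the remainder is routine limit manipulation on top of \cref{lem:burning-myerson} and \cref{lemma:payment-burning-little}.
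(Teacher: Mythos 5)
Your proposal is, at its core, the same argument as the paper's: the decisive deviation is the miner injecting $n-1$ fake bids at $v$ into the single-user world, whose net utility is the real user's payment minus the burn $Q_n(v)$; MIC caps this by the honest revenue $\mu_1(v)\le P_1(v)$, and the two sublinearity facts---$P_1(v)/v\to 0$ from Myerson's lemma plus \cref{lemma:payment-burning-little}, and $Q_n(v)/v\to 0$ from \cref{lem:burning-myerson} plus \cref{lemma:payment-burning-little}---finish the job. The paper packages this as the direct inequality $\mu_n(v)\le n\,[P_1(v)+Q_n(v)]$ rather than a contradiction along a sequence $v_j$, but that difference is cosmetic, and your cost accounting for the deviation (collect $\mu_n$, pay the fakes' payments, net $p_i-Q_n$) matches the paper's.

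The one step where you diverge, and where your justification does not hold as written, is the claim that ``by weak symmetry the random placement makes the lone real bid pay $P_n(v_j)/n$ in expectation.'' Weak symmetry (\cref{def:symmetry}) only makes the \emph{sorted} outcome distribution permutation-invariant; tie-breaking among the $n$ equal bids may still depend on identity or timestamp. Permuting the positions of the bids in the vector handed to the inclusion rule does not change which identity is attached to which bid, so the mechanism could, consistently with weak symmetry, treat the real user's bid systematically differently from the fakes, and the exchangeability you need for the exact $P_n(v_j)/n$ expectation does not follow. The paper avoids needing exchangeability: since the payments in the crowded configuration sum to $P_n(v)$, some bid pays at least $P_n(v)/n$ in expectation, and the miner's deviation is made \emph{conditional}---it injects the fakes only when the lone real user's expected payment under the chosen fake identities is at least $P_n(v)/n$, and plays honestly otherwise---so MIC yields $\mu_1(v)\ge P_n(v)/n-Q_n(v)$ without any claim that the real bid pays exactly the average. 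Your argument goes through once you replace the uniform-permutation step with this averaging-plus-conditional-deviation device (or otherwise argue the miner can arrange for the real bid's expected payment to be at least $P_n(v)/n$); the remaining limit bookkeeping is correct.
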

\begin{proof}
	% Let $\mcal{I}$ denote the set of $n$ users' identities,
	% Let $P_n(v)$ denote the sum of payments from all users,
	% and let $\mu_n(v)$ denote the miner's revenue in the honest execution.
% Consider the case where there are $n$ users with identities $\mcal{I}$ all submitting $v$.
% Because the total payment is $P_n(v)$,
% there must exist a user $i \in \mcal{I}$ whose payment is $p_i(\bfv_n) \geq P_n(v)/n$.

Consider any case where there are $n$ users all submitting $v$.
Because the total payment is $P_n(v)$,
there must exist a user whose payment is at least $P_n(v)/n$.

Now, imagine that there is only one user with identity $i$ submitting $v$,
so the miner's utility is $\mu_1(v)$ in the honest execution.
The miner randomly chooses $n-1$ fake identities $\mcal{I}$,
and checks whether $i$'s expected payment $p_i(\bfv_n)$ is at least $P_n(v)/n$
given $n$ identities $\mcal{I} \cup \{i\}$ all submitting $v$.
If not, the miner plays honestly without injecting any fake bids.
Otherwise, the miner injects $n-1$ fake bids $v$ on behalf of $\mcal{I}$.
% The miner can inject $n-1$ fake bids $v$ on behalf of other users in $\mcal{I}\setminus \{i\}$.
In this case, $i$'s payment becomes $p_i(\bfv_n)$,
and the miner's utility is $p_i(\bfv_n) - Q_n(v)$.
By MIC, it must be 
\begin{equation}
\label{eq:MIC-1}
\mu_1(v) \geq p_i(\bfv_n) - Q_n(v) \geq \frac{P_n(v)}{n} - Q_n(v).
\end{equation}
Because the honest miner revenue is upper bounded by the payment,
we have $\mu_1(v) \leq P_1(v)$ and $\mu_n(v) \leq P_n(v)$.
Together with \cref{eq:MIC-1}, we have
\begin{equation}
	\label{eq:MIC-2}
	\mu_n(v) \leq n \cdot [P_1(v) + Q_n(v)].
\end{equation}

Next, by UIC and Myerson's lemma, the only user $i$'s payment is subject to 
$P_1(v) = v \cdot x(v) - \int_0^{v} x(v) d t$.
Moreover, by \cref{lem:burning-myerson}, we have $Q_n(v) = v \cdot X_n(v) - \int_0^{v} X_n(t) d t$.
Thus, by \cref{lemma:payment-burning-little}, we have 
$\lim_{v \rightarrow \infty} P_1(v)/v = 0$ and
$\lim_{v \rightarrow \infty} Q_n(v)/v = 0$.
Because the miner revenue must be non-negative,
fix any $n$,
\cref{eq:MIC-2} implies $\lim_{v \rightarrow \infty} \mu_n(v)/v = 0$.
\end{proof}

Next, we will show that for any user $i$,
no matter how user $i$ bids,
the joint utility of the rest of the world does not change.
Formally, given a bid vector $\bfb$ and a user $i$, 
let $\mathsf{SW}(\bfb)$ be social welfare,
i.e., the sum of the utilities of all users and the miner.
Let $\mathsf{SW}_{-i}(\bfb) = \mathsf{SW}(\bfb) - \mathsf{util}_i(\bfb)$,
i.e., the joint utility of the miner and all users except user $i$.
% i.e., the social welfare minus user $i$'s utility.

\begin{lemma}
\label[lemma]{lem:util-not-change-expost}
Consider any (possibly randomized) TFM satisfying UIC and global SCP.
Then, 
for any bid vector $\bfb_{-i}$ of an arbitrary length,
for any user $i$,
and for any $b_i, b'_i$, 
it holds that \[
	\mathsf{SW}_{-i}(\bfb_{-i}, b_i)
	=
	\mathsf{SW}_{-i}(\bfb_{-i}, b'_i).
\] 
\end{lemma}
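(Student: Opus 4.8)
The plan is to transcribe the proof of the Bayesian analog \cref{lem:util-not-change}, replacing the expectation over $\bfb_{-i}$ by a pointwise argument. Fix the vector $\bfb_{-i}$ (which I read as the true values of the other users), a user $i$, and two bids $b_i,b_i'$; since the asserted equality is symmetric in the two bids, I may assume $b_i<b_i'$. Having fixed $\bfb_{-i}$, I abbreviate $x_i(b):=x_i(\bfb_{-i},b)$ and $p_i(b):=p_i(\bfb_{-i},b)$, and all of $x_i,p_i,\mathsf{util}_i,\mathsf{SW}_{-i}$ denote expectations over the TFM's internal coins. The key observation underlying the whole argument is that $\mathsf{SW}_{-i}(\bfb_{-i},b)$ is a function of the \emph{realized} bid vector $(\bfb_{-i},b)$ alone: every user $j\ne i$ bids its true value, so $j$'s utility $b_jx_j-p_j$ and the miner's revenue $\mu$ depend only on the bid vector, not on user $i$'s private value.

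First I would extract a small ``price of untruthfulness'' bound from UIC alone. Applying \cref{def:UIC} to a user of true value $b_i$ who could instead bid $b_i'$, and again to a user of true value $b_i'$ who could instead bid $b_i$, together with the monotonicity of $x_i(\cdot)$ guaranteed by Myerson's lemma (\cref{lemma:myerson}), yields
\[
0 \;\leq\; \bigl(b_i x_i(b_i) - p_i(b_i)\bigr) - \bigl(b_i x_i(b_i') - p_i(b_i')\bigr) \;\leq\; (b_i' - b_i)\bigl(x_i(b_i') - x_i(b_i)\bigr),
\]
where the left inequality is UIC at $b_i$ and the right one follows from UIC at $b_i'$ after rearranging it as $p_i(b_i') - p_i(b_i) \leq b_i'\bigl(x_i(b_i') - x_i(b_i)\bigr)$. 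In words, a true-value-$b_i$ user loses at most $(b_i'-b_i)(x_i(b_i')-x_i(b_i))$ in utility by bidding $b_i'$ rather than truthfully.

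Next I would feed this into global SCP and then telescope. In the world with true values $(\bfb_{-i},b_i)$, \cref{def:globalSCP} says the all-honest profile maximizes expected social welfare; I compare it against the coordinated deviation in which only user $i$ bids $b_i'$ while the miner and the other users stay honest. Using $\mathsf{SW}=\mathsf{SW}_{-i}+\mathsf{util}_i$ and the observation above that $\mathsf{SW}_{-i}$ depends only on the realized bid vector, this gives
\[
\mathsf{SW}_{-i}(\bfb_{-i},b_i') - \mathsf{SW}_{-i}(\bfb_{-i},b_i) \;\leq\; \bigl(b_i x_i(b_i) - p_i(b_i)\bigr) - \bigl(b_i x_i(b_i') - p_i(b_i')\bigr) \;\leq\; (b_i' - b_i)\bigl(x_i(b_i') - x_i(b_i)\bigr),
\]
the last step by the previous display. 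Finally, I would partition $[b_i,b_i']$ into $L$ equal pieces $b_i=b^{(0)}<\cdots<b^{(L)}=b_i'$, apply this bound on each piece, and sum; since each confirmation probability lies in $[0,1]$ and $x_i(\cdot)$ is non-decreasing, $\sum_j\bigl(x_i(b^{(j+1)})-x_i(b^{(j)})\bigr)=x_i(b_i')-x_i(b_i)\leq 1$, so the right-hand side collapses to at most $(b_i'-b_i)/L$, whence $\mathsf{SW}_{-i}(\bfb_{-i},b_i') - \mathsf{SW}_{-i}(\bfb_{-i},b_i)\leq (b_i'-b_i)/L$ for every $L$. Letting $L\to\infty$ gives $\mathsf{SW}_{-i}(\bfb_{-i},b_i')\leq\mathsf{SW}_{-i}(\bfb_{-i},b_i)$, and exchanging the roles of $b_i$ and $b_i'$ gives the reverse inequality, hence equality. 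I expect the only delicate point to be the bookkeeping in the penultimate step---cleanly splitting social welfare into $\mathsf{SW}_{-i}$ plus user $i$'s possibly-off-value utility $b_i x_i(b_i')-p_i(b_i')$, and invoking global SCP for exactly the true-value vector $(\bfb_{-i},b_i)$; everything else is a routine copy of the argument for \cref{lem:util-not-change}.
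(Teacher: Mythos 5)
Your proposal is correct and follows essentially the same route as the paper's proof: the same UIC-based bound $\mathsf{util}_i(\bfb_{-i},b_i)-\mathsf{util}_i(\bfb_{-i},b'_i)\leq (b'_i-b_i)\bigl(x_i(\bfb_{-i},b'_i)-x_i(\bfb_{-i},b_i)\bigr)$, the same global-SCP comparison via $\mathsf{SW}=\mathsf{SW}_{-i}+\mathsf{util}_i$ with only user $i$ deviating, and the same equal-partition telescoping with $L\to\infty$ plus symmetry in $b_i,b'_i$. The only cosmetic difference is that you derive the utility-loss bound from two direct applications of UIC rather than from Myerson's payment identity, which changes nothing of substance.
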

\begin{proof}
Because the mechanism satisfies UIC,
user $i$'s payment is subject to Myerson's lemma.
Thus, given any bid vector $\bfb_{-i}$ from other users,
user's payment is $p_i(\bfb_{-i}, b_i) = b_i \cdot x_i(\bfb_{-i}, b_i) - \int_{0}^{b_i} x_i(\bfb_{-i}, t) dt$.
Suppose user $i$'s true value is $b_i$.
For any $b'_i$, if user $i$ bids $b'_i$ instead of $b_i$, 
user $i$'s loss can be represented as
\begin{align*}
	\mathsf{util}_i(\bfb_{-i}, b_i) - \mathsf{util}_i(\bfb_{-i}, b'_i)
	&= 	[b_i \cdot x_i(\bfb_{-i}, b_i) - p_i(\bfb_{-i}, b_i)] - [b_i \cdot x_i(\bfb_{-i}, b'_i) - p_i(\bfb_{-i}, b'_i)]\\
	% &= \int_{0}^{b_i} x_i(\bfb_{-i}, t) dt - \int_{0}^{b'_i} x_i(\bfb_{-i}, t) dt \\
	&\leq
	[b'_i - b_i][x_i(\bfb_{-i}, b'_i) - x_i(\bfb_{-i}, b_i)].
\end{align*}
Next, By global SCP, for any bid vector $\bfb_{-i}$, it must be
\[
	\mathsf{SW}_{-i}(\bfb_{-i}, b_i) + \mathsf{util}_i(\bfb_{-i}, b_i)
	\geq
	\mathsf{SW}_{-i}(\bfb_{-i}, b'_i) + \mathsf{util}_i(\bfb_{-i}, b'_i).
\]
Notice that the argument above holds for any $b_i$ and $b'_i$.
Thus, for any $b_i$ and $b'_i$,
we have 
\begin{equation}
\label{eq:util-not-change-expost-1}	
	\mathsf{SW}_{-i}(\bfb_{-i}, b'_i) - \mathsf{SW}_{-i}(\bfb_{-i}, b_i) 
	\leq 
	[b'_i - b_i][x_i(\bfb_{-i}, b'_i) - x_i(\bfb_{-i}, b_i)].
\end{equation}
% \begin{equation}
% \label{eq:util-not-change-expost-2}
% \mathsf{util}_i(\bfb_{-i}, b_i) - \mathsf{util}_i(\bfb_{-i}, b'_i)
% \leq
% [b_i - b'_i][x_i(\bfb_{-i}, b_i) - x_i(\bfb_{-i}, b'_i)].
% \end{equation}
	
Now, 
we can divide the interval $[b_i, b'_i]$ into $L$ equally sized segments $b_i^{(0)},\dots,b_i^{(L)}$ 
where $b_i^{(0)} = b_i$ and $b_i^{(L)} = b'_i$.
% Because \cref{eq:util-not-change-expost-1,eq:util-not-change-expost-2} hold for any $b_i$ and $b'_i$,
By \cref{eq:util-not-change-expost-1}, we have 
\begin{align*}
	\mathsf{SW}_{-i}(\bfb_{-i}, b'_i) - \mathsf{SW}_{-i}(\bfb_{-i}, b_i)
	&= \sum_{j=0}^{L-1} 
	\mathsf{SW}_{-i}(\bfb_{-i}, b_i^{(j+1)}) - \mathsf{SW}_{-i}(\bfb_{-i}, b_i^{(j)})\\
	% &\leq \sum_{j=0}^{L-1} \mathsf{util}_i(\bfb_{-i}, b_i^{(j)}) - \mathsf{util}_i(\bfb_{-i}, b_i^{(j+1)})\\
	&\leq \sum_{j=0}^{L-1}(b_j^{(j+1)} - b_j^{(j)}) \cdot
	\left[x_i(\bfb_{-i}, b_i^{(j+1)}) - x_i(\bfb_{-i}, b_i^{(j)})\right]\\
	&=\frac{b'_i - b_i}{L} \cdot 
	\left[x_i(\bfb_{-i}, b'_i) - x_i(\bfb_{-i}, b_i)\right].
	% \leq \frac{b'_i - b_i}{L}.
\end{align*}
By taking the limit for $L \rightarrow \infty$, we have 
$\mathsf{SW}_{-i}(\bfb_{-i}, b'_i) - \mathsf{SW}_{-i}(\bfb_{-i}, b_i) \leq 0$.
Because the argument above holds for any $b_i$ and $b'_i$,
it must be $\mathsf{SW}_{-i}(\bfb_{-i}, b_i) =	\mathsf{SW}_{-i}(\bfb_{-i}, b'_i)$.
\end{proof}

\begin{lemma}
\label[lemma]{lem:add-user-expost}
Consider any (possibly randomized) TFM satisfying UIC and global SCP.
Then, for any bid vector $\bfb_{-i}$ of an arbitrary length,
for any user $i$,
and for any $b_i$,
it holds that \[
	\mathsf{SW}_{-i}(\bfb_{-i})
	=
	\mathsf{SW}_{-i}(\bfb_{-i}, b_i).
\] 
\end{lemma}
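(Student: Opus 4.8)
The plan is to reduce the claim to the single case $b_i = 0$ and then read off the general case from \cref{lem:util-not-change-expost}. Concretely, I would first prove the ``anchor'' identity $\mathsf{SW}_{-i}(\bfb_{-i}) = \mathsf{SW}_{-i}(\bfb_{-i}, 0_i)$: adjoining a bid of value $0$ to the profile $\bfb_{-i}$ leaves the joint utility of the miner and all the remaining bidders unchanged. Once this is in hand, applying \cref{lem:util-not-change-expost} with the second argument set to $b'_i = 0$ gives $\mathsf{SW}_{-i}(\bfb_{-i}, b_i) = \mathsf{SW}_{-i}(\bfb_{-i}, 0_i)$ for every $b_i$, and chaining the two equalities yields $\mathsf{SW}_{-i}(\bfb_{-i}) = \mathsf{SW}_{-i}(\bfb_{-i}, b_i)$, as desired. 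This is exactly the ex post analogue of the argument already used in the Bayesian setting in the proof of \cref{lem:add-user}.

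For the anchor identity I would argue by contradiction, in two symmetric directions, each time exhibiting a grand-coalition deviation that strictly raises social welfare and hence contradicts global SCP. If $\mathsf{SW}_{-i}(\bfb_{-i}, 0_i) > \mathsf{SW}_{-i}(\bfb_{-i})$, I would look at a world whose honest (truthful) bid vector is precisely $\bfb_{-i}$; the grand coalition can instead inject a single fake bid of value $0$ and otherwise run the mechanism honestly on $(\bfb_{-i}, 0)$. By the normalization condition in Myerson's lemma (\cref{lemma:myerson}) together with individual rationality, a bid of $0$ pays $0$ and thus its bidder's utility is $0$, so the coalition's total utility under this deviation is exactly $\mathsf{SW}_{-i}(\bfb_{-i}, 0_i)$, which by assumption beats the honest value $\mathsf{SW}_{-i}(\bfb_{-i})$ --- contradicting global SCP. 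Conversely, if $\mathsf{SW}_{-i}(\bfb_{-i}) > \mathsf{SW}_{-i}(\bfb_{-i}, 0_i)$, I would look at a world with the bidders of $\bfb_{-i}$ together with an extra user $i$ whose true value is $0$; since such a user's honest utility is $0$ by individual rationality, the honest social welfare equals $\mathsf{SW}_{-i}(\bfb_{-i}, 0_i)$, whereas if the coalition simply has user $i$ submit the empty bid vector the mechanism's input collapses to $\bfb_{-i}$ and the social welfare rises to $\mathsf{SW}_{-i}(\bfb_{-i})$ --- again contradicting global SCP. Hence the two quantities are equal.

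The step I expect to require the most care is the bookkeeping around this extra $0$-bid: I must make sure its contribution to social welfare is exactly $0$ (this is where the Myerson normalization $p_i(\bfb_{-i}, 0) = 0$ and individual rationality are invoked, and where one notes the finite block size is irrelevant since a $0$-bid either goes unconfirmed or is confirmed at price $0$) and that both ``injecting a fake bid'' and ``dropping one's bid'' are genuinely available coordinated strategies in the model, so that the honest outcome on $(\bfb_{-i}, 0)$ and the honest outcome on $\bfb_{-i}$ are both realizable by the grand coalition in the respective worlds. Everything else is a routine chaining of the two established identities.
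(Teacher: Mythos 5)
Your proposal matches the paper's proof essentially step for step: the same anchor identity $\mathsf{SW}_{-i}(\bfb_{-i}) = \mathsf{SW}_{-i}(\bfb_{-i}, 0_i)$, established by the same two grand-coalition deviations (injecting a fake $0$-bid in one direction, having a zero-value user drop out in the other), followed by chaining with \cref{lem:util-not-change-expost}. The extra bookkeeping you flag (the $0$-bid paying nothing and contributing zero utility) is exactly the implicit justification in the paper's argument, so the proposal is correct and takes the same route.
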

\begin{proof}
First, we show that $\mathsf{SW}_{-i}(\bfb_{-i}) = \mathsf{SW}_{-i}(\bfb_{-i}, 0_i)$.
To see this, if $\mathsf{SW}_{-i}(\bfb_{-i})
<
\mathsf{SW}_{-i}(\bfb_{-i}, 0_i)$,
then consider a world where users' true values are $\bfb_{-i}$.
The global coalition can inject a fake bid $0_i$, and the social welfare increases.
On the other hand, if $\mathsf{SW}_{-i}(\bfb_{-i}) > \mathsf{SW}_{-i}(\bfb_{-i}, 0_i)$,
then consider a world where users' true values are $(\bfb_{-i}, 0_i)$.
Because user $i$'s utility is zero under an honest execution, 
we have $\mathsf{SW}_{-i}(\bfb_{-i}, 0_i) 
= \mathsf{SW}(\bfb_{-i}, 0_i)$.
If user $i$ drops out, the social welfare becomes $\mathsf{SW}_{-i}(\bfb_{-i})$,
which is larger than the honest execution.
Thus, it must be $\mathsf{SW}_{-i}(\bfb_{-i})
=
\mathsf{SW}_{-i}(\bfb_{-i}, 0_i)$.
The lemma directly follows from the argument above and \cref{lem:util-not-change-expost}.
\end{proof}

\begin{theorem}
\label{thm:expost-MIC-global-SCP}
No non-trivial, possibly randomized truthful TFM can simultaneously satisfy UIC,
MIC,
and global SCP (all in the ex post sense) when the block size is finite.
\end{theorem}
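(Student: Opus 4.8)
The plan is to compare a \emph{lonely world} consisting of a single user with true value $v$ against the \emph{crowded world} $\bfv_n := (\underbrace{v,\dots,v}_{n})$ in which $n$ users all bid $v$, and to show that a UIC\,+\,MIC\,+\,global~SCP mechanism would force the expected social welfare of the crowded world to grow at least linearly in $n$, contradicting the finite block size. Write $S_n(v) := \mathsf{SW}(\bfv_n)$ for the expected social welfare, $\mu_n(v) := \mu(\bfv_n)\ge 0$ for the miner revenue, and $u_v$ for the lone user's expected utility, so $S_1(v) = u_v + \mu_1(v)$. The two workhorses are \cref{lem:add-user-expost} (adding a user leaves the incumbents' joint utility unchanged) and \cref{lemma:miner-revenue-little} (miner revenue is sublinear: $\mu_m(v)/v \to 0$ for every fixed $m$).

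First I would extract a recursion. Applying \cref{lem:add-user-expost} with $\bfb_{-i}$ equal to the $j$ bids other than $i$ (all equal to $v$) and $b_i = v$ shows $\mathsf{SW}_{-i}(\bfv_{j+1}) = S_j(v)$ for every $i\in[j+1]$, using weak symmetry to know the welfare of $j$ equal bids is $S_j(v)$ regardless of identities. Hence $\mathsf{util}_i(\bfv_{j+1}) = S_{j+1}(v) - S_j(v)$ for all $i$, so the total user utility is $(j+1)\bigl(S_{j+1}(v)-S_j(v)\bigr)$, which equals $S_{j+1}(v) - \mu_{j+1}(v)$; this gives $S_{j+1}(v) = \tfrac{j+1}{j}S_j(v) - \tfrac1j\mu_{j+1}(v)$. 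Dividing by $j+1$ and telescoping yields $\tfrac{S_n(v)}{n} = S_1(v) - \sum_{m=2}^{n}\tfrac{\mu_m(v)}{(m-1)m} \ge u_v - \max_{2\le m\le n}\mu_m(v)$, using $\mu_m\ge 0$, $\mu_1\ge 0$, and $\sum_{m\ge 2}\tfrac1{(m-1)m} < 1$.

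Next I would produce a threshold $V^\ast$ so that the lone user's confirmation probability $x(\cdot)$ is positive past $V^\ast$, and deduce $u_{v_0}>0$ together with the linear bound $u_v \ge (u_{v_0}/v_0)\,v$ for $v\ge v_0$. Non-triviality supplies a bid vector $\bfb=(b_1,\dots,b_t)$ with $x_i(\bfb)>0$ for some $i$. In a lonely world with true value $V > \sum_j b_j / x_i(\bfb)$, the grand coalition could instead submit $\bfb$ with $b_i$ as primary bid and the rest as fake bids and let the miner run the honest inclusion rule; the confirmed value is at least $V\,x_i(\bfb)$, while burning is at most the total payment, which is at most $\sum_j b_j$ (each bid's payment is at most its amount, by Myerson's lemma, \cref{lemma:myerson}), so the expected welfare is at least $V\,x_i(\bfb)-\sum_j b_j>0$. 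By global~SCP the honest lonely-world welfare $S_1(V)$ is then positive, which forces $x(V)>0$ (if nothing is ever confirmed, welfare is at most zero). Hence $x(V)>0$ for all $V>V^\ast$, and by Myerson's lemma $u_{v_0}=\int_0^{v_0}x(t)\,dt>0$ for every $v_0>V^\ast$; since $x$ is non-decreasing the average $u_v/v$ is non-decreasing, which gives the claimed linear lower bound.

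Finally I would combine everything. Fix any integer $n > 2k v_0 / u_{v_0}$, where $k$ is the block size. For each $m\in\{1,\dots,n\}$, \cref{lemma:miner-revenue-little} gives $\mu_m(v)/v\to 0$, so there is a $v_1$ with $\mu_m(v)\le \tfrac{u_{v_0}}{2v_0}\,v$ for all $m\le n$ and all $v\ge v_1$. Taking $v\ge\max(v_0,v_1)$, the telescoped inequality gives $\tfrac{S_n(v)}{n} \ge u_v - \tfrac{u_{v_0}}{2v_0}\,v \ge \tfrac{u_{v_0}}{2v_0}\,v$, hence $S_n(v) \ge \tfrac{n\,u_{v_0}}{2v_0}\,v > kv$. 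But in the crowded world at most $k$ of the $n$ equal bids can be confirmed and burning only decreases welfare, so $S_n(v)\le kv$ — a contradiction. I expect the most delicate step to be the derivation of the clean recursion $S_{j+1}(v) = \tfrac{j+1}{j}S_j(v) - \tfrac1j\mu_{j+1}(v)$: it requires carefully justifying, via weak symmetry, that $\mathsf{SW}_{-i}(\bfv_{j+1})$ is independent of which user is excised so that \cref{lem:add-user-expost} pins down each $\mathsf{util}_i(\bfv_{j+1})$, along with the bookkeeping relating the $j$-user world to the $(j{+}1)$-user world with one user removed; the rest is routine once the recursion and the positivity/linearity of $u_v$ are in place. (Note that none of the deviations invoked above alters the inclusion rule, so the argument additionally rules out these three properties when the inclusion rule is enforced by trusted hardware, and it makes essential use of the permissionless setting, since fake-bid injection is what drives \cref{lemma:miner-revenue-little} and \cref{lem:add-user-expost}.)
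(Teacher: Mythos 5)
Your proposal is correct and follows essentially the same route as the paper's proof of \cref{thm:expost-MIC-global-SCP}: the same use of non-triviality plus global SCP to get $x(v_0)>0$ and a linear lower bound on $u_v$, the same recursion $\mathsf{SW}(\bfv_{j+1})=\tfrac{j+1}{j}\mathsf{SW}(\bfv_j)-\tfrac1j\mu_{j+1}(v)$ obtained from weak symmetry and \cref{lem:add-user-expost}, and the same final contradiction with the block-size bound $kv$ via \cref{lemma:miner-revenue-little}. Your explicit telescoping of the recursion is just a slightly cleaner packaging of the paper's induction showing $\mathsf{SW}(\bfv_n)\ge n u_v/2$; no substantive difference.
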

\begin{proof}
For the sake of contradiction, suppose the mechanism is non-trivial;
that is, there exists a scenario $\bfb = (b_1,\dots,b_t)$ such that 
some user $i$ has positive allocation probability $p_i(\bfb)$.
% Let $x(v)$ be the allocation probability when the lone user bids $v$,
% and $u_v = \int_0^{v} x(t) d t$ be the lone user's utility.
% We first show that $\lim_{v \rightarrow \infty} u_v/v > 0$.
We first show that for a sufficiently large $v_0$, 
% the only user who submits its true value $v_0$ has positive utility.
the allocation probability under a single bid $v_0$ is positive. 
% Since the mechanism is non-trivial, there must exist 
% a scenario $\bfb = (b_1,\dots,b_t)$ such that 
% some user $i$ has positive allocation probability.
Let $v_0$ be any real number such that $p_i(\bfb) \cdot v_0 > \sum_{j=1}^t b_j$.
Then, imagine that there is only one user with true value $v_0$.
If the allocation probability $x(v_0)$ is zero,
the global coalition can replace the user's primary bid with $b_i$, inject the fake bids  
$\bfb_{-i}$, and pretend that the bid vector is $\bfb$.  
This way, the user would get positive allocation probability $p_i(\bfb)$. 
Because its expected payment is at most $\sum_{j=1}^t b_j$, 
the user's expected utility (and thus the social welfare) is positive.
This violates global SCP,
so the allocation probability $x(v_0) $ must be positive.
% Consequently, it must be $\lim_{v \rightarrow \infty} u_v/v \geq x(v_0) > 0$.

Let $u_v$ denote the user's utility when there is only one user submitting its true value $v$.
By Myerson's lemma, it must be $u_v = \int_0^{v} x(t) d t$.
Because $x(v_0) > 0$, we can find a sufficiently large value $v_1$ such that 
$u_{v_1}/v_1 > x(v_0) / 2$.
Moreover, because $x(v)$ is non-decreasing,
we have $u_v / v \geq u_{v_1} / v_1$ for any $v \geq v_1$.

Next, by weak symmetry,
% for any $v$ and $n$, 
when $n$ users all bid $v$,
the social welfare is $\mathsf{SW}(\bfv_n)$ regardless of the users' identities.
Thus, by \cref{lem:add-user-expost},
for any user identity $i$, 
we have $\mathsf{SW}(\bfv_n) = \mathsf{SW}_{-i}(\bfv_{n+1})$.
Because $\mathsf{SW}(\bfv_{n+1}) = \mathsf{SW}_{-i}(\bfv_{n+1}) + \mathsf{util}_i(\bfv_{n+1})$,
we have $\mathsf{util}_i(\bfv_{n+1}) = \mathsf{SW}(\bfv_{n+1}) - \mathsf{SW}(\bfv_n)$.
In other words, given any $v$ and $n$,
every user's utility is the same regardless of the identity.
Given any $v$ and $n$, the sum of all users' utilities is $\mathsf{SW}(\bfv_n) - \mu(\bfv_n)$,
so a user's utility is $\mathsf{util}(\bfv_n) = \frac{1}{n}[\mathsf{SW}(\bfv_n) - \mu(\bfv_n)]$.
Therefore, we have
\begin{equation}
	\label{eq:SW-recursion}	
	\mathsf{SW}(\bfv_{n+1}) = \frac{n+1}{n}\cdot \mathsf{SW}(\bfv_n) - \frac{1}{n} \cdot \mu(\bfv_{n+1}).
\end{equation}

By \cref{lemma:miner-revenue-little}, for any natural number $N$,
we can find a sufficiently large value $v \geq v_1$ such that 
$\mu_{n+1}(v)/v < x(v_0)/4$ for all $n+1 \leq N$.
% $\mu_{n+1}(v) < u_{v}/2$ for all $n \leq N$.
Because $u_v / v \geq u_{v_1} / v_1$ for any $v \geq v_1$,
we have \[
\frac{u_v}{v} \geq \frac{u_{v_1}}{v_1} > \frac{x(v_0)}{2} > \frac{2\mu_{n+1}(v)}{v}.
\]
Thus, we have $\mu_{n+1}(v) < u_{v}/2$ for all $n + 1 \leq N$,
and \cref{eq:SW-recursion} implies $\mathsf{SW}(\bfv_{n+1}) \geq \frac{n+1}{n}\cdot \mathsf{SW}(\bfv_n) - \frac{1}{2n} \cdot u_{v}$.
When there is only one user submitting its true value $v$,
the social welfare $\mathsf{SW}(\bfv_1)$ is at least $u_{v}$.
By the proof of induction, it is easy to see that $\mathsf{SW}(\bfv_n) \geq n\cdot u_{v}/2$.
Because the block size $k$ is finite, the social welfare $\mathsf{SW}(\bfv_n)$ must be upper bounded by $k v$.
Thus, we have $u_v/v \leq 2k/n$ for all $n \leq N$.
However, as we have shown, for a sufficiently large value $v_1$,
we have $u_v / v \geq u_{v_1} / v_1 > x(v_0) / 2$.
Consequently, if we choose $N$ such that $N > 4k /x(v_0)$,
we have $u_v / v \leq 2k/N < x(v_0)/2$, 
which leads to a contradiction.

\end{proof}

\end{document}